\documentclass[a4paper,11pt]{article}
\pdfoutput=1

\usepackage[margin=0.8in]{geometry}

\usepackage[utf8]{inputenc}
\usepackage[british]{babel}
\usepackage{csquotes} 
\usepackage[T1]{fontenc}

\usepackage{pifont}

\usepackage{amsmath,amssymb,amsthm}
\usepackage{mathtools}

\usepackage[
    labelfont=bf %
]{caption}
\usepackage{float}
\usepackage{subcaption}
\usepackage{tabularx}
\usepackage{tikz}
\usetikzlibrary{decorations.markings,arrows,patterns,patterns.meta,shapes}
\usetikzlibrary{positioning}

\usepackage{pgfplots}
\pgfplotsset{compat=1.18}
\pgfdeclareplotmark{donut}{
    \pgfseteorule
    \pgfpathellipse{\pgfpoint{0cm}{0cm}}
             {\pgfpoint{3pt}{0cm}}
             {\pgfpoint{0cm}{1pt}}
     \pgfpathellipse{\pgfpoint{0cm}{0cm}}
             {\pgfpoint{4pt}{0cm}}
             {\pgfpoint{0cm}{2pt}}
    \pgfusepath{fill}
}
\pgfdeclareplotmark{donut*}{
     \pgfpathellipse{\pgfpoint{0cm}{0cm}}
             {\pgfpoint{4pt}{0cm}}
             {\pgfpoint{0cm}{2pt}}
    \pgfusepath{fill}
}

\usepackage[outline]{contour}
\contourlength{0.2em}

\usepackage[compat=0.6]{yquant}
\useyquantlanguage{groups}

\usepackage{url}

\usepackage{mathrsfs}
\usepackage{dsfont}
\usepackage{setspace}
\usepackage{verbatim}
\usepackage[export]{adjustbox}
\usepackage{tqft}
\usepackage[inline]{enumitem}
\usepackage{multirow}
\usepackage{bigdelim}

\usepackage{quantikz}
\usepackage{graphicx}

\usepackage{xcolor}
\usepackage{bm}
\definecolor{darkblue}{RGB}{0,0,128}
\definecolor{darkgreen}{RGB}{0,150,0}

\definecolor{dark-red}{rgb}{0.4,0.15,0.15}
\definecolor{dark-blue}{rgb}{0.15,0.15,0.4}
\definecolor{dark-green}{rgb}{0.15,0.4,0.15}
\definecolor{medium-blue}{rgb}{0,0,0.5}

\definecolor{drawing-blue}{HTML}{c1ecff}
\definecolor{drawing-yellow}{HTML}{f6dab5}
\definecolor{drawing-red}{HTML}{e5bdd2}

\usepackage[%
	backend=biber,
	style=alphabetic,
	url=false,
	isbn=true,
	maxbibnames=10,
	backref, 
]{biblatex}
\usepackage[pdfusetitle]{hyperref}
\hypersetup{
    breaklinks,
    colorlinks,
	linkcolor={dark-blue},
	citecolor={dark-green},
	urlcolor={medium-blue},
    filecolor=red,
}

\usepackage[nameinlink,capitalise]{cleveref}
\crefname{figure}{Figure}{Figures}

\crefname{conjecture}{Conjecture}{Conjectures}
\Crefname{conjecture}{Conjecture}{Conjectures}

\newtheorem{theorem}{Theorem}
\newtheorem*{theorem*}{Theorem}
\newtheorem{lemma}[theorem]{Lemma}

\newtheorem{claim}[theorem]{Claim}
\newtheorem{remark}[theorem]{Remark}

\newtheorem{definition}[theorem]{Definition}
\newtheorem{example}[theorem]{Example}
\usepackage{thmtools}
\usepackage{thm-restate}

\usepackage{braket}
\usepackage{dsfont}
\usepackage[normalem]{ulem}
\usepackage{algorithm}
\usepackage{algpseudocode}
\usepackage{newfloat}
\makeatletter
\newenvironment{subroutine}[1][tbp]
  {\renewcommand{\ALG@name}{Sub-Routine}\begin{algorithm}[#1]}
  {\end{algorithm}}
\makeatother

\usepackage{stmaryrd}

\def\Tr{\operatorname{Tr}}

\DeclareMathOperator{\GRS}{\mathsf{GRS}}

\DeclareMathOperator{\balpha}{\boldsymbol{\alpha}}
\newcommand{\bv}{\ensuremath{\boldsymbol{v}}}
\newcommand{\bu}{\ensuremath{\boldsymbol{u}}}

\newcommand{\sansserif}[1]{%
  \ifmmode
  \mathsf{#1}%
  \else
   \textsf{#1}%
  \fi
}

\usepackage[mathscr]{euscript}

\usepackage{mathtools}

\DeclarePairedDelimiterXPP\bigo[1]{O}{(}{)}{}{#1}
\DeclarePairedDelimiterXPP\littleo[1]{o}{(}{)}{}{#1}
\DeclarePairedDelimiterXPP\bigomega[1]{$\Omega$}{(}{)}{}{#1}
\DeclarePairedDelimiterXPP\bigtheta[1]{$\Theta$}{(}{)}{}{#1}

\newcommand{\e}{\varepsilon}

\graphicspath{{../}{./}{./tikz}{figs/}}

\renewcommand{\sec}[1]{\hyperref[sec:#1]{Section~\ref*{sec:#1}}}
\newcommand{\app}[1]{\hyperref[app:#1]{Appendix~\ref*{app:#1}}}
\newcommand{\ssec}[1]{\hyperref[ssec:#1]{Subsection~\ref*{ssec:#1}}}
\newcommand{\fig}[1]{\hyperref[fig:#1]{Figure~\ref*{fig:#1}}}
\newcommand{\tab}[1]{\hyperref[tab:#1]{Table~\ref*{tab:#1}}}
\newcommand{\lemm}[1]{\hyperref[lemm:#1]{Lemma~\ref*{lemm:#1}}}
\newcommand{\propos}[1]{\hyperref[propos:#1]{Proposition~\ref*{propos:#1}}}
\newcommand{\thm}[1]{\hyperref[thm:#1]{Theorem~\ref*{thm:#1}}}
\newcommand{\alg}[1]{\hyperref[alg:#1]{Algorithm~\ref*{alg:#1}}}
\newcommand{\defn}[1]{\hyperref[defn:#1]{Definition~\ref*{defn:#1}}}

\newcommand{\be}{\begin{equation}}
\newcommand{\ee}{\end{equation}}
\newcommand{\bea}{\begin{eqnarray}}
\newcommand{\eea}{\end{eqnarray}}

\newcommand{\bem}{\begin{multline}}
\newcommand{\eem}{\end{multline}}

\usepackage{array}
\usepackage{booktabs}
\usepackage{makecell}
\usepackage{diagbox}
\usepackage{multicol}
\usepackage{multirow}

\usepackage{siunitx}

\newcolumntype{?}{!{\vrule width 1pt}}

\usepackage{authblk}

\title{Concatenating Algebraic Codes over High-Rate\\Quantum LDPC Codes}
\author[1,2,3]{Adam Wills\thanks{Email: a\_wills@mit.edu}}
\author[1]{Michael E. Beverland}
\author[1]{Lev S. Bishop}
\author[1]{Jay M. Gambetta}
\author[1]{\linebreak Patrick Rall}
\author[1]{Vikesh Siddhu}
\author[1]{Andrew W. Cross}

\affil[1]{IBM Quantum\vspace*{0.2cm}}
\affil[2]{Center for Theoretical Physics — a Leinweber Institute\linebreak Massachusetts Institute of Technology, Cambridge, MA\vspace*{0.2cm}}

\affil[3]{The NSF AI Institute for \linebreak Artificial Intelligence and Fundamental Interactions}

\date{\today}

\begin{document}
\maketitle

\begin{abstract}
Different quantum error correction schemes trade off overhead, error suppression, and hardware connectivity. 
Code concatenation can relax these tradeoffs by using an outer code whose non-local connectivity is supplied by logical operations of an inner code rather than directly by hardware. 
Prior works~\cite{pattison2025hierarchical,gidney2025yoked} showed that this can reduce memory overhead for local low-rate inner codes such as the surface code. 
Here, we study concatenation over non-local, high-rate inner codes.
Such inner codes experience correlated errors among the many logical qubits in a single codeblock. 
We handle this by treating each block as a single logical Galois qudit, enabling concatenation with algebraic outer codes with excellent parameters and, crucially, list decoders. 
In particular, we consider a memory system formed by concatenating quantum Reed–Solomon outer codes over the gross code. 
For fault-tolerant syndrome extraction, we develop a Galois qudit Shor scheme using “time-like” Reed–Solomon protection against measurement errors.
Interestingly, a lightweight fault tolerance scheme, that would fail for qubits, works well for large-alphabet qudits, suggesting a very different theory of fault tolerance for such qudits.
The whole protocol is optimised via improved bicycle instruction logical error rates, novel compilation strategies, and recent decoder post-selection rules.

At uniform $10^{-3}$ physical noise, the concatenated gross code reaches the teraquop regime, which it previously could not access, with a lower space overhead than the $288$-qubit two-gross code, while offering several advantages from the engineering standpoint.
Beyond our main case study, we believe the core ideas of Galois qudits, quantum Reed–Solomon outer codes, and list decoding, will prove generically powerful and highly-transferrable ideas across high-rate quantum architectures.
\end{abstract}

\clearpage
\small
\tableofcontents
\normalsize

\clearpage

\section{Overview}
 
In order to realise the benefits promised by large-scale quantum computation, some means of quantum error correction will be required to overcome environmental noise, as well as imperfections in the physical hardware. Formalised for codes by the BT~\cite{bravyi2009no} and BPT bounds~\cite{bravyi2010tradeoffs}, the belief exists that there should always be some tradeoff between the performance of one's quantum error-correcting scheme, and the degree of non-locality of connections between qubits. This immediately creates tension between performance and engineering difficulty, since non-locality is often linked with hardware complexity \cite{mathews2026placing} or gate speed \cite{poole2025architecture}, limiting the ability to scale long-range connections indefinitely.

The surface code~\cite{kitaev1997quantum,bravyi1998quantum,kitaev2003fault} is a very well-known quantum error-correcting code, whose implementation only requires the engineering of local interactions. The price, however, is a considerable space overhead: the ratio of physical qubits to logical qubits. In order to construct a useful quantum computer whose physical size is not prohibitive, the exploration of means to reduce this space overhead is well motivated.

Quantum LDPC (qLDPC) codes~\cite{breuckmann2021quantum} are quantum codes that can reduce the space overhead of the surface code, but necessarily require the presence of hard-to-engineer long-range connections between qubits. With this being said, certain families of quantum LDPC codes are still found to have few enough and sparse enough long-range connections, of an amenable structure, to be promising candidates for modular superconducting hardware. In many quantum modalities, non-local connectivity has costs that scale non-uniformly with the type, range, and degree of interconnection, driving further architectural specialisation and modularity \cite{bravyi2022future,yoder2025tourgrossmodularquantum,tripier2026walking}. The bivariate bicycle (BB) codes~\cite{kovalev2013quantum,panteleev2021degenerate} have emerged as particularly exciting candidates. In particular,~\cite{bravyi2024high} studied promising explicit examples, demonstrating their potential suitability for superconducting hardware. The two examples most important to this work will be the $[[144,12,12]]$ ``gross'' code, and the $[[288,12,18]]$ ``two-gross'' code.

This duality between the surface code, and qLDPC codes, is not quite the end of the story, however. Recently, a work by Gidney et al. introduced the ``yoked surface codes''~\cite{gidney2025yoked}, showing that the memory component of a surface code architecture could have a significantly reduced space overhead than had been previously thought. The idea is to ``yoke'' the surface codes, that is, use them as inner codes in a two-level concatenated scheme, where the outer code is some very high rate code with high-weight parity checks. The whole scheme is still implementable with local interactions, because the checks of the outer codes may be fault-tolerantly measured using the logical operations native to the inner surface codes, namely lattice surgery operations. The space overhead is then improved, because a much smaller inner surface code may be used to achieve the same target logical error rate. While the scope of the yoked surface codes work is quite practical, a more theoretical work~\cite{pattison2025hierarchical}, draws a similar conclusion, where the inner code is again a surface code, but there the outer code is itself a qLDPC code.

Aside from the issue of space overhead, we believe that concatenated schemes are a very well-motivated design space to explore, as follows.
\begin{enumerate}
    \item \label{concatenated_benefit_first}\textit{Simpler chip module design}: The use of concatenation can ease the engineering burden by enabling simpler chips (smaller quantum codes) with fewer and shorter long-range connections to reach lower error rate regimes, enhancing the modularity of the architecture;
    \item \textit{Recovery from catastrophic errors}: Rare, but highly damaging events can occur in hardware, including superconducting hardware~\cite{google2025quantum}. The outer code can allow for recovery from such events, as long as they are sufficiently spatially localised.\footnote{Note that the work~\cite{pattison2025hierarchical} makes a similar observation about the possibility of handling these ``burst'' errors.} To that point, there is increasing understanding that these catastrophic errors are predominantly localised to individual chips~\cite{wu2025mitigating};
    \item \textit{Recalibration and replacement of chips}: As the performance of superconducting chips is known to vary over time, modules need to be monitored. The presence of an outer code allows certain modules to be re-calibrated, or even replaced, as the rest of the computation continues to work;
    \item \textit{Smoothing cost discontinuities}: Whereas surface code can be neatly scaled to yield lower logical error rates, qLDPC codes like BB codes tend to come in more discrete families, where a substantial jump must be made in engineering difficulty and module size, in exchange for a lower logical error rate. We find that concatenation can smooth out these discontinuities, allowing for a more tunable logical error rate with the same hardware;
    \item \label{concatenated_benefit_last}\textit{Fallback against decoder failure}: Heuristic decoders are commonplace for qLDPC codes, for example those based on belief propagation~\cite{panteleev2021degenerate}. It is known that these heuristic decoders can achieve very good performance in a low-enough average time to make decoding real time~\cite{muller2025improved}. However, a long tail to the decoding time distribution could result in a broader computation being periodically slowed down. In such a case, falling back to a decoder for the outer code that has performance guarantees may be prudent.
\end{enumerate}

\begin{figure}[t]
    \centering
    \includegraphics[width=\linewidth]{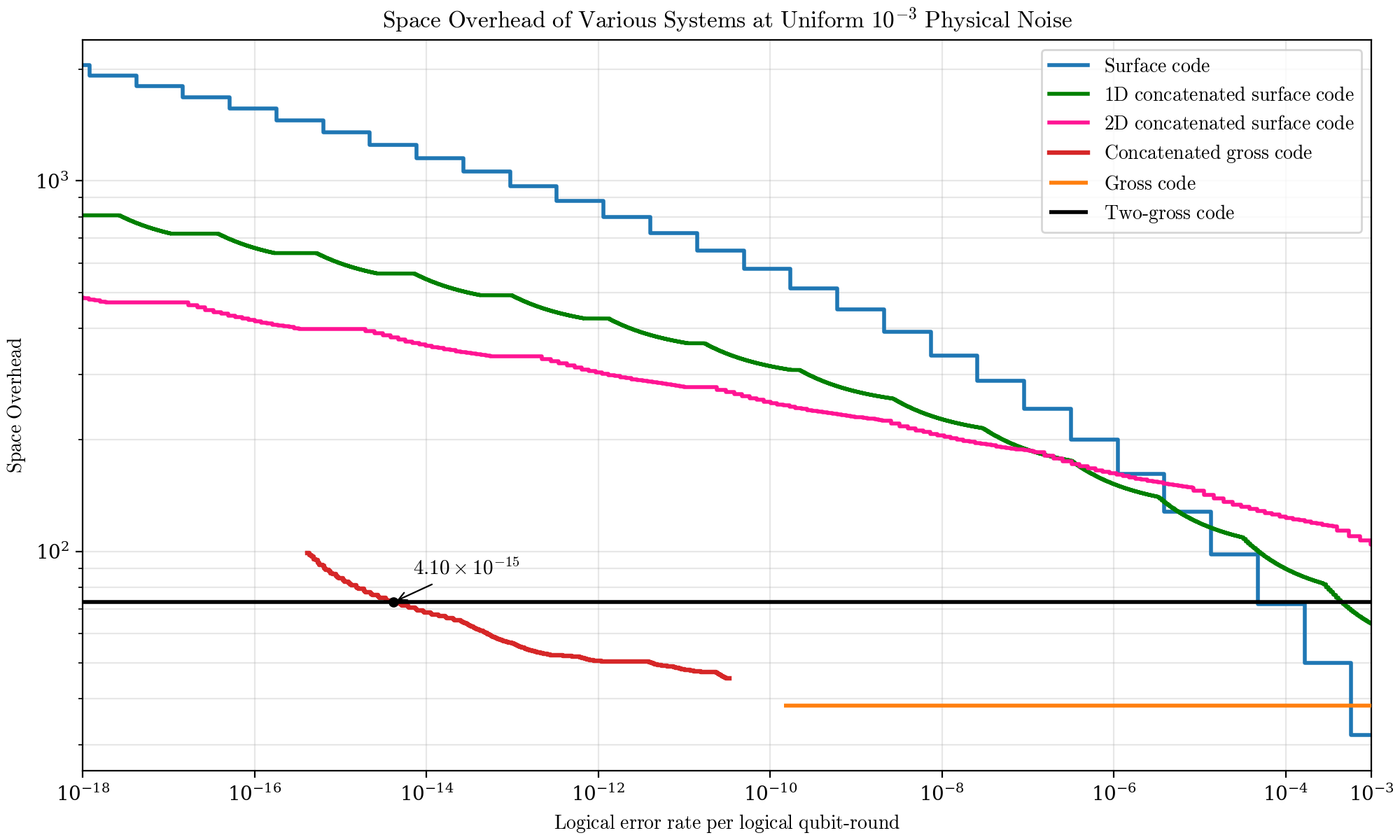}
    \caption{The space overheads of various systems, allowing a maximum size of half a million physical qubits. Concatenation allows the gross code to work in the teraquop regime relevant for the execution of many large-scale quantum algorithms. There, it offers an improved space overhead versus the two-gross code, with several engineering benefits. We show the surface code, as well as the ``yoked'' (concatenated) surface codes from~\cite{gidney2025yoked}. $1\text{D}$ and $2\text{D}$ concatenated surface codes have different layouts. The denser encoding of the latter makes the information harder to access. The concatenated gross codes have the same layout as the $1\text{D}$ concatenated surface codes.}
    \label{fig:main_results}
\end{figure}

With an aim to combine the advantages of high-rate qLDPC codes, as well as the advantages envisioned of concatenation, we construct a concatenated scheme where the inner codes are themselves high-rate qLDPC codes, for the first time. We consider a quantum memory where the inner codes are $144$-qubit ``gross'' BB codes, with logical operations given by the bicycle architecture~\cite{yoder2025tourgrossmodularquantum}. Our results are summarised in Figure~\ref{fig:main_results}, where we see that the concatenated gross code system can operate in the teraquop regime\footnote{The teraquop regime requires logical error rates at most around $10^{-14}$ to $10^{-13}$ per logical qubit-round.} (relevant for the execution of many large-scale quantum algorithms) at uniform $10^{-3}$ physical noise, which it could not previously access. We find that the system offers an improved space overhead versus the $288$-qubit ``two-gross'' code down to a logical error rate of $4.10\times 10^{-15}$ per logical qubit-round, while offering the engineering benefits described. Further improvements are anticipated to the construction; see Appendix~\ref{sec:improved_footprint_morphing} for just one example.

Notice that many of the benefits~\ref{concatenated_benefit_first} to~\ref{concatenated_benefit_last} listed above would only reach their full potential when the entire quantum computer (including the computation area), not only the memory, is formed of a concatenated system. 
Extending the present memory analysis to computation raises additional challenges; for example, in the explicit construction analysed here, the slow outer-code cycle would constrain the rate of computation.
In that direction, considering concatenated quantum memory is an important first step towards designing a fully concatenated quantum architecture. Some of the (many) future directions, as well as opportunities for improving the present construction, are discussed in Subsection~\ref{subsec:discussion}.

\subsection{Methods}

We now present an overview of the techniques used in our constructions and analysis. Figure~\ref{fig:overview_fig} sketches the core ideas graphically.

\begin{figure}[H]
  \centering
  \begin{tikzpicture}[
  xscale=1.00,
  yscale=1.00,
  every node/.append style={transform shape},
  font=\small, >=stealth,
  physbox/.style  ={draw, rounded corners=3pt, fill=blue!10,   minimum width=3.8cm,  minimum height=0.75cm, align=center},
  innerbox/.style ={draw, rounded corners=4pt, fill=orange!15, minimum width=5.8cm,  minimum height=1.5cm,  align=center},
  logbox/.style   ={draw, rounded corners=3pt, fill=orange!20, minimum width=5.8cm,  minimum height=0.65cm, align=center},
  pivotbox/.style ={draw, rounded corners=3pt, fill=red!20,    minimum width=2.8cm,  minimum height=0.85cm, align=center},
  quditbox/.style ={draw, rounded corners=3pt, fill=cyan!20,   minimum width=4.2cm,  minimum height=1.05cm, align=center},
  grossblk/.style ={draw, rounded corners=3pt, fill=orange!15, minimum width=1.55cm, minimum height=0.9cm,  font=\scriptsize, align=center},
  qrsbox/.style   ={draw, rounded corners=4pt, fill=green!10,  minimum width=10.0cm, minimum height=0.75cm, align=center},
  techbox/.style  ={draw, rounded corners=4pt, fill=purple!8},
  resultbox/.style={draw, very thick, rounded corners=6pt, fill=yellow!20,
                    minimum width=11.0cm, minimum height=1.1cm, align=center},
  futurebox/.style={draw, very thick, rounded corners=6pt, fill=teal!15,
                  inner xsep=0.55cm, inner ysep=0.18cm, align=center},
  arrowlabel/.style={font=\scriptsize, fill=white, inner sep=1.5pt},
  seclabel/.style  ={font=\footnotesize\bfseries, align=center, text width=1.6cm},
]

\node[seclabel, anchor=east] at (-0.4, 10.00) {\shortstack[c]{Physical\\Level}};
\node[seclabel, anchor=east] at (-0.4,  8.20) {\shortstack[c]{Inner\\Code}};
\node[seclabel, anchor=east] at (-0.4,  5.65) {\shortstack[c]{Qudit-to-\\Qubit\\Mapping}};
\node[seclabel, anchor=east] at (-0.4,  2.95) {\shortstack[c]{Outer\\Code}};
\node[seclabel, anchor=east] at (-0.4, -0.10) {\shortstack[c]{Fault-\\Tolerant\\Operations\\and\\Analysis}};
\node[seclabel, anchor=east] at (-0.4, -2.65) {Performance};
\node[seclabel, anchor=east] at (-0.4, -4.40)
  {\shortstack[c]{Future\\Improvements}};

\node[draw, very thick, rounded corners=5pt, fill=white,
      font=\normalsize\bfseries, minimum width=12cm, align=center]
  at (7.0, 11.35)
  {Concatenating Quantum Reed-Solomon Codes over the Gross Code};

\node[physbox] (phys) at (7.0, 10.00)
  {$144$ physical qubits +\\
  ancillas};

\node[innerbox] (gross) at (7.0, 8.20)
  {\hyperref[sec:improved_ler_bicycle]{%
   \begin{tabular}{c}
   \textbf{Inner Code --- Gross BB Code}\\[3pt]
   $[[144,\;12,\;12]]$\\[3pt]
   \scriptsize Bicycle instructions with\\\scriptsize improved logical error rates
   \end{tabular}}};

\draw[->, thick] (phys) -- (gross);

\node[logbox] (logq) at (7.0, 6.55) {12 logical qubits};
\draw[->, thick] (gross) -- (logq);

\node[pivotbox] (pivot) at (3.05, 5.05)
  {Pivot qubit\\\scriptsize(logical \#1)};

\node[quditbox] (qudit) at (7.0, 5.05)
  {\href{https://arxiv.org/abs/2605.18981}{%
   \begin{tabular}{c}
   \textbf{Galois qudit}\\
   $q = 2^{11} = 2048$\\
   \scriptsize($s=11$ logical qubits \#2--12)
   \end{tabular}}};

\draw[->, thick] ([xshift=-10ex]logq.south) -- ([xshift=5ex]pivot.north)
  node[arrowlabel, midway, above, xshift=-6ex, yshift=-1ex] {sacrifice};

\draw[->, thick] (logq.south) -- (qudit.north)
  node[arrowlabel, midway, right, xshift=1ex] {\,package};

\node[draw, fill=gray!10, rounded corners=3pt, font=\scriptsize,
      minimum width=2.8cm, align=center, anchor=west] (bmap) at (11.2, 5.05)
  {\href{https://arxiv.org/abs/2605.18981}{%
   \begin{tabular}{c}
   Qudit-to-qubit\\
   mapping $\mathcal{B}=(B_i)_{i=1}^n$\\
   \tiny($\mathbb{F}_q$ basis choice)
   \end{tabular}}};
\draw[->, dashed, gray!70] (bmap.west) -- (qudit.east);

\node[qrsbox] (qrslabel) at (7.0, 3.30)
  {\href{https://arxiv.org/abs/2605.18981}{%
   \textbf{Outer Code --- Quantum Reed-Solomon:}\quad
   $[[n,\;n-2(d-1),\;d]]_q$}};

\draw[->, thick] (qudit.south) -- (qrslabel.north)
  node[arrowlabel, midway, right, xshift=0.08cm, yshift=-0.08cm] {\;$\times n$ blocks};

\node[grossblk] (gb1)  at (2.75, 2.00) {Gross\\block $1$};
\node[grossblk] (gb2)  at (4.45, 2.00) {Gross\\block $2$};
\node[font=\large]     at (5.85, 2.00) {$\cdots$};
\node[grossblk] (gbk)  at (7.00, 2.00) {Gross\\block $i$};
\node[font=\large]     at (8.15, 2.00) {$\cdots$};
\node[grossblk] (gbn1) at (9.55, 2.00) {Gross\\block $n\!-\!1$};
\node[grossblk] (gbn)  at (11.25, 2.00) {Gross\\block $n$};

\coordinate (gbrow-top-left)  at ([xshift=-0.25cm,yshift=0.18cm]gb1.north west);
\coordinate (gbrow-top-right) at ([xshift= 0.25cm,yshift=0.18cm]gbn.north east);
\coordinate (gbrow-bot-left)  at ([xshift=-0.25cm,yshift=-0.18cm]gb1.south west);
\coordinate (gbrow-bot-right) at ([xshift= 0.25cm,yshift=-0.18cm]gbn.south east);

\coordinate (gbrow-top-mid) at ($(gbrow-top-left)!0.5!(gbrow-top-right)$);
\coordinate (gbrow-bot-mid) at ($(gbrow-bot-left)!0.5!(gbrow-bot-right)$);

\draw[thick]
  (gbrow-top-left) -- (gbrow-top-right)
  (gbrow-top-left) -- ++(0,-0.12cm)
  (gbrow-top-right) -- ++(0,-0.12cm);

\draw[thick]
  (gbrow-bot-left) -- (gbrow-bot-right)
  (gbrow-bot-left) -- ++(0,0.12cm)
  (gbrow-bot-right) -- ++(0,0.12cm);

\draw[->, thick] (qrslabel.south) -- (gbrow-top-mid);

\node[techbox, text width=4.3cm, minimum height=1.45cm, align=center] (mix)
  at (2.5, -0.10)
  {\hyperref[sec:ler_calc]{%
   \parbox{4.1cm}{\centering
   \textbf{Spacetime Error Mixing}\\[2pt]
   \scriptsize Random $\mathcal{B}$, $\nu$ $\Rightarrow$\\[1pt]
   $\mathbb{E}_{\mathcal{B},\nu}\!\bigl[\Pr[\text{fail}]\bigr] \leq V$\\
   \scriptsize Qudit fault tolerance}}};

\node[techbox, text width=4.0cm, minimum height=1.45cm, align=center] (shor)
  at (7.0, -0.10)
  {\textbf{Shor EC (qudit)}\\[2pt]
   \scriptsize \hyperref[subsec:ft_qudit_cat_prep]{FT cat state prep}\\\hyperref[sec:compile_Z_meas]{Efficient compilation}\\{\href{https://arxiv.org/abs/2605.20346}{Post-selection strategies}}\\\vspace*{-0.1cm}\hyperref[subsec:time_like_failures]{Time-Like RS Code}};

\node[techbox, text width=3.7cm, minimum height=1.45cm, align=center] (list)
  at (11.5, -0.10)
  {\hyperref[subsubsec:list_decoding]{%
   \shortstack[c]{%
   \textbf{List Decoding (qRS)}\\[2pt]
   \scriptsize $q{=}2048$: decode\\
   beyond $\lfloor(d{-}1)/2\rfloor$}}};

\draw[->, thick] (mix.north  |- gbrow-bot-mid) -- (mix.north);
\draw[->, thick] (shor.north |- gbrow-bot-mid) -- (shor.north);
\draw[->, thick] (list.north |- gbrow-bot-mid) -- (list.north);

\node[resultbox] (result) at (7.0, -2.65)
  {\hyperref[sec:footprint_estimates]{%
   \shortstack[c]{%
   \textbf{Result at $p_{\rm phys} = 10^{-3}$:}\quad
   Concatenated gross code enters the \textbf{teraquop} regime\\
   Improved space overhead over $288$-qubit code down to\\
   logical error rate $4.10\times 10^{-15}$ per logical qubit-round}}};

\draw[->, very thick] (mix.south)  -- (mix.south |- result.north);
\draw[->, very thick] (shor.south) -- (result.north);
\draw[->, very thick] (list.south) -- (list.south |- result.north);

\node[futurebox] (future) at (7.0, -4.70)
  {\shortstack[c]{%
   \textbf{Future Improvements}\\[2pt]
   \hyperref[sec:improved_footprint_morphing]{Morphing Circuit}\\
   \hyperref[subsec:discussion]{Other Directions}}};

\draw[->, very thick] (result.south) -- (future.north);

\foreach \y in {9.4, 7, 4.15, 1.15, -1.5, -3.65}
  \draw[gray!40, dashed, thin] (-0.3,\y) -- (14.5,\y);

\end{tikzpicture}
  \caption{An overview of the moving parts of the construction and analysis. Blue text is hyperlinked.}
  \label{fig:overview_fig}
\end{figure}

\subsubsection*{Outer Algebraic Codes}

The gross code has $12$ logical qubits. When one considers concatenating an outer code over any high-rate inner code, the first obstruction that one reaches is that the logical qubits in one inner codeblock do not fail independently; they fail in a highly-correlated way. Only different gross code blocks fail independently. If we took any high-rate outer code for qubits with some low distance like $d=4$ or $5$, the concatenated scheme would certainly fail, because a single logical error on a gross code would generically cause an outer code logical error. Of course, one could use an outer code of much higher distance, like $12$ or $24$, but this would lead to an outer code with very poor rate, whereas we want an outer code with very high rate.

The solution is as follows. It has been understood for some time~\cite{ashikhmin2001asymptotically,gottesman2024surviving} that a set of $s$ qubits may be packaged together to form one qudit of dimension $q=2^s$. If one makes the correct choice of Pauli group for the qudit, the set of qubits and the qudit behave the same in every way; their Pauli groups and Clifford groups are isomorphic, and their Clifford hierarchies are in bijection. The present work makes heavy use of this correspondence, and accordingly, we present a companion paper~\cite{wills2026review} that reviews, and builds upon, the existing mathematical formalism. The larger qudits encode the structure of a mathematical object known as a finite field or Galois field, and so have been come to be known as ``Galois qudits''~\cite{eczoo_galois_into_galois}. We will often refer to qudits in this paper; it cannot be over-emphasised that the qudits are themselves simply a mathematical construct, and everything in this paper is ultimately a qubit procedure using the regular qubits and instructions of the gross code.

Having packaged the logical information of each gross code codeblock into one larger qudit, we may go on to make use of outer codes that are specifically designed for these Galois qudits. Namely, we consider quantum Reed-Solomon codes~\cite{abo99,grassl1999quantum}, which are the prototypical algebraic quantum code, for their excellent parameters and structure. These come in an infinite family of increasing distance, where one can construct qudit codes with parameters $[[n,n-2,2]]_q, [[n,n-4,3]]_q, [[n,n-6,4]]_q, [[n,n-8,5]]_q$, and so on, for any $n$ with $n \leq q$, and the subscript $q$ on these parameters indicates that these are codes for qudits of dimension $q$. Note that these are all information-theoretically optimal parameters in that they meet the quantum Singleton bound~\cite{rains2002nonbinary}. It is striking that by using qudits we can scale to higher distances with a minimal loss in the encoding rate; in~\cite{gidney2025yoked}, scaling to distances higher than $2$ requires a much greater loss in the code's encoding rate.

These codes may ultimately be operated as qubit codes. By using qudit-to-qubit mappings~\cite{gottesman2024surviving,wills2026review}, we can convert an $[[n,k,d]]_q$ code into an $[[ns,ks,\geq d]]$ qubit code. The latter code may be viewed as having $n$ sets of $s$ physical qubits, which are then taken to be the sets of $s$ logical qubits in $n$ gross code modules. However, because it is a qubitised qudit code, it treats any qubit error of any weight confined to one set of $s$ qubits (that is, confined to one gross code block) as the same: a weight $1$ error. Throughout, it therefore makes sense to talk about the ``weight'' of an error as its ``block weight'', that is, on how many gross codes it is supported. The outer code having distance $d$ means that more than $\left\lfloor\frac{d-1}{2}\right\rfloor$ groups of $s$ qubits must be damaged in order to create an uncorrectable error. We see that we are precisely treating the distribution of logical errors we want to treat; errors within groups of logical qubits of a codeblock are highly correlated, but between codeblocks are independent, and the outer code counts the weight of errors only in terms of the number of blocks on which they are supported.\footnote{It would be natural for the reader to wonder at this point whether one could concatenate a code over a high-rate inner code by simply using a copy of a qubit outer code on each logical qubit separately. The answer is yes, this is valid, but would lead to a much worse encoding rate on the outer code. As a pertinent example, the outer distance $4$ qubit code considered in~\cite{gidney2025yoked} has parameters $[[n,n-4\sqrt{n}+2,4]]$. Using this separately on each of the $s$ logical qubits in $n$ inner LDPC code modules effectively gives a qubit code with parameters $[[ns,(n-4\sqrt{n}+2)s,4]]$. On the other hand, the qubitised qudit code of distance $4$ has parameters $[[ns,(n-6)s,\geq 4]]$. One checks that the encoding rate of the latter beats the encoding rate of the former (substantially) across all values of $n$.} It is interesting to note that classical coding theory has come to similar conclusions about the utility of packing multiple bits into larger ``dits'', and leveraging algebraic codes, to handle bursty error channels~\cite{forney1971burst}. These ideas have found use in real systems: in communication, storage, and optical media.

\begin{figure}[t]
    \centering
    \includegraphics[width=\linewidth]{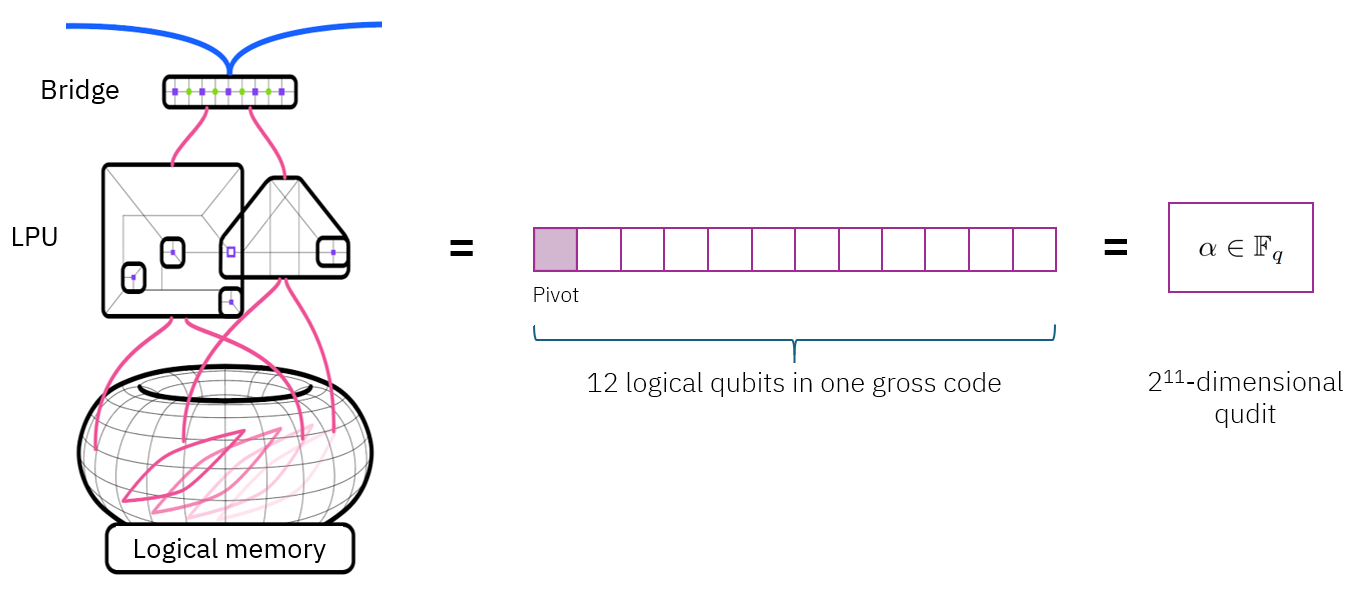}
    \caption{This figure illustrates the key technical idea of packaging the logical information in one gross code into a higher-dimensional qudit. The first logical qubit, the pivot, is sacrificed to enable the logical operations required to operate the memory. The $11$ non-pivot logical qubits become one larger $2^{11}$-dimensional qudit.}
    \label{fig:gross_qubit_to_qudit}
\end{figure}

\subsubsection*{Outer Code Fault-Tolerant Syndrome Extraction}

To operate these outer codes using logical operations on the inner code, we perform a now-standard trick~\cite{cross2024improved,yoder2025tourgrossmodularquantum}, by sacrificing one of the $12$ logical qubits in the gross code (the ``pivot qubit''), which will facilitate logical operations on the other logical qubits. The remaining $s=11$ logical qubits carry data, and are thus packaged into one qudit of dimension $q = 2^{s} = 2048$, as illustrated in Figure~\ref{fig:gross_qubit_to_qudit}. With this, we develop a means to extract the syndrome of our codes fault-tolerantly. We opt for the use of a Shor error correction scheme~\cite{shor1996fault} which we design for Galois qudits. Here, a qudit cat state, which we define, is fault-tolerantly prepared and checked offline, and then consumed to measure one check of the qudit code.\footnote{This could all be described (inconveniently) purely in terms of qubits if one wished. To this end, note that the consumption of each cat state extracts the values of $s$ qubit syndrome bits at once.} In order to protect ourselves against the effects of measurement errors while not introducing an undesirable overhead, we measure an overcomplete basis of the checks rather than simply repeating them. We do this in such a way that our syndromes are protected by a further ``time-like'' Reed-Solomon code.

Our scheme for constructing each cat state is a purely measurement-based one, where our native set of instructions are the bicycle instructions on the gross code~\cite{yoder2025tourgrossmodularquantum}. In order to execute the compilation of our cat states as quickly as possible, we make use of novel compilation tricks (described in Appendix~\ref{sec:compile_Z_meas}) that go beyond the compilation scheme described in~\cite{yoder2025tourgrossmodularquantum}. Specifically, we make the most use of the native gate set by using a single sequence of native rotations to make multiple future measurements native, rather than making one measurement native at a time. Because they are prepared offline, the cat states may be rejected when deemed suspicious and recreated. Aside from their fault-tolerant checking, we also make use of recent post-selection protocols studied for BB codes and their instructions~\cite{wills2026forced}, allowing us to reject suspicious decoding instances. All of this is only possible (at uniform physical noise $10^{-3}$) because we have been able to significantly improve the logical error rates of the bicycle instructions over~\cite{yoder2025tourgrossmodularquantum} by optimising the parameters of the Relay-BP decoder~\cite{muller2025improved} for them; see Appendix~\ref{sec:improved_ler_bicycle} for these improved error rates.

Directly inspired by~\cite{gidney2025yoked}, we use one ancillary system to tend to the fault-tolerant syndrome extraction of many outer code blocks. In particular, our global rectangular grid of gross codes has copies of the outer code arranged in its rows. However, $2$ rows are sacrificed to enable the fault-tolerant generation of cat states, and the consumption of corresponding checks on neighbouring outer codeblocks, as illustrated in Figure~\ref{fig:ft_syndrome_extraction}. These $2$ ancillary rows move through the system over time to fault-tolerantly extract the syndrome of each outer codeblock. It is interesting to note that the layout of our outer codeblocks is most akin to the $1\text{D}$ yoked surface codes in~\cite{gidney2025yoked}, which arguably would make the logical information easier to access than the more dense, $2\text{D}$ setup in~\cite{gidney2025yoked} (of course, the $2\text{D}$ yoked surface codes can offer a better space overhead than the $1\text{D}$ yoked surface codes in the low error regime).

\begin{figure}[t]
    \centering
    \begin{subfigure}[b]{0.4\textwidth}
        \centering
        \includegraphics[width=\textwidth]{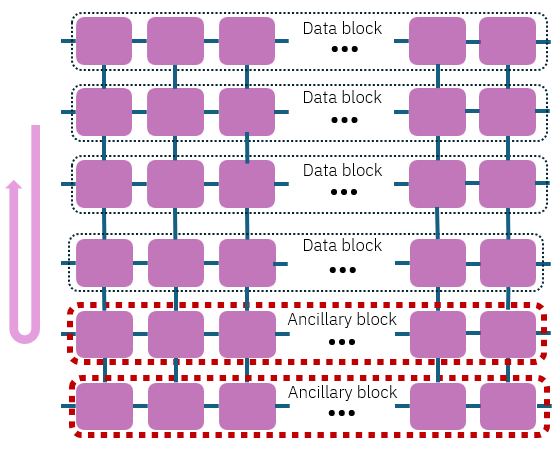}
        
    \end{subfigure}
    \hfill
    \begin{subfigure}[b]{0.55\textwidth}
        \centering
        \includegraphics[width=\textwidth]{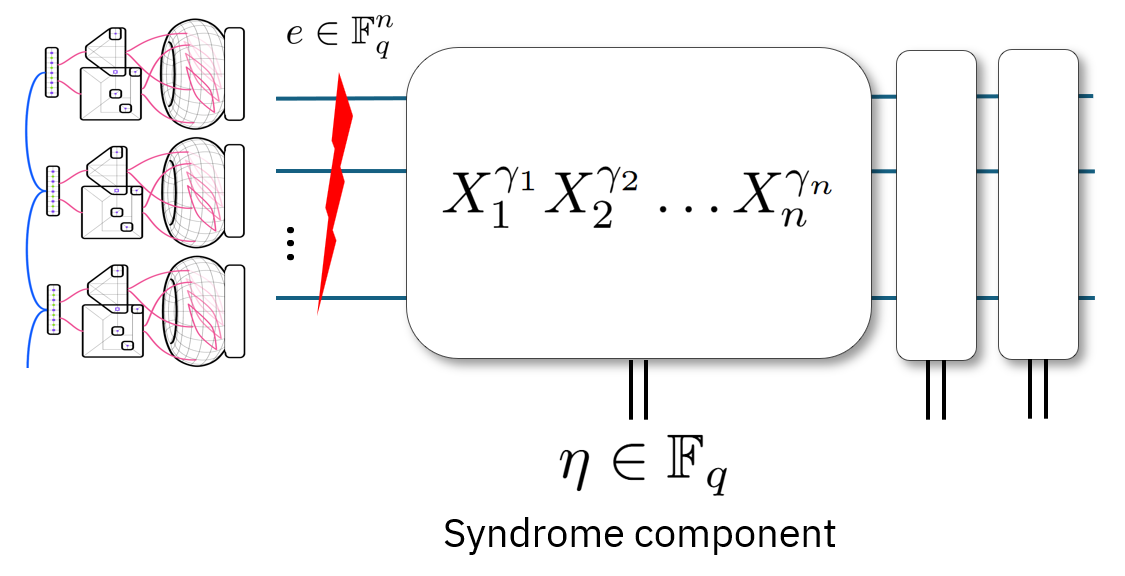}
    \end{subfigure}
    \caption{The left-hand figure illustrates our global memory system. $2$ rows of the system are sacrificed for the fault-tolerant generation of the cat states used for syndrome extraction of neighbouring outer codeblocks. These ancillary blocks move through the system over the course of time to extract the syndrome of other codeblocks. On the right, we illustrate a $Z$ error affecting an outer codeblock, labelled by some vector $e \in \mathbb{F}_q^n$, where $\mathbb{F}_q$ is the finite field of $q$ elements. The syndrome component of the qudit $X$-stabiliser $(\gamma_1, \gamma_2, \ldots, \gamma_n) \in \mathbb{F}_q^n$, which is some value in $\mathbb{F}_q$, is extracted using the appropriate qudit cat state which we will go on to define. Full preliminaries on Galois qudit codes and the measurement of their stabilisers may be found in~\cite{wills2026review}; an abridged description of quantum Reed-Solomon codes specifically may be found in Appendix~\ref{sec:RS_details}.}
    \label{fig:ft_syndrome_extraction}
\end{figure}

\subsubsection*{List Decoding}

Our outer codes themselves need a decoder, but fortunately, the decoding of algebraic codes, including classical Reed-Solomon codes is well-studied~\cite{gorenstein1961class,berlekamp1966nonbinary,welch1983error,massey2003shift,peterson2003encoding}. Because the quantum Reed-Solomon codes are simply formed of two classical Reed-Solomon codes, one for $X$ errors, and one for $Z$ errors, these results transfer immediately. With these algorithms, it is possible to efficiently decode up to half the minimum distance, but one can do more than this using a concept known as \textit{list decoding}~\cite{sudan1997decoding,guruswami1998improved}.\footnote{We recommend~\cite{mcelice2003} for an introduction to the Guruswami-Sudan list decoder for Reed-Solomon codes.} The idea is that (qu)dits of large alphabet size $q$ behave quite differently from (qu)bits, whose alphabet size is $2$. The reason is that, when the alphabet size is large, random errors become very unlikely to ``line up'' with uncorrectable sets of errors.

As an example, consider the distance $[[n,n-10,6]]_q$ quantum Reed-Solomon code. It has many weight-$6$ logical $X$-operators. Considering the support of some weight-$6$ logical $X$-operator, on a qubit code, if $X$ errors happen on $3$ of those coordinates, the error is uncorrectable, as it has the same syndrome as a distinct error of the same weight. However, for qudits, the probability that a random weight-$3$ $X$ error at those $3$ coordinates is uncorrectable is
\begin{equation}\label{eq:weight_3_uncorrectable_d6}
    \mathbb{P}\left[\text{Weight-$3$ $X$ Error is Uncorrectable on the Distance $6$ Code}\right] = \frac{(n-3)(n-4)(n-5)}{6(q-1)^2},
\end{equation}
which is small for $n \ll q$.\footnote{We are indeed working in the $n \ll q$ regime, since we work exclusively with $q = 2048$, and we are interested in outer codes of lengths $n \approx 20$ to $80$.} The effect of working with large-alphabet qudits is that we can generically decode typical errors beyond half the minimum distance. This relies on the \textit{list-decodable structure} of the Reed-Solomon codes, but a \textit{list decoding algorithm} is an efficient algorithm like~\cite{sudan1997decoding,guruswami1998improved} that can list all of the errors agreeing with the inputted syndrome up to some weight.


\subsubsection*{Spacetime Error Mixing}

The formula that was just given, Equation~\eqref{eq:weight_3_uncorrectable_d6}, assumed that the $q-1 = 2047$ $X$ errors that happen on the qudits\footnote{The $2047$ $X$ errors occurring on one qudit correspond to the $2047$ $X$ errors that can occur on the $s=11$ logical non-pivot qubits of one gross code block.} occur with equal probability, given that an $X$ error happens on the qudit. The problem is that we cannot assume that the $q-1$ $X$ errors occurring on the $s$ non-pivot logical qubits (which correspond to the qudit) will occur uniformly, given that an $X$ error occurs. In other words, we cannot assume that the logical $X$ errors taking place on the gross code occur with uniform probability.\footnote{Indeed, in numerical testing, one can see this is not the case.}

Let us elaborate on this issue. We are able to simulate gross codes, and their instructions, numerically. When we do so, typically, we estimate the logical error rate for a particular circuit: the probability that \textit{any} logical Pauli error occurred. This itself is an aggregate statistic; it is the sum of the probabilities that any logical Pauli error occurred. It is feasible to perform simulations to estimate such aggregate statistics numerically, but it is not feasible to numerically estimate the probability of every logical error, since their number grows exponentially with the number of logical qubits. We notice that there is a striking difference between how well-characterised the outer quantum Reed-Solomon codes are, via formulas like Equation~\eqref{eq:weight_3_uncorrectable_d6} which assume equiprobable errors on the inner codes/qudits, and how limited our information is on the \textit{logical error distribution} of the gross codes.

In this paper, we want to be able to make proofs about the performance of our concatenated scheme, even with our limited information. To do this, we introduce a new, and effective, mathematical tool, that we call \textit{spacetime error mixing}. This goes as follows. We introduce some parameter of the protocol, called a \textit{mixing parameter}. If you change the mixing parameter, you get a different protocol, with a different logical error rate. However, one imagines taking a uniformly random choice of the mixing parameter. We show that, in expectation over this random choice, it is possible to upper bound the probability of the logical error rate of the protocol only using knowledge of the aggregate statistics, which we \textit{are able} to estimate numerically.

Let us give a key example. One of our mixing parameters will be the so-called qudit-to-qubit mappings $\mathcal{B}$~\cite{wills2026review}. This is a collection $\mathcal{B} = (B_i)_{i=1}^n$, where $B_i$ is data specifying how the $i$'th qudit in the outer code is broken up into $s$ qubits (or, equivalently, how the $s$ qubits are packaged into the larger qudit). We will show that, in expectation over a random choice of $\mathcal{B}$, the errors sitting on our code at the time of syndrome extraction do behave uniformly. The reason this works is that the uniformly random choice of $\mathcal{B}$ has the effect of ``mixing'' all of the possible $X$-type logical errors, and separately the $Z$-type logical errors, on each gross code.\footnote{To be a bit more concrete, choosing $B_i$ uniformly at random amounts to performing a uniformly random $\mathsf{CNOT}$ circuit amongst the $s$ qubits making up the $i$'th qudit. To be clear, these $\mathsf{CNOT}$ circuits are not being performed. Changing $\mathcal{B}$ changes the specification of the protocol, in a way that is equivalent to acting on each set of $s$ qubits with a random $\mathsf{CNOT}$ circuit.}


Considering some failure mode of our scheme, using only the limited information of aggregate statistics to which we have access, we therefore may calculate some bound
\begin{equation}\label{eq:expectation_mixing_bound}
    \mathbb{E}_{\mathcal{B}}\left\{\mathbb{P}\left[\text{Failure Mode}\right]\right\} \leq V,
\end{equation}
for some number $V$ that we can calculate. This bound necessarily implies that there always exists some choice of $\mathcal{B}$ that makes the $\mathbb{P}\left[\text{Failure Mode}\right]$ small (at most $V$), regardless of what the true logical error distribution of the gross code is. 

In a real system, with real-time decoding, and therefore more information on one's inner code logical error distribution, one may be able to make a wise choice of $\mathcal{B}$. On the other hand, given real-time decoding, one could also literally take $\mathcal{B}$ at random, and simulate to see if the probability of the failure mode is at most $V$. If it is not, the value of $\mathcal{B}$ may be re-randomised, and the result simulated again. What the bound of Equation~\eqref{eq:expectation_mixing_bound} guarantees you is that you will quickly hit a good choice of $\mathcal{B}$, that has a low $\mathbb{P}\left[\text{Failure Mode}\right]$. 

We will go on to show that the same reasoning applies to all mixing parameters of our scheme, and all failure modes of our scheme, simultaneously. Across our memory system, there will be the mixing parameter $\mathcal{B}$, and another mixing parameter called $\nu$ that we will come onto very shortly. The ultimate statement will be that, in expectation over a uniformly random choice of the mixing parameters, the probability of any failure of our scheme may be upper bounded.

\subsubsection*{Qudit Fault Tolerance}

As briefly mentioned above, we protect ourselves against the effects of measurement errors in our Shor error correction scheme by measuring overcomplete bases of checks, rather than simply repeating measurements, to avoid an excessive time overhead in extracting syndromes. Something that we find particularly exciting is that we are able to use a very lightweight scheme for our qudits, that would fail for qubits, because of the unique effects introduced by doing fault tolerance on qudits.

Let us explain this as follows, for the $X$ checks of the outer code (where the discussion for the $Z$ checks is identical). The distance $d$ outer code has $d-1$ qudit $X$ checks (equivalent to $s(d-1)$ qubit $X$ checks). We can imagine the $d-1$ qudit $X$ checks as the rows of some parity-check matrix $H_X \in \mathbb{F}_q^{(d-1)\times n}$, where $\mathbb{F}_q$ is the finite field of $q$ elements. Our protocol for the extraction of the qudit $X$ checks proceeds by first measuring these $d-1$ qudit checks, where each is extracted using one qudit cat state. Next, we measure a number of linear combinations, $M$ of these checks, so that the total set of $d-1+M$ checks forms an overcomplete basis for the space of qudit checks. Concretely, we can consider some matrix $\beta^{(X)} \in \mathbb{F}_q^{M \times (d-1)}$, and the latter $M$ qudit checks extracted are the rows of $\beta^{(X)}\cdot H_X$. Then, if we extract the values $y \in \mathbb{F}_q^{d-1}$ in the first $d-1$ checks, and the values $z \in \mathbb{F}_q^M$ in the latter $M$ checks, we check that $z = \beta^{(X)}\cdot y$. If this equation is not satisfied, we have detected some fault in the round of checks, and we begin the whole round of $d-1+M$ checks again. On the other hand, if the equation is satisfied, we call the round a \textit{self-consistent round of $X$ checks}, and we input $y$ to the decoder of the outer code.

Now, in order to best protect our syndromes, we make a deliberate choice for $\beta^{(X)}$. We consider the classical code with parity-check matrix $H_{X,\text{time}} \coloneq \begin{pmatrix}
        \beta^{(X)} & I
    \end{pmatrix} \in \mathbb{F}_q^{M \times (d-1+M)}$, where $I$ is the $M \times M$ identity matrix. We always want to pick $\beta^{(X)}$ such that this ``time-like'' classical code has distance $M+1$, which is the best distance that can possibly be achieved since it has $M$ parity checks. It turns out that the smoothest way to do this is to consider the parity-check matrix for \textit{another} ``time-like'' classical generalised Reed-Solomon code of distance $M$ and length $d-1+M$, and perform row operations on it to get it into the required form of $H_{X,\text{time}}$. This means that we have a ``time-like'' Reed-Solomon code protecting our syndrome.

When we come onto the analysis, we run into the same problem discussed above in the context of space-time error mixing; how can we calculate the performance of our outer code scheme given our limited information about the failures of the inner code? In the present situation, we call the failure mode the \textit{time-like failure mode}, which is the event that we extract a self-consistent round of checks, but the extracted syndrome $y$ is not the syndrome for the error on the data at any point just before, or during the self-consistent round.\footnote{We will make this precise later; essentially, the time-like failure mode is the event that we got a self-consistent round, but still extracted a bad syndrome.} To bound the probability of time-like failure, the solution is again to introduce a mixing parameter. Here, we consider the mixing parameter $\nu$, which is a vector taking values in $(\mathbb{F}_q^*)^{d-1}$, where $\mathbb{F}_q^*$ is the set of non-zero elements of $\mathbb{F}_q$. Then, we consider replacing the matrix $\beta^{(X)}$ with the matrix $\beta^{(X)}*\nu$, where $\beta^{(X)}*\nu$ is the matrix $\beta^{(X)}$ with all of its columns multiplied by the values in $\nu$.\footnote{Notice that if $\begin{pmatrix}
    \beta^{(X)}&I
\end{pmatrix}$ is the parity-check matrix for a code of distance $M+1$, then so is $\begin{pmatrix}
    \beta^{(X)}*\nu & I
\end{pmatrix}$.}  We can then show, in expectation over a uniformly random choice of $\nu$, that the probability of time-like failure is low. Again, this means that a good choice of $\nu$ exists, regardless of the true inner code logical error distribution.

Most interestingly, we show that we can take a very lightweight  scheme here because of effects that we find to be unique to doing fault tolerance for qudits; \textit{one would simply not use our scheme for qubits}. Specifically, we can take an unusually small value of $M$; indeed, in our analysis, we restrict $M \leq d-2$. Such a scheme would not be used for qubits, because a single data error introduced during a round of checks can keep the round self-consistent, while causing the wrong syndrome to be extracted, that is, the concatenated code fails with the same (or greater) probability as the inner code. However, remarkably, we prove that the probability of such failure modes for our qudits codes (in expectation over the mixing parameters) are heavily suppressed; in fact, we show that they are roughly
\begin{equation}
    \mathbb{P}\left[\text{Syndrome Extraction Fails to One Error}\right]\sim\frac{n(d-1-M)}{(q-1)^M}\cdot P,
\end{equation}
where $P$ is the probability of some gross code failure.\footnote{We are not being precise about this here; we are just trying to capture the order of the quantity.} It is striking to notice that, for qubits, which is the case $q=2$, this formula shows that we are not performing fault tolerance at all --- the probability of outer code failure is (greater than) the probability of inner code failure. However, with large values of $q$, like $q = 2048$, the probability of this failure mode may be made small; in fact, it may be made exponentially small in the code distance by taking larger $M \leq d-2$. We point the reader to Subsection~\ref{subsec:time_like_failures} for the corresponding arguments on time-like failure.

\subsection{Discussion and Outlook}\label{subsec:discussion}

In this work, we have studied the concatenation of algebraic codes over quantum LDPC codes, namely the $144$-qubit ``gross'' bivariate bicycle (BB) code, providing a memory system that can operate in the teraquop regime of one trillion (logical qubits)$\cdot$(logical operations). There, it provides an improved space overhead over the alternative BB code, the $288$-qubit ``two-gross'' code, while offering the engineering benefits discussed. We have introduced a number of new techniques and methods to construct and analyse this protocol. However, we believe there is significant scope for further improvements of this scheme on its own (see Appendices~\ref{sec:improved_footprint_morphing} and~\ref{sec:non_css_outer_codes} for two examples), as well as novel research directions, as follows.
\begin{itemize}
    \item \textit{Soft information decoding}: In~\cite{gidney2025yoked}, the concatenated scheme uses ``soft information'' from the inner surface codes passed to the outer code to maximise performance. This soft information informs the outer decoder which inner codes look most suspicious, allowing the outer code to correct typical errors beyond half the outer code distance. In particular, this enables the use of a distance $2$ outer code in~\cite{gidney2025yoked}, which we do not consider. Note, however, that the use of soft information is likely to be achievable in our case as well. Indeed, \textit{soft information list decoding} is a very well-studied problem for classical Reed-Solomon codes~\cite{koetter2003algebraic,el2004performance,ratnakar2005exponential,el2006iterative,gross2006applications,jiang2008algebraic,zeh2013algebraic,xing2019progressive}, including studies of implementation in classical hardware~\cite{ahmed2004vlsi,scholl2014hardware}. This analysis would then transfer directly to our case because our quantum Reed-Solomon codes are constructed from two classical Reed-Solomon codes. Our inner codes also can provide natural soft information using the same strategies we use for post-selection~\cite{wills2026forced}, which can be used to obtain likelihoods associated to possible logical errors. The analysis of such a system could present further challenges, and so we defer this to future work. However, soft information list decoding can in principle combine the benefits of soft information decoding with the benefits of list decoding for large-alphabet qudits, potentially improving the performance of the scheme considerably. The desire to use soft information in our scheme may also motivate the development of a theory of fault tolerance (not only error correction) in the presence of such soft information/side information, see for example~\cite{wangBrun26}.

    \item \textit{List decoding for fault tolerance}: One of the main messages we wish to convey in the paper is that the ideas of using Galois qudits and algebraic codes are not unique to the gross code; they are transferable to other high-rate architectures. Moreover, the analysis of the fault-tolerant schemes, where we demonstrate fault tolerance effects that are unique to qudits, and absent for qubits, suggests that a theory of qudit fault tolerance should be thought of as genuinely different to the traditional theory of qubit fault tolerance~\cite{gottesman2024surviving}. It is very natural to consider if a theory of list decoding can be developed for fault-tolerant circuits, rather than simply codes. We are not aware of such a theory, even on the classical side.\footnote{Similar effects are indicated by the existence of classical fault tolerance schemes for dits that show a higher threshold than that for bits~\cite{tan2024reliable}, although list decoders for fault-tolerant circuits are not developed.}

    \item \textit{Extractor architectures for concatenated systems}: Recent years have seen considerable development of surgery techniques for high-rate qLDPC codes~\cite{cohen2022low,cross2024improved,cowtan2025fast,cowtan2025parallel,he2025extractors,ide2025fault,zhang2025time,zheng2025high,hillmann2025single,tan2025single,baspin2025fast,yoder2025tourgrossmodularquantum,chang2026constant,yuan2026parsimonious,swaroop2026universal,webster2026pinnacle,williamson2026low,gu2026qgpu,cain2026shor} enabling lower overhead constructions in space and time, and offering greater flexibility. It is quite natural to wonder whether a custom extractor architecture could be engineered that were designed to measure the outer code's checks very efficiently, but not necessarily enabling universal computation, thus giving a very dense, dedicated memory block.
    
    \item \textit{Improved Syndrome Extraction}: As discussed, we use a custom Galois qudit Shor error correction scheme for the fault-tolerant syndrome extraction of our outer codes, which we are able to take to be unusually lightweight because of the fault tolerance effects that are unique to qudits. Even with this, the outer code syndrome extraction is considerably slower than for the yoked surface codes~\cite{gidney2025yoked}. In particular, the relatively limited native gate set of the bicycle instructions on the gross code~\cite{yoder2025tourgrossmodularquantum} renders the generation of the cat states particularly slow, despite our novel compilation tricks (see Appendix~\ref{sec:compile_Z_meas}), presenting one of the key bottlenecks of our scheme. A faster means to fault-tolerantly extract the syndromes of the outer code would benefit the scheme substantially, since one of the dominant noise sources is simply gross codes sitting at idle for a long time in between having their syndromes extracted. This would also have the effect of speeding up the read/write time from and to the memory. Relatedly, a different code or architecture with a broader set of native gates than the current bicycle architecture on the gross code, would enjoy these significant improvements.
    
    \item \textit{Computation on the Concatenated System}: The current analysis is for a memory system only. This means it could serve as a dense memory block of a broader architecture. However, in future, we would like to understand how a concatenated scheme, with high-rate inner codes and algebraic outer codes, could perform in the setting of full computation. This would enable the whole architecture to benefit from the points~\ref{concatenated_benefit_first} to~\ref{concatenated_benefit_last} above.
\end{itemize}

\subsection{Outline of the Paper}

Section~\ref{sec:codes_and_syndrome_extraction} presents a formal description of our protocol, namely the outer codes we use and how they are fault-tolerantly operated. Section~\ref{sec:ler_calc} then upper bounds the logical error rate of one block of our outer code over one of its rounds. Section~\ref{sec:footprint_estimates} then presents the overhead of our memory system against its logical error rate (per logical qubit-round).

Appendix~\ref{sec:compile_Z_meas} presents a means for efficiently compiling certain collections of $Z$ measurements, which is a critical component of the efficient preparation of our cat states. Appendix~\ref{sec:improved_ler_bicycle} presents the improved logical error rates for bicycle instructions on the gross code, and the decoder parameters that achieve these improvements. Appendix~\ref{sec:improved_footprint_morphing} presents the logical error rate of a new idling circuit for the gross code based on morphing~\cite{shaw2025lowering}, and the improved footprint estimates that would be obtained by using it. Appendix~\ref{sec:RS_details} presents a brief description of the outer codes that we use (although full details are found in the companion work~\cite{wills2026review}), and the specific parameter choices. Appendix~\ref{sec:tlf_proofs} contains deferred proofs of lemmas used in the time-like failure analysis in Section~\ref{subsec:time_like_failures}. Appendix~\ref{sec:non_css_outer_codes} contains thoughts on a possible means to improve the construction that makes use of non-CSS outer codes to maximise the potential of list decoding, whose exploration is left to future work. Appendix~\ref{sec:5_qudit_code} presents an explicit example of a $5$-qudit quantum Reed-Solomon code for concreteness, and how it may be expanded to a $55$-qubit code.

\clearpage

\section{Description of the Outer Codes and Syndrome Extraction Circuit}\label{sec:codes_and_syndrome_extraction}

This section will describe the outer codes used in our concatenated approach, and the fault-tolerant syndrome extraction circuit used to operate them. We endeavour to make the following understandable at a high level without mathematical details. For full mathematical details on Reed-Solomon codes and qudit-to-qubit mappings, see the companion paper~\cite{wills2026review}, which is based on prior work~\cite{ashikhmin2001asymptotically,gottesman2024surviving}. For a more brief description of Reed-Solomon codes, see Appendix~\ref{sec:RS_details}.

\subsection{Description of Outer Codes}\label{subsec:code_description}
In this work, we imagine an $m \times n$ grid of $[[144,12,12]]$ quantum bivariate bicycle codes, termed ``gross'' codes~\cite{bravyi2024high,yoder2025tourgrossmodularquantum}, where the grid is imagined to have $m$ rows and $n$ columns. The gross codes have square-grid connectivity to each other, and are taken to have a native set of logical instructions termed ``bicycle'' instructions, as described in~\cite{yoder2025tourgrossmodularquantum}. In particular, ``bridged'' lattice surgery operations between modules are allowed if and only if the codes are adjacent in the grid. 
We think of the $m$ rows as forming codeblocks of our outer code of length $n$, although two rows will be sacrificed to assist the fault-tolerant syndrome extraction of the outer code. Therefore, in total, $m-2$ rows will hold data, and $2$ rows will form an ancillary system. Exactly which two rows form the ancillary rows will change over the course of time. That is, the ancillary rows not carrying data will move through the system over time to assist in the syndrome extraction of the outer codeblocks. See Figure~\ref{fig:ft_syndrome_extraction} for a depiction.

The outer codes into which the gross codes are concatenated will be quantum Reed-Solomon codes~\cite{abo99,grassl1999quantum}. In particular, we take each set of $11$ logical qubits $2, \ldots, 12$ (that is, all logical qubits in the gross code except the first ``pivot'' qubit~\cite{cross2024improved,yoder2025tourgrossmodularquantum}), and package each into one ``Galois qudit'' of dimension $q=2^{s}$, where $s=11$ (and $q = 2^{11} = 2048$) here and throughout the paper. Doing so allows us to concatenate the logical information of each gross code neatly into the outer algebraic qudit code. Let us emphasise that the ``qudits'' here are simply groups of qubits --- indeed, the entire construction may be described without any reference to qudits --- but the qudit description enables a very natural and convenient means to describe concatenations over high-rate codes. The pivot qubit (that is, logical qubit $1$) in each module is simply left as an ancilla that allows the execution of operations required to operate the outer code.

The quantum Reed-Solomon codes that we use as our outer codes have parameters\footnote{Note that these parameters are optimal since they meet the quantum Singleton bound~\cite{rains2002nonbinary}, which states that if an $[[n,k,d]]_q$ quantum code exists, then $n-k \geq 2(d-1)$, regardless of the qudit dimension $q$.}
\begin{equation}
    [[n,n-2(d-1),d]]_q,
\end{equation}
where we may choose any values of $n,d$ satisfying $2(d-1)<n\leq q$. The quantum Reed-Solomon codes we use are defined formally in~\cite{wills2026review}; see also Appendix~\ref{subsec:qrs_def} for a description with only the essential details. Note that the subscript $q$ on these parameters denotes the fact that these are codes over qudits of dimension $q$. In order to concatenate them over our gross codes, they must be converted into \textit{qubit} codes. We note that any $[[n,k,d]]_q$ code is equivalent to an $[[ns,ks,\geq d]]$ code (where $q = 2^s$), where the lack of subscript on these latter parameters indicates that this is a qubit code. Note that this means that a codeblock of the outer code is made up on $n$ physical gross codes, encoding the same logical information of $n-2(d-1)$ gross codes.

In order to convert the qudit codes into qubit codes, a ``qudit-to-qubit mapping'' may be chosen on each of the $n$ physical qudits (gross codes) in the outer code. There are many such qudit-to-qubit mappings, but they may each be specified by a choice of basis $B$ for the field $\mathbb{F}_q$ over $\mathbb{F}_2$. At each coordinate of the code $i \in [n]$, we may choose a different such qudit-to-qubit mapping labelled by the basis $B_i$, and we refer to the collection of bases as $\mathcal{B} = (B_i)_{i=1}^n$.
For the full mathematical details on qudit-to-qubit mappings, the reader may refer to~\cite{wills2026review}. However, for now, we will mention that these mappings give us consistent ways to map all objects of the qudit code to those of a qubit code, including the stabilisers, logical operators, errors, and the code states themselves. For example, a computational basis state on one qudit may be written $\ket{\gamma}$ for $\gamma \in \mathbb{F}_q$, and these are in one-to-one correspondence with $s$-qubit computational basis states $\ket{\mathcal{D}_B(\gamma)}$, where $\mathcal{D}_B:\mathbb{F}_q \to \mathbb{F}_2^s$ is an invertible linear map. If you choose a different mapping between the qudit and the $s$ qubits (encoded in the basis $B$), you get a different linear map $\mathcal{D}_B$. To guide the presentation, we provide an explicit example of a $[[5,1,3]]_{2048}$ code in Appendix~\ref{sec:5_qudit_code}, and show an example of how it can be expanded to a $[[55,11,\geq 3]]_2$ code.

A very important comment to make about such qubitised qudit codes is that, while they have poor distance in the strict sense (for example, this code mentioned has distance $\geq 3$), the notion of distance is a little strange when thought in terms of qubits. Indeed, when imagined as a qubit code, the code has $n$ groups of $s$ physical qubits, where each group is really a qudit. Then, if the qudit code had distance $d$, generically the qubit code will have distance $d$ as well, meaning that there is some way to damage a single qubit in $d$ different groups in order to create a logical error. However, the important thing about these codes is that they treat all errors within each group of $s$ qubits equally, that is, if you damage $\leq d-1$ groups of qubits in any way, it will not be a logical error, regardless of how many qubits in each group you damage. We can see, therefore, that these qudit codes are exactly the natural object when talking about concatenating over high-rate quantum codes such as the gross code. Indeed, when the gross code fails, there is no guarantee that its logical qubits fail independently, and indeed they do not; logical qubits in one gross code experience highly correlated errors. However, separate gross codes do fail independently. This makes the use of these qudit codes extremely natural for our purposes.

Note that all copies of our outer code, on all $m-2$ rows of our gross code grid that carry data, are identical. That is, the qudit codes are identical, and the same set of qudit-to-qubit mappings $\mathcal{B}$ are chosen on every codeblock. We emphasise this latter point; every row of length $n$ has the same set of qudit-to-qubit mappings $\mathcal{B}$.

\subsection{Qudit Cat State Preparation}\label{subsec:ft_qudit_cat_prep}

Given our memory block of $m$ gross code rows, $m-2$ of which form active blocks of the outer code, we need some way to fault-tolerantly extract their syndromes. We will proceed with a version of Shor-style error correction~\cite{shor1996fault} for qudits. To do this, the $2$ rows that do not carry data will be used to fault-tolerantly prepare qudit cat states that can be used to measure one qudit stabiliser of the outer code at a time. It is useful to note that measuring one qudit stabiliser is equivalent to measuring $s$ qubit stabilisers.
This subsection will focus on the fault-tolerant preparation of one qudit cat state.

For the duration of Subsection~\ref{subsec:ft_qudit_cat_prep}, we focus entirely on the two ancilla rows, where one row is described as the ``cat state row'', and one is described as the ``ancilla row''. We aim to generate the qudit cat state on the cat state row, where the ancilla row is only present to enable fault-tolerant checking of the cat state.

To orient the reader, we comment again that the following procedure could be described purely in terms of qubits. In fact, consuming one qudit cat state to measure one qudit stabiliser is the same thing as making one large qubit ancilla state and consuming it to measure $s$ qubit stabilisers in one go. However, we find the description in terms of qudits extremely convenient. We endeavour to make the following understandable at a high level without intimate knowledge of the qudit language, however, for full details, the reader should consult~\cite{wills2026review}.

Note that in this subsection, we are imagining that we are preparing a single qudit cat state to extract the syndrome of an $X$-type qudit stabiliser, but the measurement of a $Z$-type qudit stabiliser proceeds identically, with $X$ and $Z$ swapped everywhere.

\subsubsection{Qudit Cat State Preparation and Verification}

Throughout subsection~\ref{subsec:ft_qudit_cat_prep}, we imagine we are wanting to extract the syndrome of the $X$-type qudit stabiliser $X_1^{\gamma_1}X_2^{\gamma_2}\ldots X_n^{\gamma_n}$ on some block of the outer code, where $\gamma_1, \ldots, \gamma_n \in \mathbb{F}_q$ and $\gamma_i \neq 0$. In order to do this, we wish to prepare the following state on the cat state row, defined in terms of the qudit language~\cite{wills2026review}. Later, we will talk about how this state can be consumed to measure the stabiliser; for now, we are just focused on fault-tolerantly preparing and verifying this state.

\begin{definition}[Qudit Cat State] Given Galois qudits of dimension $q = 2^s$, and given non-zero field elements $\gamma_1, \gamma_2, \ldots, \gamma_n \in \mathbb{F}_q$, the qudit cat state $\mathsf{Cat}(\gamma_1, \gamma_2, \ldots, \gamma_n)$ is defined to be the (unique) $n$-qudit state with the following set of qudit stabilisers:
\begin{itemize}
    \item $X_1^{\gamma_1}X_2^{\gamma_2}\ldots X_n^{\gamma_n}$;
    \item $Z_{i-1}^{\gamma_i}Z_i^{\gamma_{i-1}}$ for each $i = 2, \ldots, n$.
\end{itemize}    
\end{definition}
One may refer to~\cite{wills2026review} for details on what it means to define a qudit stabiliser state, but we comment here that a single qudit stabiliser is really the same thing as $s$ qubit stabilisers.

Note that the state $\mathsf{Cat}(\gamma_1, \gamma_2, \ldots, \gamma_n)$ is the natural generalisation of the usual $n$-qubit cat state, because the usual $n$-qubit cat state is the (unique) qubit state with (qubit) stabilisers $X^{\otimes n}$ and $Z_{i-1}Z_i$ for each $i = 2, \ldots, n$. Additionally, note that the $n$-qubit cat state is the appropriate ancillary state to measure the stabiliser $X^{\otimes n}$. Analogously, the state $\mathsf{Cat}(\gamma_1, \gamma_2, \ldots, \gamma_n)$ will be the appropriate ancillary state to measure the qudit stabiliser $X_1^{\gamma_1}X_2^{\gamma_2}\ldots X_n^{\gamma_n}$.\footnote{A different strategy would be to convert all qudit stabilisers to $s$ different qubit stabilisers and measure them all separately with qubit cat states, but this would be very inefficient because it would require approximately $s$-times longer to perform syndrome extraction.}

On the other hand, one interesting difference with the case of the regular qubit cat state is that the qubit cat state always has the same form, but its size changes. For the qudit Reed-Solomon codes, the qudit stabilisers are essentially always supported on every qudit (they are length $n$), but their form can change. In the rare case that a stabiliser is not supported on every qudit, one can perform a simple modification of this procedure that uses strictly fewer gates, and so we make the pessimistic assumption that every qudit stabiliser is supported one every qudit; we will come back to this again later.

Before we say how we prepare this state fault-tolerantly, let us make clear what the aim of the fault-tolerant preparation is, that is, the notion of ``$t$-fault tolerance'' we are working with. We will show in Subsection~\ref{subsec:ler_single_cat} that our qudit cat state preparation satisfies this notion of $t$-fault tolerance, for a certain $t$.

\begin{definition}\label{def:ft_qudit_cat_property}
    The procedure for preparing the qudit cat state $\mathsf{Cat}(\gamma_1, \gamma_2, \ldots, \gamma_n)$ is $t$-fault-tolerant if, assuming there are some number $s \leq t$ faulty bicycle instructions in the generation of the cat state, there are errors on $\leq s$ qudits of the accepted qudit cat state.
\end{definition}
Let us now describe how to fault-tolerantly prepare and verify this state. Note that the low-level compilation into bicycle instructions will be described in the next subsubsection; for now, we work in terms of a higher-level gate set.

The high-level picture for fault-tolerantly preparing and verifying the qudit cat state will be to start with all our qudits in the qudit $\ket{+}$ state, and then to repeatedly measure and check the qudit stabilisers $Z_{i-1}^{\gamma_i}Z_i^{\gamma_{i-1}}$. To make progress, we thus need a fault-tolerant gadget to measure a weight-two qudit $Z$ operator. The gadget that fault-tolerantly measures the weight-two qudit $Z$ operator $Z^{\alpha}Z^{\beta}$, for non-zero $\alpha, \beta \in \mathbb{F}_q$, on qudits $i-1$ and $i$ is denoted $\left(Z_{i-1}^\alpha Z_i^\beta\right)_{\mathrm{FT}}$. We will describe this gadget shortly, but at this stage we can easily describe the whole qudit cat state preparation and verification.
\begin{enumerate}
    \item\label{step:qudit_cat_first_step} \textit{Initial $X$-Preparation:} Prepare all qudits in the cat state row in the qudit $\ket{+}$ state. This is the same thing as preparing all $ns$ non-pivot logical qubits of the cat state row gross codes in the qubit $\ket{+}$ state;
    \item\label{step:qudit_ZZ_meas_round} \textit{Non-Fault-Tolerant State Preparation:} For each $i = 2, \ldots, n$, measure $Z_{i-1}^{\gamma_i}Z_i^{\gamma_{i-1}}$. This need not be performed fault-tolerantly. Perform the appropriate frame updates so that (in the absence of faults) we hold the state $\mathsf{Cat}(\gamma_1, \gamma_2, \ldots, \gamma_n)$;
    \item\label{step:initial_cat_state_explanation_logical_post_select} \textit{Fault-Tolerant $Z$-Stabiliser Checking:} For each $i = 2, \ldots, n$, use the gadget $\left(Z_{i-1}^{\gamma_i}Z_i^{\gamma_{i-1}}\right)_{\mathsf{FT}}$ to fault-tolerantly measure the qudit operator $Z_{i-1}^{\gamma_i}Z_i^{\gamma_{i-1}}$ on the cat state row. If any non-trivial measurement outcome is obtained (suggesting the presence of some $X$ error), the cat state is rejected. This step is repeated, so that it is performed a total of $R$ times.
\end{enumerate}
Note that this fault-tolerant preparation and verification procedure for the qudit cat state is a natural analogue of something that one might do for a qubit cat state (it is similar to~\cite{delfosse2022fault}, with some differences): non-fault-tolerant preparation via measurements, followed by checking to detect the presence of (possibly high-weight) $X$ errors. The need to detect high-weight $X$ errors is that these can spread to high-weight errors in the code upon consumption of the cat state; we don't attempt to catch $Z$ errors since these do not spread to the data, but only cause measurement errors. While we will formally argue about the effect and spread of faults later, it is instructive for now to mention that if we only did steps~\ref{step:qudit_cat_first_step} and~\ref{step:qudit_ZZ_meas_round}, the cat state preparation would be non-fault-tolerant in a very bad sense, that is, one fault could result in an $X$ error of arbitrary weight. The reason is that that first round of measurements is non-deterministic even in the absence of faults, since we have no prior expectation of what the outcomes ``should'' be. Just as when doing the same scheme for preparing an $n$-qubit cat state, a single fault in the outcome of this measurement can correspond to an arbitrarily high-weight $X$ error, essentially because we perform the wrong frame update in software.

The whole procedure thus looks like preparation in the qudit $\ket{+}$ state, a non-fault-tolerant measurement of the cat state's $Z$ stabilisers, followed by $R$ rounds of fault-tolerant checking of the cat state's $Z$ stabilisers. Each of these $1+R$ rounds of measurement may be performed in two ``layers'', where the first layer simultaneously measures $Z_1^{\gamma_2}Z_2^{\gamma_1}, Z_3^{\gamma_4}Z_4^{\gamma_3}, \ldots$ and the second layer measures $Z_2^{\gamma_3}Z_3^{\gamma_2}, Z_4^{\gamma_5}Z_5^{\gamma_4}, \ldots$. An example of the fault-tolerant qudit cat state preparation and verification circuit is shown in Figure~\ref{fig:ft_qudit_cat_state_prep}.

\begin{figure}[ht]
    \centering
    \includegraphics[width=0.95\linewidth]{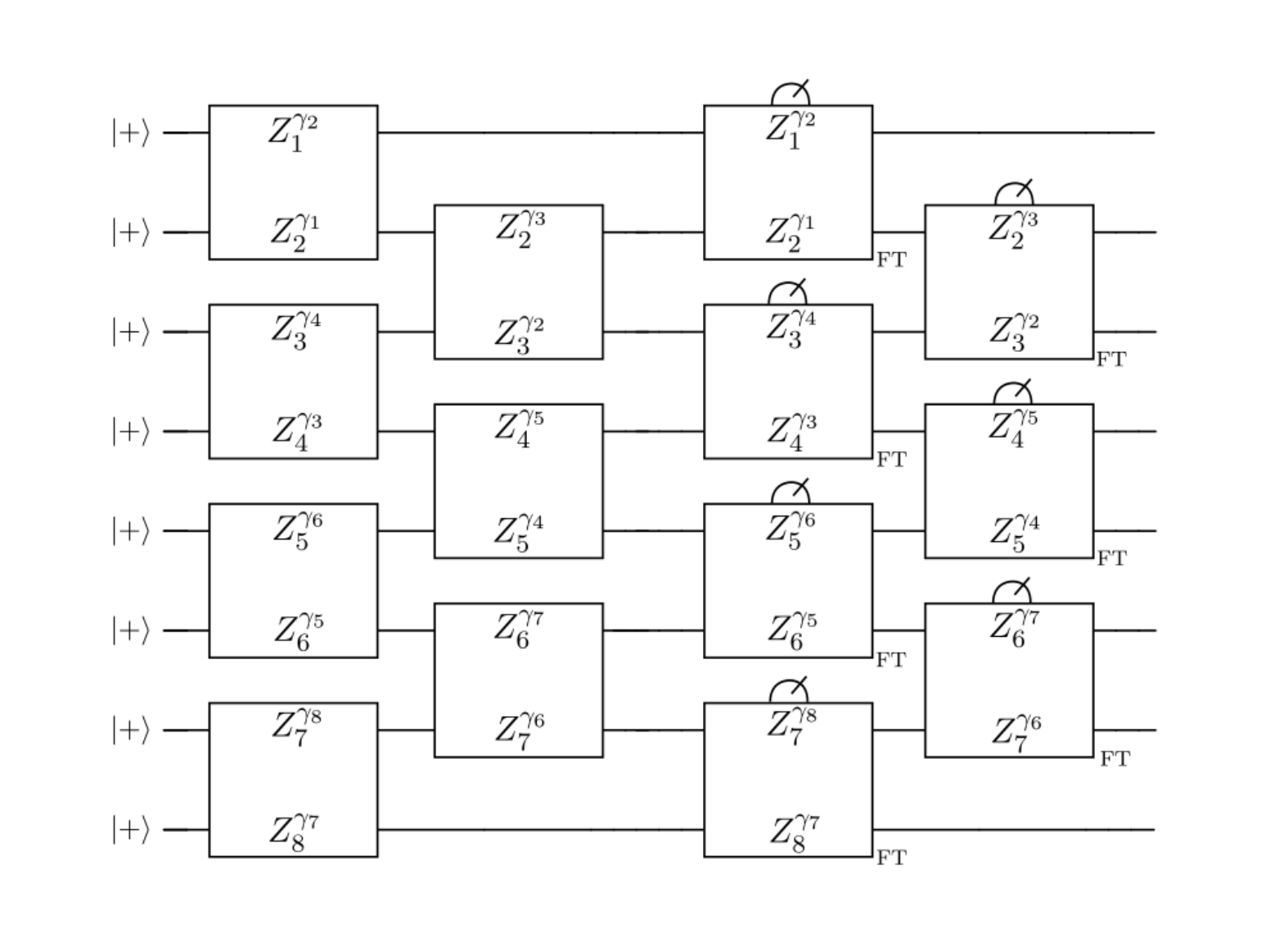}

    \caption{Depiction of the fault-tolerant preparation and verification of the qudit cat state for $n=8$ and $R=1$. Every wire denotes a qudit, which may be imagined as the $s$ non-pivot qubits of a cat state row gross code. The circuit proceeds via an initial non-fault-tolerant preparation, and then $R$ rounds of fault-tolerant checking. Each set of $n-1$ measurements can be executed in two ``layers''. In this diagram, all measurements are denoted by boxes. Those without meter symbols denote ``measurement projections'', that is, measurements followed by the appropriate frame updates to ensure a trivial outcome. Those with meter symbols are not accompanied by frame updates following them. In this cat state preparation, any non-trivial outcome on measurements with meter symbols leads to cat state rejection.}
    \label{fig:ft_qudit_cat_state_prep}
\end{figure}
The non-fault-tolerant qudit measurement $Z_{i-1}^\alpha Z_i^\beta$ may be directly performed between two gross code modules. However, the fault-tolerant qudit measurement gadget will require four qudits/gross codes to ensure fault tolerance. In particular, when the gadget $\left(Z_{i-1}^\alpha Z_i^\beta\right)_{\mathsf{FT}}$ is executed on two cat state row qudits, it will use the two adjacent qudits/gross codes in the ancilla row also. The protocol proceeds as follows, and depicted in Figure~\ref{fig:ft_qudit_ZZ_meas}.
\begin{enumerate}
    \item Prepare the two qudits in the ancilla row in the qudit $\ket{+}$ state;
    \item Measure the qudit operator $Z^\alpha Z^\beta$ (non-fault-tolerantly) on the two ancilla row qudits. Apply the appropriate frame updates so that the two ancilla row qudits are in the state $\mathsf{Cat}(\beta, \alpha)$ (in the absence of faults);
    \item Measure the qudit operator $ZZ$ on the two pairs of adjacent gross codes between the ancilla row and cat state row. Let the outcome on the pair labelled $i-1$ be $\eta_1 \in \mathbb{F}_q$ and the outcome on the pair labelled $i$ be $\eta_2 \in \mathbb{F}_q$. The sub-routine's measurement outcome is then $\alpha\eta_1 + \beta\eta_2$;
    \item Measure out the ancilla row gross codes in the $X$ basis and apply the appropriate frame updates to the cat state row.
\end{enumerate}
As before, the preparation of a qudit in the qudit $\ket{+}$ state is achieved by preparing all its constituent qubits in the qubit $\ket{+}$ state. Similarly, the measurement of a qudit in the $X$ basis may be achieved by measuring all that qudit's constituent qubits in the $X$ basis, see~\cite{wills2026review}. In~\cite{wills2026review}, it is also shown that the measurement of the qudit $ZZ$ operator between qudits with the same qudit-to-qubit mapping may be achieved by measuring $ZZ$ on each of their constituent qubits; note that because we pick the same qudit-to-qubit mappings across columns of our gross code grid, this is the case here. Finally, how to perform the measurement of the qudit $Z^{\alpha}Z^{\beta}$ operator will be discussed in the next subsubsection.

The fact that the fault-tolerant qudit measurement sub-routine does indeed perform the claimed measurement in the absence of faults can be checked with the use of qudit stabiliser tableaux~\cite{wills2026review}; we will also perform a very similar calculation later when we consume the cat state, and so we omit a similar calculation here.

\begin{figure}[ht]
    \centering

    \includegraphics[width=0.6\linewidth]{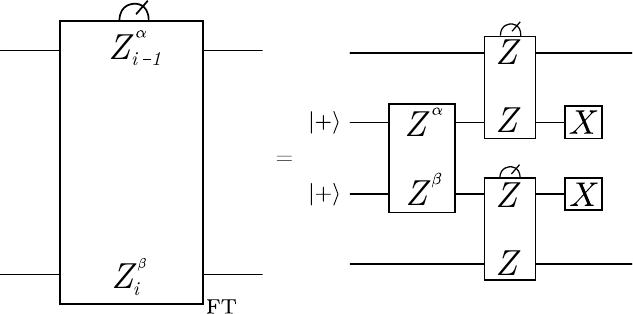}
    
    \caption{The fault-tolerant $Z^\alpha Z^\beta$ qudit measurement sub-routine acts on two cat state row qudits/gross codes (the first and fourth wires on the right-hand side), using the adjacent two ancilla row qudits/gross codes (the second and third wires) as ancillas. All operations depicted are qudit operations, in particular, boxes denote the measurement of the qudit operator by which they are labelled. Again, boxes without meter symbols denote measurement projections, meaning that the measurement is succeeded by the appropriate frame updates to ensure the outcome is trivial (in the absence of faults). 
    The outcome of this routine is taken to be $\alpha \eta_1 + \beta \eta_2$, where $\eta_1, \eta_2 \in \mathbb{F}_q$ are the outcomes of the upper and lower qudit $ZZ$ measurements on the right-hand side, respectively.}
    \label{fig:ft_qudit_ZZ_meas}
\end{figure}

\subsubsection{Compilation into Bicycle Instructions}

Let us now describe how to compile all of the above into the bicycle instructions~\cite{yoder2025tourgrossmodularquantum}. We begin with the easiest elements. First, as mentioned, the preparation/destructive measurement of a qudit in the $X$ basis may be executed via the preparation/destructive measurement of all its qubits in the $X$ basis~\cite{wills2026review}. In turn, because the gross code is a CSS code, all of its logical qubits may be prepared/destructively measured in the $X$ basis by preparing/destructively measuring all of its physical qubits in the $X$ basis~\cite{gottesman2024surviving}, and performing syndrome extraction in the former case. Of course, this will have the effect of preparing/destructively measuring the pivot qubit in the $X$ basis, as well as the $s$ non-pivot qubits forming the qudit, but this is not a problem.

The next easiest thing to compile into the bicycle instructions is the qudit $ZZ$ measurement between adjacent pairs of qudits in the ancilla row and cat state row. As mentioned, and shown in~\cite{wills2026review}, the measurement of $ZZ$ on two qudits with the same qudit-to-qubit mapping may be achieved by measuring $ZZ$ on all their constituent qubits pairwise (note this is $s$ qubit measurements). All qudits in the $i$'th column of our global gross code grid have the same qudit-to-qubit mapping, and so wherever we do the qudit $ZZ$ measurement, we are doing it between qudits with the same qudit-to-qubit mapping. 

The question then exactly becomes, how may one measure the $s$ qubit operators $Z_iZ_i$, for $i \in [s]$, on the $s$ non-pivot logical qubits in two adjacent gross codes? Well, the first thing to notice is that this is equivalent to measuring the $s$ qubit operators $Z^{v_i}Z^{v_i}$, where $(v_i)_{i=1}^s$ are $s$ linearly independent bit strings of length $s$, on this set of $2s$ qubits. The reason we are going to do this is that the bicycle instructions place different costs on different in-module measurements~\cite{yoder2025tourgrossmodularquantum}, and we want to only use the cheapest ``native'' in-module measurements, as we will describe shortly.

For a particular vector $v \in \mathbb{F}_2^s$, the compilation of the measurement $Z^vZ^v$ on two groups of $s$ non-pivot qubits in adjacent gross code modules is depicted in Figure~\ref{fig:qubit_meas}. This uses the pivot qubit in each gross code module as an ancilla. A single measurement of $Z^vZ^v$ between adjacent groups of $s$ qubits is compiled via a measurement projection of the respective pivot qubits in the $X$ basis, in-module measurements $ZZ^v$ on the $s+1$ qubits of both gross codes,\footnote{To be clear, given a bit string $v \in \mathbb{F}_2^s$, the operator $ZZ^v$ denotes the $(s+1)$-qubit operator with support $Z$ on the first qubit, and support $Z^v$ on the latter $s$ qubits.} followed by an inter-module $ZZ$ measurement between the pivot qubits of the gross code modules (which is a native bicycle instruction), and finally followed by a measurement projection of the pivot qubits in the $X$ basis. The desired outcome of the measurement may be found via the $\mathsf{XOR}$ of the three $Z$-type measurements in the compilation.
\begin{figure}[ht]
        \centering
        \includegraphics[width=0.6\linewidth]{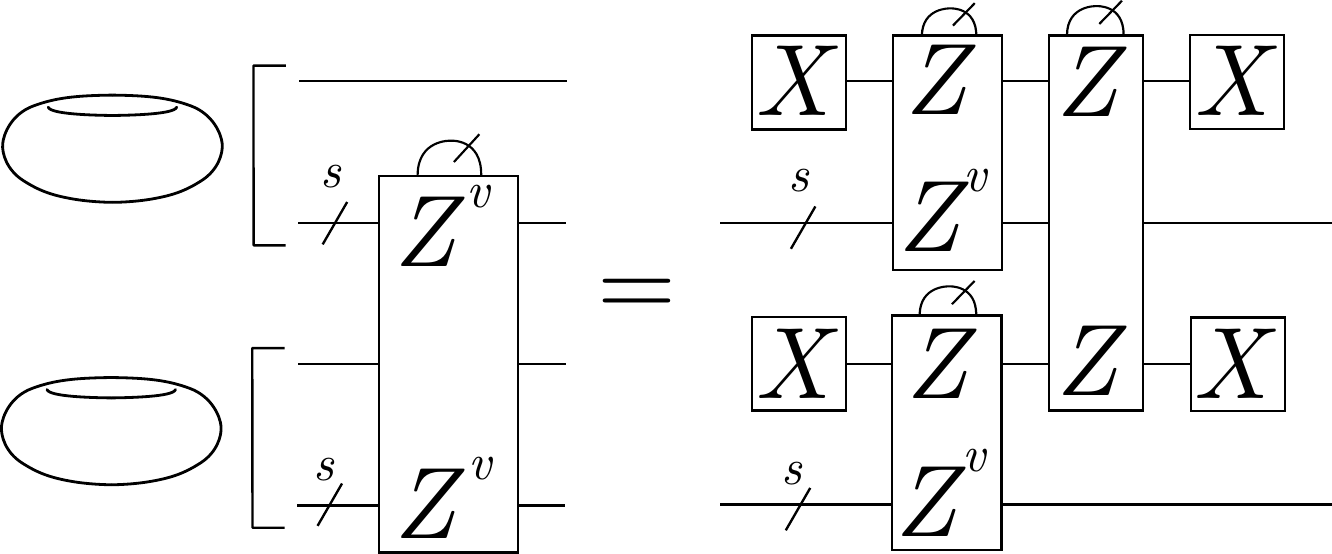}
        \caption{The compilation of the measurement $Z^vZ^v$, where $v \in \mathbb{F}_2^s$, on the $s$ non-pivot qubits in adjacent gross code modules. The pivot qubits in each module are used as ancillas. The measurement of $Z^vZ^v$ on the two groups of $s$ qubits may be inferred via the $\mathsf{XOR}$ of the three $Z$-type measurements on the right-hand side. Note that the $X$-type measurements are \textit{measurement projections}, that is, measurements followed by the appropriate frame updates to ensure a $+1$ outcome. In this step, we will exclusively use bit strings $v \in \mathbb{F}_2^s$ such that $ZZ^v$ is a native in-module measurement on the $s+1$ qubits of the gross code. Note that when performing several of these in sequence, which we want to, the measurement projection of the pivot qubit in the $X$ basis does not need to be performed twice.}
    \label{fig:qubit_meas}
\end{figure}

The only part of this compilation which may not necessarily be a native bicycle instruction is the in-module measurement of $ZZ^v$ on the two gross codes. However, we find that it is possible to find $s$ linearly independent bit strings of length $s$, $(v_i)_{i=1}^s$, such that the measurements $ZZ^{v_i}$ are all native, meaning that they can be executed with shift automorphisms (which are much cheaper than measurement instructions in terms of time and error rate), and one in-module measurement instruction. Even beyond this, some in-module measurement instructions in the bicycle architecture are noisier than others. Referring to the discussion in Appendix~\ref{sec:improved_ler_bicycle}, we divide the in-module instructions into ``half-LPU'' in-module measurements, and ``whole-LPU'' in-module measurements, where the latter are approximately an order of magnitude noisier than the former. We wish to perform our qudit $ZZ$ measurement not only using native instructions, but maximising the use of the half-LPU in-module instructions. Using the tools in~\cite{github_bicycle_architecture_compiler}, we find that the best one can do is to obtain $s$ linearly independent vectors $(v_i)_{i=1}^s$ such that $ZZ^{v_i}$ are all native measurements on the gross code, and such that $5$ of the corresponding in-module measurement instructions are half-LPU, and $6$ are whole-LPU; one example is
\begin{equation}\label{eq:native_Z_meas}
\setcounter{MaxMatrixCols}{11}
\left[
\begin{array}{ccccccccccc}
0&0&0&0&0&1&0&0&0&0&0\\
0&0&0&0&1&1&0&0&0&0&1\\
0&0&1&0&1&1&0&0&1&0&1\\
0&0&1&1&1&1&0&0&1&1&1\\
0&1&0&0&1&1&0&1&0&0&1\\
1&0&0&1&0&1&1&0&0&1&0\\
0&0&0&0&1&0&0&0&0&0&0\\
0&0&1&0&1&0&0&0&0&0&0\\
0&0&1&1&1&0&0&0&0&0&0\\
0&1&0&0&1&0&0&0&0&0&0\\
1&0&0&1&0&0&0&0&0&0&0
\end{array}
\right]
\begin{array}{l}
\rdelim\}{6}{*}[\text{ Whole-LPU}]\\[-0.2ex]
\\
\\
\\
\\
\\
\rdelim\}{5}{*}[\text{ Half-LPU}]\\[-0.2ex]
\\
\\
\\
\\
\end{array}
.
\end{equation}
Here, rows correspond to bit strings $v_i$ such that $ZZ^{v_i}$ is a native measurement in the bicycle instructions. The first $6$ rows correspond to native measurements for which the in-module instruction is a whole-LPU measurement, whereas the latter $5$ correspond to half-LPU measurements.

To assess the cost of the qudit $ZZ$ measurements, we find that $s$ measurements are made between gross codes of the form $Z^vZ^v$, each of which requires the pivot qubit to be reset and recycled in the $X$ basis. However, we note that the pivot qubit does not need to be set in the $X$ basis if it has already been done. Thus, the $X$ measurement does not need to be performed twice between $Z^vZ^v$ measurements. Noting that $X_1$ is a half-LPU in-module instruction, we find that the total number of gates required for a qudit $ZZ$ measurement is $11$ inter-module measurement instructions, $34$ half-LPU in-module measurement instructions, and $12$ whole-LPU in-module measurement instructions. Moreover, the time taken to perform a qudit $ZZ$ measurement is that of $34$ measurement instructions (noting that all measurement instructions take the same time, namely $120$ timesteps, see Table 2 of~\cite{yoder2025tourgrossmodularquantum}).\footnote{Parallelising between the two gross codes as in Figure~\ref{fig:qubit_meas}, a qudit $ZZ$ measurement takes the time to perform $11$ inter-module measurements, $12$ measurement projections of the pivot in the $X$ basis, and $11$ in-module measurements of the form $ZZ^v$.} Note that one may reach the exact same conclusions for the qudit $XX$ measurement which is needed during the extraction of a syndrome of a qudit $Z$ check. That is, the qudit $ZZ$ and qudit $XX$ measurements behave exactly the same.

Finally, let us consider the (non-fault-tolerant) qudit $Z^\alpha Z^\beta$ measurement on adjacent qudits/gross codes. Given a particular $\alpha$ and $\beta$, and a pair of qudit-to-qubit mappings, this qudit measurement may be achieved on the two groups of $s$ qubits as some collection of $s$ measurements $Z^{x_i}Z^{y_i}$, where $(x_i)_{i=1}^s$ and $(y_i)_{i=1}^s$ are each collections of $s$ linearly independent bit strings in $\mathbb{F}_2^s$ (one may check this using the methods in~\cite{wills2026review}). We collect these two sets of vectors into two $s \times s$ (invertible) binary matrices $X$ and $Y$, with rows $x_i$ and $y_i$, respectively. Then, in a similar way to how we treated the qudit $ZZ$ measurement, instead of measuring the $Z$ operators corresponding to the rows of $\begin{pmatrix}
    X & Y
\end{pmatrix}$, it is equivalent to measure the $Z$ operators corresponding to the rows of $\begin{pmatrix}
    AX & AY
\end{pmatrix}$, where $A$ is any invertible $s \times s$ matrix over $\mathbb{F}_2$. Indeed, given the previous discussion, we can always choose a matrix $A$ that makes $AX$ equal to the matrix shown in Equation~\eqref{eq:native_Z_meas}. This will allow us to maximise the use of native operations on the first gross code. However, the measurements on the second gross code will require further compilation. In Appendix~\ref{sec:compile_Z_meas}, we will describe an efficient method to compile these operations, and we describe how we estimate the cost of these sequences of measurements.

\subsubsection{Post-Selected Algorithm for Fault-Tolerant Qudit Cat State Preparation}

We are close to being able to state our algorithm for fault-tolerant qudit cat state preparation and verification formally. Before we can do so, one more ingredient must be stated: that of post-selection.

Our cat state is an ancillary state that is being prepared offline from the data block. Thus, at any point, if we are suspicious of it, we can throw it away (post-select it) and attempt to remake it, with the aim of having an increased error rate on surviving operations, with a manageable post-selection overhead. Notice that we have already done some post-selection, in that when we check the qudit measurement $Z_{i-1}^{\gamma_i}Z_i^{\gamma_{i-1}}$, obtaining any non-trivial measurement outcome causes us to reject/post-select the cat state, and re-attempt its creation (see Step~\ref{step:initial_cat_state_explanation_logical_post_select} previously). We are going to employ some further post-selection, however. That is, the decoder for our inner code (the gross code) attempts to decode every instruction (every in-module measurement instruction, and so on). One can develop ``rules'', called \textit{post-selection strategies}, such that, if a particular decoding instance looks suspicious, we can post-select that operation. The hope is that, with a relatively small amount of post-selection, a substantial improvement in the logical error rate of surviving shots may be obtained.

Developing such post-selection strategies for high-rate qLDPC codes such as the gross code has been an active area recently~\cite{lee2025efficient,xie2026simple}, although we choose to use the strategies from the companion paper~\cite{wills2026forced}. We will specify our exact post-selection strategies used later on (see Subsection~\ref{subsec:ler_single_cat}); for now, we may simply leave the post-selection strategies as a parameter of the protocol, $\mathcal{C}$. We note that we will actually use three such strategies: $\mathcal{C}_1$, $\mathcal{C}_2$ and $\mathcal{C}_3$. The reason for this is that, generically, with such post-selection strategies, a more stringent rule (rejecting more shots) leads to a lower logical error rate of surviving shots, but a greater post-selection overhead. On the other hand, more lenient rules (rejecting fewer shots) lead to a smaller decrease in the logical error rate of surviving shots, but with a more manageable post-selection overhead. In our cat state preparation protocol, there are some operations which, if post-selected, the entire cat state must be recreated. On the other hand, there are some operations that occur offline, even from the cat state, and so if they are post-selected, they may be repeated in place without affecting much else. A key example is the non-fault-tolerant qudit $Z^\alpha Z^\beta$ measurement occurring on the ancilla row as part of the fault-tolerant $Z^\alpha Z^\beta$ measurement sub-routine. If this is post-selected, we don't need to throw away the whole cat state, because post-selection here does not affect the whole cat state; the operation may be retried in place. On the other hand, if one of the qudit $ZZ$ measurements in the fault-tolerant measurement sub-routine is post-selected, the whole cat state is broken, and we must retry the creation of the cat state from the start. In order to get the best error rate within a reasonable post-selection overhead, we therefore pick more stringent post-selection rules for operations whose post-selection does not affect much, and more lenient rules for operations whose post-selection leads to the re-attempting of the entire cat state. In particular, we think of the post-selection condition $\mathcal{C}_2$ as more stringent, and $\mathcal{C}_1$ and $\mathcal{C}_3$ as less so.

We conclude this subsection by stating our fault-tolerant qudit cat state preparation and verification protocol formally in Sub-Routine~\ref{alg:qudit_cat_FT_prep}, which further uses Sub-Routines~\ref{alg:ft_qudit_ZZ_meas} and~\ref{alg:qudit_ZZ_meas}.

\begin{subroutine}[H]
\caption{Fault-Tolerant Qudit Cat State Preparation and Verification}
\label{alg:qudit_cat_FT_prep}
\begin{algorithmic}[1]

\Require Rectangular grid of $2 \times n$ gross codes formed of a ``cat state row'' and ``ancilla row''; non-zero field elements $\gamma_1, \gamma_2, \ldots, \gamma_n \in \mathbb{F}_q$; $R \in \mathbb{N}$ specifying the desired degree of fault tolerance; post-selection conditions $\mathcal{C}_1$, $\mathcal{C}_2$ and $\mathcal{C}_3$.
\Ensure $\mathsf{Cat}(\gamma_1, \gamma_2, \ldots, \gamma_n)$ prepared fault-tolerantly on the cat state row.

\State\label{step:initial+prep} Prepare all logical qubits in the cat state row gross codes in $\ket{+}$, by preparing all the gross codes' physical qubits in $\ket{+}$ and performing syndrome extraction. This has the effect of preparing each of the qudits in that row in the qudit $\ket{+}$ state, as well as the pivot qubits in that row in the qubit $\ket{+}$ state

\For{each even $i \in \{1, \ldots, n\}$}
    \State\label{step:first_meas_layer} Measure $Z_{i-1}^{\gamma_i}Z_i^{\gamma_{i-1}}$ non-fault-tolerantly on the cat state row gross codes $i-1$ and $i$ with post-selection condition $\mathcal{C}_1$. This is achieved using the prescription of Appendix~\ref{sec:compile_Z_meas}
    \If{post-selection occurs}
        \State Repeat steps~\ref{step:initial+prep} and~\ref{step:first_meas_layer} on this pair
    \EndIf
\EndFor

\For{each odd $i \in \{3, \ldots, n-1\}$}
    \State Measure $Z_{i-1}^{\gamma_i}Z_i^{\gamma_{i-1}}$ non-fault-tolerantly on the cat state row gross codes $i-1$ and $i$ with post-selection condition $\mathcal{C}_3$. This is achieved using the prescription of Appendix~\ref{sec:compile_Z_meas}
    \If{post-selection occurs}
        \State Repeat the entire procedure
    \EndIf
\EndFor
\State Apply the appropriate frame updates so that we hold the state $\mathsf{Cat}(\gamma_1, \gamma_2, \ldots, \gamma_n)$ (in the absence of faults)
\For{each even $i \in \{1, \ldots, n\}$}\label{step:first_FT_meas_layer}
    \State Measure $Z_{i-1}^{\gamma_i}Z_i^{\gamma_{i-1}}$ fault-tolerantly on the cat state row gross codes $i-1$ and $i$ using Sub-Routine~\ref{alg:ft_qudit_ZZ_meas} with post-selection conditions $\mathcal{C}_1$ and $\mathcal{C}_2$
    \If{a post-selection occurs under condition $\mathcal{C}_2$, or any non-trivial outcome is obtained}
        \State Repeat the entire procedure
    \EndIf
\EndFor

\For{each odd $i \in \{3, \ldots, n-1\}$}\label{step:second_FT_Meas_layer}
    \State Measure $Z_{i-1}^{\gamma_i}Z_i^{\gamma_{i-1}}$ fault-tolerantly on the cat state row gross codes $i-1$ and $i$ using Sub-Routine~\ref{alg:ft_qudit_ZZ_meas} with post-selection conditions $\mathcal{C}_1$ and $\mathcal{C}_2$
    \If{a post-selection occurs under condition $\mathcal{C}_2$}
        \State Repeat the entire procedure
    \EndIf
\EndFor

\For{$r = 1$ to $R-1$}
    \State Repeat steps~\ref{step:first_FT_meas_layer} and~\ref{step:second_FT_Meas_layer}
\EndFor

\end{algorithmic}
\end{subroutine}

\begin{subroutine}[H]
\caption{Fault-Tolerant Qudit $Z^\alpha Z^\beta$ Measurement in the Bicycle Instructions}
\label{alg:ft_qudit_ZZ_meas}
\begin{algorithmic}[1]
\Require Four adjacent gross codes in a square connectivity, two adjacent modules called ``cat state row gross codes'', the other two called ``ancilla row gross codes''; post-selection conditions $\mathcal{C}_1$ and $\mathcal{C}_2$.
\Ensure The measurement $Z^\alpha Z^\beta$ performed fault-tolerantly on the cat state row gross codes; failure if post-selection occurs under the condition $\mathcal{C}_2$.

\State Initialise all logical qubits in the ancilla row gross codes to $\ket{+}$ by setting all physical qubits to $\ket{+}$ and performing syndrome extraction\label{step:ft_meas_initialisation}
\State Measure the qudit operator $Z^\alpha Z^\beta$ using the prescription in Appendix~\ref{sec:compile_Z_meas} with post-selection condition $\mathcal{C}_1$\label{step:ft_meas_first_meas}
\While{a post-selection occurs}
\State Repeat Steps~\ref{step:ft_meas_initialisation} and~\ref{step:ft_meas_first_meas}
\EndWhile
\State Apply the appropriate frame updates so that the two ancilla row gross codes/qudits are in the state $\mathsf{Cat}(\beta, \alpha)$ (in the absence of faults)
\State\label{step:FT_meas_on_cat} Measure the qudit $ZZ$ operators between  both adjacent pairs of cat state row and ancilla row gross codes using Sub-Routine~\ref{alg:qudit_ZZ_meas} with post-selection condition $\mathcal{C}_2$.
\If{either measurement fails with post-selection condition $\mathcal{C}_2$}
\State Declare failure of the measurement and terminate
\EndIf
\State If the outcomes of the latter two measurements on the first and second pair are $\eta_1, \eta_2 \in \mathbb{F}_q$, respectively, sub-routine measurement outcome is $\alpha \eta_1 + \beta\eta_2$
\State Measure all logical qubits in the ancilla row gross codes in the $X$ basis by measuring all their physical qubits in the $X$ basis. Apply the appropriate frame updates to the cat state row gross codes

\end{algorithmic}
\end{subroutine}

\begin{subroutine}[H]
\caption{Qudit $ZZ$ Measurement in the Bicycle Instructions}
\label{alg:qudit_ZZ_meas}
\begin{algorithmic}[1]
\Require Two adjacent gross code modules; gross code modules have the same qudit-to-qubit mapping; post-selection condition $\mathcal{C}$; $s$ linearly independent vectors $v^{(j)} \in \mathbb{F}_2^s$ such that $ZZ^{v^{(j)}}$ are native measurements in the bicycle instructions on the gross code.
\Ensure The measurement of the qudit $ZZ$ operator performed on the gross codes non-fault-tolerantly; failure if post-selection occurs.
\State Reset pivot qubits to $\ket{+}$
\For{$j = 1$ to $s$}
    \State Perform the native in-module measurement $ZZ^{v^{(j)}}$ on the first gross code module\label{step:first_Zvj_meas}
    \State Perform the native in-module measurement $ZZ^{v^{(j)}}$ on the second gross code module
    \State Measure $ZZ$ between the pivot qubits of the two gross code modules\label{step:inter_mod_ZZ_meas}
    \State Perform the in-module measurement of $X$ on the pivot qubits, and update the Pauli frame
    \If{any outcome in this round satisfies the post-selection condition $\mathcal{C}$}
        \State Declare failure of the measurement and terminate
    \EndIf
\EndFor

\State Calculate the measurement outcome in terms of a single $\mathbb{F}_q$ symbol from the obtained $s$ bits (see~\cite{wills2026review} for the appropriate formalism)

\end{algorithmic}
\end{subroutine}

\subsection{Syndrome Extraction Protocol for Multiple Outer Code Blocks}

Given the above, we now move towards describing how we operate the entire memory system. At a high level, the two ancilla rows will cycle through the $m$ rows, performing syndrome extraction on blocks of the outer code as they do (the idea to use one ancilla system tending to the syndrome extraction of multiple blocks of the outer code is directly inspired by~\cite{gidney2025yoked}). In this subsection, we will go on to describe how one qudit cat state may be consumed to measure one qudit stabiliser, how we ensure resistance against measurement errors by measuring an overcomplete basis of the $X$ stabilisers (the same is done for the $Z$ stabilisers), and finally the exact means by which we can cycle the ancilla rows through the outer code blocks to enable the syndrome extraction.

\subsubsection{Measuring Individual Qudit Stabilisers with Qudit Cat States}

For the purposes of this subsubsection, we assume that we have a cat state prepared (fault-tolerantly) using the methods described in the previous subsection, and it is adjacent to a row containing a block of the outer code. We assume that the prepared cat state is $\mathsf{Cat}(\gamma_1, \gamma_2, \ldots, \gamma_n)$, where $\gamma_i \in \mathbb{F}_q$, $\gamma_i \neq 0$ and, accordingly, the qudit stabiliser whose syndrome we wish to extract is $X_1^{\gamma_1}X_2^{\gamma_2}\ldots X_n^{\gamma_n}$. Note that, in the qubit language, this will extract the syndrome of $s$ $X$-type qubit stabilisers simultaneously.

Before we begin, let us comment on one detail. By taking $\gamma_i \neq 0$, we have implicitly assumed that the $X$-type stabiliser we are measuring has non-trivial support on every qudit, that is, every set of non-pivot qubits in the gross codes in the codeblock. It is possible that we want to measure a check that does not have this property: that it is supported on strictly fewer than $n$ qudits. In this case, we can simply prepare some state $\mathsf{Cat}(\gamma_1, \gamma_2, \ldots, \gamma_C)$ for some $C < n$, and where we still have $\gamma_i \neq 0$, on the cat state row gross codes adjacent to the appropriate gross codes of the data row. This may be handled by a simple modification of the procedure described in Section~\ref{subsec:ft_qudit_cat_prep}. Because this would use strictly fewer gates than the case of a check supported on every qudit, our treating only the case of checks supported on every qudit constitutes a conservative assumption that keeps the presentation as simple as possible. Moreover, making this simplification essentially does not lose us anything, because the checks of a quantum Reed-Solomon code are essentially maximum weight.

In order to consume the state $\mathsf{Cat}(\gamma_1, \gamma_2, \ldots, \gamma_n)$ to measure the qudit stabiliser $X_1^{\gamma_1}X_2^{\gamma_2}\ldots X_n^{\gamma_n}$ on the data block, the procedure is entirely analogous to the way one would do this using measurements on qubits (where the regular $n$-qubit cat state may be consumed to measure the stabiliser $X^{\otimes n}$). Indeed, in that regular qubit scenario, one may consume the cat state by measuring the qubit $XX$ operator $n$ times between each adjacent pair of qubits on the cat state and data block, and obtain the outcome of the $X^{\otimes n}$ measurement by adding up these $n$ measurement outcomes in binary (that is, over the field $\mathbb{F}_2$). The $n$ qubits that were formerly in the cat state may then be unentangled from the data block by measuring them out in the $Z$ basis, and depending on these measurement outcomes we may apply the appropriate frame update to the data block. In doing this, any $X$ errors that were on the cat state spread to the data block, and any $Z$ errors on the cat state induce measurement errors.

The procedure for qudits is entirely analogous. We begin by measuring the qudit $XX$ operator between adjacent qudits in the cat state and the data block. Each of these $n$ measurements outputs some $\eta_i \in \mathbb{F}_q$; the outcome of the measurement of the stabiliser $X_1^{\gamma_1}X_2^{\gamma_2}\ldots X_n^{\gamma_n}$ is then calculated as $\sum_{i=1}^n\gamma_i\eta_i$. The qudits that were formerly in the qudit cat state are then unentangled from the data block by measuring them out in the $Z$ basis, and the appropriate frame updates are applied to the data block. A proof that this indeed measures the qudit stabiliser $X_1^{\gamma_1}X_2^{\gamma_2}\ldots X_n^{\gamma_n}$ is shown shortly, but deferred until after a discussion of the compilation of this stage into the bicycle instructions.

Compilation of this stage into the bicycle instructions is straightforward. Indeed, the qudit $XX$ measurements occur between gross codes with the same qudit-to-qubit mapping, because the qudit-to-qubit mappings are chosen to be the same along all columns of our global $m \times n$ grid of gross codes. Thus, the qudit $XX$ measurements may be executed with Sub-Routine~\ref{alg:qudit_ZZ_meas}, except for an $XX$ measurement rather than a $ZZ$ measurement. These measurements are executed with no post-selection condition, because there is direct interaction with the data block, and so operations cannot be post-selected.

Now, the final operation to compile here into the bicycle instructions is that of measuring out the qudits in the cat state row in the $Z$ basis. Measuring a qudit in the $Z$ basis may be achieved by measuring out all of its constituent qubits in the $Z$ basis (for any qudit-to-qubit mapping)~\cite{wills2026review}, that is, we wish to measure all of the non-pivot logical qubits in that gross code module. Again, since the gross code is a CSS code, that may in turn be achieved by measuring out all of its physical qubits in the $Z$ basis. This will also measure out the pivot qubit in the $Z$ basis, but we do not mind.

Now that we have compiled these steps into the bicycle instructions, let us now prove that our measurement procedure consuming the state $\mathsf{Cat}(\gamma_1, \gamma_2, \ldots, \gamma_n)$ does indeed measure the qudit stabiliser $X_1^{\gamma_1}X_2^{\gamma_2}\ldots X_n^{\gamma_n}$. Doing so is done most conveniently using qudit stabiliser tableaux. This explanation will assume familiarity with~\cite{wills2026review}. For now, we describe the procedure in the absence of faults; the propagation of faults will be considered in Section~\ref{sec:ler_calc}. We simply assume here that some $Z$ error has occurred on the data block such that the syndrome component of the qudit stabiliser $X_1^{\gamma_1}X_2^{\gamma_2}\ldots X_n^{\gamma_n}$ on the data block is some $\eta \in \mathbb{F}_q$; we are going to show how to extract the value $\eta$. At the beginning of this measurement, the system of two rows has stabiliser tableau
\begin{equation}
\left(
\begin{array}{cccccc|cccccc}
X^{\gamma_1} & X^{\gamma_2} & X^{\gamma_3} & \cdots & X^{\gamma_{n-1}} & X^{\gamma_n}
  & & & & & \\
Z^{\gamma_2} & Z^{\gamma_1} &  &  &  &
  & & & & & \\
& Z^{\gamma_3} & Z^{\gamma_2} &  &  &
  & & & & & \\
& & & \ddots &  &
  & & & & & \\
& & & & Z^{\gamma_n} & Z^{\gamma_{n-1}} & 
  & & & & & \\ \hline
& & & & & &
  X^{\gamma_1} & X^{\gamma_2} & X^{\gamma_3} & \cdots & X^{\gamma_{n-1}} & X^{\gamma_n} \\
& & & & & &
  \multicolumn{6}{c}{\hat{\mathcal{L}}_X} \\
& & & & & &
  \multicolumn{6}{c}{\hat{\mathcal{L}}_Z}
\end{array}
\right), 
\begin{pmatrix}
    0\\0\\0\\\vdots\\0\\\eta\\0\\0
\end{pmatrix},
\end{equation}
where this matrix and vector shows the stabiliser operators and their syndrome on this state, as in~\cite{wills2026review}. In the matrix, the left portion corresponds to the cat state row, whereas the right portion corresponds to the data row. There, the characters $\hat{\mathcal{L}}_X$ and $\hat{\mathcal{L}}_Z$ are used to simply denote bases of the $X$-type qudit stabilisers (other than $X_1^{\gamma_1}X_2^{\gamma_2}\ldots X_n^{\gamma_n}$) and of the $Z$-type qudit stabilisers. One can also keep track of the evolution of the logical information with bases of qudit $X$-type and $Z$-type logical operators: $\hat{\mathcal{L}}_{X,\text{ Log}}$ and $\hat{\mathcal{L}}_{Z,\text{ Log}}$.

When the qudit $XX$ measurements are performed on every qudit between the data block and the cat state, supposing the outcomes are $(\eta_i)_{i=1}^n$, the stabiliser tableau becomes

\begin{equation}
\left(
\begin{array}{cccccc|cccccc}
X^{\gamma_1} & X^{\gamma_2} & X^{\gamma_3} & \cdots & X^{\gamma_{n-1}} & X^{\gamma_n}
  & & & & & \\
X & & & & &
& X & & & & & \\
& X & &  &  &
  & &X & & & \\
& &  & \ddots &  &
  & & & \ddots& & \\
& & & & X &  & 
  & & & &X & \\ \hline
& & & & & &
  X^{\gamma_1} & X^{\gamma_2} & X^{\gamma_3} & \cdots & X^{\gamma_{n-1}} & X^{\gamma_n} \\
& & & & & &
  \multicolumn{6}{c}{\hat{\mathcal{L}}_X} \\
\multicolumn{6}{c|}{\hat{\mathcal{L}}'_Z}&
  \multicolumn{6}{c}{\hat{\mathcal{L}}_Z}
\end{array}
\right), 
\begin{pmatrix}
    0\\\eta_1\\\eta_2\\\vdots\\\eta_{n-1}\\\eta\\0\\0
\end{pmatrix},
\end{equation}
where one out of the $n$ measurements is deterministic, such that
\begin{equation}\label{eq:syndrome_extraction_formula}
    \eta = \sum_{i=1}^n\gamma_i\eta_i.
\end{equation}
One can check that this equation is true using qudit stabiliser tableaux update rules, but the example is also developed explicitly in~\cite{wills2026review}.
Notice that the qudit $Z$-type stabilisers have deformed to have some support on the cat state row, such that all entries in the stabiliser tableau always commute; this support is denoted by some rows $\hat{\mathcal{L}}'_Z$. The $Z$-type logical operators will also deform to the cat state row similarly; one could denote the support of the $Z$-type logical operators on the cat state row by some $\hat{\mathcal{L}}'_{Z,\text{ Log}}$.

Now, in the absence of faults, we have extracted the value $\eta$ by the formula in Equation~\eqref{eq:syndrome_extraction_formula}, and it remains to unentangle the cat state row from the data row. Measuring out all of the cat state row qudits in the $Z$ basis gives us
\begin{equation}
\left(
\begin{array}{ccccc|cccccc}
Z &  &  &  &   
  & & & & & \\
 & Z& & & 
&  & & & & & \\
&  & Z&  &  
  & & & & & \\
& &  & \ddots  &
  & & & & & \\
& & & &   Z& 
  & & & & & \\ \hline
& & & & & 
  X^{\gamma_1} & X^{\gamma_2} & X^{\gamma_3} & \cdots & X^{\gamma_{n-1}} & X^{\gamma_n} \\
& & & & &
  \multicolumn{6}{c}{\hat{\mathcal{L}}_X} \\
\multicolumn{5}{c|}{\hat{\mathcal{L}}'_Z}&
  \multicolumn{6}{c}{\hat{\mathcal{L}}_Z}
\end{array}
\right), 
\begin{pmatrix}
    \nu_1\\\nu_2\\\nu_3\\\vdots\\\nu_n\\\eta\\0\\0
\end{pmatrix},
\end{equation}
for some values $(\nu_i)_{i=1}^n$, for $\nu_i \in \mathbb{F}_q$. Note that the single qudit $Z$ stabilisers on the cat state row (in the upper left quadrant) may be used to clean the support of the $Z$-type qudit stabilisers and logical operators from the cat state row. Doing so gives them some generally non-trivial syndrome depending on the values $(\nu_i)_{i=1}^n$, but we can correct this with a frame update. Note that in performing this frame update, $X$ errors that were on the cat state spread to the data block; we handle the spreading of faults in Section~\ref{sec:ler_calc}.

To conclude this subsubsection, we formally state our protocol for consuming a (fault-tolerantly) prepared qudit cat state to measure an individual $X$-type qudit stabiliser in Sub-Routine~\ref{alg:single_qudit_stab_meas}, where the protocol for a $Z$-type qudit stabiliser is entirely analogous.

\begin{subroutine}[H]
\caption{Measurement of an Individual Qudit Stabiliser Using a Cat State}
\label{alg:single_qudit_stab_meas}
\begin{algorithmic}[1]
\Require A data row adjacent to a cat state row; cat state row has been prepared in the state $\mathsf{Cat}(\gamma_1, \gamma_2, \ldots, \gamma_n)$ fault-tolerantly using, for example, Sub-Routine~\ref{alg:qudit_cat_FT_prep}.
\Ensure The data row with the stabiliser $X_1^{\gamma_1}X_2^{\gamma_2}\ldots X_n^{\gamma_n}$ measured.
\State Measure the qudit $XX$ operators between each adjacent qudit in the data row and the cat state row, recording outcomes $(\eta_i)_{i=1}^n$. This may be achieved using (the dual of) Sub-Routine~\ref{alg:qudit_ZZ_meas} with no post-selection condition
\State Stabiliser measurement outcome is obtained by calculating $\sum_{i=1}^n\gamma_i\eta_i$
\State Measure out every qudit in the cat state row in the $Z$ basis; apply appropriate frame updates to the data block

\end{algorithmic}
\end{subroutine}

\subsubsection{Measuring an Overcomplete Basis of Stabilisers of One Outer Code Block}\label{subsubsec:meas_overcomplete}

To ensure resistance against measurement errors, we will measure more than the minimum number of $d-1$ $X$ checks, and $d-1$ $Z$ checks, of the quantum Reed-Solomon code. We wish to measure an overcomplete basis of each space of checks. Note that, here, we only describe the full set of stabilisers measured on one data block, and what we will do when a measurement error is detected. The actual analysis of the effect of faults takes place in Section~\ref{sec:ler_calc}.

Generally, in fault tolerance, a canonical way to handle the possibility of measurement errors is to repeat the measurement of each syndrome to gain confidence in its outcome. To save on the number of qudit measurements we must make, we instead measure an \textit{overcomplete basis} of the qudit stabiliser of $X$ type, and of $Z$ type, and check that our measurement outcomes satisfy the linear combinations that they should.

More explicitly, we do the following. Let $H_X \in \mathbb{F}_q^{(d-1)\times n}$ be a parity-check matrix, whose rows span the space of $X$-checks of our code. We also pick some matrix of values $\beta^{(X)} \in \mathbb{F}_q^{M \times (d-1)}$ (we will describe how to make a good choice in Section~\ref{sec:ler_calc}), where $M$ is some integer at most $d-2$. Then, to extract the syndrome corresponding to the $X$ checks of the code, we start by measuring the qudit $X$ checks corresponding to the rows of $H_X$, from top to bottom, and then measure the $X$ checks corresponding to the rows of $\beta^{(X)}\cdot H_X$, from top to bottom. This means that, in extracting the syndrome of one outer code block, we measure $d-1+M$ $X$ checks, assuming nothing goes wrong.

When we measure these $d-1+M$ checks, we first extract some syndrome vector $y \in \mathbb{F}_q^{d-1}$, and then we extract some syndrome vector $z \in \mathbb{F}_q^M$. In the absence of faults, these would satisfy $z = \beta^{(X)}\cdot y$. As we measure the latter $M$ checks, we check that the required equations hold, and if we detect any violation of them, we begin all $d-1+M$ checks again. If we complete the $d-1+M$ checks with all the required equations satisfied, that round of $X$ checks is called a \textit{self-consistent round of $X$ checks}, and the corresponding syndrome $y \in \mathbb{F}_q^{d-1}$ gets passed to the outer code decoder. Subsection~\ref{subsec:time_like_failures} will upper bound the probability of \textit{time-like failure}, which is the failure mode of our outer code which occurs when (roughly --- we will be more precise about this later), the outer code decoder was handed the wrong syndrome, even though the round was self-consistent.

The whole procedure is performed in the same way for the $Z$ checks. That is, we measure the $d-1$ usual $Z$ checks specified by the rows of some parity-check matrix $H_Z \in \mathbb{F}_q^{(d-1)\times n}$, followed by $M$ linear combinations given by the rows of $\beta^{(Z)}\cdot H_Z$, where $\beta^{(Z)} \in \mathbb{F}_q^{M \times (d-1)}$. Consistency is checked on each of these linear combinations as we go. Any violation detected leads to a re-attempt of the round of $d-1+M$ $Z$ checks. Getting to the end of the $d-1+M$ $Z$ checks with all equations satisfied allows us to call those $Z$ checks a self-consistent round. We will describe in Subsection~\ref{subsec:time_like_failures} how good choices of $\beta^{(X)}$ and $\beta^{(Z)}$ can be made.

We formally state our algorithm for the extraction of the syndrome of one outer code block in Algorithm~\ref{alg:ft_extraction_one_block}.

\begin{algorithm}[ht]
\caption{Fault-Tolerant Extraction of Stabilisers on One Block of the Outer Code}
\label{alg:ft_extraction_one_block}
\begin{algorithmic}[1]
\Require A data row, cat state row, and ancilla row (in this order) adjacent to each other; $X$ and $Z$ parity-check matrix $H_X, H_Z \in \mathbb{F}_q^{(d-1)\times n}$, matrices $\beta^{(X)} \in \mathbb{F}_q^{M \times (d-1)}$ and $\beta^{(Z)} \in \mathbb{F}_q^{M \times (d-1)}$; post-selection conditions $\mathcal{C}_1$, $\mathcal{C}_2$ and $\mathcal{C}_3$.
\Ensure The data block with all stabilisers fault-tolerantly measured.
\For{$i \in [d-1]$}\label{step:ft_X_extraction_start}
\State Let $\boldsymbol{w} \in \mathbb{F}_q^n$ be the $i$'th row of $H_X$.
\State Use Sub-Routine~\ref{alg:qudit_cat_FT_prep} with post-selection conditions $\mathcal{C}_1$, $\mathcal{C}_2$ and $\mathcal{C}_3$ to prepare the cat state corresponding to $X^{\boldsymbol{w}}$
\State Use Sub-Routine~\ref{alg:single_qudit_stab_meas} to measure the stabiliser $X^{\boldsymbol{w}}$. Record the outcome as $y_i \in \mathbb{F}_q$
\EndFor
\For{$i \in [M]$}
\State Let $\boldsymbol{w} \in \mathbb{F}_q^n$ be the $i$'th row of $\beta^{(X)}\cdot H_X$.
\State Use Sub-Routine~\ref{alg:qudit_cat_FT_prep} with post-selection conditions $\mathcal{C}_1$, $\mathcal{C}_2$ and $\mathcal{C}_3$ to prepare the cat state corresponding to $X^{\boldsymbol{w}}$
\State Use Sub-Routine~\ref{alg:single_qudit_stab_meas} to measure the stabiliser $X^{\boldsymbol{w}}$. Record the outcome as $z_i \in \mathbb{F}_q$
\If{$z_i \neq (\beta^{(X)}\cdot y)_i$}
\State Begin again from Step~\ref{step:ft_X_extraction_start}
\EndIf
\EndFor
\State The most recent round of $d-1+M$ $X$ checks completed is called ``self-consistent''. The outcome of the $X$ syndrome extraction is deemed to be $y \in \mathbb{F}_q^{d-1}$, which is handed to the decoder of the outer code.
\For{$i \in [d-1]$}\label{step:ft_Z_extraction_start}
\State Let $\boldsymbol{w} \in \mathbb{F}_q^n$ be the $i$'th row of $H_Z$.
\State Use (the dual of) Sub-Routine~\ref{alg:qudit_cat_FT_prep} with post-selection conditions $\mathcal{C}_1$, $\mathcal{C}_2$ and $\mathcal{C}_3$ to prepare the cat state corresponding to $Z^{\boldsymbol{w}}$
\State Use (the dual of) Sub-Routine~\ref{alg:single_qudit_stab_meas} to measure the stabiliser $Z^{\boldsymbol{w}}$. Record the outcome as $y_i \in \mathbb{F}_q$
\EndFor
\For{$i \in [M]$}
\State Let $\boldsymbol{w} \in \mathbb{F}_q^n$ be the $i$'th row of $\beta^{(Z)}\cdot H_Z$.
\State Use (the dual of) Sub-Routine~\ref{alg:qudit_cat_FT_prep} with post-selection conditions $\mathcal{C}_1$, $\mathcal{C}_2$ and $\mathcal{C}_3$ to prepare the cat state corresponding to $Z^{\boldsymbol{w}}$
\State Use (the dual of) Sub-Routine~\ref{alg:single_qudit_stab_meas} to measure the stabiliser $Z^{\boldsymbol{w}}$. Record the outcome as $z_i \in \mathbb{F}_q$
\If{$z_i \neq (\beta^{(Z)}\cdot y)_i$}
\State Begin again from Step~\ref{step:ft_Z_extraction_start}
\EndIf
\EndFor
\State The most recent round of $d-1+M$ $Z$ checks completed is called ``self-consistent''. The outcome of the $Z$ syndrome extraction is deemed to be $y \in \mathbb{F}_q^{d-1}$, which is handed to the decoder of the outer code.
\end{algorithmic}
\end{algorithm}

\subsubsection{Sharing Ancilla and Cat Rows Across Outer Code Blocks}

As mentioned previously, we imagine the entire memory block as occupying the space of a rectangular grid of gross codes of $m$ rows and $n$ columns. At any one time, $m-2$ of these rows are storing logical information, as they are blocks of the outer code (of length $n$), and the remaining $2$ rows are being used to assist the fault-tolerant syndrome extraction of one code block, as we have just described. These $2$ rows are imagined as moving through the system so that they can tend to the syndrome extraction of each block of the outer code. The idea to have one ancillary system moving through the blocks of outer codes and tending to each of their syndrome extractions is directly inspired by~\cite{gidney2025yoked}.

It remains to specify how we can have the ancillary system move through the data blocks. Imagine we have three rows, one data row, and two ancillary rows; we call the ancillary row next to the data row the ``first ancillary row'' and the other one the ``second ancillary row''. The aim is to teleport the state of the data row into the gross codes forming the second ancillary row. This may be achieved with the usual teleportation procedure~\cite{bennett1993teleporting}, but for Galois qudits.\footnote{Note, however, that this procedure is entirely equivalent to teleporting groups of $s$ qubits.}

The teleportation begins by preparing all qudits in both ancilla rows in the $\ket{+}$ state. This may be achieved by preparing all the logical qubits in the rows in the state $\ket{+}$ (the pivot qubits as well as the non-pivot qubits making up each qudit), which may in turn be accomplished by preparing all physical qubits in all of their gross codes in the state $\ket{+}$ and performing syndrome extraction. We then measure the qudit $ZZ$ operator pairwise between all adjacent gross code pairs spanning the two ancillary row. Because we pick the same qudit-to-qubit mappings in all columns of our $m \times n$ gross code grid, this is the same thing as measuring the qubit $ZZ$ operator pairwise on every non-pivot logical qubit in the adjacent pairs of gross codes. In the bicycle instructions, one may perform this operation using Sub-Routine~\ref{alg:qudit_ZZ_meas}.\footnote{Notice that because this first qudit $ZZ$ measurement occurs offline, it may be post-selected with some post-selection condition, although we omit this for simplicity.} We may then measure the qudit $ZZ$ operator, as well as the qudit $XX$ operator on all pairs of adjacent gross codes in the data row and the first ancillary row. This can again be accomplished in the bicycle instructions using Sub-Routine~\ref{alg:qudit_ZZ_meas}, as well as its dual $XX$ measurement version. Note also that these two operations cannot use any post-selection condition because they directly affect a data block. The fact that this implements teleportation of the Galois qudits may be shown using qudit stabiliser tableaux calculations~\cite{wills2026review} entirely analogously to the qubit case. While we describe fault tolerance in detail later, here we comment that the teleportation is fault-tolerant because it is effectively a transversal operation on the outer code (errors are not spread between gross codes in the teleported outer code block).

We conclude this section with a formal description of the teleportation procedure; see Sub-Routine~\ref{alg:row_teleportation}. Then, we give our top algorithm for the operation of the quantum memory in Algorithm~\ref{alg:top_quantum_alg}.

\begin{subroutine}[ht]
\caption{Teleportation of One Code Block across Ancillary Rows}
\label{alg:row_teleportation}
\begin{algorithmic}[1]
\Require One code block of the outer code adjacent to two ancilla rows.
\Ensure The code block in the row that was formerly the second ancillary row. 
\State\label{step:teleportation_start} Set every logical qubit in the two ancillary rows to $\ket{+}$. This may be achieved by preparing all physical qubits in the gross codes in these rows to $\ket{+}$ and performing syndrome extraction
\State\label{step:teleportation_first_meas} Measure the qudit $ZZ$ operator between each adjacent pair spanning the two ancillary rows. This may be accomplished using Sub-Routine~\ref{alg:qudit_ZZ_meas} with no post-selection condition
\If{any of these measurement is post-selected}
\State Repeat Steps~\ref{step:teleportation_start} and~\ref{step:teleportation_first_meas} for the corresponding pair
\EndIf
\State Measure the qudit $ZZ$ operator between each gross code pair spanning the data row and the first ancillary row. This may be accomplished using Sub-Routine~\ref{alg:qudit_ZZ_meas} with no post-selection condition
\State Measure the qudit $XX$ operator between each gross code pair spanning the data row and the first ancillary row. This may be accomplished using the dual of Sub-Routine~\ref{alg:qudit_ZZ_meas} with no post-selection condition
\State The outer code block now lives in the space of the second ancillary row, after the appropriate frame updates are applied
\end{algorithmic}
\end{subroutine}

\begin{algorithm}[ht]
\caption{Global Operation of the Memory System}
\label{alg:top_quantum_alg}
\begin{algorithmic}[1]
\Require Two ancillary rows as the bottom two rows of the grid; $m-2$ upper rows constitute blocks of the outer code; matrices $\beta^{(X)} \in \mathbb{F}_q^{M \times (d-1)}$ and $\beta^{(Z)} \in \mathbb{F}_q^{M \times (d-1)}$; post-selection conditions $\mathcal{C}_1, \mathcal{C}_2$ and $\mathcal{C}_3$.
\While{Memory is running}
\For{$i = 1$ to $m-2$}
\State Use Algorithm~\ref{alg:ft_extraction_one_block} with matrices $\beta^{(X)}, \beta^{(Z)}$ and post-selection conditions $\mathcal{C}_1, \mathcal{C}_2$ and $\mathcal{C}_3$ to fault-tolerantly extract the syndrome of the outer code block currently in row $i+2$ \State Given some decoding algorithm for the outer code, compute a correction and perform the required frame update
\State Use Sub-Routine~\ref{alg:row_teleportation} to teleport the code block in the $(i+2)$'th row to the $i$'th
\EndFor
\For{$i=m-2$ to $1$}
\State Use Sub-Routine~\ref{alg:row_teleportation} to teleport the codeblock in the $i$'th row to row $i+2$
\EndFor
\EndWhile
\end{algorithmic}
\end{algorithm}

\clearpage

\section{Bounding the Logical Error Rate}\label{sec:ler_calc}

In this section, we will bound the logical error rate of one copy of the outer code over one of its rounds.

\subsection{Overview and Definitions}

Given that there are several moving parts, we will begin by providing a high-level overview of the essential parts of this calculation.

For one copy of the outer code carrying logical information, it experiences errors (corresponding to logical errors on the inner gross codes) from numerous sources. First, the gross codes constituting a block of the outer code sit in memory for a long time. Therefore, while an idling gross code has an error rate far below that of its logical instructions~\cite{yoder2025tourgrossmodularquantum}, a particular module of the outer code experiences many more idling operations than logical instructions, and so the idling must be considered as one of the main sources of error. At the other extreme, each qudit does have some contact with logical instructions through the teleportations, cat state consumption during syndrome extraction, and may have errors transferred to it from faults in the cat state preparation itself.

We will treat the $X$ errors and $Z$ errors as entirely decoupled for simplicity, and provide analysis for $Z$ errors (caught by $X$ checks), where the discussion for $X$ errors is identical. 

We recall that, in an attempt to mitigate the effects of faults during syndrome extraction, we measured an overcomplete basis of $X$ checks, checking that the measurement outcomes are self-consistent, beginning all $X$ checks again if any violation is detected. If we reach the end of all $d-1+M$ $X$ checks with all the required equations satisfied, the round is deemed a ``self-consistent round of $X$ checks''. 

The first main failure mode of our code that we consider is called \textit{time-like failure}, and it is the event that we observe a self-consistent round, but the syndrome extracted is not the syndrome for the $Z$ error on the data just before, or during the self-consistent round. That is, during a self-consistent round, we consider the $Z$ error on the data at the moment just before the self-consistent round began, or just before any of its checks (there are $d-1+M$ such moments). If the syndrome of the self-consistent round is not the syndrome for any of those $Z$ errors, then we say that time-like failure has occurred. The probability of time-like failure will be bounded in Subsection~\ref{subsec:time_like_failures}. Of course, we must consider time-like failure for $X$ errors as well as $Z$ errors.

The other failure mode for our outer code will be \textit{space-like failure}. Supposing that time-like failure has not occurred, space-like failure is the event that the decoder for our outer code does not return the correct error for the syndrome it was handed. That is, if time-like failure has not occurred, the decoder was handed the correct syndrome for the error at some moment just before or during the corresponding self-consistent round. Space-like failure is the event that that exact error is not returned.\footnote{Note that this definition contains a little ambiguity. Indeed, it is possible that time-like failure does not occur, and the extracted syndrome corresponds to the error at multiple distinct moments during the self-consistent round. We would count this event as space-like failure, although we neglect it in our analysis, since it is very much less likely than other events that we count as failure. That is, if time-like failure does not occur, we assume that the given syndrome corresponds to exactly one error on the data in the corresponding self-consistent round.} The point is that, when calculating the space-like failure probability on one outer code block, at the time of one self-consistent round of $X$ checks, we may simply look back to the previous self-consistent round of $X$ checks. Assuming that time-like failure and space-like failure have not happened in the past, the $Z$ error on that code block may be thought of as completely removed from some instant within that previous self-consistent round, and we may consider all of the error that accumulates on the outer code block between that last self-consistent round, and the present one.

When returning a correction corresponding to a given syndrome, our outer code decoder will simply find the minimum weight error for the syndrome it is handed, which may be achieved efficiently using known list decoders for classical Reed-Solomon codes~\cite{sudan1997decoding,guruswami1998improved}. If there is any ambiguity between two errors of equal weight, it chooses randomly amongst the possibilities. In subsection~\ref{subsec:spacelike_failures}, we will upper bound the probability of space-like failure, including a discussion of the list decodable structure of the codes at hand, which allows us to correct more errors than one might naively expect (for example, most weight-$2$ errors are correctable on a distance $4$ code). 

When computing the logical error rate of the outer code, we will take the error rates of instructions of the gross code (without any post-selection) as described in Appendix~\ref{sec:improved_ler_bicycle} (for logical measurement instructions, see the values with $S = 100$). There are many different types of in-module and inter-module gates in the gross code bicycle instructions~\cite{yoder2025tourgrossmodularquantum}, however, as we state in that Appendix, we make the following simplifications for choosing error rates to be representative of the various instructions:
\begin{itemize}
    \item The in-module $X_1$ measurement is taken to be representative of all ``half-LPU'' in-module instructions, namely, the in-module measurements of $\langle X_1, Z_7 \rangle \cup \langle X_7, Z_1\rangle$;
    \item The in-module $Y_1$ measurement is taken to be representative of all ``whole-LPU'' in-module instructions, namely, the other $9$ in-module measurement instructions;
    \item The inter-module $X_1X_1$ measurement is taken to be representative of all inter-module measurement instructions.
\end{itemize}
For further discussion on these errors rates, in particular why we take these specific simplifications, see Appendix~\ref{sec:improved_ler_bicycle}.

Estimating the logical error rates of bicycle instructions under certain post-selection conditions (that is, the logical error rate of the instructions that have survived that post-selection condition) is harder. We are interested in a physical noise rate $p = 10^{-3}$ and $10^{-4}$, and direct testing of our post-selection strategies at these noise rates is too computationally expensive to be feasible (at least with the current testing of the decoder which takes place in CPU).\footnote{We estimate that hundreds of millions of core-hours would be required to directly test the post-selection strategies on each instruction at $p = 10^{-3}$.} In~\cite{wills2026forced}, the logical error rates of bicycle instructions under various post-selection strategies are tested at higher physical noise rates like $p = 2.5\times 10^{-3}$. In order to estimate the logical error rate of bicycle instructions under post-selection conditions at lower physical noise rate, \textit{we assume that the factor of reduction in error rate achieved by the post-selection strategy is the same (or greater) at the lower noise rates as to the noise rates we test}. For example, suppose that we find that at physical noise $p = 2.5\times 10^{-3}$, a $1\%$ post-selection rate under our post-selection strategy improves the logical error rate of the instruction by two orders of magnitude; we assume that a $1\%$ post-selection rate improves the logical error rate of the instruction by at least two orders of magnitude at $p = 10^{-3}$ or $10^{-4}$. The justification is as follows. The post-selection strategy that we use (based on the forced gap, see~\cite{wills2026forced}) is marking instances as suspicious for which corrections giving logically distinct outcomes have similar likelihoods (this is just like the regular complementary gap for the surface code~\cite{gidney2025yoked}). The post-selected error rate would therefore be expected to fall at a \textit{steeper} rate than the un-postselected error rate as the physical noise decreases,\footnote{See for example Figure 5b of~\cite{gu2026scalable} for this behaviour on an alternative post-selection strategy.} and so assuming that the factor of logical error rate reduction is the same (or larger) at the lower noise rate is likely to be a pessimistic assumption.

\subsubsection{Notation}

For the coming subsections, it will be helpful to define some notation for various errors. We will also define further notation as we go along, but it should be helpful to start with an overview here.

At the lowest level, our gate set is based on the above-stated in-module ``half LPU'' and ``whole LPU'' measurement instructions, as well as the inter-module measurement instructions. We denote the probability that each of these fail in any way as, respectively, $\mathbb{P}\left[\text{half}\right]$, $\mathbb{P}\left[\text{whole}\right]$ and $\mathbb{P}\left[\text{inter}\right]$. These error rates can be found in Appendix~\ref{sec:improved_ler_bicycle} (with $S = 100$). Next, we will also need to consider the effects of post-selection conditions $\mathcal{C}$. For example, $\mathbb{P}\left[\text{half}, \mathcal{C}\right]$ denotes the probability that a half-LPU in-module measurement instruction failed in some way, despite surviving post-selection according to the condition $\mathcal{C}$. Finally, for gross code idling, the probability of error is denoted $\mathbb{P}\left[\text{idle}\right]$, where we note that no post-selection can be considered here because all the idling noise we consider will be directly on codeblocks.

We can also consider error rates for higher-level gate sets. For example, considering the fault-tolerant qudit $Z^{\alpha}Z^{\beta}$ measurement in Sub-Routine~\ref{alg:ft_qudit_ZZ_meas} with post-selection conditions $\mathcal{C}_1$ and $\mathcal{C}_2$, then $\mathbb{P}\left[\left(Z^{\alpha}Z^{\beta}\right)_{\mathrm{FT}}, \mathcal{C}_1, \mathcal{C}_2\right]$ denotes the probability of some error, despite the protocol not declaring failure (due to post-selection condition $\mathcal{C}_2$). We can also consider the non-fault-tolerant qudit $Z^\alpha Z^\beta$ measurement, for which $\mathbb{P}\left[Z^\alpha Z^\beta, \mathcal{C}_1\right]$ denotes the probability of some error, despite not being post-selected under condition $\mathcal{C}_1$. Frequently, we will also use $\mathbb{P}\left[ZZ\right]$ and $\mathbb{P}\left[XX\right]$ to denote the probability of any error in the qudit $ZZ$ or $XX$ measurement. These are executed according to Sub-Routine~\ref{alg:qudit_ZZ_meas}, up to swapping $Z$ and $X$. We can similarly write, for example, $\mathbb{P}\left[ZZ, \mathcal{C}_2\right]$ for the probability of a fault in the qudit $ZZ$ measurement, despite the operation not being post-selected under $\mathcal{C}_2$.

It will be useful to think about higher-level routines still. For example, we will use the notation $\mathbb{P}\left[\text{Cat}\right]$ to denote the probability that there is any error on an accepted qudit cat state, or any error in its consumption.\footnote{Note that, strictly speaking, this probability does depend on the cat state in question, that is, on which measurement is taking place. Specifically, the $Z^\alpha Z^\beta$ measurement routines will differ between cat states. In the coming calculations, probabilities like $\mathbb{P}\left[\text{Cat}\right]$ will be estimated using the average behaviour. While this is an approximation, we note that in reality we have the power to re-randomise individual qudit-to-qubit mappings $B_i$ in any particularly problematic case; this latter comment will become more clear as we go on.} A formal expression for this quantity will be provided in Subsection~\ref{subsec:time_like_failures}.

Finally, it will be important to consider the time to execute various operations, where time will be counted in terms of physical timesteps, see Table 2 of~\cite{yoder2025tourgrossmodularquantum}. We note that all half-LPU in-module measurement instructions, whole-LPU in-module measurement instructions and inter-module measurement instructions all take the same amount of time. Therefore, we can denote $\tau_{\text{Meas}}$ as the time, in terms of number of physical timesteps, to execute any one of these operations. In particular, we have $\tau_{\text{Meas}} = 120$~\cite{yoder2025tourgrossmodularquantum}.

In Subsection~\ref{subsec:ler_single_cat}, we are going to study the creation and verification of a single cat state, in particular the expected time to create a cat state, and the distribution of errors on an accepted cat state. Then, in Subsection~\ref{subsec:time_like_failures}, we are going to treat time-like failures, and finally in Subsection~\ref{subsec:spacelike_failures}, we are going to treat space-like failures. We will conclude in Subsection~\ref{subsec:all_together}, where we will show how to handle all failure modes together at once.

\subsection{Creation of a Single Cat State}\label{subsec:ler_single_cat}

In this subsection, we will consider the expected time to create a cat state, and describe the effect of faults in their creation. Cat states are created according to Sub-Routine~\ref{alg:qudit_cat_FT_prep}. Note that the discussions for cat states for $X$-type syndrome extraction and $Z$-type syndrome extraction are the same; we consider the extraction of the syndrome of some $X$-type qudit stabiliser $X_1^{\gamma_1}X_2^{\gamma_2}\ldots X_n^{\gamma_n}$. For the post-selection conditions in our protocol, we make the following choices throughout:
\begin{itemize}
    \item $\mathcal{C}_1$ is a post-selection condition that makes use of the ``Forced Gap'' post-selection strategy given in~\cite{wills2026forced}. In particular, we let half-LPU in-module instructions, whole-LPU in-module instructions, and inter-module instructions be post-selected at rates $(r_{1,\text{half}}, r_{1,\text{whole}}, r_{1,\text{inter}})$, which are left as parameters for now;\footnote{A post-selection rate is the fraction of shots that are post-selected under the post-selection rule. A higher post-selection rate rejects more shots, but generically leads to a lower logical error rate on surviving shots.}
    \item $\mathcal{C}_2$ is taken to be the same as $\mathcal{C}_1$ but with post-selection rates $(r_{2,\text{half}}, r_{2,\text{whole}}, r_{2, \text{inter}})$;
    \item $\mathcal{C}_3$ is taken to be the same as $\mathcal{C}_1$ but with post-selection rates $(r_{3,\text{half}}, r_{3,\text{whole}}, r_{3, \text{inter}})$.
\end{itemize}

\subsubsection{Expected Time to Create a Cat State}\label{subsubsec:expected_time_create_cat}

Let us start by calculating the expected time to create one qudit cat state. This is important because the cat states take a long time to prepare, and while this is occurring, errors are accumulating on all the idling data blocks. For simplicity, we will only use the expected time to create a cat state to calculate the time between two syndrome extractions of one block of the outer code, thus neglecting the tail of the distribution of the time to create one cat state. We do this because the number of cat states created between two rounds of syndrome extraction of one block of the outer code is at least $2(m-2)(d-1+M)$,\footnote{It will in general be larger than this due to non-self-consistent rounds getting rejected.} which is large enough that the time to create a cat state will be concentrated on its mean.\footnote{Again, in reality, given an especially slow cat state, resulting from particularly difficult $Z^\alpha Z^\beta$ measurements, one may re-randomise certain qudit-to-qubit mappings $B_i$.}

To calculate the expected time to create one cat state, we must consider the time taken to run all the operations used in its preparation and verification, as well as the possibility of various post-selections. While being created and verified, two types of post-selections can occur, which we call ``physical-level post-selection'' and ``logical-level post-selection'', described as follows.
\begin{enumerate}
    \item \textit{Physical-level post-selection}: some post-selection occurring because the gross code decoder detected a suspicious decoding instance on some operation. The post-selection strategies are the ``Forced Gap'' strategies given in~\cite{wills2026forced}. Depending on which operations are post-selected, some of the cat state creation procedure may have to be reattempted, or all of it;
    \item \textit{Logical-level post-selection}: one of the fault-tolerant $Z^\alpha Z^\beta$ measurements in the $R$ rounds of checking gave a non-trivial measurement outcome.
\end{enumerate}
We will start by considering the possibility of physical-level post-selection, and how it affects the time to create the cat state, and consider logical-level post-selection later on.

We begin by thinking about one non-fault-tolerant qudit $Z^\alpha Z^\beta$ measurement, which forms a major part of the cat state preparation. Appendix~\ref{sec:compile_Z_meas} presents an efficient algorithm for compiling these non-fault-tolerant qudit measurements. There, we describe how any such measurement\footnote{That is, given any $\alpha$ and $\beta$, and any qudit-to-qubit mappings (which change how the qudit measuremnt is realised as a set of qubit measurements).} may always be executed using $11$ inter-module operations, and the number of half and whole LPU instructions on the first gross code is fixed (in fact, it is fixed to $17$ half-LPU instructions and $6$ whole-LPU instructions). The compilation algorithm then attempts to reduce the total number of in-module instructions used on the second gross code in performing the measurement.

The next thing to note is that, as we discuss more later, the qudit-to-qubit mappings $B_i$ at each coordinate of our outer code may be chosen randomly. Therefore, the qubit manifestations of these $Z^\alpha Z^\beta$ measurements will behave randomly over these qudit-to-qubit mappings.\footnote{Choosing a random qudit-to-qubit mapping on each qudit amounts to performing a random $\mathsf{CNOT}$ circuit on each group of $s$ qubits making up the qudit.} As explained in Appendix~\ref{sec:compile_Z_meas}, a brute force enumeration of all possible qubit manifestations of these measurements is completely out of bounds, but it is possible to estimate their cost over this random choice of qudit-to-qubit mappings. Sampling many possible such measurements, we let our compilation algorithm find the quickest execution it can. We find a mean number of half-LPU and whole-LPU in-module instructions on the second gross code of $56.05$ and $50.53$, respectively, giving a mean total number of in-module instructions across both gross codes of $73.05$ and $56.53$, respectively. As well as $11$ inter-module instructions always being used, we will use these numbers to calculate the error rate of a non-fault-tolerant qudit $Z^\alpha Z^\beta$ measurement.

In terms of the time taken to execute such a qudit measurement, we also note in Appendix~\ref{sec:compile_Z_meas} that the time to perform the measurement is always given by the time to perform the operations on the second gross code. Therefore, the mean time to perform one of these measurements is $(56.05+50.53+11)\cdot\tau_{\text{Meas}} = 117.58\cdot\tau_{\text{Meas}}$.\footnote{Recall that $\tau_{\text{Meas}} = 120$ is the number of physical timesteps to perform any measurement instruction on the gross code, see Table $2$ of~\cite{yoder2025tourgrossmodularquantum}.} However, about $n/2$ of these qudit $Z^\alpha Z^\beta$ measurements are performed at once (recall that $n$ is the length of an outer code block), and we must wait for the slowest one to finish. To estimate the mean time to perform one of these ``layers'' of $n/2$ measurements, we estimate the mean of the maximum time length over $n/2$ sampled measurements. The data for this is depicted in Figure~\ref{fig:sampled_meas_cost}. Taking a large value of $n = 70$,\footnote{We found only a very weak dependence of $n$ in this estimation, and so stuck with a pessimistic large value for simplicity.} we estimate the time for one $Z^\alpha Z^\beta$ measurement as $125\cdot\tau_{\text{Meas}}$.
\begin{table}[ht]
\centering
\renewcommand{\arraystretch}{1.2}
\begin{tabular}{c c r}
\toprule
\textbf{Code} & \textbf{Half/Whole/Inter} & \textbf{Count} \\
\midrule
\multirow{2}{*}{\makecell{First gross code}}
  & Half-LPU  & 17 \\
  & Whole-LPU & 6 \\\hline
\addlinespace
\multirow{2}{*}{\makecell{Second gross code}}
  & Half-LPU  & 56.05 \\
  & Whole-LPU & 50.53 \\\hline
\addlinespace
\multirow{3}{*}{\makecell{Total across\\both codes}}
  & Half-LPU & 73.05 \\
  & Whole-LPU & 56.53 \\
  & Inter-module & 11 \\
\bottomrule
\end{tabular}
\caption{Counts of instructions taken for non-fault-tolerant qudit $Z^\alpha Z^\beta$ measurements between two gross codes. The execution time is always determined by inter-module operations and operations on the second code. Therefore, the mean time to execute one $Z^\alpha Z^\beta$ measurement is $117.58\cdot\tau_{\text{Meas}}$. To estimate the time to execute a layer of $n/2$ $Z^\alpha Z^\beta$ measurements, we take a mean of the worst-case execution time over $n/2$ randomly sampled $Z^\alpha Z^\beta$ measurements at a large value of $n = 70$, giving a time $\approx125\cdot\tau_{\text{Meas}}$.}
\label{tab:qudit_ZZ_meas_instr_count}
\end{table}

The next thing we note is that these non-fault-tolerant qudit $Z^\alpha Z^\beta$ measurements are always post-selected under some condition (either $\mathcal{C}_1$ or $\mathcal{C}_3$). We recall that the strategy $\mathcal{C}_i$ post-selects the half-LPU in-module instructions, whole-LPU in-module instructions, and inter-module instructions, at rates $r_{i,\text{half}}, r_{i,\text{whole}}$, and $r_{i,\text{inter}}$, respectively. The probability that a non-fault-tolerant qudit $Z^\alpha Z^\beta$ measurement is accepted under the post-selection condition $\mathcal{C}_i$ is therefore
\begin{equation}
    \mathbb{P}\left[Z^\alpha Z^\beta \text{ Success}, \mathcal{C}_i\right]\coloneq(1-r_{i,\text{half}})^{73.05}(1-r_{i,\text{whole}})^{56.53}(1-r_{i,\text{inter}})^{11}.
\end{equation}
In order to calculate the expected time to create and verify a cat state, we begin with the non-fault-tolerant preparation, which proceeds via two layers of non-fault-tolerant $Z^\alpha Z^\beta$ measurement directly on the cat state row. The first layer can be repeated in place, and we can move on when the last one is done. When a qudit $Z^\alpha Z^\beta$ measurement fails (is post-selected), its expected time is taken to be half that of a successful measurement, since we can restart as soon as one of its instructions is post-selected, and on average the post-selected operation will be in the middle of the routine. 

Now, given $N$ attempts at a task in parallel, where the task succeeds with probability $p \in (0,1]$, a successful task takes unit time, and a failed task takes half-unit time, and we can re-attempt each task without affecting others, one checks that the expected time to complete all tasks is
\begin{equation}
    1 + \frac{1}{2}\sum_{j=1}^N(-1)^{j+1}\begin{pmatrix}
        N\\j
    \end{pmatrix}\frac{(1-p)^j}{1-(1-p)^j}.
\end{equation}
We therefore have the expected time for $\leq n/2$ $Z^\alpha Z^\beta$ measurements to all succeed post-selection under $\mathcal{C}_1$ as (at most)
\begin{equation}
    \tau_{\text{One Layer }Z^\alpha Z^\beta, \;\mathcal{C}_1} \coloneq 125\cdot \tau_{\text{Meas}}\left(1+\frac{1}{2}\sum_{j=1}^{\left\lfloor n/2\right\rfloor}(-1)^{j+1}\begin{pmatrix}
        \left\lfloor n/2\right\rfloor\\j
    \end{pmatrix}\frac{(1-p)^j}{1-(1-p)^j}\right),
\end{equation}
where $p \coloneq \mathbb{P}\left[Z^\alpha Z^\beta \text{ Success}, \mathcal{C}_1\right]$.
Now, the next layer in the non-fault-tolerant preparation is another layer of $Z^\alpha Z^\beta$ measurements that occur directly on the cat state, so their post-selection (under the condition $\mathcal{C}_3$) entails the rejection of the entire cat state. The probability that this second layer of measurements is successful is (at least)
\begin{equation}
    \mathbb{P}\left[\text{Second Layer Success}\right]\coloneq \mathbb{P}\left[Z^\alpha Z^\beta \text{ Success}, \mathcal{C}_3\right]^{n/2}.
\end{equation}
Again, when the second layer is successful, it takes time $125\cdot\tau_{\text{Meas}}$, but when unsuccessful, it takes an expected time of half that. Therefore, the expected time to complete the non-fault-tolerant cat stat preparation is
\begin{equation}
    \tau_{\text{Non FT Cat}}\coloneq \frac{\tau_{\text{One Layer }Z^\alpha Z^\beta, \;\mathcal{C}_1}}{q}+\frac{1+q}{2q}\cdot 125\tau_{\text{Meas}}, \;q\coloneq \mathbb{P}\left[\text{Second Layer Success}\right].
\end{equation}
We can then consider the expected time to execute the $R$ rounds of fault-tolerant checking, where we recall that we are ignoring logical level post-selection for now. This constitutes $2R$ layers of $\leq n/2$ non-fault-tolerant $Z^\alpha Z^\beta$ measurements in parallel, post-selected under $\mathcal{C}_1$, which can be re-tried without affecting anything else, each followed by $\leq n$ qudit $ZZ$ measurements in parallel (interacting the ancilla row with the cat state row). If any of these qudit $ZZ$ measurements fail, we have to begin the cat state routine from the beginning. We already know that the expected time to finish all the offline non-fault-tolerant $Z^\alpha Z^\beta$ measurements is
\begin{equation}
    \tau_{\text{One Layer }Z^\alpha Z^\beta, \;\mathcal{C}_1}.
\end{equation}
We can also see that the probability of succeeding a layer of qudit $ZZ$ measurements is (at least)
\begin{equation}
    \mathbb{P}\left[ZZ\text{ Layer Success}, \mathcal{C}_2\right] \coloneq (1-r_{2,\text{half}})^{34n}(1-r_{2,\text{whole}})^{12n}(1-r_{2,\text{inter}})^{11n},
\end{equation}
recalling that a qudit $ZZ$ measurement always uses $34$, $12$, and $11$ half-LPU in-module, whole-LPU in-module, and $11$ inter-module instructions, respectively. We recall that the time taken to execute a qudit $ZZ$ measurement is
\begin{equation}
    \tau_{ZZ} \coloneq 34\cdot \tau_{\text{Meas}},
\end{equation}
and again when one of these fails, we take its expected time as half of this, since when any one of its instructions fail, we may restart immediately, and on average the failing operation is half way through the routine. From here, we can calculate the expected time to create a cat state, assuming no logical-level post-selection:
\begin{equation}
    \tau_{\text{Cat, No LLPS}} = \begin{cases}\frac{\tau_{\text{Non FT Cat}} + \left(\tau_{\text{One Layer }Z^\alpha Z^\beta, \;\mathcal{C}_1}+\frac{1+p_2}{2}\cdot 34\tau_{\text{Meas}}\right)\frac{1-p_2^{2R}}{1-p_2}}{p_2^{2R}}&\text{ if }p_2 \neq 1\\
    \tau_{\text{Non FT Cat}}+2R\left(\tau_{\text{One Layer }Z^\alpha Z^\beta, \;\mathcal{C}_1} + 34\cdot\tau_{\text{Meas}}\right)&\text{ if } p_2 = 1\end{cases},
\end{equation}
where
\begin{equation}
    p_2 \coloneq \mathbb{P}\left[\text{ZZ Layer Success}, \;\mathcal{C}_2\right].
\end{equation}
Finally, we note that in terms of the expected time to create a cat state, logical level post-selection is negligible, because the probability of any fault (not post-selected at a physical level) in the creation of any cat state is low.\footnote{Note that this actually has to be the case if we are ever likely to complete a self-consistent round of checks; we will come onto this momentarily.} We thus take the expected time to create one cat state as
\begin{equation}
    \tau_{\text{Cat}} = \tau_{\text{Cat, No LLPS}}.
\end{equation}
On the other hand, the total probability of any fault across a whole round of $d-1+M$ checks does start to become just high enough that we should not neglect it. Letting $\mathbb{P}\left[\text{Cat}\right]$ be the probability of any error on an accepted qudit cat state, or any error in its consumption,\footnote{A full expression for $\mathbb{P}\left[\text{Cat}\right]$ will be provided in Subsection~\ref{subsec:time_like_failures}.} we have the probability of any error in an attempted round of $d-1+M$ checks as
\begin{equation}
    (d-1+M)\cdot\mathbb{P}\left[\text{Cat}\right].
\end{equation}
The expected number of attempts at a single round of $d-1+M$ checks is then
\begin{equation}\label{eq:expected_round_attempts}
    \frac{1}{1-(d-1+M)\cdot\mathbb{P}\left[\text{Cat}\right]},
\end{equation}
and we will take the pessimistic assumption that any rejected round of $d-1+M$ checks is rejected at the end.

We conclude this subsubsection by writing down an expression for the expected number of gross code idling operations (syndrome cycles) making up one round of the outer code. This is
\begin{equation}
    N_{\text{Outer Round}} = \frac{(m-2)}{\tau_{\text{Idle}}}\left(6\cdot\tau_{ZZ}+\frac{2(d-1+M)\cdot\tau_{ZZ}+2(d-1+M)\cdot\tau_{\text{Cat}}}{1-(d-1+M)\cdot\mathbb{P}\left[\text{Cat}\right]}\right).
\end{equation}
Here, $\tau_{\text{Idle}} = 8$ is the number of physical timesteps in one idling operation (see Table $2$ of~\cite{yoder2025tourgrossmodularquantum}), $6\cdot\tau_{ZZ}$ is the time taken to perform the teleportations on one codeblock,\footnote{Referring to Algorithm~\ref{alg:top_quantum_alg}, each outer code block experiences two instances of teleportation between two rounds of outer code syndrome extraction. Referring to Sub-Routine~\ref{alg:row_teleportation}, each teleportation operation takes a time $3\cdot\tau_{ZZ}$ to execute, recalling that $\tau_{ZZ} = 34\cdot\tau_{\text{Meas}}$ is the time to execute a qudit $ZZ$ or qudit $XX$ measurement.} $2(d-1+M)\cdot\tau_{ZZ}$ is the time taken to consume the cat states on one outer codeblock, and $2(d-1+M)\cdot\tau_{\text{Cat}}$ is the time taken to prepare the necessary cat states. The latter two terms are multiplied by Equation~\eqref{eq:expected_round_attempts} to account for rejected rounds of checks.

\subsubsection{High-Level Description of Accepted Cat State Error Distribution}

We now want to study the distribution of errors on an accepted cat state. That is, we want to be able to answer questions like: given an accepted cat state, what is the probability it has some $Z$ error on it? Here, we consider a cat state used for the extraction of an $X$-type qudit syndrome. The discussion for the cat states used for the extraction of $Z$-type qudit syndromes is entirely dual to this one. Let us start by simplifying the problem, before diving into the details. 

We want to think about the distribution of Pauli errors on an accepted cat state $\mathsf{Cat}(\gamma_1, \gamma_2, \ldots, \gamma_n)$. We start by noting that there is not necessarily a notion of ``weight'' of a $Z$ error on an accepted cat state; a $Z$ error is either present or it is not. This is because any $Z$ error on such a cat state is equivalent to a $Z$ error of weight at most one, as shown in the following claim.\footnote{It is instructive to recall that the same is true of a regular $n$-qubit cat state.}
\begin{claim}\label{claim:Z_error_size}
    Suppose we have a qudit cat state $\mathsf{Cat}(\gamma_1, \gamma_2, \ldots, \gamma_n)$ for $\gamma_i \in \mathbb{F}_q$, $\gamma_i \neq 0$ which has been acted on by a Pauli error. The Pauli error is equivalent to a Pauli error whose $Z$ support has weight at most one.
\end{claim}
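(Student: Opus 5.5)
Any Pauli error on the cat state factorises, up to phase, as an $X$-part times a $Z$-part, $E \propto X^{\boldsymbol{a}} Z^{\boldsymbol{b}}$ with $\boldsymbol{a},\boldsymbol{b}\in\mathbb{F}_q^n$, in the Galois-qudit Pauli formalism of~\cite{wills2026review}. The $X$-part is untouched. The plan is to multiply $Z^{\boldsymbol{b}}$ by elements of the cat state's $Z$-type stabiliser group, which changes $E$ only by a phase on the state, until $\boldsymbol{b}$ has weight at most one.

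First I would characterise the $Z$-type stabilisers. They are generated (over $\mathbb{F}_q$, using that $(Z^{\alpha})^{c}$-type scalings $Z^{c\alpha}$ of stabilisers are again stabilisers in the Galois formalism) by $Z_{i-1}^{\gamma_i}Z_i^{\gamma_{i-1}}$ for $i=2,\ldots,n$. So the $Z$-stabiliser vectors form the $\mathbb{F}_q$-span $S$ of the vectors $\gamma_i e_{i-1}+\gamma_{i-1}e_i$. Each of these vectors satisfies the symplectic commutation condition with the $X$-stabiliser $(\gamma_1,\ldots,\gamma_n)$: the inner product is $\gamma_{i-1}\gamma_i+\gamma_i\gamma_{i-1}=0$ in characteristic $2$, up to the trace form used in~\cite{wills2026review}. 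The generators are linearly independent, since each introduces a new coordinate $e_i$ with coefficient $\gamma_{i-1}\neq0$. Hence $\dim S=n-1$, and $S$ is exactly the orthogonal complement $\{\boldsymbol{b} : \sum_i \gamma_i b_i = 0\}$ of $\boldsymbol{\gamma}$.

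Next comes the reduction. Given any $\boldsymbol{b}$, I would set $c=\sum_i\gamma_i b_i$ and $\boldsymbol{b}'=(c/\gamma_1)e_1$, which is valid because $\gamma_1\neq0$. Then $\boldsymbol{b}-\boldsymbol{b}'$ has $\sum_i\gamma_i(b_i-b'_i)=0$, so it lies in $S$, and $Z^{\boldsymbol{b}}$ equals $Z^{\boldsymbol{b}'}$ times a stabiliser. Equivalently, one could sweep coordinate by coordinate: use $Z_{n-1}^{\gamma_n}Z_n^{\gamma_{n-1}}$ scaled by $b_n/\gamma_{n-1}$ to clear coordinate $n$, and so on down to coordinate $1$. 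This explicit sweep avoids the dimension count. Therefore $E$ is equivalent to $X^{\boldsymbol{a}}Z^{\boldsymbol{b}'}$, whose $Z$ support has weight at most one (it is zero iff $c=0$).

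The main obstacle is being careful about phases and about the exact meaning of scalar-multiplied qudit stabilisers in the Galois (trace-form) Pauli group. In particular, I need that $Z^{c\boldsymbol{v}}$ is a stabiliser whenever $Z^{\boldsymbol{v}}$ is, for all $c\in\mathbb{F}_q$. This is the statement in~\cite{wills2026review} that a qudit stabiliser stands for $s$ qubit stabilisers, so the generated group is an $\mathbb{F}_q$-space. I would also need that commuting the $X$-part past the multiplied stabiliser introduces only a global phase. I would cite these properties rather than re-derive them.
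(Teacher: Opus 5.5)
Your argument is correct. The coordinate-by-coordinate sweep you offer as an alternative is exactly the paper's proof. The paper clears qudit $1$ using $Z_1^{\gamma_2}Z_2^{\gamma_1}$ scaled by $\eta\gamma_2^{-1}$ and moves rightwards, ending on qudit $n$; you move leftwards and end on qudit $1$.

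Your main route is different. You identify the $\mathbb{F}_q$-span $S$ of the $Z$-generators with $\boldsymbol{\gamma}^{\perp}$ by triangularity plus a dimension count. You then reduce in a single step to $(c/\gamma_1)e_1$ with $c=\sum_i\gamma_i b_i$.

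The sweep is more elementary, since it needs no independence or dimension argument. Your version yields more information than the paper states. The residual $Z$ error is controlled by the single invariant $c$, is trivial iff $c=0$, and can be placed on any qudit $j$ as $(c/\gamma_j)e_j$. Moreover, $c$ is the syndrome of $Z^{\boldsymbol{b}}$ against the cat state's $X$-stabiliser $X^{\boldsymbol{\gamma}}$, i.e.\ the measurement error it causes on consumption. This is why the paper can treat $Z$ errors on the cat state as simply ``present or not''.

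The phase issue you flag as the main obstacle does not arise. With $\boldsymbol{v}=\boldsymbol{b}-\boldsymbol{b}'\in S$, one has $X^{\boldsymbol{a}}Z^{\boldsymbol{b}}=X^{\boldsymbol{a}}Z^{\boldsymbol{b}'}Z^{\boldsymbol{v}}$ exactly in the Galois formalism. Since $Z^{\boldsymbol{v}}$ acts first on the state and stabilises it, no commutation past the $X$-part is needed.
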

\begin{proof}
    The qudit cat state has qudit stabilisers $Z_{i-1}^{\gamma_i}Z_i^{\gamma_{i-1}}$ for each $i = 2, \ldots, n$, and so by definition it is stabilised by operators $Z_{i-1}^{A\gamma_{i}}Z_i^{A\gamma_{i-1}}$ for each $i = 2, \ldots, n$ for every $A \in \mathbb{F}_q^*$ (see~\cite{wills2026review}). Suppose the Pauli error has $Z$ support $Z^\eta$ on the first qudit. Considering the qudit stabiliser $Z_1^{\gamma_2}Z_2^{\gamma_1}$ and $A = \eta\gamma_2^{-1}$, we can see that it is possible to clean the $Z$ support of the error off of the first qudit, at the expense of changing the support of the error on the second qudit. Continuing in this way, the $Z$ support of the error may be reduced to the size of at most one qudit.
\end{proof}
We will therefore only be worried about the actual weight of $X$ errors on the accepted cat state; for the $Z$ errors we will say that they are either present or not. This is also the natural thing to do because $Z$ errors will just show up as measurement errors when the cat state is consumed in syndrome extraction; $X$ errors on the qudit cat state will actually spread to the data block when the cat state is consumed, so we have to worry about their weight.

\subsubsection{Spread of Errors in the Fault-Tolerant Qudit \texorpdfstring{$Z^\alpha Z^\beta$}{} Measurement Sub-Routine}\label{subsubsec:error_spread_ft_qudit_ZZ}

In order to study the distribution of Pauli errors on an accepted qudit cat state, we will begin by studying the spread of errors in the fault-tolerant qudit $Z^\alpha Z^\beta$ measurement sub-routine. In particular, it is called a fault-tolerant measurement gadget, but we haven't actually said yet what that means precisely. It means that one fault in the measurement sub-routine can increase the weight of the error on the cat state by at most one, and arbitrarily many faults can increase the weight of the error on the cat state by at most two, as we will go on to show.

Note that, in the initial non-fault-tolerant cat state preparation, a single fault, for example a fault giving us the wrong outcome of a $Z^\alpha Z^\beta$ measurement, can cause us to perform the wrong frame update, corresponding to an arbitrarily high-weight $X$ error. This is because these initial measurements are non-deterministic, even in the absence of faults, and so we have no prior expectation of what these initial measurements ``should'' be; there is no logical-level post-selection in this first round.\footnote{This is the exact same effect as can be seen for qubits. Suppose, for example, we try and make a regular $7$-qubit cat state by starting with $7$ qubits in the state $\ket{+}$ and measuring $ZZ$ on qubits $(1,2), (3,4), (5,6)$ in one layer, and then on $(2,3), (4,5), (6,7)$ in the next layer, and performing frame updates depending the outcomes. If we get the wrong measurement outcome on the $ZZ$ measurement on $(4,5)$, then that is equivalent to having a cat state with $X$ error $X_5X_6X_7$, or equivalently, $X_1X_2X_3X_4$.} Thus, one fault in non-fault-tolerant preparation can arbitrarily damage the cat state with a high-weight $X$ error, and/or some $Z$ error.

However, we can now think about the effects of faults in the subsequent fault-tolerant qudit $Z^\alpha Z^\beta$ measurements. Referring back to the description of this protocol in Sub-Routine~\ref{alg:ft_qudit_ZZ_meas}, and depicted in Figure~\ref{fig:ft_qudit_ZZ_meas}, we will only consider the effect of faults in the non-fault-tolerant qudit $Z^\alpha Z^\beta$ measurement, and the two qudit $ZZ$ measurements. As always, because the preparation and measurement of the qudits in the $X$ (or $Z$) basis can be accomplished via transversal gross code operations which will error at a far lower rate than the qudit measurements, and because any faults they do have could have also been created by the qudit measurement operations, they are neglected. One can then check that the following are true, assuming the operations survived physical-level post-selection:\footnote{These can be checked directly using the Galois qudit stabiliser formalism~\cite{wills2026review}. Again, it is instructive to note that the same fault-spreading behaviour holds for the analogous qubit protocol (which is recovered by simply setting $q=2$).}
\begin{enumerate}
    \item\label{ft_qudit_ZZ_fault_case_first_meas} Consider the non-fault-tolerant qudit $Z^\alpha Z^\beta$ measurement between ancilla row qudits. One can consider this failing arbitrarily: producing the wrong measurement outcome, and depositing some Pauli error on the ancilla row gross codes. This can create a measurement error of the whole fault-tolerant qudit $Z^\alpha Z^\beta$ measurement sub-routine, and can create a $Z$ error on the cat state, but it cannot spread $X$ errors to the cat state;
    \item\label{ft_qudit_ZZ_fault_case_latter_two_meas} Consider one of the two latter qudit $ZZ$ measurements (between one ancilla row gross code and one cat state row gross code). Again, suppose it fails arbitrarily. This can create a measurement error of the whole fault-tolerant measurement sub-routine, and it can create a $Z$ error on the cat state, and it can create an $X$ error on the cat state qudit that it touches, but it cannot create an $X$ error on the cat state qudit that it does not touch.
\end{enumerate}
We see, therefore, that a single fault in a fault-tolerant qudit $Z^\alpha Z^\beta$ measurement sub-routine cannot increase the weight of the $X$ error on the cat state by more than one. It \textit{is} possible that, because of a qudit $ZZ$ measurement failure, a fault-tolerant qudit $Z^\alpha Z^\beta$ measurement sub-routine with one fault fails to catch a given $X$ error, and actually increases the weight of the $X$ error,\footnote{In reality, this is a very specific failure, and can be considered a very rare event. The qudit $ZZ$ measurement can have $2^s-1 = 2047$ measurement errors, and exactly one is required to miss the existing $X$ error.} but even then the weight of the $X$ error has only increased by one. Again, this differs from the initial non-fault-tolerant preparation because there we attempt to perform frame updates to obtain the desired cat state, meaning that a single fault can create an arbitrary-weight $X$ error. By contrast, in the latter $R$ rounds, any non-trivial measurement outcome leads to cat state rejection.

\subsubsection{Fault Tolerance of the Qudit Cat State}\label{subsubsec:ft_of_qudit_cat}

Now we have studied an individual fault-tolerant qudit $Z^\alpha Z^\beta$ measurement sub-routine, we can now consider the entire fault-tolerant preparation of the qudit cat state. Recall that the whole qudit cat state preparation follows by an initial non-fault-tolerant round of measurements of the operators $Z_{i-1}^{\gamma_i}Z_i^{\gamma_{i-1}}$, followed by $R$ rounds of measuring these operators fault-tolerantly. Every round of $n-1$ measurements (fault-tolerant or non-fault-tolerant) is executed in two layers.

First, it is important to appreciate that any of the latter $R$ rounds of fault-tolerant $Z^\alpha Z^\beta$ measurements will always catch any existing $X$ errors on the cat state, assuming that that round does not suffer a fault. This is because, given any non-trivial $X$ error on the cat state, that $X$ error does not commute with some operator $Z_{i-1}^{\gamma_i}Z_i^{\gamma_{i-1}}$, and so a non-trivial measurement outcome will be obtained by a faultless round of measurements. One immediate corollary of this fact is that, in order for there to be any $X$ error on the final qudit cat state, there must be a fault in the final round of fault-tolerant measurement. Another corollary is that, while the very first round of fault-tolerant measurement can create a high-weight $X$ error on the cat state, even resulting from a single fault, that $X$ error will be caught unless at least $R$ further faults occur.

Moreover, we can in fact see that our qudit cat state preparation and verification routine is $R$-fault-tolerant in the sense of Definition~\ref{def:ft_qudit_cat_property}.
\begin{claim}\label{claim:cat_is_ft}
    The preparation and verification of the qudit cat state is $R$-fault-tolerant. This means that if $s \leq R$ faults occur during its preparation and verification, and if it is accepted, it has at most $s$ errors on it.
\end{claim}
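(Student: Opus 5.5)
Here ``errors'' means qudits carrying a non-trivial $X$ component. By Claim~\ref{claim:Z_error_size}, any $Z$ error on the cat state is equivalent to one of weight at most one. It is also harmless at the level of spreading to data, so the content of the claim is about $X$ error weight; we absorb a possible $Z$ error into whatever qudit already counts. Split the procedure into the non-fault-tolerant round (round $0$) and the checking rounds $1,\dots,R$, each consisting of the fault-tolerant gadgets $\left(Z_{i-1}^{\gamma_i}Z_i^{\gamma_{i-1}}\right)_{\mathrm{FT}}$ run in two layers. The argument is a pigeonhole over rounds, combined with the per-gadget spreading facts of Subsubsection~\ref{subsubsec:error_spread_ft_qudit_ZZ}.

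\textbf{Step 1 (a faultless round is a perfect filter).} Take a checking round containing no fault. Every non-trivial $X$ error on the cat state fails to commute with some $Z_{i-1}^{\gamma_i}Z_i^{\gamma_{i-1}}$, since these generate all $Z$-type stabilisers, and $\mathsf{Cat}$ is their unique joint eigenstate up to the $X$ stabiliser. So a faultless round that begins with a non-trivial $X$ error produces a non-trivial outcome, and the state is rejected. A faultless round also introduces no new $X$ error. Hence if the state is accepted, the $X$ error on it is trivial at the end of every faultless checking round.

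\textbf{Step 2 (pigeonhole).} Suppose $s\le R$ faults occur and the state is accepted. Let $r^\ast$ be the last faultless round among rounds $1,\dots,R$. If no checking round is faultless, then all $s$ faults lie in the $R$ checking rounds, one per round. Round $0$ is then faultless and produces the exact cat state, so we take $r^\ast=0$. In either case the $X$ error is trivial right after round $r^\ast$, by Step 1 or by the faultless preparation.

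\textbf{Step 3 (counting growth).} Every fault occurring after round $r^\ast$ lies inside some fault-tolerant gadget. By facts~\ref{ft_qudit_ZZ_fault_case_first_meas} and~\ref{ft_qudit_ZZ_fault_case_latter_two_meas}, each such fault adds an $X$ error to at most one cat-state qudit, namely the one touched by the failing qudit $ZZ$ measurement. A faulty measurement outcome can only cause an acceptance that should have been a rejection, or a spurious rejection. It never triggers a frame update in these rounds, because any non-trivial outcome leads to rejection. So the set of qudits carrying $X$ errors grows by at most one per fault, and there are at most $s$ faults after round $r^\ast$. The accepted state therefore has errors on at most $s$ qudits.

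\textbf{Main obstacle.} The delicate point is Step 3: ruling out that a fault in a later gadget interacts with existing errors to produce an effective error of larger weight. One route would be an undetected $X$ error being ``completed'' by a frame update. I would handle this by arguing that in the checking rounds the only frame updates are those internal to each gadget, namely the $X$-basis measurement of the ancilla-row pair. I would then verify with the qudit stabiliser tableau that a fault in any single constituent measurement, propagated through those updates, leaves $X$ support only on its touched cat qudit. This is exactly the content of item~\ref{ft_qudit_ZZ_fault_case_latter_two_meas}. I would also need to confirm the subtlety that faults physically spanning two gadgets in the same layer count as two faults, so per-fault accounting is valid.
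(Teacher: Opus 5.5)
Your proof is correct and follows the same route as the paper: Claim~\ref{claim:Z_error_size} disposes of $Z$ errors, a faultless checking round rejects any non-trivial $X$ error, and the gadget spreading rules give at most one new $X$-errored cat qudit per fault. Your ``last faultless round'' pigeonhole is a tidier version of the paper's case split on whether round $0$ leaves an $X$ error, and it explicitly handles the cancellation case that the paper only mentions parenthetically.
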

\begin{proof}
    We do not worry about the size of the $Z$ error, because Claim~\ref{claim:Z_error_size} tells us any $Z$ error is equivalent to a $Z$ error of weight at most one, and there is no $Z$ error on the cat state in the absence of any faults. It remains to handle the $X$ errors.
    
    If there is any $X$ error on the cat state after the initial round of measurements (the round attempting to initially prepare the cat state), it will necessarily be caught unless at least $R$ additional faults occur in the verification (in principle, it could also be cancelled out by future faults, in which case there is no $X$ error at all). 
    
    In the case that no $X$ error results from the initial non-fault-tolerant preparation, one fault in each fault-tolerant $Z_{i-1}^{\gamma_i}Z_i^{\gamma_{i-1}}$ measurement sub-routine can increase the size of the $X$ error by at most one, and an arbitrary number of faults in said sub-routine can increase the size of the $X$ error by at most two. The result follows.
\end{proof}
We have established that $s$ faults leave $s$ $X$ errors on the qudit cat state, if $s \leq R$. To complete the picture of the $X$-error distribution on an accepted qudit cat state, we need to understand the \textit{set of ways} that various numbers of $X$ errors can end up on the qudit cat state. To start with, we have that the probability of having $\leq 1$ $X$ error on an accepted qudit cat state is at most
\begin{equation}
    n\cdot \mathbb{P}\left[ZZ, \mathcal{C}_2\right],
\end{equation}
where $\mathbb{P}\left[ZZ, \mathcal{C}_2\right]$ is the probability of any fault in the qudit $ZZ$ measurement, despite the operation not being post-selected under the condition $\mathcal{C}_2$.\footnote{Note that here, we only keep leading order terms, that is, we only consider events where $1$ fault leads to the weight-one $X$ error, not examples where $2$ (or more) faults leads to a weight-one $X$ error.} The reason for this is as follows. Recalling the Rules~\ref{ft_qudit_ZZ_fault_case_first_meas} and~\ref{ft_qudit_ZZ_fault_case_latter_two_meas} for error spreading in the fault-tolerant $Z^\alpha Z^\beta$ measurement, we see that, in the $R$ rounds of fault-tolerant checking, $X$ errors can only be \textit{created} by the failure of the qudit $ZZ$ measurements. Moreover, because we are considering a cat state preparation routine with only $1$ fault, any $X$ error on a qudit will be caught if checked a further time by a fault-tolerant $Z^\alpha Z^\beta$ measurement routine. Thus, a single $X$ error on an accepted cat state requires that one of the last qudit $ZZ$ measurements to touch any of the $n$ qudits fails and leaves an $X$ error on that qudit, thus giving this formula.

What is the probability of having $2$ $X$ errors on an accepted qudit cat state? We take the probability of having $2$ $X$ errors on an accepted qudit cat state to be at most
\begin{equation}
    \begin{pmatrix}
        n\\2
    \end{pmatrix}\cdot\mathbb{P}\left[ZZ, \mathcal{C}_2\right]^2,
\end{equation}
because it requires $2$ out of the $n$ last qudit $ZZ$ measurements to touch any qudit to fail. Notice that this is a slight approximation, although in a negligible sense. Indeed, there are sets of $2$ faults that can lead to earlier $X$ faults not being caught. However, similar to what was discussed above, this requires the corresponding fault-tolerant $Z^\alpha Z^\beta$ measurements in the last layer to experience the correct one out of its $2^s-1 = 2047$ measurement errors, and then also deposit another $X$ error on the other qudit. In addition, note that there will only be $\sim n$ of these fault paths, rather than $\begin{pmatrix}
    n\\2
\end{pmatrix}$. These fault paths are therefore sub-leading in probability and count, and thus neglected. For the same reasons, we take the probability of $3$ $X$ errors on the accepted qudit cat state to be at most
\begin{equation}
    \begin{pmatrix}
        n\\3
    \end{pmatrix}\cdot\mathbb{P}\left[ZZ, \mathcal{C}_2\right]^3,
\end{equation}
and so on.

We now want to consider the true failure of the cat state preparation, which is the case that more than $R$ $X$ errors end up on the accepted qudit cat state,\footnote{This will be counted later as a failure of our outer code. In particular, it will be counted as part of the space-like failure mode.} which, by Claim~\ref{claim:cat_is_ft}, requires more than $R$ faults in the whole fault-tolerant cat state preparation. The probability of having $>R$ $X$ errors on the accepted qudit cat state is thus at most
\begin{equation}\label{eq:cat_state_failure_prob}
    \frac{n}{2}\cdot \mathbb{P}\left[Z^\alpha Z^\beta, \mathcal{C}_1\right]\cdot \mathbb{P}\left[\left(Z^\alpha Z^\beta\right)_{\mathsf{FT}}, \mathcal{C}_1, \mathcal{C}_2\right]^R + \frac{n}{2}\cdot \mathbb{P}\left[Z^\alpha Z^\beta, \mathcal{C}_3\right]\cdot \mathbb{P}\left[\left(Z^\alpha Z^\beta\right)_{\mathsf{FT}}, \mathcal{C}_1, \mathcal{C}_2\right]^R 
\end{equation}
where $\mathbb{P}\left[\left(Z^\alpha Z^\beta\right)_{\mathsf{FT}}, \mathcal{C}_1, \mathcal{C}_2\right]$ is the probability of any un-post-selected fault in the fault-tolerant measurement sub-routine, and $\mathbb{P}\left[Z^\alpha Z^\beta, \mathcal{C}_i\right]$ is the probability of any fault in the non-fault-tolerant $Z^\alpha Z^\beta$ measurement, despite surviving post-selection under the condition $\mathcal{C}_i$. Why is this the case? We know that, at leading order, we must have $1+R$ faults, and they must be distributed equally across the initial non-fault-tolerant preparation, and the $R$ fault-tolerant rounds of checking. Depending on whether the fault happens in the first or second layer, the fault in the non-fault-tolerant preparation occurs with probability $\mathbb{P}\left[Z^\alpha Z^\beta, \mathcal{C}_1\right]$, or $\mathbb{P}\left[Z^\alpha Z^\beta, \mathcal{C}_3\right]$, respectively. Then, if one of these high-weight $X$ error is approaching one of the $R$ fault-tolerant rounds of checking, at least one of the fault-tolerant measurement routines has to fail to not detect the measurement. Note that this is quite pessimistic, because it is also the case here that the fault-tolerant measurement routine can experience many different measurement errors, and in order not to catch a given $X$ error, it must experience a particular one.\footnote{Note, however, that we do not make any approximation here, as we did above. The reason is as follows. Suppose that a particular non-fault-tolerant $Z^\alpha Z^\beta$ measurement had a distribution of measurement errors very heavily weighted on one particular failure; suppose that almost all of its measurement errors were some $\eta \in \mathbb{F}_q^*$. Then, it would only require $1+R$ of these to cause the catastrophic cat state failure, which is what our formula captures. On the other hand, we made the approximation we did above on the probability of, say, $2$ $X$ errors on the accepted cat state, because the case we are neglecting requires the qudit $ZZ$ measurement to fail in two distinct and specific ways: first, to deposit an $X$ error without having a measurement error (or the cat state would be post-selected), and then to experience the exact corresponding measurement error, while also depositing another $X$ error.}

We now conclude the description of the distribution of $X$ errors on an accepted qudit cat state by writing down the following expressions for completeness:
\begin{align}
    \mathbb{P}\left[\left(Z^\alpha Z^\beta\right)_{\mathrm{FT}}, \mathcal{C}_1, \mathcal{C}_2\right] &= \mathbb{P}\left[Z^\alpha Z^\beta, \mathcal{C}_1\right] + 2\cdot \mathbb{P}\left[ZZ, \mathcal{C}_2\right],\\
    \mathbb{P}\left[Z^\alpha Z^\beta, \mathcal{C}_i\right] &= 73.05\cdot \mathbb{P}\left[\text{half}, \mathcal{C}_i\right] + 56.53\cdot\mathbb{P}\left[\text{whole}, \mathcal{C}_i\right] + 11\cdot \mathbb{P}\left[\text{inter}, \mathcal{C}_i\right],\label{eq:non_ft_alpha_beta_meas_C1_prob}\\
    \mathbb{P}\left[ZZ, \mathcal{C}_2\right] &= 34\cdot \mathbb{P}\left[\text{half}, \mathcal{C}_2\right] + 12\cdot \mathbb{P}\left[\text{whole}, \mathcal{C}_2\right] + 11\cdot \mathbb{P}\left[\text{inter}, \mathcal{C}_2\right].
\end{align}
As for the $Z$ errors, as discussed, there is no notion of ``weight'' of $Z$ error; a $Z$ error is either present or not. Since no $Z$ error is present if no un-post-selected fault occurs in the cat state creation, the probability of having a $Z$ error may be upper bounded by the probability of any un-post-selected fault in the cat state creation, i.e.,
\begin{equation}\label{eq:cat_creation_error_prob}
    \mathbb{P}\left[\text{Cat Creation}\right]\coloneq \frac{n}{2}\cdot\mathbb{P}\left[Z^\alpha Z^\beta, \mathcal{C}_1\right] + \frac{n}{2}\cdot\mathbb{P}\left[Z^\alpha Z^\beta , \mathcal{C}_3\right] + R\cdot(n-1)\cdot\mathbb{P}\left[\left(Z^\alpha Z^\beta\right)_{\mathrm{FT}}, \mathcal{C}_1, \mathcal{C}_2\right].
\end{equation}

\subsection{Time-Like Failures}\label{subsec:time_like_failures}

In this subsection, we will consider the probability of time-like failure, one of the two failure modes of our outer code (time-like failure and space-like failure).

Let us recall the setting for the $X$ checks, where the discussion for the $Z$ checks is identical. For the $X$ checks, we measure $d-1$ ``regular'' qudit checks which span the space of stabilisers, followed by $M$ linear combinations of those checks, where $M$ is some integer at most $d-2$. The presence of the latter is to protect ourselves against faults in the syndrome extraction; see Algorithm~\ref{alg:ft_extraction_one_block}. As we measure the checks, we check that the extracted syndrome satisfies the linear combinations they are expected to, rejecting the whole round of $d-1+M$ $X$ checks if any violation is detected. Getting to the end of the round of $X$ checks with all the required linear combinations satisfied causes the round to be labelled ``self-consistent'', and we extract the syndrome and hand it to the outer decoder. The outer decoder will take the syndrome and decide on a most likely correction, which is then implemented. Considering all of the instants just before each of the $d-1+M$ $X$ checks in the self-consistent round, we say that time-like failure has occurred if we obtain a self-consistent round, but we do not have the actual syndrome for the error on the data at one of those instants. That is, a time-like failure occurs when we get a self-consistent round, but the syndrome was not the syndrome for the error on the data immediately before or during the round at any point. A time-like failure is deemed a logical error of the outer code. We will handle our other failure, space-like failure, in the next subsection (space-like failure occurs when time-like failure did not occur, but the decoder does not return the correct error, see the next subsection). The aim of the present subsection is to bound the probability of time-like failure.

One thing that makes the probability of time-like failure (which will also be a problem for bounding the probability of space-like failure) hard to bound is that we cannot assume perfect knowledge of the logical error distributions of our gross codes and their operations. For example, an idling gross code can fail in $4^{12}-1 = 16777215$ ways. It is possible using simulations to estimate aggregate statistics, like the total probability of some failure (the sum of the probabilities of these $16777215$ failures) but it is not computationally feasible to estimate the probability of every one of these errors. We cannot even get an especially good idea of the whole distribution; the bottleneck at the moment is that simulations are performed using CPUs. In future, when decoding takes place in FPGAs or ASICs and becomes real-time~\cite{maurer2025real,maurya2025fpga}, one would be able to get a good idea of the full logical error distribution, either for a simulator with some noise model, or on a real device. Note that one could just make an \textit{assumption} that all logical errors occur uniformly, and indeed this makes the following calculations very easy. However, in the simulations that we can do, this clearly appears not to be the case.\footnote{Indeed, it likely should not be true considering the differing weights of logical operators in the circuits.}

Now, for this paper, we would like to be able to make proofs, even with our limited information about these logical error distributions. In order to make progress, we introduce a technique called spacetime error mixing. In one sentence, this is a mathematical technique that will make these calculations possible, using only the knowledge of aggregate statistics (like the total failure probability of an idling gross code), instead of requiring knowledge of the probabilities of individual logical errors. This technique will crop up for the time-like failures and for the space-like failures. It is actually easier to introduce this notion intuitively for space-like failures, and so we will do that now, for the sake of guiding the reader.

\begin{example}[Intuitive Introduction to Error Mixing for Space-Like Failures] Consider the case of space-like failure. The decoder for the $X$ checks is attempting to decode a classical Reed-Solomon code, with only the input of the syndrome, in order to determine the $Z$ error to which it corresponded. It trusts that time-like failure has not occurred, and so it will simply return the minimum-weight correction consistent with the syndrome. When multiple errors of the same minimum weight have the same syndrome, it chooses at random from them. Of course, some patterns of $Z$ errors on the data are correctable in the strict sense (meaning they are the unique minimum-weight error with the given syndrome), whereas others are uncorrectable (they are not the unique minimum-weight error with the given syndrome).

Meanwhile, we do not know the exact distribution of logical $Z$ errors on gross codes or their instructions. The only thing we can reliably estimate are aggregate statistics like the total probability of some failure; in particular, we cannot assume that all logical $Z$ errors occur uniformly. Initially, this seems worrying. How can we ensure that one of the outer code uncorrectable error patterns is not ``lining up'' with some of the more likely $Z$ errors, thus making the probability of space-like failure undesirably high? Note that if we knew this were happening, we would change the qudit-to-qubit mappings $\mathcal{B}$, since these determined to which qudit errors certain qubit errors correspond.

If one had a good knowledge of one's error distribution, say if one had real-time decoding on a  real device or simulator, one could perhaps make a wise choice of $\mathcal{B}$. In the absence of this information, we introduce a new notion that allows us to make proofs, proving that the probability of the failure mode can be kept desirably low, even with the limited information that we have.\footnote{One could take a very naive upper bound where the probability of every different logical error is upper bounded by the aggregate probability, but this leads to a far too loose bound on the overall logical error rate. For example, suppose an idling gross code fails with any logical error with probability $\mathbb{P}\left[\text{idle}\right]$. If we assume all its logical errors are uniform, then each logical error occurs with probability $\frac{\mathbb{P}\left[\text{idle}\right]}{16777215}$, over $7$ orders of magnitude lower. We don't want to have to assume uniformity of logical errors, not least because it doesn't appear to be true, but note that we \textit{can} say that the average of the probabilities of all those logical errors is $\frac{\mathbb{P}\left[\text{idle}\right]}{16777215}$. This is the sort of thing we will use in our mixing arguments.} The idea is to consider a parameter of the protocol, called a ``mixing parameter''. Changing the mixing parameter changes the exact specification of the protocol. We can imagine taking a uniformly random choice of the mixing parameter. One can then more easily calculate, in expectation over this random choice, the probability of the given failure mode. This is a calculation that becomes feasible, and relies only on the knowledge of certain aggregate statistics. Moreover, if we show that the probability of failure in expectation over the mixing parameter is some calculated value, it follows that there exists some choice of the mixing parameter such that the probability of the failure mode is at most the calculated value. The idea is that, given a real device, with more knowledge of one's logical error distribution, one could make such a choice of the mixing parameter. In this paper, we show that there always exists such a choice, regardless of the underlying logical error distributions.\footnote{In practice, given real-time decoding, what one would do would be to take the mixing parameter at random, and simulate to see if the probability of the failure mode is low enough. If it is not, one can continue taking another random value. Since we know that, in expectation over the random choice, the probability of the failure mode has to be below some value, one will quickly hit a good choice.}

For the space-like failure, the only mixing parameter to consider will be $\mathcal{B}$, the qudit-to-qubit mappings specifying how each of the qudits are decomposed into qubits~\cite{wills2026review}. In particular, $\mathcal{B} = (B_i)_{i=1}^n$, where $B_i$ is a basis for $\mathbb{F}_q$ over $\mathbb{F}_2$. Changing $B_i$ changes the qudit error to which a given qubit error corresponds.    
\end{example}

As we go through the time-like failure mode, we will see that there is one more mixing parameter that we will use in addition to $\mathcal{B}$. This will be $\nu \in (\mathbb{F}_q^*)^{d-1}$, where $\mathbb{F}_q^*$ is the set of non-zero elements of $\mathbb{F}_q$. We will see how $\nu$ comes in later, but roughly, $\nu$ will be used to randomise the linear combinations of checks that are measured in the latter set of $M$ checks. Across all these arguments, the high-level idea will always be the same: we show in expectation over some mixing parameter(s) that the probability of a given failure mode is at most some desirable value, and so there exists some choice of the mixing parameter(s) that leads to the probability of the failure mode being at most the desirable value. We will tie all the parameter choices and failure modes together in Subsection~\ref{subsec:all_together}, where in particular, we will see how choices of the mixing parameters can be made that are good choices for all failure modes combined.

\subsubsection{Overview of Time-Like Failures}

We recall that a self-consistent round of $X$ checks begins by measuring $d-1$ regular qudit checks spanning the space of checks, followed by measuring $M$ linear combinations of those checks. Now, let us specify this further. We consider some parity-check matrix $H_X \in \mathbb{F}_q^{(d-1)\times n}$ which will specify the first $d-1$ checks measured, where we measure the checks in the rows from top to bottom. Moreover, we recall (see Appendix~\ref{sec:RS_details} or~\cite{wills2026review}), that $H_X$ is a generator matrix for some classical code $\mathsf{GRS}_{d-1}(\balpha,\bu)$, where $\balpha = (\alpha_1, \ldots, \alpha_n)$, where $\alpha_i$ are distinct points in $\mathbb{F}_q$, and $\bu = (u_1, \ldots, u_n)$, and $u_i$ are non-zero values in $\mathbb{F}_q$. Explicitly, we have $(H_X)_{ab} = u_b\alpha_b^{a-1}$.\footnote{We will not make explicit proofs of the $Z$ checks, since they behave the same. To this end, note that one may choose $(H_Z)_{ab} = v_b\alpha_b^{a-1}$ for some $\bv = (v_1, \ldots, v_n)$, where $v_i \in \mathbb{F}_q$ are all non-zero.} It is important to note that every entry of $H_X$ is not zero, because we choose all $\alpha_i$ to be non-zero evaluation points.

Next, we recall from Algorithm~\ref{alg:ft_extraction_one_block} that the $M$ linear combinations of checks correspond to rows of the matrix $\beta^{(X)}\cdot H_X$, where $\beta^{(X)} \in \mathbb{F}_q^{M \times (d-1)}$ is some full-rank matrix. Again, we measure these rows from top to bottom. As we extract the $d-1+M$ syndrome components, we are checking that they satisfy the linear combinations that they should. This is exactly the same thing as checking that our extracted syndrome is a codeword of the classical code over $\mathbb{F}_q$ with parity-check matrix
\begin{equation}
    H_{X,\text{time}}\coloneq\begin{pmatrix}
        \beta^{(X)}&I
    \end{pmatrix}\in \mathbb{F}_q^{M\times(d-1+M)},
\end{equation}
where $I$ is an $M\times M$ identity matrix. We think of this as a ``time-like'' classical code protecting the syndromes of our outer code. We will describe later how a good choice of $\beta^{(X)}$ is made.

From the point of view of time-like failure on a self-consistent round of $X$ checks, when we consume a cat state, two types of faults can occur (or their combination):
\begin{itemize}
    \item \textit{Measurement error}: When we consume a cat state, thus making a measurement of a qudit $X$ check, a measurement error by some value $\eta \in \mathbb{F}_q$ is defined to be the event that the extracted syndrome component is not the syndrome component for the $Z$ error on the data immediately before this cat state consumption, and it was incorrect by an amount $\eta$. $\eta = 0$ is the case of no measurement error. Most commonly, this will occur because there was some $Z$ error on the accepted cat state, but it can also occur because there were faults in the operations consuming the cat state.
    \item \textit{Cat data error}: A cat data error is defined to be the event that the consumption of the cat state introduces a $Z$ error to the data block, that is, the $Z$ error on the data block after the cat state consumption is not the same as the $Z$ error on the data block before the cat state consumption. Notice that this can only occur because of failures in the operations consuming the cat state, because $Z$ errors on the accepted cat state cannot spread to the data (because we are extracting $X$ check syndromes).
    \item \textit{Mixed error}: Both a measurement error and a cat data error occur, that is, a failure in the operations consuming the cat state introduces a $Z$ error to the data, while also giving us the wrong syndrome component for the error on the data before the syndrome component was extracted.
\end{itemize}
We note that cat data errors (and data errors forming part of mixed errors) are always local, because the data errors can only occur due to failures in the cat state consumption, which act transversally on the data block. That is, $s$ faults occurring during the consumption of the cat state always result in at most $s$ new $Z$ errors on the data block.

In order to best protect ourselves against the effects of measurement errors and cat data errors/mixed errors, we always choose the time-like code (the classical code with parity-check matrix $H_{X,\text{time}}$) to be a classical code with distance $M+1$.\footnote{Note that by choosing $H_{X,\text{time}}$ to be a parity-check matrix for a classical code of distance $M+1$, it follows that $\beta^{(X)}$ is full-rank.} This can always be achieved by, for example, choosing $H_{X,\text{time}}$ to be a parity-check matrix for a classical Reed-Solomon code of length $d-1+M$ and distance $M+1$ over $\mathbb{F}_q$, and performing the necessary row operations to get it into the required form.\footnote{We emphasise that this time-like Reed-Solomon code has nothing to do with the ``space-like'' Reed-Solomon codes of length $n$ defining our quantum code.} We say again that time-like failure is defined as the event that we get a self-consistent round, but the syndrome extracted is not the syndrome for the $Z$ error on the data just before or during the round (that is, not the syndrome for the error just before any of the $d-1+M$ checks). To better understand time-like failure, let's start with the following as a warm-up.
\begin{claim}\label{claim:tlf_regular_check_fault}
    Time-like failure cannot occur on a self-consistent round unless a fault occurred in one of the $d-1$ regular checks.
\end{claim}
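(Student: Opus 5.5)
We argue by contrapositive: assume that none of the $d-1$ regular checks in a self-consistent round of $X$ checks suffers a fault, and show that the extracted $y$ is the true syndrome of the data's $Z$ error at some instant just before or during the round. Then time-like failure cannot occur. Here ``fault'' means any measurement error, cat data error or mixed error arising in the consumption of a cat state for that check, in the sense of the three bullet points above.

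\textbf{Setup.} Let $e_0$ be the $Z$ error on the data just before the round begins. For $j=1,\dots,d-1$, let $e_{j}$ be the error just before the $(j+1)$'th check. The $(d-1+M)$ instants in the definition of time-like failure include the instants just before each of the $d-1$ regular checks.

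\textbf{Step 1: the data error is frozen during the regular checks.} A cat data error (or the data part of a mixed error) is the only mechanism by which the consumption of a cat state changes the $Z$ error on the data. Hence, if the regular checks are fault-free, no regular check alters the data. The one subtlety is that idling noise may still accumulate on the data between checks. I would either absorb this into the definition of faults for that check, or treat idling errors occurring between cat-state consumptions as part of the error at the next instant. The cleanest version is to count any change of the data error during a check's time window as a fault of that check, so that $e_0=e_1=\cdots=e_{d-1}=:e$.

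\textbf{Step 2: the recorded syndrome is correct.} With no measurement error on the $i$'th regular check, Equation~\eqref{eq:syndrome_extraction_formula} gives $y_i=(H_X e)_i$ for each $i\in[d-1]$. Therefore $y=H_Xe$, which is exactly the syndrome of the data error at the instant just before the round (indeed at every instant through the regular checks). This holds regardless of any faults in the latter $M$ checks.

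\textbf{Step 3: conclude.} Self-consistency only adds the constraint $z=\beta^{(X)}y$ and plays no role in the argument. Since the decoder receives $y$, it has been handed the syndrome of the error just before the round, so by definition time-like failure has not occurred. This gives the contrapositive, and hence the claim.

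The main obstacle is Step 1: making precise what ``a fault occurred in a check'' covers. In particular, idle errors landing on the data between regular checks must be handled. If they are not counted as faults of a check, the claim needs the observation that the syndrome still matches the error at the first instant, provided idle errors are attributed correctly. I would resolve this by stating explicitly that any data change during the regular-check window is counted as a fault.
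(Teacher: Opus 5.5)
Your proof is correct and is essentially the paper's argument: with no faults in the $d-1$ regular checks, the data error is unchanged across them and each $y_i$ is the true syndrome component, so $y$ is the syndrome of the error at the start of the round. As you note, the paper's appeal to self-consistency is inessential, and your explicit convention that any change to the data during the regular-check window counts as a fault makes precise an assumption the paper leaves implicit. Without that convention, an idle error landing between regular checks $k$ and $k+1$ would act exactly like the single cat data error of Example~\ref{example:tlf_single_cat_data_error}.
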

\begin{proof}
    If no fault occurred in any of the $d-1$ regular checks, the syndrome $y \in \mathbb{F}_q^{d-1}$ extracted by those checks is the exact syndrome for the error on the data during all $d-1$ regular checks. Since the round is self-consistent, the latter $M$ checks give no ambiguity that the syndrome is indeed $y$.
\end{proof}
Notice that at this stage we have not used the fact that $H_{X,\text{time}}$ defines a classical code with distance $M+1$. That point can be used when treating cases with only measurement errors, as follows.
\begin{claim}\label{claim:tlf_meas_only}
    Given a self-consistent round, suppose that no cat data (or mixed) errors were experienced; only measurement errors possibly occurred. The minimum number of measurement errors required to cause time-like failure is $M+1$.
\end{claim}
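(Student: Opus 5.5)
Since no cat data or mixed errors occur, the $Z$ error on the data is constant throughout the round, say $e$, with true syndrome $y^\ast = H_X e \in \mathbb{F}_q^{d-1}$. Every check in the round is then judged against the same syndrome. The fault-free outcome vector is $w^\ast = (y^\ast, \beta^{(X)} y^\ast) \in \mathbb{F}_q^{d-1+M}$, and $H_{X,\text{time}}\, w^\ast = \beta^{(X)} y^\ast + \beta^{(X)} y^\ast = 0$ because we are in characteristic two. So $w^\ast$ is a codeword of the time-like code. The actually recorded vector is $w = (y,z) = w^\ast + \epsilon$, where $\epsilon \in \mathbb{F}_q^{d-1+M}$ is the vector of measurement errors. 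Its Hamming weight is exactly the number of measurement errors, by the definition of a measurement error by an amount $\eta$.

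For the lower bound, self-consistency means $z = \beta^{(X)} y$, i.e.\ $H_{X,\text{time}}\, w = 0$. Subtracting, $H_{X,\text{time}}\,\epsilon = 0$, so $\epsilon$ is itself a codeword of the time-like code. Time-like failure means the reported syndrome $y$ is not the syndrome of the data error at any instant of the round. Since that error is always $e$, this says $y \neq y^\ast$, so $\epsilon \neq 0$. We chose $H_{X,\text{time}}$ to define a code of distance $M+1$, so $\mathrm{wt}(\epsilon) \geq M+1$. Hence at least $M+1$ measurement errors are needed.

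For tightness I will exhibit a weight-$(M+1)$ codeword $\epsilon$ with nonzero first block, so that the corresponding measurement errors produce a self-consistent round with $y \neq y^\ast$. Any codeword has the form $\epsilon = (a, \beta^{(X)} a)$, and $a = 0$ forces $\epsilon = 0$. So every nonzero codeword has a nonzero first block, and it suffices that a codeword of weight exactly $M+1$ exists. This holds because the time-like code is MDS: it has length $d-1+M$, dimension $d-1$ (as $H_{X,\text{time}}$ has full rank $M$) and distance $M+1$, so it meets the Singleton bound with equality. The minimum distance is attained, so some codeword has weight exactly $M+1$. Placing those errors on the corresponding checks gives time-like failure with exactly $M+1$ measurement errors.

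The step needing the most care is the translation between the operational definitions and the coding statement. First, the absence of cat data errors really does make the reference syndrome identical across all $d-1+M$ instants. Second, the self-consistency test is exactly membership of $(y,z)$ in $\ker H_{X,\text{time}}$. Once these are pinned down, both directions reduce to the distance of the time-like code.
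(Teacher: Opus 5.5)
Your proof is correct and follows the paper's argument: with no data errors the reference syndrome is fixed, self-consistency forces the measurement-error vector into $\ker H_{X,\text{time}}$, time-like failure forces it to be non-zero, and the distance $M+1$ then gives the bound. Your extra achievability step (every non-zero codeword $(a,\beta^{(X)}a)$ has $a\neq 0$, and a weight-$(M+1)$ codeword exists) is left implicit in the paper; it follows directly from the definition of minimum distance, so the appeal to MDS/Singleton is unnecessary.
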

\begin{proof}
    Given no cat data errors, the error on the data, call it $e \in \mathbb{F}_q^n$, is unchanged throughout the self-consistent round. Suppose that the values measured by the first $d-1$ checks were $y \in \mathbb{F}_q^{d-1}$, and by the second set of checks were $z \in \mathbb{F}_q^{M}$. Self-consistency means that $z = \beta^{(X)}\cdot y$. The true syndrome for the error at all points in the round is $\hat{y} \coloneq H_X\cdot e$ for the $d-1$ regular checks and $\hat{z} \coloneq \beta^{(X)}\cdot \hat{y}$ for the linear combinations. $y$ and $z$ differ from $\hat{y}$ and $\hat{z}$, respectively, by the vectors of measurement errors that took place, and since $z = \beta^{(X)}\cdot y$, these errors must form a codeword of the code with parity-check matrix $H_{X,\text{time}}$. To cause time-like failure, there must be at least one non-trivial measurement error. Since the code with parity-check matrix $H_{X,\text{time}}$ has distance $M+1$, the result follows.
\end{proof}
Next, it is natural to ask how many data errors (or mixed errors) are required to cause time-like failure. It turns out that, given a bad choice of $\beta^{(X)}$, it is possible for a single fault to cause time-like failure.
\begin{example}[Time-Like Failure with a Single Cat Data Error]\label{example:tlf_single_cat_data_error}
    Suppose that, entering the self-consistent round, the error on the data is trivial. Suppose that during one of the regular $d-1$ checks, say the $k$'th one, a single data error occurs, making the error on the data some $\lambda e_i$ after that point, where $\lambda \in \mathbb{F}_q^*$, $i \in [n]$ and $e_i$ is the vector with a $1$ in the $i$'th position and $0$ elsewhere. One can then see that the syndrome extracted by the regular $d-1$ checks is $\lambda \Pi_{>k}h_i$, where $h_i$ is the $i$'th column of $H_X$, and $\Pi_{>k}$ is a $(d-1)\times(d-1)$ diagonal matrix projecting onto all coordinates greater than $k$ (which is the zeros matrix for $k = d-1$). 
    
    After the first $d-1$ checks, the error on the data has syndrome $\lambda h_i$. Then, since the latter $M$ checks are without fault, they measure the values $\lambda \beta^{(X)}\cdot h_i$. The round being self-consistent then requires that $\lambda \beta^{(X)} \Pi_{>k} h_i = \lambda \beta^{(X)} h_i$, which is if and only if $\Pi_{\leq k}h_i \in \ker(\beta^{(X)})$, where $\Pi_{\leq k}$ is the projector onto the first $k$ coordinates.

    Now, the self-consistent round being a time-like failure requires that the extracted syndrome is not the syndrome for the error on the data at any point just before or during the round. The syndromes for the two errors on the data just before and during the round were $0$ and $\lambda h_i$. The extracted syndrome was $\lambda \Pi_{>k}h_i$, where $\lambda \neq 0$, and so given that the round is self-consistent, we have time-like failure if and only if $k \neq d-1$ (since every entry of $h_i$ is non-zero).
\end{example}
At first, the above seems very strange. We are \textit{not} going to preclude the possibility that a single fault can cause a failure of our whole protocol. However, having read the above example, one may have the feeling that such faults would have to be very specific, and even that they need not exist at all. For example, there is nothing forcing the existence of an $i$ and $k$ such that $\Pi_{\leq k}h_i \in \ker(\beta^{(X)})$. What our mixing arguments will do in the following subsections will be to handle all of this together neatly, and show that a low probability of time-like failure may be ensured. We would stress at this point that the arguments on time-like failure are probably not best understood at the intuitive level from the point of view of traditional ``$t$-fault tolerance'' arguments~\cite{gottesman2024surviving}, because of the effect of doing fault tolerance over qudits with a large alphabet. More generally, we feel that one of the interesting contributions of this paper is that the theory of fault tolerance on qudits should be considered quite differently to that on qubits.

We want to calculate the probability of time-like failure in expectation over a uniformly random choice of mixing parameters. To do so, we will split time-like failure into various cases. This is necessary because we can see from the above discussion that measurement errors and cat data errors/mixed errors must be treated quite differently. Indeed, from Claim~\ref{claim:tlf_meas_only} and Example~\ref{example:tlf_single_cat_data_error}, we can see that at least $M+1$ measurement errors are required for time-like failure, but a single very specific cat data error can cause time-like failure, given a bad choice of $\beta^{(X)}$. This is convenient in the sense that measurement errors are very much more likely in our setup than a cat data error/mixed error. This is the case because a measurement error can result from any $Z$ error on the accepted cat state, or faults in the cat state consumption, whereas a cat data error/mixed error can result only from faults in the consumption of the cat state. In order to calculate the leading order contributions to time-like failure, we consider the following three cases, where we calculate the probability of each type of time-like failure in expectation over the mixing parameters.
\begin{enumerate}
    \item Case $1$: One cat data error;
    \item Case $2$: One mixed error;
    \item Case $3$: $M+1$ measurement errors.
\end{enumerate}
One can also produce formulas for higher-order failure modes involving, say, one cat data error and one measurement error, but any will be sub-leading compared to these (and can also be handled using more mixing arguments).
\subsubsection{Case 1: One Cat Data Error}\label{subsubsec:tlf_case1}

We start by computing the probability of case $1$ time-like failure (in expectation over some mixing parameters) in one particular self-consistent round of $X$ checks. We begin with a claim characterising these failures. Note that, by Claim~\ref{claim:tlf_regular_check_fault}, the cat data error must occur on one of the regular $d-1$ checks.

\begin{claim}\label{claim:tlf_case_1_count}
    In a particular self-consistent round of $X$ checks, suppose one cat data error occurs in the first $d-1$ checks. Such a fault is indexed by
    \begin{itemize}
        \item The index $k \in [d-1]$ of the regular check that experiences the cat data error;
        \item The outer code coordinate $i \in [n]$ to which the data error is introduced;
        \item The value $\lambda \in \mathbb{F}_q^*$ of the $Z$ error introduced to the $i$'th coordinate.
    \end{itemize}
    The number of these faults causing time-like failure is
    \begin{equation}
        (q-1)\cdot N_1(n,d,\balpha,\beta^{(X)}),
    \end{equation}
    where
    \begin{equation}
        N_1(n,d,\balpha,\beta^{(X)}) = \left|\left\{(k,i) \in [d-2]\times[n]:\Pi_{\leq k}h_i \in \ker(\beta^{(X)})\right\}\right|,
    \end{equation}
    where $\Pi_{\leq k}$ is a $(d-1)\times (d-1)$ matrix projecting onto the first $k$ coordinates, and $h_i$ is the $i$'th column of $H_X$.
\end{claim}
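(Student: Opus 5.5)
The plan is to generalise the computation of Example~\ref{example:tlf_single_cat_data_error} to an arbitrary data error already present on entering the round. Let $e \in \mathbb{F}_q^n$ be the $Z$ error on the data just before the self-consistent round begins. Suppose the only fault is a cat data error during the $k$'th regular check, which changes the error to $e + \lambda e_i$. The first step is to fix the convention for when the new error acts relative to the $k$'th syndrome component. Consistent with the example, I will take the $k$'th component to see the old error, so that the new error is seen only from check $k+1$ onward. This is the pessimistic and cleaner convention: the cat state consumption is transversal, so the data error is deposited as the cat is consumed.

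With this convention, the extracted regular syndrome is
\[ y = H_X e + \lambda \Pi_{>k} h_i. \]
The $M$ later checks are faultless and see the error $e+\lambda e_i$, so they return
\[ z = \beta^{(X)}(H_X e + \lambda h_i). \]
Self-consistency $z=\beta^{(X)} y$ then reduces, after cancelling $\beta^{(X)} H_X e$ and dividing by $\lambda\neq 0$, to
\[ \beta^{(X)}\Pi_{\leq k} h_i = 0. \]
This condition is independent of both $e$ and $\lambda$.

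Next I check the time-like failure condition. The true syndromes at the $d-1+M$ instants are $H_X e$ (instants up to check $k$) and $H_X e + \lambda h_i$ (later instants). The extracted syndrome $y$ equals $H_Xe$ if and only if $\lambda\Pi_{>k}h_i = 0$. It equals $H_Xe+\lambda h_i$ if and only if $\lambda \Pi_{\leq k}h_i=0$. Every entry of $H_X$ is nonzero, because the $u_b$ and $\alpha_b$ are nonzero, so $\Pi_{\leq k}h_i\neq 0$ for $k\geq 1$. Also $\Pi_{>k}h_i \neq 0$ if and only if $k\leq d-2$. Hence the fault causes time-like failure exactly when $k\in[d-2]$ and $\Pi_{\leq k}h_i\in\ker(\beta^{(X)})$. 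For $k=d-1$ the extracted syndrome equals the pre-fault syndrome, so there is no failure.

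Counting is then immediate. For each pair $(k,i)$ counted by $N_1$, all $q-1$ values of $\lambda$ cause failure, and no other triple does, giving $(q-1)N_1(n,d,\balpha,\beta^{(X)})$. I expect the main obstacle to be the first step: justifying which measurement the deposited error affects. A fault inside the consumption could in principle be "seen" by the $k$'th check itself. That case, however, is exactly a mixed error (a data error together with a measurement error), which is treated separately as Case~2, so restricting Case~1 to pure data errors under the stated convention is what makes the count clean.
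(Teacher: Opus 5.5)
Your proof is correct and follows the paper's argument. The only difference is that the paper takes the incoming data error to be trivial without loss of generality, whereas you carry a general $e$ and observe that it cancels; both routes derive self-consistency iff $\Pi_{\leq k}h_i\in\ker(\beta^{(X)})$, and time-like failure iff additionally $k\neq d-1$, using that every entry of $h_i$ is non-zero. Your convention that the $k$'th check sees the old error is the same one the paper uses when it writes $y=\lambda\Pi_{>k}h_i$, so the count $(q-1)\cdot N_1(n,d,\balpha,\beta^{(X)})$ follows exactly as you state.
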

\begin{remark}\label{rmk:tlf_case_1_count_monomial_invariance}
    Notice that $N_1(n,d,\balpha,\beta^{(X)})$ does not depend on the values $\bu$ that come into the definition of $H_X$ (recall that $(H_X)_{ab} = u_b\alpha_b^{a-1}$), because $\Pi_{\leq k}h_i \in \ker(\beta^{(X)})$ if and only if $u\Pi_{\leq k}h_i \in \ker(\beta^{(X)})$ for any $u \in \mathbb{F}_q^*$. One consequence of this is that these results apply equally to the $Z$ checks as to the $X$ checks, because we can choose $H_Z$ to be the same as $H_X$ but with different non-zero values multiplying the columns. We will not always remark on it explicitly going forward, but all results apply equally well for the $Z$ checks as for the $X$ checks.
\end{remark}
\begin{proof}[Proof of Claim~\ref{claim:tlf_case_1_count}]
    Without loss of generality, the error on the data entering the self-consistent round is trivial. Suppose that for any $k \in [d-1]$, the $k$'th check introduces the weight-one error $\lambda e_i \in \mathbb{F}_q^n$. The remaining checks measure the syndrome of $\lambda e_i$ without fault, meaning that the full syndrome measured by the regular $d-1$ checks is $y = \lambda \Pi_{>k}h_i$, where $h_i$ is the $i$'th column of the parity-check matrix $H_X$, and $\Pi_{>k}$ is the $(d-1)\times(d-1)$ matrix projecting onto coordinates $(k+1, \ldots, d-1)$, which is the zeros matrix for $k = d-1$.

    The syndrome for the error on the data after the regular $d-1$ checks is $\lambda h_i$, and the latter $M$ checks measure the values $z = \lambda \beta^{(X)}h_i$ without fault. Self-consistency occurs if and only if $z = \beta^{(X)}\cdot y$, which is if and only if $\Pi_{\leq k}h_i \in \ker(\beta^{(X)})$.

    Given self-consistency, time-like failure occurs if and only if the measured syndrome $y = \lambda \Pi_{>k}h_i$ is not the syndrome for the error on the data at any point just before or during the self-consistent round. These syndromes are exactly $0$ and $\lambda h_i$. Thus, given self-consistency, and because every entry of $h_i$ is non-zero, time-like failure occurs if and only if $k \neq d-1$.

    Given a value of $k$ and $i$ causing time-like failure, the value of $\lambda \in \mathbb{F}_q^*$ is unconstrained, giving the result.
\end{proof}

So far, we have seen that the number of case $1$ time-like failures can be characterised by properties of the matrix $\beta^{(X)}$ and the matrix $H_X$ (in particular, the values $\balpha$, not $\bu$). The value $N_1(n,d,\balpha,\beta^{(X)})$ can actually change if one varies $\balpha$ and $\beta^{(X)}$.\footnote{To orient the reader, it is worth noting that one could simply choose some $\balpha$ and $\beta^{(X)}$ such that the corresponding $N_1$ is zero, and then case $1$ time-like failure never occurs. The problem is then that with such a fixed choice, it would be hard to go on and argue about our other time-like failure cases. What we do to handle all the cases together neatly is to upper bound the expectation over some mixing parameters as a certain value, and then draw the conclusion that there is always some choice of the mixing parameter such that the probability of that failure mode is at most that value. Then, all the cases of failure will be handled together in Subsection~\ref{subsec:all_together} where, in particular, we will show that a single choice of mixing parameter is possible such that the probability of all failure modes together is desirably low.} To handle this, we introduce a mixing parameter $\nu$. $\nu$ is a parameter of the protocol which takes values in $(\mathbb{F}_q^*)^{d-1}$, where we emphasise that $\mathbb{F}_q^*$ is the set of non-zero values of $\mathbb{F}_q$. $\nu$ is going to have the effect of randomising the $M$ linear combinations of checks that we measure. Indeed, given any choice of $\beta^{(X)}$, call it $\hat{\beta}^{(X)}$, define the matrix $\hat{\beta}^{(X)}*\nu$ to be the matrix $\hat{\beta}^{(X)}$ with all its columns multiplied by the entries of $\nu$. 

We will imagine fixing some choice of $\hat{\beta}^{(X)}$ (such that the code with parity-check matrix $\begin{pmatrix}
    \hat{\beta}^{(X)}&I
\end{pmatrix}$ has distance $M+1$), and then picking a uniformly random choice of the mixing parameter $\nu$ over its possible $(q-1)^{d-1}$ values. The linear combinations measured are then determined by the matrix $\hat{\beta}^{(X)}*\nu$. We will calculate the probability of case $1$ time-like failure in expectation over this choice. Our final results do not depend on the choice of $\hat{\beta}^{(X)}$, as long as the code with parity-check matrix $\begin{pmatrix}
    \hat{\beta}^{(X)}&I
\end{pmatrix}$ has distance $M+1$.\footnote{Notice that if the code with parity-check matrix $\begin{pmatrix}
    \hat{\beta}^{(X)}&I
\end{pmatrix}$ has distance $M+1$, then so does the code with parity-check matrix $\begin{pmatrix}
    \hat{\beta}^{(X)}*\nu & I
\end{pmatrix}$.} In the coming argument, we will also consider the mixing parameter $\mathcal{B}$, which we recall are the qudit-to-qubit mappings determining how each qudit is decomposed into qubits~\cite{wills2026review}.
\begin{claim}\label{claim:tlf_case_1_prob}
    In expectation over uniformly random qudit-to-qubit mappings $\mathcal{B}$, and vectors $\nu \in (\mathbb{F}_q^*)^{d-1}$, the probability of case $1$ time-like failure on one self-consistent round of $X$ checks is
    \begin{equation}
        \mathbb{E}_{\mathcal{B},\nu}\left\{\mathbb{P}\left[\mathsf{TLF}_1^{(X)}\;|\;\mathcal{B},\nu\right]\right\}\leq \mathbb{P}\left[XX\right]\cdot n\sum_{k=M+1}^{d-2}\frac{D_k}{(q-1)^k},
    \end{equation}
    where
    \begin{equation}
        D_k = \begin{cases}0 &\text{ if } k \leq M\\
        \sum_{t=0}^{k-M-1}(-1)^t\begin{pmatrix}
            k\\t
        \end{pmatrix}\left(q^{k-M-t}-1\right)&\text{ if } k \geq M+1\end{cases}.
    \end{equation}
    Here, $\mathbb{P}\left[XX\right]$ is the probability of any fault in the qudit $XX$ measurement, which is a fixed operation executed according to (the dual of) Sub-Routine~\ref{alg:qudit_ZZ_meas}.
\end{claim}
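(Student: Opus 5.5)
The plan is to reduce the claim to a count of full-weight codewords in a small MDS code. Fix $\mathcal{B}$ and $\nu$. By Claim~\ref{claim:tlf_regular_check_fault}, the single cat data error must occur during one of the regular checks $k\in[d-1]$. Since the consumption of a cat state is transversal on the data block, a data error at coordinate $i$ can only come from the qudit $XX$ measurement touching coordinate $i$. That event has probability at most $\mathbb{P}\left[XX\right]$, whatever value $\lambda$ is deposited.

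Claim~\ref{claim:tlf_case_1_count}, applied with $\beta^{(X)}$ replaced by $\hat{\beta}^{(X)}*\nu$, says the failing faults are exactly those with $k\le d-2$ and $\Pi_{\le k}h_i\in\ker(\hat{\beta}^{(X)}*\nu)$, for every $\lambda\in\mathbb{F}_q^*$. Because this condition does not depend on $\lambda$, I do not need the $\mathcal{B}$-mixing to spread mass over $\lambda$. A union bound gives, for every $\mathcal{B}$,
\[
\mathbb{P}\left[\mathsf{TLF}_1^{(X)}\;|\;\mathcal{B},\nu\right]\le \mathbb{P}\left[XX\right]\sum_{k=1}^{d-2}\sum_{i=1}^n \mathds{1}\left[\Pi_{\le k}h_i\in\ker(\hat{\beta}^{(X)}*\nu)\right],
\]
so the expectation over $\mathcal{B}$ is trivial. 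It remains to average the indicator over $\nu$.

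Note that $\Pi_{\le k}h_i\in\ker(\hat{\beta}^{(X)}*\nu)$ holds if and only if $w\coloneq \nu\circ\Pi_{\le k}h_i\in\ker(\hat{\beta}^{(X)})$, where $\circ$ is the entrywise product. Every entry of $h_i$ is nonzero, since $(H_X)_{ab}=u_b\alpha_b^{a-1}$ with all $\alpha_b,u_b\neq 0$. Hence, as $\nu$ ranges uniformly over $(\mathbb{F}_q^*)^{d-1}$, the vector $w$ is uniform over vectors whose first $k$ entries are all nonzero and whose remaining entries are zero. The probability is therefore $A_k/(q-1)^k$, where $A_k$ counts full-support vectors in the kernel of $\hat{\beta}^{(X)}_{[k]}$, the submatrix formed by the first $k$ columns of $\hat{\beta}^{(X)}$.

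To compute $A_k$, I will use the distance of the time-like code. A kernel vector $x$ of $\hat{\beta}^{(X)}_{[k]}$, padded by zeros including on the identity block, is a codeword of the code with parity-check matrix $\begin{pmatrix}\hat{\beta}^{(X)}&I\end{pmatrix}$. Since this code has distance $M+1$, any $M$ columns of $\hat{\beta}^{(X)}$ are linearly independent. Two consequences follow:
\begin{itemize}
\item For $k\le M$ the kernel is trivial, so $A_k=0$.
\item For $k\ge M+1$ the matrix $\hat{\beta}^{(X)}_{[k]}$ has rank $M$, so its kernel is a $[k,k-M]$ code with distance at least $M+1$, i.e.\ an MDS code.
\end{itemize}
I then invoke the standard weight enumerator of MDS codes, which gives the number of weight-$k$ codewords as $\sum_{t=0}^{k-M-1}(-1)^t\binom{k}{t}(q^{k-M-t}-1)=D_k$. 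This can be derived by inclusion--exclusion over the set of zero coordinates: fixing $t$ coordinates to zero leaves an MDS subcode of dimension $k-M-t$ while $t<k-M$. Summing over $i$ gives the factor $n$, and summing over $k$ from $M+1$ to $d-2$ gives the stated bound.

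The main obstacle is the MDS step: justifying rigorously that $\ker\hat{\beta}^{(X)}_{[k]}$ is MDS, and recalling or deriving the full-weight count cleanly, including the boundary terms that produce the $-1$ in each summand. The remaining steps are bookkeeping on top of Claim~\ref{claim:tlf_case_1_count}.
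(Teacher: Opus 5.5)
Your proposal is correct and matches the paper's proof in its essentials. You reduce via Claim~\ref{claim:tlf_case_1_count} to $\mathbb{P}[XX]\cdot N_1(n,d,\balpha,\hat{\beta}^{(X)}*\nu)$. You then average over $\nu$, so that each $(k,i)$ contributes the fraction of vectors supported exactly on the first $k$ coordinates that lie in $\ker(\hat{\beta}^{(X)})$. Finally you count these using the fact that the first $k$ columns of $\hat{\beta}^{(X)}$ are a parity-check matrix for a $[k,k-M,M+1]$ MDS code, which is exactly Lemma~\ref{lem:tlf_case_1}.

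The one genuine difference is how you handle $\mathcal{B}$. The paper averages over $B_i$ so that each individual $\lambda$ costs $\frac{1}{q-1}\sum_{v}\mathbb{P}[XX,Z^v]\le\frac{1}{q-1}\mathbb{P}[XX]$, and then multiplies by the $q-1$ admissible values of $\lambda$. You instead note that the failure condition does not depend on $\lambda$, so the disjoint events over all $\lambda$ are union-bounded by $\mathbb{P}[XX]$ directly. This gives the same final bound, but it holds pointwise for every $\mathcal{B}$. It shows that the qudit-to-qubit mixing is not actually needed in Case~1; only the $\nu$-mixing does real work.

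The step you flagged as the main obstacle also goes through as you sketched. The codewords vanishing on a fixed $t$-subset number $q^{\max(k-M-t,0)}$. Subtracting $\sum_{t=0}^{k}(-1)^t\binom{k}{t}=0$ then produces the $-1$ in each summand and removes the terms with $t\ge k-M$. The paper simply cites the MDS weight distribution from~\cite{macwilliams1977theory} for this.
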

\begin{remark}\label{rmk:tlf_case_1_prob_leading_order}
    While the formula is a little complicated, since we think of $q$ as large and $d$ small, one can see that it behaves like
    \begin{equation}
        \sim\mathbb{P}\left[XX\right]\cdot \frac{n(d-2-M)}{(q-1)^M}
    \end{equation}
    at leading order,
    since we care about $M \leq d-2$.
    It is very striking to notice that, while this failure mode happens due to only a single fault, its contribution to the probability of overall failure may be exponentially suppressed (since we may choose any $M$ up to $d-2$), and indeed increasingly strongly for larger $q$. It is also very important to notice that this effect would not be observed for the qubit case of $q=2$. Because of this, we say again that the intuition developed for qubit fault tolerance may not transfer well to fault tolerance on large qudits such as these. Qudit fault tolerance brings in genuinely new effects, and we believe that computing these probabilities in expectation over mixing parameters is the natural way to talk about these things.
\end{remark}
\begin{remark}
    If one were to consider higher-order failure modes involving, say, multiple cat data errors, one would obtain formulas involving multiple factors of $\mathbb{P}\left[\text{Cat}\right]$, and possibly higher-order factors of $q-1$. These will be sub-leading, therefore, and so are neglected.
\end{remark}
\begin{proof}[Proof of Claim~\ref{claim:tlf_case_1_prob}]
    Referring back to Claim~\ref{claim:tlf_case_1_count}, the case $1$ time-like failures are indexed by some $(k,i,\lambda) \in [d-2]\times[n]\times\mathbb{F}_q^*$. Let us denote the set of the $(k,i,\lambda)$ that cause time-like failure for a given choice of $\beta^{(X)}$ by $\mathsf{TLF}_1^{(X)}(\beta^{(X)})$. We are going to fix some choice of $\hat{\beta}^{(X)}$ (the choice is arbitrary as long as the code with parity-check matrix $\begin{pmatrix}
    \hat{\beta}^{(X)}&I
\end{pmatrix}$ has distance $M+1$), and consider $\beta^{(X)} = \hat{\beta}^{(X)}*\nu$.

    Now, given some $(k,i,\lambda) \in \mathsf{TLF}_1^{(X)}(\hat{\beta}^{(X)}*\nu)$, the probability of the event to which it corresponds is denoted
    \begin{equation}
        \mathbb{P}\left[\text{Data}^{(k)},i,Z^\lambda\;|\;\mathcal{B}\right].
    \end{equation}
    This is the probability that the $k$'th check introduces a data error $Z^\lambda$ on the $i$'th coordinate. Also, notice that this probability depends on the qudit-to-qubit mappings $\mathcal{B}$; strictly speaking, it only depends on $B_i$, the qudit-to-qubit mapping for the $i$'th qudit.\footnote{This is because the bicycle instructions fail with a fixed distribution of logical qubit errors, but then the qudit-to-qubit mapping $B_i$ determines the qubit error that must take place for there to be the qudit error $Z^\lambda$.}

    Given a choice of $\mathcal{B}$ and $\nu$, the probability of case $1$ time-like failure is
    \begin{equation}
        \mathbb{P}\left[\mathsf{TLF}_1^{(X)}\;|\;\mathcal{B},\nu\right] = \sum_{(k,i,\lambda) \in \mathsf{TLF}_1^{(X)}(\hat{\beta}^{(X)}*\nu)}\mathbb{P}\left[\text{Data}^{(k)},i,Z^\lambda\;|\;\mathcal{B}\right].
    \end{equation}
    We start by taking the expectation over $\mathcal{B}$, so that we obtain
    \begin{align}
        \mathbb{E}_{\mathcal{B}}\left\{\mathbb{P}\left[\mathsf{TLF}_1^{(X)}\;|\;\mathcal{B},\nu\right]\right\} &= \sum_{(k,i,\lambda) \in \mathsf{TLF}_1^{(X)}(\hat{\beta}^{(X)}*\nu)}\mathbb{E}_{\mathcal{B}}\left\{\mathbb{P}\left[\text{Data}^{(k)},i,Z^\lambda\;|\;\mathcal{B}\right]\right\}\\
        &= \sum_{(k,i,\lambda) \in \mathsf{TLF}_1^{(X)}(\hat{\beta}^{(X)}*\nu)}\mathbb{E}_{B_i}\left\{\mathbb{P}\left[\text{Data}^{(k)},i,Z^\lambda\;|\;\mathcal{B}\right]\right\},
    \end{align}
    where the expectation on the right-hand side is really only over $B_i$, the qudit-to-qubit mapping on the $i$'th qudit, hence the second line. Again, changing $B_i$ changes the qubit error that must occur in order for the qudit error produced to be $Z^\lambda$. Moreover, taking a uniformly random choice of $B_i$, the qubit error that must occur on the $i$'th coordinate in order for the corresponding qudit error to be $\lambda$ is uniformly randomised over its $q-1$ possible values.\footnote{One way to see this is to note that changing the qudit-to-qubit mapping $B_i$ is equivalent to performing a $\mathsf{CNOT}$ circuit on the $s$ qubits making up the $i$'th qudit. Choosing a uniformly random $B_i$ corresponds to performing a uniformly random $\mathsf{CNOT}$ circuit on those qubits.} Note also that in order for a $Z$ error to be created in a cat data error, the operation consuming the cat state (a qudit $XX$ measurement, which is executed according to (the dual of) Sub-Routine~\ref{alg:qudit_ZZ_meas}) must fail. We thus have
    \begin{equation}
        \mathbb{E}_{B_i}\left\{\mathbb{P}\left[\text{Data}^{(k)},i,Z^\lambda\;|\;\mathcal{B}\right]\right\} = \frac{1}{q-1}\sum_{v \in \mathbb{F}_2^s\setminus\{0\}}\mathbb{P}\left[XX,Z^v\right],
    \end{equation}
    where $\mathbb{P}\left[XX,Z^v\right]$ is the probability that the qudit $XX$ measurement routine (which, again, is a fixed routine executed according to (the dual of) Sub-Routine~\ref{alg:qudit_ZZ_meas}) fails and leaves the $Z$-type error $Z^v$ on the data qudit. Each $\mathbb{P}\left[XX,Z^v\right]$ represents a distinct failure of the qudit $XX$ measurement routine, and so we have
    \begin{equation}
        \sum_{v \in \mathbb{F}_2^s\setminus\{0\}}\mathbb{P}\left[XX,Z^v\right] \leq \mathbb{P}\left[XX\right],
    \end{equation}
    where $\mathbb{P}\left[XX\right]$ is the probability of any error in the qudit $XX$ measurement routine. We have thus found that
    \begin{align}
        \mathbb{E}_{\mathcal{B}}\left\{\mathbb{P}\left[\mathsf{TLF}_1^{(X)}\;|\;\mathcal{B},\nu\right]\right\} &\leq \frac{1}{q-1}\cdot\mathbb{P}\left[XX\right]\cdot\sum_{(k,i,\lambda) \in \mathsf{TLF}_1^{(X)}(\hat{\beta}^{(X)}*\nu)}\\
        &=\mathbb{P}\left[XX\right]\cdot N_1(n,d,\balpha,\hat{\beta}^{(X)}*\nu),
    \end{align}
    where the second line follows by Claim~\ref{claim:tlf_case_1_count}. We then simply have
    \begin{equation}
        \mathbb{E}_{\mathcal{B},\nu}\left\{\mathbb{P}\left[\mathsf{TLF}_1^{(X)}\;|\;\mathcal{B},\nu\right]\right\} \leq \mathbb{P}\left[XX\right]\cdot \mathbb{E}_{\nu}\left\{N_1(n,d,\balpha,\hat{\beta}^{(X)}*\nu)\right\}
        .
    \end{equation}
    The result follows from the following lemma, whose proof is deferred to Appendix~\ref{sec:tlf_proofs}.
    \begin{lemma}\label{lem:tlf_case_1}
        \begin{equation}
            \mathbb{E}_{\nu}\left\{N_1(n,d,\balpha,\hat{\beta}^{(X)}*\nu)\right\} = n\sum_{k=M+1}^{d-2}\frac{D_k}{(q-1)^k},
        \end{equation}
        where
        \begin{equation}
        D_k = \begin{cases}0 &\text{ if } k \leq M\\
        \sum_{t=0}^{k-M-1}(-1)^t\begin{pmatrix}
            k\\t
        \end{pmatrix}\left(q^{k-M-t}-1\right)&\text{ if } k \geq M+1\end{cases}.
    \end{equation}
    \end{lemma}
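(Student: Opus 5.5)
The plan is to compute the expectation term by term: by linearity, $\mathbb{E}_\nu\{N_1\}$ is the sum over pairs $(k,i)\in[d-2]\times[n]$ of the probability, over uniform $\nu\in(\mathbb{F}_q^*)^{d-1}$, that $\Pi_{\leq k}h_i\in\ker(\hat{\beta}^{(X)}*\nu)$. I will show that this probability does not depend on $i$ and equals $D_k/(q-1)^k$. That immediately gives the factor $n$, and the sum over $k$ then only picks up $k\geq M+1$.

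\textbf{Step 1: reduce to a uniform full-support vector.} Note that $(\hat{\beta}^{(X)}*\nu)\,\Pi_{\leq k}h_i=\hat{\beta}^{(X)}w$, where $w_j=\nu_j(h_i)_j$ for $j\leq k$ and $w_j=0$ for $j>k$. Every entry of $h_i$ is non-zero (since $(H_X)_{ab}=u_b\alpha_b^{a-1}$ with $u_b,\alpha_b\neq 0$). Hence, as $\nu$ ranges uniformly over $(\mathbb{F}_q^*)^{d-1}$, the vector $x=(w_1,\dots,w_k)$ is uniform on $(\mathbb{F}_q^*)^k$. Let $\hat{\beta}_{\leq k}\in\mathbb{F}_q^{M\times k}$ denote the first $k$ columns of $\hat{\beta}^{(X)}$. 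The probability in question is then
\[
\frac{\bigl|\{x\in(\mathbb{F}_q^*)^k:\hat{\beta}_{\leq k}x=0\}\bigr|}{(q-1)^k},
\]
which is visibly independent of $i$ and of $\balpha$.

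\textbf{Step 2: exploit the MDS property.} The code with parity-check matrix $\begin{pmatrix}\hat{\beta}^{(X)}&I\end{pmatrix}$ has distance $M+1$. Therefore any $M$ of its columns are linearly independent, and in particular any $M$ columns of $\hat{\beta}^{(X)}$ are. Two cases follow.
\begin{itemize}
\item If $k\leq M$, the columns of $\hat{\beta}_{\leq k}$ are independent, so its kernel is trivial and contains no full-support vector. This gives $D_k=0$.
\item If $k\geq M+1$, then $C_k:=\ker\hat{\beta}_{\leq k}$ has dimension $k-M$ and minimum distance $M+1$, since any $M$ columns are independent. So $C_k$ is an $[k,k-M,M+1]_q$ MDS code.
\end{itemize}

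\textbf{Step 3: count full-weight codewords.} What remains is to count the full-weight codewords of $C_k$; this is the only real computation and is the step I expect to need care. I will use inclusion--exclusion over the set $T\subseteq[k]$ of coordinates forced to vanish. For $|T|=t\leq k-M$, the codewords vanishing on $T$ form a subspace of dimension exactly $k-M-t$. The reason is that the remaining $k-t\geq M$ columns have rank $M$, by the MDS property. For $t>k-M$, only the zero vector survives. Inclusion--exclusion then gives the number of vectors with no zero coordinate as
\[
\sum_{t=0}^{k}(-1)^t\binom{k}{t}\bigl|\{c\in C_k: c|_T=0\}\bigr|,
\]
after first subtracting the zero vector from each term. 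The zero vector is counted in every term, so its total contribution is $\sum_t(-1)^t\binom{k}{t}=0$. Hence we may replace each count by its non-zero part, which is $q^{k-M-t}-1$ for $t\leq k-M-1$ and $0$ for $t\geq k-M$. This yields exactly
\[
D_k=\sum_{t=0}^{k-M-1}(-1)^t\binom{k}{t}\bigl(q^{k-M-t}-1\bigr),
\]
which is the standard MDS weight-enumerator value $A_k$. Summing $D_k/(q-1)^k$ over $i\in[n]$ and $k\in[d-2]$, where only $k\geq M+1$ contribute, gives the lemma.
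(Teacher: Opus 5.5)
Your proposal is correct and follows the paper's proof. It uses linearity over $(k,i)$, the fact that uniform $\nu$ makes the masked vector uniform over vectors supported exactly on the first $k$ coordinates, and the MDS property of the first $k$ columns of $\hat{\beta}^{(X)}$ (which gives $D_k=0$ for $k\le M$); the only differences are that you derive the full-weight codeword count by inclusion--exclusion instead of citing the MDS weight distribution, and that your $w_j=\nu_j(h_i)_j$ is the correct form of the paper's $\nu^{-1}*\Pi_{\leq k}h_i$, a slip that is immaterial under the expectation.
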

\end{proof}

\subsubsection{Case 2: One Mixed Error}\label{subsubsec:tlf_case2}

Let us now consider the probability of case $2$ time-like failure, which is time-like failure resulting from a single mixed error. One may wonder why this is considered separately to the case $1$ time-like failure. The reason is that the mixing arguments are a little different between them (in particular, we will use below the fact that the measurement error occurring is non-zero). They must both be considered, however, because they give the same order contribution to the overall probability of time-like failure.

We again begin by characterising these failures. By Claim~\ref{claim:tlf_regular_check_fault}, the mixed error must occur in one of the $d-1$ regular checks.
\begin{claim}\label{claim:tlf_case_2_count}
    In a particular self-consistent round of $X$ checks, suppose one mixed error occurs in the first $d-1$ checks. Such a fault is indexed by
    \begin{itemize}
        \item The index $k \in [d-1]$ of the regular check that experiences the fault;
        \item The outer code coordinate $i \in [n]$ to which the data error is introduced;
        \item The value $\lambda \in \mathbb{F}_q^*$ of the $Z$ error introduced to the $i$'th coordinate;
        \item The value $\eta \in \mathbb{F}_q^*$ of the measurement error.
    \end{itemize}
    The number of these faults causing time-like failure is at most
    \begin{equation}
        (q-1)\cdot N_2(n,d,\balpha,\beta^{(X)}),
    \end{equation}
    where
    \begin{equation}
        N_2(n,d,\balpha,\beta^{(X)}) = \left|\left\{(k,i,\mu) \in [d-1]\times[n]\times\mathbb{F}_q^*:\Pi_{\leq k}h_i +\mu e_k \in \ker(\beta^{(X)})\setminus\{0\}\right\}\right|,
    \end{equation}
    where $\Pi_{\leq k}$ is a $(d-1)\times (d-1)$ matrix projecting onto the first $k$ coordinates, $h_i$ is the $i$'th column of $H_X$, and $e_k$ is the vector with a $1$ in the $k$'th position, and $0$ elsewhere.
\end{claim}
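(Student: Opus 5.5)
We will mirror the proof of Claim~\ref{claim:tlf_case_1_count}, adding the measurement error as one extra term. Without loss of generality the $Z$ error on the data entering the self-consistent round is trivial. Suppose the $k$'th regular check, $k\in[d-1]$, suffers a mixed error: it deposits $\lambda e_i$ on the data ($\lambda\in\mathbb{F}_q^*$) and reports its syndrome component wrong by $\eta\in\mathbb{F}_q^*$. We first fix what each check sees. By the definition of measurement error, check $k$ is wrong relative to the error just before it, which is trivial. The later regular checks are faultless and see $\lambda e_i$. So the extracted regular syndrome is
\begin{equation}
y=\lambda\Pi_{>k}h_i+\eta e_k .
\end{equation}
After the regular checks the data carries $\lambda e_i$, so the faultless $M$ linear-combination checks return $z=\lambda\beta^{(X)}h_i$.

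Next we rewrite self-consistency. The condition $z=\beta^{(X)}y$ becomes
\begin{equation}
\beta^{(X)}\left(\lambda\Pi_{\leq k}h_i-\eta e_k\right)=0 .
\end{equation}
Dividing by $\lambda$ and setting $\mu\coloneq-\eta\lambda^{-1}\in\mathbb{F}_q^*$ (in characteristic two the sign is immaterial), this says exactly that $\Pi_{\leq k}h_i+\mu e_k\in\ker(\beta^{(X)})$. This is the key reparametrisation. The map $(\lambda,\eta)\mapsto(\lambda,\mu)$ is a bijection of $(\mathbb{F}_q^*)^2$, and self-consistency depends only on $(k,i,\mu)$, with no further dependence on $\lambda$.

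We then use time-like failure to force the vector to be nonzero. If $\Pi_{\leq k}h_i+\mu e_k=0$, then $\lambda\Pi_{\leq k}h_i=\eta e_k$. Since every entry of $h_i$ is nonzero, this forces $k=1$ and $\eta=\lambda(h_i)_1$. In that case $y=\lambda h_i$, which is the true syndrome of the data error during the round, so there is no time-like failure. Such triples can therefore be discarded, which justifies the ``$\setminus\{0\}$'' in $N_2$. For the remaining triples we do not check whether time-like failure actually occurs; self-consistency is only used as a necessary condition. This is why the claim is an upper bound rather than an equality.

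Finally we count. Each failing fault $(k,i,\lambda,\eta)$ maps injectively to $(k,i,\mu,\lambda)$, with $(k,i,\mu)$ in the set defining $N_2$ and $\lambda\in\mathbb{F}_q^*$ free. This gives at most $(q-1)\cdot N_2(n,d,\balpha,\beta^{(X)})$ such faults. The only delicate points are the bookkeeping of which data error check $k$'s measurement error is defined relative to, and the degenerate $k=1$ case. The rest is linear algebra identical to Claim~\ref{claim:tlf_case_1_count}.
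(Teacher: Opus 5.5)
Your proposal is correct and follows essentially the same route as the paper. It computes $y=\lambda\Pi_{>k}h_i+\eta e_k$ and $z=\lambda\beta^{(X)}h_i$, reduces self-consistency to $\Pi_{\leq k}h_i+\eta\lambda^{-1}e_k\in\ker(\beta^{(X)})$, discards the zero vector because it gives no time-like failure, and counts $q-1$ pairs $(\lambda,\eta)$ per $\mu$. The only cosmetic difference is that the paper explicitly identifies both non-failing values $\{0,h_i\}$ and then deliberately counts the $h_i$ case as a failure, whereas you exclude only the zero case and use self-consistency as a necessary condition, which gives the same bound.
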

\begin{remark}
    One can again check that the result is invariant of rescaling columns of $H_X$ by non-zero values, and so again the result applies equally well to the $Z$ checks as to the $X$ checks.
\end{remark}
\begin{proof}[Proof of Claim~\ref{claim:tlf_case_2_count}]
   Without loss of generality, the error on the data entering the self-consistent round is trivial. Suppose that for any $k \in [d-1]$, the $k$'th check introduces the $Z$ error $\lambda e_i \in \mathbb{F}_q^n$, and experiences the (non-trivial) measurement error $\eta$. Then, the syndrome extracted by the $d-1$ regular checks is $y = \lambda \Pi_{>k}h_i + \eta e_k$.

   After the $d-1$ regular checks, the syndrome for the error on the data is $\lambda h_i$, and since the $M$ linear combinations of checks are extracted without fault, they measure values $z  = \lambda \beta^{(X)}\cdot h_i$. Self-consistency then occurs if and only if 
   \begin{equation}
       z = \beta^{(X)}\cdot y \iff \lambda\Pi_{\leq k}h_i+\eta e_k \in \ker(\beta^{(X)}) \iff \Pi_{\leq k}h_i + \eta\lambda^{-1}e_k \in \ker(\beta^{(X)}).
   \end{equation}
   Given a self-consistent round, time-like failure occurs if and only if $y$ is not the syndrome for the error on the data at any point just before or during the round, which is if and only if 
   \begin{equation}
       \lambda\Pi_{>k}h_i+\eta e_k \notin \{0, \lambda h_i\}\iff \lambda \Pi_{\leq k}h_i + \eta e_k \notin \{0,\lambda h_i\} \iff \Pi_{\leq k}h_i + \eta \lambda^{-1}e_k \notin \{0,h_i\}.
   \end{equation}
   Given a self-consistent round, time-like failure thus does not occur if $\Pi_{\leq k}h_i + \eta\lambda^{-1}e_k = 0$ or $h_i$. However, it turns out to be most simple for later arguments to count the case of a self-consistent round with $\Pi_{\leq k}h_i + \eta \lambda^{-1}e_k = h_i$ as time-like failure.\footnote{This is why the statement of Claim~\ref{claim:tlf_case_2_count} has the phrase ``at most'', whereas the statement of Claim~\ref{claim:tlf_case_1_count} did not.} Given a self-consistent round, we have that time-like failure occurs only if
   \begin{equation}
       \Pi_{\leq k}h_i + \eta\lambda^{-1}e_k \neq 0.
   \end{equation}
   Finally, given $(k,i,\mu) \in [d-1]\times[n]\times\mathbb{F}_q^*$ such that $\Pi_{\leq k}h_i+\mu e_k \in \ker(\beta^{(X)})\setminus\{0\}$, there are exactly $(q-1)$ choices of $\eta,\lambda$ such that $\eta\lambda^{-1} = \mu$, and the result follows.   
\end{proof}
We will now make a similar mixing argument as in case $1$ time-like failure to compute the probability of case $2$ time-like failure in expectation over a uniformly random choice of qudit-to-qubit mappings $\mathcal{B}$, and vectors $\nu \in (\mathbb{F}_q^*)^{d-1}$ determining the $M$ linear combinations of checks.
\begin{claim}\label{claim:tlf_case_2_prob}
    In expectation over uniformly random qudit-to-qubit mappings $\mathcal{B}$, and vectors $\nu \in (\mathbb{F}_q^*)^{d-1}$, the probability of case $2$ time-like failure on one self-consistent round of $X$ checks is
    \begin{align}
        \mathbb{E}_{\mathcal{B},\nu}\left\{\mathbb{P}\left[\mathsf{TLF}_2^{(X)}\;|\;\mathcal{B},\nu\right]\right\} &\leq \mathbb{P}\left[XX\right]\cdot n\sum_{k=M+1}^{d-1}\left(\frac{D_k}{(q-1)^k}+\frac{1}{(q-1)^{k-1}}\begin{pmatrix}
            k-2\\M-1
        \end{pmatrix}\right),
    \end{align}
    where
        \begin{equation}
        D_k = \begin{cases}0 &\text{ if } k \leq M\\
        \sum_{t=0}^{k-M-1}(-1)^t\begin{pmatrix}
            k\\t
        \end{pmatrix}\left(q^{k-M-t}-1\right)&\text{ if } k \geq M+1\end{cases}.
    \end{equation}
\end{claim}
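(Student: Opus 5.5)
We follow the template of the proof of Claim~\ref{claim:tlf_case_1_prob}. By Claim~\ref{claim:tlf_case_2_count}, case $2$ failures are indexed by tuples $(k,i,\lambda,\eta)$. The probability of case $2$ time-like failure for fixed $\mathcal{B},\nu$ is therefore at most the sum, over the failing tuples for $\beta^{(X)}=\hat{\beta}^{(X)}*\nu$, of $\mathbb{P}[\text{Mixed}^{(k)},i,Z^\lambda,\eta\,|\,\mathcal{B}]$. This is the probability that the $k$'th cat consumption deposits $Z^\lambda$ on coordinate $i$ and simultaneously gives measurement error $\eta$.

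First we average over $B_i$. A uniformly random $B_i$ applies a uniformly random $\mathsf{CNOT}$ circuit to the $s$ qubits of qudit $i$. This uniformly permutes the nonzero qubit error $v$ that corresponds to qudit error $Z^\lambda$. The measurement-error component is unchanged, since it lives in the syndrome value. Care is needed here: the extracted outcome is $\sum_j\gamma_j\eta_j$, so $B_i$ may also relabel the measurement error. We will argue only that the map $v\mapsto\lambda$ is uniformized for each fixed joint failure event, and then sum over $\eta$.

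This gives
\[
\mathbb{E}_{B_i}\Bigl\{\sum_{\eta\in\mathbb{F}_q^*}\mathbb{P}[\text{Mixed}^{(k)},i,Z^\lambda,\eta\,|\,\mathcal{B}]\Bigr\}\le \frac{1}{q-1}\,\mathbb{P}[XX].
\]
The reason is that the distinct joint events (a data error $v$ together with a measurement error) are disjoint failures of the fixed $XX$ routine. We need a bound in which, for each $(k,i)$, the pairs $(\lambda,\eta)$ with $\eta\lambda^{-1}=\mu$ are counted once per $\mu$. The $(q-1)$ factor in Claim~\ref{claim:tlf_case_2_count} is absorbed by the $1/(q-1)$ from mixing, much as in case $1$. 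More precisely: fix $\mu$, and note that for each $\lambda$ there is exactly one $\eta=\mu\lambda$. Summing over $\lambda$ of the $B_i$-averaged probability of the event $(Z^\lambda,\eta=\mu\lambda)$ is bounded by $\frac{1}{q-1}\sum_{v,\eta'}\mathbb{P}[XX,Z^v,\eta']\le\frac{1}{q-1}\mathbb{P}[XX]$. Here we use that for each $v$ exactly one $\lambda$ is matched.

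This yields
\[
\mathbb{E}_{\mathcal{B}}\{\mathbb{P}[\mathsf{TLF}_2^{(X)}|\mathcal{B},\nu]\}\le \mathbb{P}[XX]\cdot N_2(n,d,\balpha,\hat{\beta}^{(X)}*\nu).
\]
Then
\[
\mathbb{E}_{\mathcal{B},\nu}\{\mathbb{P}[\mathsf{TLF}_2^{(X)}|\mathcal{B},\nu]\}\le\mathbb{P}[XX]\cdot\mathbb{E}_\nu\{N_2\}.
\]

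The main step, and the expected obstacle, is a companion to Lemma~\ref{lem:tlf_case_1}. It requires showing
\[
\mathbb{E}_\nu\{N_2\}\le n\sum_{k=M+1}^{d-1}\Bigl(\frac{D_k}{(q-1)^k}+\frac{1}{(q-1)^{k-1}}\binom{k-2}{M-1}\Bigr),
\]
which we would defer to Appendix~\ref{sec:tlf_proofs}. The plan is as follows.
\begin{enumerate}
\item Fix $(k,i)$. The vector $w=\Pi_{\le k}h_i+\mu e_k$ is supported in the first $k$ coordinates. Its entries $1,\dots,k-1$ are the nonzero entries of $h_i$, and entry $k$ is $h_{ki}+\mu$, which equals $0$ for exactly one value of $\mu$.
\item Since $\ker(\hat{\beta}^{(X)}*\nu)=\mathrm{diag}(\nu)^{-1}\ker\hat{\beta}^{(X)}$, the condition $w\in\ker$ becomes $\mathrm{diag}(\nu)w\in\ker\hat{\beta}^{(X)}$. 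As $\nu$ ranges uniformly, $\mathrm{diag}(\nu)w$ is uniform over vectors with the same support as $w$.
\item The kernel of $\hat{\beta}^{(X)}$ restricted to a support $S$ consists exactly of the first-$(d-1)$-coordinate parts of codewords of the MDS time-like code with zero $z$-part. Hence the number of kernel vectors with support exactly $S$ is given by the MDS weight-enumerator count. For $|S|=k$ this count is $D_k$, which is the count appearing in Lemma~\ref{lem:tlf_case_1}.
\item When entry $k$ of $w$ is nonzero, the support is $[k]$, giving probability $D_k/(q-1)^k$. There are at most $q-2\le q-1$ such $\mu$, and the resulting sum over $\mu$ of $D_k/(q-1)^k$ matches the first term after the $(q-1)$ bookkeeping above. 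The exact normalization must be checked: summing $q-2$ values of $\mu$ against the single $1/(q-1)$ may force us to state the per-$\mu$ average.
\item When $\mu=-h_{ki}$, the support is $[k-1]$. Here we need a kernel vector with full support $[k-1]$, which has probability $D_{k-1}/(q-1)^{k-1}$. We would bound $D_{k-1}\le\binom{k-2}{M-1}\cdot$(lower order), presumably via a cruder count: choose which $M-1$ of the $k-2$ free coordinates are determined. This yields the $\binom{k-2}{M-1}/(q-1)^{k-1}$ term.
\item Summing over $i$ gives the factor $n$. Terms with $k\le M$ vanish, since an MDS code of distance $M+1$ has no nonzero codeword of smaller weight.
\end{enumerate}
Reconciling the exact combinatorial form of the second term is where we expect the difficulty to lie.
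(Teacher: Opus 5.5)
\textbf{There is a genuine gap: your mixing step fails, and the normalization does not work out.}

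\emph{The mixing step.} For each fixed $\mu$ you bound $\sum_{\lambda}\mathbb{E}_{B_i}\mathbb{P}[\text{Mixed}^{(k)},i,Z^\lambda,\mu\lambda\,|\,\mathcal{B}]\le \mathbb{P}[XX]/(q-1)$. To do this you treat the measurement-error label as fixed while $B_i$ uniformizes $\lambda$. As you note yourself, $B_i$ relabels the readout too, so this is not justified. For $\mu=(h_i)_k$ the bound is false. Take a $Z^\lambda$ error that lands on data qudit $i$ before its $XX$ readout. At the qudit level it shifts $\eta_i$ by exactly $\lambda$, so the check's measurement error is $(h_i)_k\lambda$ for \emph{every} $\mathcal{B}$. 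All such probability sits on $\mu=(h_i)_k$, where $x_\mu=\Pi_{\le k-1}h_i$. This is the most natural mixed error, equivalent to a cat data error at check $k-1$, and no $1/(q-1)$ suppression is available for it. The paper never needs to control the measurement label. It bounds every $\mu$-term by $\mathbb{P}[\text{Mixed}^{(k)},i,Z^\lambda,*\,|\,\mathcal{B}]$, the probability of any nontrivial measurement error, which is a $\mathcal{B}$-invariant event. It uses $\mathcal{B}$ only to uniformize $\lambda$.

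\emph{The normalization.} Your per-$\mu$ bound would give $\mathbb{P}[XX]\,N_2/(q-1)$, not $\mathbb{P}[XX]\,N_2$. The lemma you propose is also false. In fact $\mathbb{E}_\nu\{N_2\}=n\sum_{k}\bigl[(q-2)\frac{D_k}{(q-1)^k}+\frac{D_{k-1}}{(q-1)^{k-1}}\bigr]$. At $k=M+1$ this gives $\frac{q-2}{(q-1)^M}$, against $\frac{2}{(q-1)^M}$ in the target, so it exceeds the claimed bound by a factor of order $q$.

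\emph{The correct bookkeeping.} Keep the $\nu$-probabilities as weights on the disjoint events indexed by $\mu$. For fixed $(k,i,\lambda)$ the $\nu$-average is $\frac{D_k}{(q-1)^k}\sum_{\mu\neq(h_i)_k}p_\mu+\frac{D_{k-1}}{(q-1)^{k-1}}p_{(h_i)_k}$. The $q-2$ generic values of $\mu$ therefore share a single factor $\frac{D_k}{(q-1)^k}$ rather than being counted.

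\emph{Step~5.} This is wrong in order of magnitude. For $k\ge M+2$, $D_{k-1}/(q-1)^{k-1}$ is of order $q^{-M}$, while $\binom{k-2}{M-1}/(q-1)^{k-1}$ is of order $q^{-(k-1)}$, so no bound of the form you suggest holds.

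\emph{The missing ingredient.} You need the recurrence of Lemma~\ref{lem:recurrence}, $\frac{D_{k-1}}{(q-1)^{k-1}}=\frac{D_k}{(q-1)^k}+(-1)^{k-M}\frac{\binom{k-2}{M-1}}{(q-1)^{k-1}}$. It merges the special weight into $\frac{D_k}{(q-1)^k}\sum_{\mu\in\mathbb{F}_q^*}p_\mu$ plus a small correction. Then bound both $\sum_\mu p_\mu$ and $p_{(h_i)_k}$ by $\mathbb{P}[\text{Mixed}^{(k)},i,Z^\lambda,*\,|\,\mathcal{B}]$. Averaging over $B_i$ gives at most $\mathbb{P}[XX]/(q-1)$, and summing over the $q-1$ values of $\lambda$ yields the claim.
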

\begin{remark}\label{rmk:tlf_case_2_prob_leading_order}
    Similarly to as in the case of Remark~\ref{rmk:tlf_case_1_prob_leading_order}, the formula behaves like
    \begin{equation}
        \sim2\cdot\mathbb{P}\left[XX\right]\cdot \frac{n(d-1-M)}{(q-1)^M}.
    \end{equation}
\end{remark}
\begin{proof}[Proof of Lemma~\ref{claim:tlf_case_2_prob}]
    Referring back to Claim~\ref{claim:tlf_case_2_count}, case $2$ time-like failures are indexed by some $(k,i,\lambda,\eta) \in [d-1]\times[n]\times\mathbb{F}_q^*\times\mathbb{F}_q^*$. For a given choice of $\beta^{(X)}$, we denote the set of these causing time-like failure as $\mathsf{TLF}_2^{(X)}(\beta^{(X)})$. We fix some valid choice of $\hat{\beta}^{(X)}$, and consider $\beta^{(X)} = \hat{\beta}^{(X)}*\nu$.

    Given some $(k,i,\lambda,\eta) \in \mathsf{TLF}_1^{(X)}(\hat{\beta}^{(X)}*\nu)$, the probability of the time-like failure to which it corresponds is denoted
    \begin{equation}
        \mathbb{P}\left[\text{Mixed}^{(k)},i,Z^\lambda, \eta\;|\;\mathcal{B}\right], 
    \end{equation}
    which is the probability that the $k$'th check of the round experiences a measurement error $\eta$, while also introducing a data error $\lambda e_i$.

    Now, for a given $(k,i) \in [d-1]\times [n]$, define the set
    \begin{equation}
        \mathsf{TLF}_2^{(X)}[k,i](\beta^{(X)}) \coloneq \left\{\mu \in \mathbb{F}_q^*:\Pi_{\leq k}h_i + \mu e_k \in \ker(\beta^{(X)})\setminus\{0\}\right\}.
    \end{equation}
    Referring back to Claim~\ref{claim:tlf_case_2_count}, we see that the full set $\mathsf{TLF}_2^{(X)}(\beta^{(X)})$ may be constructed by iterating over all $(k,i) \in [d-1]\times[n]$, then iterating over all $\mu \in \mathsf{TLF}_2^{(X)}[k,i](\beta^{(X)})$, and then setting $\lambda,\eta$ to be each of the $q-1$ choices for which $\eta\lambda^{-1} = \mu$. Given a choice of $\mathcal{B}$ and $\nu$, the probability of case $2$ time-like failure is therefore
    \begin{align}
        \mathbb{P}\left[\mathsf{TLF}_2^{(X)}\;|\;\mathcal{B},\nu\right] &= \sum_{(k,i) \in [d-1]\times[n]}\sum_{\mu \in \mathsf{TLF}_2^{(X)}[k,i](\hat{\beta}^{(X)}*\nu)}\sum_{\lambda \in \mathbb{F}_q^*}\mathbb{P}\left[\text{Mixed}^{(k)},i,Z^\lambda,\mu\lambda\;|\;\mathcal{B}\right]\\
        &= \sum_{(k,i) \in [d-1]\times[n]}\sum_{\lambda \in \mathbb{F}_q^*}\sum_{\mu \in \mathsf{TLF}_2^{(X)}[k,i](\hat{\beta}^{(X)}*\nu)}\mathbb{P}\left[\text{Mixed}^{(k)},i,Z^\lambda,\mu\lambda\;|\;\mathcal{B}\right].
    \end{align}
    Taking the expectation over a uniformly random choice of qudit-to-qubit mappings $\mathcal{B}$, and vectors $\nu \in (\mathbb{F}_q^*)^{d-1}$, we have
    \begin{equation}
        \mathbb{E}_{\mathcal{B},\nu}\left\{\mathbb{P}\left[\mathsf{TLF}_2^{(X)}\;|\;\mathcal{B},\nu\right]\right\} = \sum_{(k,i) \in [d-1]\times[n]}\sum_{\lambda \in \mathbb{F}_q^*}\mathbb{E}_{\mathcal{B},\nu}\left\{\sum_{\mu \in \mathsf{TLF}_2^{(X)}[k,i](\hat{\beta}^{(X)}*\nu)}\mathbb{P}\left[\text{Mixed}^{(k)},i,Z^\lambda,\mu\lambda\;|\;\mathcal{B}\right]\right\}.
    \end{equation}
    The calculation of the expectation on the right-hand side is deferred to the proof of the following lemma in Appendix~\ref{sec:tlf_proofs}.
    \begin{lemma}\label{lem:tlf_case_2}
    For every $(k,i,\lambda) \in [d-1]\times[n]\times\mathbb{F}_q^*$, we have
        \begin{equation}
            \mathbb{E}_{\mathcal{B},\nu}\left\{\sum_{\mu \in \mathsf{TLF}_2^{(X)}[k,i](\hat{\beta}^{(X)}*\nu)}\mathbb{P}\left[\text{Mixed}^{(k)},i,Z^\lambda,\mu\lambda\;|\;\mathcal{B}\right]\right\} \leq\left(\frac{D_k}{(q-1)^{k+1}}+\frac{1}{(q-1)^k}\begin{pmatrix}
            k-2\\M-1
        \end{pmatrix}\right)\mathbb{P}\left[XX\right], 
        \end{equation}
        where
        \begin{equation}
        D_k = \begin{cases}0 &\text{ if } k \leq M\\
        \sum_{t=0}^{k-M-1}(-1)^t\begin{pmatrix}
            k\\t
        \end{pmatrix}\left(q^{k-M-t}-1\right)&\text{ if } k \geq M+1\end{cases}.
    \end{equation}
    \end{lemma}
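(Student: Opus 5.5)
The plan is to separate the two sources of randomness. The set $\mathsf{TLF}_2^{(X)}[k,i](\hat{\beta}^{(X)}*\nu)$ depends only on $\nu$. The probabilities $\mathbb{P}\left[\text{Mixed}^{(k)},i,Z^\lambda,\mu\lambda\;|\;\mathcal{B}\right]$ depend only on $\mathcal{B}$, through $B_i$ and the mappings entering the cat consumption. Since $\nu$ and $\mathcal{B}$ are independent, I would write the expectation as
\begin{equation}
\sum_{\mu\in\mathbb{F}_q^*}\Pr_\nu\left[\mu\in\mathsf{TLF}_2^{(X)}[k,i](\hat{\beta}^{(X)}*\nu)\right]\cdot\mathbb{E}_{\mathcal{B}}\left\{\mathbb{P}\left[\text{Mixed}^{(k)},i,Z^\lambda,\mu\lambda\;|\;\mathcal{B}\right]\right\}.
\end{equation}
For the $\mathcal{B}$ factor I only need the marginal in $\lambda$, exactly as in Claim~\ref{claim:tlf_case_1_prob}. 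A uniformly random $B_i$ uniformly permutes which qubit error $Z^v$ on the data realises the qudit error $Z^\lambda$. Summing over all measurement errors then gives
\begin{equation}
\sum_{\mu}\mathbb{E}_{\mathcal{B}}\left\{\mathbb{P}\left[\text{Mixed}^{(k)},i,Z^\lambda,\mu\lambda\;|\;\mathcal{B}\right]\right\}\le \frac{\mathbb{P}\left[XX\right]}{q-1}.
\end{equation}
It therefore suffices to bound $\max_\mu\Pr_\nu[\cdot]$ separately on two classes of $\mu$.

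The $\nu$ probability is computed as in Lemma~\ref{lem:tlf_case_1}. Set $w=\Pi_{\le k}h_i+\mu e_k$. Then $w\in\ker(\hat{\beta}^{(X)}*\nu)$ if and only if $\nu\circ w\in\ker(\hat{\beta}^{(X)})$, where $\circ$ is the entrywise product. Because every entry of $h_i$ is non-zero, there are two cases.

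\emph{Generic case}, $\mu\neq-(h_i)_k$. Here $w$ has full support on the first $k$ coordinates. Uniform $\nu$ then makes $\nu\circ w$ uniform over vectors with exactly that support. Restricted to those $k$ columns, $\ker(\hat{\beta}^{(X)})$ is an MDS code of length $k$, dimension $k-M$ and distance $M+1$. This holds because any $M$ columns of $\hat{\beta}^{(X)}$ are independent, since $\begin{pmatrix}\hat{\beta}^{(X)}&I\end{pmatrix}$ has distance $M+1$. By the MDS weight enumerator, its number of full-weight codewords is $\sum_t(-1)^t\binom{k}{t}(q^{k-M-t}-1)=D_k$. Hence the probability is $D_k/(q-1)^k$, which is zero for $k\le M$. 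Combined with the $\mathcal{B}$ factor this yields the $D_k/(q-1)^{k+1}$ term.

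\emph{Special case}, $\mu=-(h_i)_k$. This is a single value of $\mu$, and the support of $w$ is the first $k-1$ coordinates. I would bound $\Pr_\nu$ by the number of full-support kernel vectors of the shortened MDS code on those $k-1$ coordinates, divided by $(q-1)^{k-1}$. The aim is to show this is at most $\binom{k-2}{M-1}/(q-1)^{k-1}$. For this single $\mu$ I would not use the marginal sum over $\mu$. Instead I would argue that the measurement error $\eta$ produced by the failed $XX$ measurement is mixed by the random mappings of the cat state qudits. That gives the joint bound $\mathbb{E}_{\mathcal{B}}\{\mathbb{P}[\ldots,Z^\lambda,\eta]\}\lesssim \mathbb{P}[XX]/(q-1)$, and the result is the second term.

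I expect the special case to be the main obstacle. First, I must get the right combinatorial count for the shortened code. In particular I must see why it produces $\binom{k-2}{M-1}$ rather than a $q^{k-1-M}$-type term. This probably comes from a looser bound that fixes $M-1$ coordinates among $k-2$ and uses that the rest are then determined. Second, I must justify the mixing of the specific measurement-error value $\eta$ over $\mathcal{B}$, which Case~1 never needed. If joint mixing of $\eta$ fails, my fallback is the trivial bound by the marginal, accepting the weaker $1/(q-1)^{k-1}$-type factor.
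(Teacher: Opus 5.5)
There is a genuine gap in your special case; your generic case and the $\mathcal{B}$-marginal bound are correct and match the paper.

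\textbf{The special-case target is false.} For $\mu=(h_i)_k$ (the same as $-(h_i)_k$ in characteristic $2$), the rescaled vector is uniform over vectors supported exactly on the first $k-1$ coordinates. So the probability over $\nu$ is exactly $D_{k-1}/(q-1)^{k-1}$. Your aim of showing this is at most $\binom{k-2}{M-1}/(q-1)^{k-1}$ fails, because $D_{k-1}$ is of order $q^{k-1-M}$. For instance, with $M=1$ and $k=3$ one has $D_2=q-1$, so the special-case probability is $1/(q-1)$, while your target is $1/(q-1)^2$. No coordinate-fixing count can produce the binomial here, because the binomial does not bound this probability. It appears only as a \emph{difference}.

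\textbf{The missing ingredient is the recurrence of Lemma~\ref{lem:recurrence}:}
\[
\frac{D_{k-1}}{(q-1)^{k-1}}=\frac{D_k}{(q-1)^k}+(-1)^{k-M}\frac{\binom{k-2}{M-1}}{(q-1)^{k-1}}.
\]
It follows from the MDS weight formula via Pascal's identity and $\sum_{t=0}^{m}(-1)^t\binom{n}{t}=(-1)^m\binom{n-1}{m}$. The paper uses it in three steps:
\begin{itemize}
\item rewrite the special term as the generic weight $D_k/(q-1)^k$ plus a signed correction;
\item absorb the generic weight into a sum over all $\mu\in\mathbb{F}_q^*$, which is the marginal $\mathbb{P}[\text{Mixed}^{(k)},i,Z^\lambda,*\,|\,\mathcal{B}]$;
\item bound the correction's probability factor by that same marginal.
\end{itemize}
In your own framing this is simply $\max_\mu\Pr_\nu[\cdot]\le D_k/(q-1)^k+\binom{k-2}{M-1}/(q-1)^{k-1}$. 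You must bound the special probability by the \emph{sum} of the two terms, not by the second term alone. Multiplying by $\mathbb{P}[XX]/(q-1)$ then gives the statement exactly.

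\textbf{The joint mixing of $\eta$ is unnecessary and would not rescue your version.}
\begin{itemize}
\item The bound you want from it, $\mathbb{E}_{\mathcal{B}}\{\cdot\}\lesssim\mathbb{P}[XX]/(q-1)$, already follows from the $\lambda$-marginal alone.
\item Even a genuine $\mathbb{P}[XX]/(q-1)^2$ bound would leave a special contribution $D_{k-1}\,\mathbb{P}[XX]/(q-1)^{k+1}$. This exceeds $\binom{k-2}{M-1}\,\mathbb{P}[XX]/(q-1)^{k}$ once $k\ge M+3$.
\item It is not clear the joint mixing holds at all. The cat-state qudit in column $i$ carries the same mapping $B_i$ as the data qudit, since mappings are fixed along columns. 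So the data error and the measurement error of the failed $XX$ measurement are relabelled by the same random $B_i$, in mutually dual ways, rather than independently.
\end{itemize}

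\textbf{Your fallback is not actually weaker once done correctly.} The generic and special values of $\mu$ are disjoint, so both classes draw on the single marginal mass $\mathbb{P}[XX]/(q-1)$. That gives the maximum of the two $\nu$-probabilities, not their sum, and the recurrence then finishes the proof. Only adding them, each against the full marginal, would overshoot, by roughly a factor of two.
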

    \noindent This lemma gives us
    \begin{align}
        \mathbb{E}_{\mathcal{B},\nu}\left\{\mathbb{P}\left[\mathsf{TLF}_2^{(X)}\;|\;\mathcal{B},\nu\right]\right\} &\leq \mathbb{P}\left[XX\right]\sum_{(k,i) \in [d-1]\times[n]}\sum_{\lambda\in \mathbb{F}_q^*}\left(\frac{D_k}{(q-1)^{k+1}}+\frac{1}{(q-1)^k}\begin{pmatrix}
            k-2\\M-1
        \end{pmatrix}\right)\\
        &=\mathbb{P}\left[XX\right]\cdot n\sum_{k=M+1}^{d-1}\left(\frac{D_k}{(q-1)^k}+\frac{1}{(q-1)^{k-1}}\begin{pmatrix}
            k-2\\M-1
        \end{pmatrix}\right),
    \end{align}
    as required.
\end{proof}

\subsubsection{Case 3: \texorpdfstring{$M+1$}{} Measurement Errors}\label{subsubsec:tlf_case3}
We now conclude with the case $3$ time-like failure: those caused by measurements only. From Claim~\ref{claim:tlf_meas_only}, we know that $M+1$ measurement errors are required for time-like failure. Again, the reason that this one must be considered is because, while more measurement errors are required, measurement errors are more likely than cat data errors/mixed errors.
\begin{claim}\label{claim:tlf_case_3_prob}
    In expectation over vectors $\nu \in (\mathbb{F}_q^*)^{d-1}$, the probability of case $3$ time-like failure on one self-consistent round of $X$ checks is
    \begin{equation}
        \frac{d-1}{q-1}\cdot\mathbb{P}\left[\text{Cat}\right]^{M+1},
    \end{equation}
    up to lower order terms, unless $M = 1$, in which case it is at most
    \begin{equation}
        \frac{1}{q-1}\begin{pmatrix}
            d\\2
        \end{pmatrix}\cdot\mathbb{P}\left[\text{Cat}\right]^{2},
    \end{equation}
    up to lower order terms.
    Here, $\mathbb{P}\left[\text{Cat}\right]$ is the probability of any error on an accepted cat state, or any fault in its consumption. That is,
    \begin{equation}
        \mathbb{P}\left[\text{Cat}\right] = \mathbb{P}\left[\text{Cat Creation}\right] + \mathbb{P}\left[\text{Cat Consumption}\right]
    \end{equation}
    where the former term was defined in Equation~\eqref{eq:cat_creation_error_prob}, and $\mathbb{P}\left[\text{Cat Consumption}\right]$ is the probability of any error in the consumption of a cat state. This is
    \begin{equation}
        \mathbb{P}\left[\text{Cat Consumption}\right] = n\cdot \mathbb{P}\left[XX\right],
    \end{equation}
    where $\mathbb{P}\left[XX\right]$ is the probability of any error in the qudit $XX$ measurement, which we recall is a fixed operation executed according to (the dual of) Sub-Routine~\ref{alg:qudit_ZZ_meas}.
\end{claim}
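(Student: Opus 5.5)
We structure the argument like Claims~\ref{claim:tlf_case_1_count}--\ref{claim:tlf_case_2_prob}: first characterise the measurement-only faults that cause time-like failure, then average over the mixing parameter $\nu$. With no cat data errors, Claim~\ref{claim:tlf_meas_only} shows that the vector of measurement errors $(\epsilon,\delta)\in\mathbb{F}_q^{d-1}\times\mathbb{F}_q^M$ must be a non-zero codeword of the time-like code with parity-check matrix $\begin{pmatrix}\hat{\beta}^{(X)}*\nu & I\end{pmatrix}$, so that $\delta=(\hat{\beta}^{(X)}*\nu)\epsilon$. Time-like failure requires $\epsilon\neq 0$, since otherwise the extracted $y$ is the true syndrome. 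At leading order we consider exactly $M+1$ faulty checks, indexed by a support $S\subseteq[d-1+M]$ with $|S|=M+1$ and values on $S$. A weight-$(M+1)$ codeword of an MDS code of distance $M+1$ with prescribed support is unique up to a scalar. Hence for each $S$ there are at most $q-1$ error-value patterns that pass self-consistency, and these form a single projective point.

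Next we bound the probability of a fixed $S$ pattern. Each faulty check fails with probability at most $\mathbb{P}[\text{Cat}]$, so the joint probability that the checks in $S$ all fail, with values forming a specific codeword, is what we must control. We will not assume uniformity of the measurement-error values. Instead we exploit $\nu$. Write the codeword as $\epsilon$ on the regular coordinates $S_1=S\cap[d-1]$ and $\delta$ on the combination coordinates $S_2$. Rescaling column $j$ of $\hat{\beta}^{(X)}$ by $\nu_j$ is equivalent to rescaling coordinate $j$ of $\epsilon$ by $\nu_j^{-1}$. So the allowed codewords for $\nu$ are the images under this diagonal scaling of the codewords for $\hat{\beta}^{(X)}$.

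Fix any realised error-value tuple $(\eta_j)_{j\in S}$. Whenever $|S_1|\geq 1$ and $|S_2|\ge1$, the probability over uniform $\nu$ that this tuple is a codeword is at most $\frac{1}{q-1}$. The reason is that choosing any $j\in S_1$, the ratio between $\eta_j$ and the $S_2$ entries must match one specific value of $\nu_j$. Summing over tuples via $\sum_{\text{tuples}}\mathbb{P}[\text{tuple}]\le\prod_{j\in S}\mathbb{P}[\text{check } j \text{ faulty}]\le\mathbb{P}[\text{Cat}]^{M+1}$ (independence of cat states) gives a contribution of $\frac{1}{q-1}\mathbb{P}[\text{Cat}]^{M+1}$ per support $S$.

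The main obstacle is counting which supports $S$ actually admit such codewords, so that the prefactor comes out as $d-1$ rather than $\binom{d-1+M}{M+1}$. First, the MDS property forces every weight-$(M+1)$ codeword to involve all coordinates in a way compatible with the identity block: restricting to $S_2$ rows, $\delta=\hat\beta\epsilon$ must vanish off $S_2$. If $|S_1|=a$, then $\hat\beta$ restricted to columns $S_1$ must send $\epsilon$ into a vector vanishing on $M-|S_2|=a-1$ rows. Since every $(a-1)\times a$ minor is non-degenerate, this determines $\epsilon$ up to scalar. Every $S$ is therefore feasible, and the naive count is $\binom{d-1+M}{M+1}$.

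To obtain the claimed $d-1$, we plan to use the sequential rejection structure. Checks are compared as they are measured, and a round is accepted only if the \emph{last} consistency holds. Failures that make an earlier combination check inconsistent abort the round, so leading-order time-like failures are those reaching the end. I would try to show that at leading order, the probability mass compatible with a \emph{self-consistent completed} round reduces to supports with $|S_1|=1$, giving $d-1$ choices: one regular check plus all $M$ combination checks. Supports with larger $|S_1|$ would then carry an extra $\frac{1}{q-1}$ factor from the additional $\nu_j$ constraints, since each extra regular coordinate contributes an independent uniformly random scaling that must align. The exceptional case $M=1$ is where two regular checks alone ($|S_2|=0$) can form a weight-$2$ codeword, with only a single $\nu$-ratio constraint. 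Counting all pairs among the $d$ coordinates then yields $\frac{1}{q-1}\binom{d}{2}$. Verifying this ``extra constraint per extra regular coordinate'' independence argument over $\nu$ is the step I expect to be delicate.
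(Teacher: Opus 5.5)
Your setup matches the paper's: measurement-only failures are non-zero codewords of the MDS time-like code with parity-check matrix $\begin{pmatrix}\hat{\beta}^{(X)}*\nu & I\end{pmatrix}$; the codewords on a support $S$ of size $M+1$ form a single scalar class; and changing $\nu$ rescales the $S_1$-coordinates by $\nu_j^{-1}$.

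The gap is in how you propose to get the prefactor $d-1$. The sequential-rejection idea cannot work. With measurement errors only, a round is self-consistent exactly when the full error vector is a codeword, and early aborts affect only timing. You showed yourself that every support $S$ carries such codewords, so no support is excluded by the order in which checks are compared.

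The reduction to $|S_1|=1$ comes instead from sharpening your own per-support bound, and this step is immediate rather than delicate. If $S_2\neq\emptyset$, the realised values on $S_2$ fix the scalar $\eta$. Each $j\in S_1$ then forces $\nu_j$ to one specific value of $\mathbb{F}_q^*$, and the $\nu_j$ are independent and uniform. So the probability that the realised tuple is a codeword is $(q-1)^{-|S_1|}$, not merely at most $\frac{1}{q-1}$. If $S_2=\emptyset$, summing over the $q-1$ values of $\eta$ gives $(q-1)\cdot(q-1)^{-(M+1)}=(q-1)^{-M}$. This is exactly the paper's step of taking $\mathbb{E}_{\nu_j}$ factor by factor. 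That step turns each $S_1$ factor into $\frac{1}{q-1}\mathbb{P}\left[\text{Meas}^{(j,1)},*\right]\leq\frac{1}{q-1}\mathbb{P}\left[\text{Cat}\right]$.

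With this, the counting follows directly. Since $|S|=M+1>M$, every support has $|S_1|\geq 1$.
\begin{itemize}
\item Supports with $|S_1|=1$ must contain all $M$ combination checks. There are exactly $d-1$ of them, each contributing at most $\frac{1}{q-1}\mathbb{P}\left[\text{Cat}\right]^{M+1}$.
\item Supports with $2\leq |S_1|=a\leq M$ contribute $\binom{d-1}{a}\binom{M}{M+1-a}(q-1)^{-a}\mathbb{P}\left[\text{Cat}\right]^{M+1}$. These are the dropped lower-order terms.
\item Supports with $S_2=\emptyset$ contribute $\binom{d-1}{M+1}(q-1)^{-M}\mathbb{P}\left[\text{Cat}\right]^{M+1}$. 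This is of leading order only when $M=1$, where it combines to give $\binom{d-1}{2}+(d-1)=\binom{d}{2}$, as you anticipated.
\end{itemize}

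Your fixed-tuple formulation is a valid and slightly cleaner substitute for the paper's device. You bound $\sum_{\text{tuples}}\mathbb{P}[\text{tuple}]$ by $\prod_{j\in S}\mathbb{P}[\text{check } j\text{ faulty}]$; the paper instead bounds all but one $S_2$ factor naively and sums the remaining one over $\eta$. Both give the same bound. Like the paper, your version tacitly treats the combination checks' error distributions as independent across checks and independent of $\nu$.
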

\begin{proof}
    We consider again the ``time-like'' code with parity-check matrix $\begin{pmatrix}
        \hat{\beta}^{(X)}*\nu & I
    \end{pmatrix}$. We note that, since we choose $\hat{\beta}^{(X)}$ such that the code with parity-check matrix $\begin{pmatrix}
        \hat{\beta}^{(X)}&I
    \end{pmatrix}$ has distance $M+1$, this code has distance $M+1$ also, for any $\nu$. We know from Claim~\ref{claim:tlf_meas_only} that case $3$ time-like failures correspond to non-trivial sequences of measurement errors forming codewords of this code, which are then necessarily formed of at least $M+1$ measurement errors. To obtain the leading order contribution to the failure probability, we consider patterns of $M+1$ measurement errors only. 

    The code with parity-check matrix $\begin{pmatrix}\hat{\beta}^{(X)}*\nu & I\end{pmatrix}$ is a code with $M$ (linearly independent) parity checks, and distance $M+1$, and is therefore a maximum distance separable (MDS) code. The weight distribution of an MDS code is exactly characterised~\cite{macwilliams1977theory}. In particular, given any set of $M+1$ coordinates, there are exactly $q-1$ codewords of weight $M+1$ supported at those coordinates, and they may be obtained from each other by multiplications by scalars in $\mathbb{F}_q^*$.
    
    Consider any set $S \subseteq [d-1+M]$, where $|S| = M+1$. Let $S_1 \subseteq [d-1]$ denote the support of $S$ in the first $d-1$ elements, and let $S_2 \subseteq [M]$ denote the support of $S$ in the latter $M$ elements. Given a choice of $S$, denote an arbitrary codeword of the code with parity-check matrix $\begin{pmatrix}
        \hat{\beta}^{(X)}&I
    \end{pmatrix}$ that is non-zero exactly at $S$ via 
    \begin{equation}
        \begin{pmatrix}
        x(S_1) & y(S_2)
    \end{pmatrix},
    \end{equation}
    where $x(S_1) \in \mathbb{F}_q^{d-1}$ and $y(S_2) \in \mathbb{F}_q^M$. Then all of the codewords of that code supported exactly on $S$ are 
    \begin{equation}
        \eta\begin{pmatrix}
        x(S_1) & y(S_2)
    \end{pmatrix}\text{ for } \eta \in \mathbb{F}_q^*.
    \end{equation}
    Moreover, all of the codewords of the code with parity-check matrix $\begin{pmatrix}
        \hat{\beta}^{(X)}*\nu & I
    \end{pmatrix}$ supported exactly at $S$ are then
    \begin{equation}
        \eta\begin{pmatrix}
        x(S_1)*\nu^{-1} & y(S_2)
    \end{pmatrix}\text{ for } \eta \in \mathbb{F}_q^*,
    \end{equation}
    where given vectors $v,w$, we have $v*w$ denoting the componentwise multiplication of $v$ and $w$, and given the vector $\nu$ (which has all non-zero entries), $\nu^{-1}$ denotes its componentwise inversion.

    Given a particular choice of $\nu$, the probability of case $3$ time-like failure is
    \begin{equation}\label{eq:case_3_tlf_prob_given_nu}
        \mathbb{P}\left[\mathsf{TLF}_3^{(X)}\;|\;\nu\right] = \sum_{\substack{S \subseteq [d-1+M]\\ |S| = M+1}}\sum_{\eta \in \mathbb{F}_q^*}\left(\prod_{j \in S_1}\mathbb{P}\left[\text{Meas}^{(j,1)}, \eta x(S_1)_j\nu_j^{-1}\right]\right)\left(\prod_{k \in S_2}\mathbb{P}\left[\text{Meas}^{(k,2)}, \eta y(S_2)_k\right]\right).\footnote{We take the convention that an empty product is equal to $1$.}
    \end{equation}
    Here, $\mathbb{P}\left[\text{Meas}^{(j,1)},\mu\right]$ denotes the probability of the $j$'th check in the first set of $d-1$ checks experiencing a measurement error $\mu \in \mathbb{F}_q^*$, and similarly for $\mathbb{P}\left[\text{Meas}^{(k,2)},\mu\right]$. Taking the expectation over $\nu$, we get
    \begin{multline}\label{eq:case_3_tlf_prob_expectation_nu}
        \mathbb{E}_{\nu}\left\{\mathbb{P}\left[\mathsf{TLF}_3^{(X)}\;|\;\nu\right]\right\} = \\\sum_{\substack{S \subseteq [d-1+M]\\ |S| = M+1}}\sum_{\eta \in \mathbb{F}_q^*}\left(\prod_{j \in S_1}\mathbb{E}_{\nu_j}\left\{\mathbb{P}\left[\text{Meas}^{(j,1)}, \eta x(S_1)_j\nu_j^{-1}\right]\right\}\right)\left(\prod_{k \in S_2}\mathbb{P}\left[\text{Meas}^{(k,2)}, \eta y(S_2)_k\right]\right).
    \end{multline}
We then have
\begin{equation}
    \mathbb{E}_{\nu_j}\left\{\mathbb{P}\left[\text{Meas}^{(j,1)},\eta x(S_1)_j\nu_j^{-1}\right]\right\} = \frac{1}{q-1}\cdot\mathbb{P}\left[\text{Meas}^{(j,1)},*\right],
\end{equation}
where the right-hand side denotes the probability of any non-trivial measurement error in the $j$'th check of the first set of $d-1$ checks. This is true because, choosing a uniformly random $\nu_j \in \mathbb{F}_q^*$ yields a uniformly random $\eta x(S_1)_j\nu_j^{-1} \in \mathbb{F}_q^*$, noting that $\eta x(S_1)_j \neq 0$. In turn, we have
\begin{equation}
    \mathbb{P}\left[\text{Meas}^{(j,1)},*\right] \leq \mathbb{P}\left[\text{Cat}\right].
\end{equation}This gives us
\begin{equation}\label{eq:tlf_case_3_break_sum}
    \mathbb{E}_\nu\left\{\mathbb{P}\left[\mathsf{TLF}_3^{(X)}\;|\;\nu\right]\right\} \leq \sum_{\substack{S \subseteq [d-1+M]\\|S|=M+1}}\left(\frac{\mathbb{P}\left[\text{Cat}\right]}{q-1}\right)^{|S_1|}\sum_{\eta \in \mathbb{F}_q^*}\left(\prod_{k \in S_2}\mathbb{P}\left[\text{Meas}^{(k,2)},\eta y(S_2)_k\right]\right).
\end{equation}
We are now going to break the sum on the right-hand side over $S$ into parts where $S_2$ is empty, and $S_2$ is non-empty. We first handle cases of $S$ where $S_2$ is non-empty, which is equivalent to $|S_1| \neq M+1$. For each such $S$, pick any coordinate in $S_2$, say $s \in S_2$. We take the naive upper bound
\begin{equation}\label{eq:pessimistic}
    \mathbb{P}\left[\text{Meas}^{(k,2)},\eta y(S_2)_k\right] \leq \mathbb{P}\left[\text{Cat}\right]
\end{equation}
for all $k \neq s$, which holds since any measurement error on any check requires an error on the corresponding cat state or an error in its consumption. The contribution from these terms is
\begin{equation}
    \sum_{\substack{S \subseteq [d-1+M]\\|S| = M+1\\|S_1| \neq M+1}}\left(\frac{\mathbb{P}\left[\text{Cat}\right]}{q-1}\right)^{|S_1|} \mathbb{P}\left[\text{Cat}\right]^{|S_2|-1}\sum_{\eta \in \mathbb{F}_q^*}\mathbb{P}\left[\text{Meas}^{(s,2)},\eta y(S_2)_s\right].
\end{equation}
Then, $\sum_{\eta \in \mathbb{F}_q^*}\mathbb{P}\left[\text{Meas}^{(s,2)}, \eta y(S_2)_s\right] \leq \mathbb{P}\left[\text{Cat}\right]$, since the left-hand side is a sum over probabilities of distinct measurement errors. These terms are then at most
\begin{align}
    \sum_{\substack{S \subseteq [d-1+M]\\|S| = M+1\\|S_1|\neq M+1}}\left(\frac{\mathbb{P}\left[\text{Cat}\right]}{q-1}\right)^{|S_1|}\mathbb{P}\left[\text{Cat}\right]^{|S_2|}
    &= \mathbb{P}\left[\text{Cat}\right]^{M+1}\cdot\sum_{\substack{S \subseteq [d-1+M]\\|S| = M+1\\|S_1|\neq M+1}}\frac{1}{(q-1)^{|S_1|}}.
\end{align}
For these terms we keep only the leading order term given by $|S_1| = 1$, which is
\begin{equation}
    \mathbb{P}\left[\text{Cat}\right]^{M+1}\cdot\frac{d-1}{q-1}.
\end{equation}
We now return to Equation~\eqref{eq:tlf_case_3_break_sum}, and consider the contribution to the right-hand side from terms for which $|S_1| = M+1$, so that $S_2$ is empty. These terms are simply
\begin{equation}
    \sum_{\substack{S \subseteq [d-1+M]\\|S| = M+1\\|S_1| = M+1}}\left(\frac{\mathbb{P}\left[\text{Cat}\right]}{q-1}\right)^{M+1}\sum_{\eta \in \mathbb{F}_q^*} = \begin{pmatrix}
        d-1\\M+1
    \end{pmatrix}\frac{\mathbb{P}\left[\text{Cat}\right]^{M+1}}{(q-1)^M},
\end{equation}
which are sub-leading unless $M = 1$. If $M=1$, then combining the terms resulting from $S_2$ non-empty and $S_2$ empty gives the claimed value because $\begin{pmatrix}
    d-1\\2
\end{pmatrix} + d-1 = \begin{pmatrix}
    d\\2
\end{pmatrix}$.
\end{proof}
\begin{remark}\label{rmk:case_3_tlf_prob_pessimistic}
    This formula is likely very pessimistic, since we lose a lot in our upper bound in Equation~\eqref{eq:pessimistic}. Under uniform errors, for example, we would have instead obtained the formula for case $3$ time-like failure
    \begin{equation}
        \frac{1}{(q-1)^M}\begin{pmatrix}
            d-1+M\\M+1
        \end{pmatrix}\mathbb{P}\left[\text{Cat}\right]^{M+1},
    \end{equation}
    which is much stronger for $M > 1$. However, what we have will suffice for our purposes.
\end{remark}
\begin{remark}
    There is one subtlety which we have neglected to treat, namely that the cat states in the latter $M$ checks themselves change when $\nu$ changes, and so strictly speaking the distribution of measurement errors on the latter set of $M$ checks depends explicitly on $\nu$. This is because, when we change $\nu$, the compilation of the cat state can change, with slightly more or fewer whole in-module or half in-module measurements in each $Z^\alpha Z^\beta$ measurement, in a different order. This means that, strictly speaking, in Equation~\eqref{eq:case_3_tlf_prob_given_nu}, the values $\mathbb{P}\left[\text{Meas}^{(k,2)},\eta y(S_2)_k\right]$ should be $\mathbb{P}\left[\text{Meas}^{(k,2)},\eta y(S_2)_k\;|\;\nu\right]$. When we take the expectation over $\nu$ in Equation~\eqref{eq:case_3_tlf_prob_expectation_nu}, we cannot make progress simply by calculating $\mathbb{E}_{\nu_j}\left\{\mathbb{P}\left[\text{Meas}^{(j,1)},\eta x(S_1)_j\nu_j^{-1}\right]\right\}$ as we do in the subsequent step. We neglect this point because, in practice, one can further use the randomisation of $\mathcal{B}$ to change the compilation of the $Z^\alpha Z^\beta$ measurements, given any problematic cases. In addition, the whole estimate is likely to be very pessimistic; see Remark~\ref{rmk:case_3_tlf_prob_pessimistic}.
\end{remark}

\subsection{Space-like Failures}\label{subsec:spacelike_failures}

\subsubsection{Overview of Space-Like Failures}\label{subsubsec:overview_slf}

In this subsection, we are going to upper bound the probability of space-like failure, one of the two failure modes of our outer code, aside from time-like failure (handled in Subsection~\ref{subsec:time_like_failures}). The discussion here will be focused on space-like failure for $Z$ errors (detected by $X$ checks), where the discussion for $X$ errors will be identical. In places where it is not immediately clear why the discussion for the $X$ errors is the same, it will be noted.

Let us now recall the setting. Focusing on one outer code block, periodically, its qudit $X$ checks are measured. This proceeds by measuring $d-1$ ``regular'' checks, spanning the space of checks, and then $M$ linear combinations of them. A self-consistent round of $X$ checks is defined as one for which the $d-1+M$ syndrome components satisfy the linear combinations that they should. Immediately retrying all $d-1+M$ checks from the start when an inconsistency is detected, a self-consistent round is eventually achieved. When a self-consistent round of $X$ checks is achieved, we said that time-like failure occurs if the syndrome extracted is not the syndrome for the $Z$ error on the data immediately before or during the round (that is, at any of the $d-1+M$ moments immediately before any of the $d-1+M$ checks were measured). In Subsection~\ref{subsec:time_like_failures}, the probability of time-like failure was upper bounded. 

In this subsection, we upper bound the other failure mode of our code, space-like failure. Now, if time-like failure does not occur, the outer code decoder holds the syndrome for the $Z$ error on the code block at some point just before or during the self-consistent round of $X$ checks. Space-like failure is the event that time-like failure does not occur, but the outer code decoder does not return the exact error for the syndrome it was handed. Note that, on its own, this definition is slightly ambiguous. Indeed, it is possible in principle that the syndrome extracted is the syndrome for multiple distinct $Z$ errors on the data just before or during the self-consistent round. Such an event would be considered a space-like failure of the outer code, although it is neglected in our analysis since this event is much less likely than other events that we count as failures.\footnote{Indeed, this event requires an entire logical operator of the outer code to appear during the self-consistent round, while also keeping the round self-consistent.} We therefore assume that, if time-like failure does not occur, the syndrome handed to the outer code decoder is the exact syndrome for a unique $Z$ error on the outer code block just before or during the self-consistent round. Space-like failure is then the event that the outer code decoder does not return that exact $Z$ error. The aim of this subsection is to upper bound the probability of space-like failure, focusing on one outer code block, and one self-consistent round of $X$ checks.

Given a syndrome, the outer code decoder will return the minimum-weight error with that syndrome: a task that may be achieved efficiently using list decoders for classical Reed-Solomon codes~\cite{sudan1997decoding,guruswami1998improved}. If there are multiple distinct errors of the same minimum weight with the given syndrome, the outer code decoder picks from them randomly.

In order to upper bound the probability of space-like failure on one outer code block, on one self-consistent round of $X$ checks, there are two main components to the calculation. Broadly speaking, these are ``What are the dangerous patterns of $Z$ errors?'' and ``How many $Z$ errors are there?''. More concretely for the former, in Subsubsection~\ref{subsubsec:list_decoding}, we will consider the structure of the classical Reed-Solomon codes. This subsubsection allow us to answer questions like ``Given the distance $7$ code, what fraction of weight-$4$ $Z$ errors are uncorrectable?'', where the answers are usually lower than one naively expects, because we use large-alphabet qudits, and because of the list-decodable structure of our codes.

On the latter, ``How many $Z$ errors are there?'', in Subsubsection~\ref{subsubsec:prob_weight_e_errors}, we consider the $Z$ error during the self-consistent round of $X$ checks, whose syndrome is handed to the outer code decoder. We wish to be able to answer questions like ``What is the probability that that $Z$ error has weight $4$?'' There, the point will be that, assuming time-like failure and space-like failure have not occurred in the past on that outer code block, the $Z$ error may be considered to have been removed from the code block at some time during the previous self-consistent round of $X$ checks, and all the $Z$ errors have accumulated on this outer code block since then.

Finally, Subsubsection~\ref{subsubsec:bounding_prob_slf} will draw Subsubsection~\ref{subsubsec:list_decoding} and Subsubsection~\ref{subsubsec:prob_weight_e_errors} together, enabling us to calculate an upper bound on the probability of space-like failure on this particular outer code block, on this particular self-consistent round of $X$ checks. We will make a mixing argument there, for which the punchline will be that, in expectation over mixing parameters, errors may be treated uniformly from the point of view of space-like failure, and so the calculation of the upper bounding of space-like failure becomes relatively straightforward.

\subsubsection{List Decoding}\label{subsubsec:list_decoding}

The decoder for $Z$ errors on our outer code takes the syndrome as input, which is some vector in $\mathbb{F}_q^{d-1}$, and simply finds the minimum-weight correction of the corresponding classical Reed-Solomon code with that syndrome. If there are multiple minimum-weight errors with that syndrome, it chooses one at random. Now, the classical code that the outer code is trying to decode has length $n$ and distance $d$, and so any error of weight at most $\left\lfloor\frac{d-1}{2}\right\rfloor$ can be reliably corrected, and in fact efficient decoders achieving decoding of any such error have been known for some time~\cite{gorenstein1961class,berlekamp1966nonbinary,welch1983error,massey2003shift,peterson2003encoding}. The striking feature about these codes, however, is that they have a structure enabling a large fraction of errors beyond the usual $\left\lfloor\frac{d-1}{2}\right\rfloor$ radius to be corrected.

This concept first emerged in the work of Sudan in 1997~\cite{sudan1997decoding}, and was broadly improved by the work of Guruswami and Sudan~\cite{guruswami1998improved} (for an overview of the Guruswami-Sudan list decoding algorithm for Reed-Solomon codes, see the review by McEliece~\cite{mcelice2003}). 
The idea is as follows. Given a distance $d$ classical Reed-Solomon code, there is always some way to place errors at $\left\lfloor\frac{d-1}{2}\right\rfloor + 1$  locations that will fool a minium-weight decoder. Indeed, an adversary may place $\left\lfloor\frac{d-1}{2}\right\rfloor + 1$ errors that form a restriction of a weight-$d$ codeword, and a minimum-weight decoder may then mistake this error for its logical complement in that codeword.

However, it turns out that if errors are occurring randomly, the probability that $\left\lfloor\frac{d-1}{2}\right\rfloor+1$ errors line up with some restriction of a codeword is very low, and this effect becomes more noticeable when $n$ (the length of the code) is much less than $q$ (the alphabet size).\footnote{Note that we are in this regime because we consider $q = 2^{11} = 2048$ and $n$ of the order of about $30$ to $80$.} To give one explicit example, consider the scenario of a length $n$ and distance $6$ classical Reed-Solomon code over $\mathbb{F}_q$, on which errors are occurring uniformly at random (i.e., if an error occurs on a coordinate, it is uniformly randomly one of the $q-1$ errors $\eta \in \mathbb{F}_q^*$). It turns out that the probability that a random weight-$3$ error has the same syndrome as a distinct weight-$3$ error is $\frac{1}{(q-1)^2}\begin{pmatrix}
    n-3\\3
\end{pmatrix}\sim\frac{n^3}{6q^2}$, which is very small for $n \ll q$. These effects go further. For example, if $\frac{n}{q}$ is \textit{very} small, the majority of weight-$4$ errors are correctable on the distance $6$ code.\footnote{In the present context, we are not necessarily in the regime that $\frac{n}{q}$ is small enough that the majority of weight-$4$ errors are correctable on the distance $6$ code.}

\cite{sudan1997decoding} and~\cite{guruswami1998improved} initiated the study of efficient algorithms that can provide the minimum-weight correction for a classical Reed-Solomon code consistent with the syndrome, where the minimum-weight correction consistent with the syndrome in general has weight greater than $\left\lfloor\frac{d-1}{2}\right\rfloor$. These algorithms are called list decoding algorithms because they output a (short) list of possible corrections consistent with the inputted syndrome, often coming with guarantees about the elements that are certain to appear in the list. For example, we are guaranteed that all possible corrections consistent with the syndrome below a certain weight appear in the list. Because of the structure of the code, one expects that in a typical case there would be a unique minimum-weight correction in the list; in the context of the above example, when decoding a distance $6$ code of length $n$, if a random weight-$3$ error occurs, there would typically only be one weight-$3$ correction in the list (the actual error), assuming that $\frac{n^3}{6q^2} \ll 1$. In this work, we can just assume that one of the very well-studied and efficient list decoders is used, like the Guruswami-Sudan list decoder~\cite{guruswami1998improved}, so that the minimum weight correction is always applied (and if there are multiple we can choose at random). Let us emphasise, however, that the term ``list decoding'' is not just talking about the properties of an algorithm, it is also referring to the structure of the code that makes it ``list decodable''.

For the remainder of this subsubsection, we are going to consider weight $e$ errors on distance $d$ codes, and consider situations in which they are correctable or uncorrectable, for $e > \left\lfloor\frac{d-1}{2}\right\rfloor$. Note that an error is called uncorrectable if it is not the minimum-weight error with its given syndrome. However, an uncorrectable error may still be corrected by the outer code decoder if the outer code decoder correctly guesses the error (at random) from a list of minimum-weight errors with the same syndrome.

\paragraph{Distance 3 Code}

Unfortunately, list decoding does not have much effect at distance $d=3$, and we take all weight-$2$ errors to be space-like failures.\footnote{The story would be different if one considered soft information decoding, but in this paper we are just consider an outer code decoder whose input is the syndrome.}

\paragraph{Distance 4 Code}

List decoding starts to become more interesting at distance $d=4$. We take all weight-$3$ errors as space-like failures, but not all weight-$2$ errors. Recalling that an error is called correctable if it is the unique minimum-weight error with its syndrome, we have the following.
\begin{claim}\label{claim:d=4_list_decoding}
    On a length $n$ and distance $4$ classical Reed-Solomon code over $\mathbb{F}_q$, the fraction of weight $2$ errors that are uncorrectable is at most
    \begin{equation}
        \frac{(n-2)(n-3)}{2(q-1)}.
    \end{equation}
\end{claim}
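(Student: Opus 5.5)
A weight-$2$ error $e$ with support $\{i,j\}$ is uncorrectable exactly when some distinct error $e'$ of weight at most $2$ has the same syndrome. In that case $e-e'$ is a nonzero codeword of weight at most $4$. Since the code has distance $4$, the difference $c=e-e'$ is a codeword of weight exactly $4$. It cannot have weight $3$: weight at most $3$ would be impossible unless $e'$ had weight $\le 1$, and then $c$ would have weight $\le 3<4$. So $e$ must agree with a weight-$4$ codeword $c$ on its own support $\{i,j\}\subseteq\mathrm{supp}(c)$. The plan is to count, for a fixed support $\{i,j\}$, how many of the $(q-1)^2$ error values $(e_i,e_j)\in(\mathbb{F}_q^*)^2$ arise as the restriction of some weight-$4$ codeword whose support contains $\{i,j\}$. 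Dividing by $(q-1)^2$ then gives the uncorrectable fraction.

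The key input is the MDS weight structure already used in the proof of Claim~\ref{claim:tlf_case_3_prob}. Reed--Solomon codes are MDS, so for any set $T$ of $d=4$ coordinates, the codewords supported exactly on $T$ form a single line: there are exactly $q-1$ of them, all scalar multiples of one another. Fix $\{i,j\}$ and choose the other two coordinates $\{k,l\}$ of $T$ in $\binom{n-2}{2}$ ways. For each such $T$, the restrictions to $(i,j)$ of the $q-1$ codewords supported on $T$ are the $q-1$ nonzero multiples of one fixed vector in $(\mathbb{F}_q^*)^2$. They therefore cover exactly $q-1$ of the $(q-1)^2$ possible values of $(e_i,e_j)$.

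Taking a union bound over the $\binom{n-2}{2}$ choices of $\{k,l\}$, at most $\binom{n-2}{2}(q-1)$ error values on the support $\{i,j\}$ are uncorrectable. This gives a fraction of at most
\[
\frac{\binom{n-2}{2}(q-1)}{(q-1)^2}=\frac{(n-2)(n-3)}{2(q-1)}.
\]
The bound is uniform over supports, so the same fraction bounds the overall proportion of weight-$2$ errors.

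The main point to get right is the first step. We must justify that an uncorrectable weight-$2$ error forces an exact weight-$4$ codeword containing its support, and rule out the degenerate cases where $e'$ has weight $\le 1$ or the supports of $e$ and $e'$ overlap. Both are excluded by the distance: in either case $c$ would be a nonzero codeword of weight at most $3$. The union bound may overcount, but that is harmless, since only an upper bound is claimed.
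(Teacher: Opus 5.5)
Your proof is correct and is essentially the paper's argument, which uses the same MDS fact that every weight-$4$ support carries exactly $q-1$ codewords, all scalar multiples of one another. The only difference is bookkeeping. The paper counts all $6(q-1)\binom{n}{4}$ weight-$2$ restrictions globally against the $(q-1)^2\binom{n}{2}$ weight-$2$ errors, while you do the same count one support at a time, and the two agree since $\binom{n}{4}\binom{4}{2}=\binom{n}{2}\binom{n-2}{2}$.
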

\begin{proof}
    In the distance $4$ code, a weight-$2$ error is uncorrectable only if it is mistaken for another weight-$2$ error, which is only if it is the restriction of some weight-$4$ codeword. Being a maximum distance separable (MDS) code, the weight distribution of the code is exactly characterised~\cite{macwilliams1977theory}. In particular, its number of weight-$4$ codewords is
    \begin{equation}
        (q-1)\begin{pmatrix}
            n\\4
        \end{pmatrix}.
    \end{equation}
    The number of weight-$2$ restrictions of weight-$4$ codewords is therefore
    \begin{equation}
        6(q-1)\begin{pmatrix}
            n\\4
        \end{pmatrix}.
    \end{equation}
    The total number of weight-$2$ errors is
    \begin{equation}
        (q-1)^2\begin{pmatrix}
            n\\2
        \end{pmatrix},
    \end{equation}
    and so the fraction of weight-$2$ errors that are uncorrectable is at most
    \begin{equation}
        \frac{6\begin{pmatrix}
            n\\4
        \end{pmatrix}}{(q-1)\begin{pmatrix}
            n\\2
        \end{pmatrix}} = \frac{(n-2)(n-3)}{2(q-1)}.
    \end{equation}
\end{proof}
For the errors that are uncorrectable, we need more fine-grained information. That is, when a weight-$2$ error shares a syndrome with a distinct weight-$2$ error, we need to know with how many distinct weight-$2$ errors it shares its syndrome. This information is needed because, in a situation where a weight-$2$ error shares a syndrome with $k$ distinct weight-$2$ errors, the outer decoder picks at random from those $k+1$ choices. The following quantity gives the fraction of weight-$2$ errors that share a syndrome with exactly $k$ distinct weight-$2$ errors:
\begin{multline}
    F_{2\to k\times2}(n,d=4,\balpha) \coloneq\\ \frac{\left|\left\{ e \in \mathbb{F}_q^n:\; |e| = 2, \;H_Xe = H_Xe', \text{ for exactly } k\text{ distinct } e' \in \mathbb{F}_q^n\text{ with } \;|e'| = 2, \;e' \neq e\right\}\right|}{(q-1)^2\begin{pmatrix}
        n\\2
    \end{pmatrix}}.
\end{multline}
Notice that this does not depend on the values $\bu$, and so in particular is the same for the $X$ checks as for the $Z$ checks. This quantity \textit{can} depend on the evaluation points $\balpha$. We are going to go on to make explicit choices of the evaluation points $\balpha$, and we will estimate these fractions in a computer for the choices of $\balpha$; see Subsection~\ref{subsec:evaluation_points}.

\paragraph{Distance 5 Code}

For the distance $5$ codes, all weight-$1$ and $2$ errors are trivially correctable. However, the situation for weight-$3$ errors is a little more complicated. We have the following.
\begin{claim}\label{claim:d=5_list_decoding}
    On a length $n$ and distance $5$ classical Reed-Solomon code over $\mathbb{F}_q$, the fraction of weight-$3$ errors that have the same syndrome as a weight-$2$ error is at most
    \begin{equation}
        \frac{(n-3)(n-4)}{2(q-1)^2}.
    \end{equation}
\end{claim}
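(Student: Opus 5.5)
The plan mirrors the proof of Claim~\ref{claim:d=4_list_decoding}, using the MDS weight distribution of the Reed--Solomon code. Let $e$ be a weight-$3$ error and suppose $H_X e = H_X e'$ for some $e'$ of weight $2$. Then $c = e - e'$ is a non-zero codeword of weight at most $5$. Since the code has distance $5$, $c$ has weight exactly $5$. This forces the supports of $e$ and $e'$ to be disjoint, and $e$ to be the restriction of $c$ to $3$ of the $5$ coordinates of its support. So every such $e$ is a weight-$3$ restriction of some weight-$5$ codeword.

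Next we count weight-$5$ codewords. For an MDS code of distance $d=5$, the weight distribution~\cite{macwilliams1977theory} gives $A_5 = (q-1)\binom{n}{5}$. Equivalently, for each $5$-set of coordinates there are exactly $q-1$ codewords with that support, all scalar multiples of one another. Each weight-$5$ codeword has $\binom{5}{3} = 10$ restrictions to $3$ coordinates. Hence the number of weight-$3$ errors sharing a syndrome with a weight-$2$ error is at most $10(q-1)\binom{n}{5}$. This is only an upper bound, because distinct codewords may produce the same restriction; we do not need injectivity.

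Finally, we divide by the total number $(q-1)^3\binom{n}{3}$ of weight-$3$ errors and simplify:
\begin{equation}
\frac{10\binom{n}{5}}{(q-1)^2\binom{n}{3}} = \frac{10\cdot 3!\,(n-3)(n-4)}{5!\,(q-1)^2} = \frac{(n-3)(n-4)}{2(q-1)^2},
\end{equation}
which is the claimed bound. The only step needing care is the first one, showing that a weight-$2$ error with the same syndrome forces a weight-exactly-$5$ codeword containing $e$ as a restriction. This follows immediately from the distance, so I expect no real obstacle; the rest is the standard MDS count.
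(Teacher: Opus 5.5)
Your proof is correct and takes the same approach as the paper, which cites the distance-$4$ argument: count the $10(q-1)\binom{n}{5}$ weight-$3$ restrictions of weight-$5$ codewords using the MDS weight distribution, then divide by the $(q-1)^3\binom{n}{3}$ weight-$3$ errors. Your argument that $e-e'$ must be a codeword of weight exactly $5$ with $e$ as a restriction spells out the step the paper leaves implicit. One small aside: distinct weight-$5$ codewords cannot share a restriction to the same three coordinates, since their difference would be a nonzero codeword of weight at most $4$. So the restriction map is injective and your bound is actually an equality.
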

\begin{proof}
    This follows by the same logic as Claim~\ref{claim:d=4_list_decoding}, noting that the number of weight-$3$ restrictions of weight-$5$ codewords is
    \begin{equation}
        10(q-1)\begin{pmatrix}
            n\\5
        \end{pmatrix},
    \end{equation}
    and that the number of weight-$3$ errors is $(q-1)^3\begin{pmatrix}
        n\\3
    \end{pmatrix}$.
\end{proof}
If a weight-$3$ error shares a syndrome with a weight-$2$ error, that weight-$3$ error is considered a space-like failure, because the outer code decoder will always give the correction as the weight-$2$ error.

On the other hand, on the distance $5$ code, a weight-$3$ error may also share its syndromes with a distinct weight-$3$ error. Like for the weight-$2$ errors on the distance $4$ code, we wish to consider a quantity capturing the fraction of weight-$3$ errors that share a syndrome with exactly $k$ distinct weight-$3$ errors. That is, we define
\begin{multline}
    F_{3\to k\times3}(n,d=5,\balpha) \coloneq \\\frac{\left|\left\{ e \in \mathbb{F}_q^n:\; |e| = 3, \;H_Xe = H_Xe', \text{ for exactly } k\text{ distinct } e' \in \mathbb{F}_q^n\text{ with } \;|e'| = 3, \;e' \neq e\right\}\right|}{(q-1)^3\begin{pmatrix}
        n\\3
    \end{pmatrix}}.
\end{multline}
Again, we choose explicit $\balpha$ and estimate these fractions in a computer for the choice of $\balpha$; see Subsection~\ref{subsec:evaluation_points}.

Weight-$4$ errors are taken as space-like failures on the distance $5$ code.

\paragraph{Distance 6 Code}
For the distance $6$ code, via essentially the same argument as that in Claims~\ref{claim:d=4_list_decoding} and~\ref{claim:d=5_list_decoding}, we have that the fraction of weight-$3$ errors that are uncorrectable (have the same syndrome as a distinct weight-$3$ error) is at most
\begin{equation}
    \frac{(n-3)(n-4)(n-5)}{6(q-1)^2}.
\end{equation}
We similarly define
\begin{multline}
    F_{3\to k\times3}(n,d=6,\balpha) \coloneq \\\frac{\left|\left\{ e \in \mathbb{F}_q^n:\; |e| = 3, \;H_Xe = H_Xe', \text{ for exactly } k\text{ distinct } e' \in \mathbb{F}_q^n\text{ with } \;|e'| = 3, \;e' \neq e\right\}\right|}{(q-1)^3\begin{pmatrix}
        n\\3
    \end{pmatrix}}.
\end{multline}
At our alphabet size, and the lengths we are interested in, we simply take a weight-$4$ error to be a space-like failure on the distance $6$ code, because such an error generically shares a syndrome with many distinct weight-$4$ errors.

\paragraph{Distance 7 Code} For the distance $7$ code, via similar reasoning to the above, the fraction of weight-$4$ errors that share a syndrome with a weight-$3$ error (and so are certainly uncorrectable) is at most
\begin{equation}
    \frac{(n-4)(n-5)(n-6)}{6(q-1)^3}.
\end{equation}
We similarly define the quantity
\begin{multline}
    F_{4\to k\times 4}(n,d=7,\balpha) \coloneq \\\frac{\left|\left\{ e \in \mathbb{F}_q^n:\; |e| = 4, \;H_Xe = H_Xe', \text{ for exactly } k\text{ distinct } e' \in \mathbb{F}_q^n\text{ with } \;|e'| = 4, \;e' \neq 4\right\}\right|}{(q-1)^4\begin{pmatrix}
        n\\4
    \end{pmatrix}}.
\end{multline}
Weight-$5$ errors are taken as space-like failures on the distance $7$ code.

\paragraph{Distance 8 Code} Again, via the same reasoning, for the distance $8$ code, the fraction of weight-$4$ errors that share a syndrome with another weight-$4$ error is
\begin{equation}
    \frac{(n-4)(n-5)(n-6)(n-7)}{24(q-1)^3}.
\end{equation}
We do not attempt to estimate a quantity $F_{4\to k \times 4}(n,d=8,\balpha)$, since reliably estimating these quantities becomes very computationally expensive. We therefore take the pessimistic assumption that any weight-$4$ error sharing a syndrome with a distinct weight-$4$ error on the distance $8$ code is a space-like failure. 

Similarly, estimating a quantity $F_{5\to k \times 5}(n,d=8,\balpha)$ would be very computationally expensive, and so we again take the pessimistic assumption that any weight-$5$ error definitely causes space-like failure if it shares a syndrome with any distinct error with weight at most $5$.\footnote{This is not a pessimistic assumption if it shares a syndrome with an error of weight $3$ or $4$, but it is if it shares a syndrome with a distinct error of weight $5$, since in this case the correct weight-$5$ error might be correctly picked from the list.} With this, we can upper bound the fraction of uncorrectable weight-$5$ errors by considering the number of weight-$8$, $9$ and $10$ codewords in a classical Reed-Solomon code of length $n$ and distance $8$, which are
\begin{align}
    W_{8,n,d=8} &= (q-1)\begin{pmatrix}
        n\\8
    \end{pmatrix}\\
    W_{9,n,d=8} &= (q-1)(q-8)\begin{pmatrix}
        n\\9
    \end{pmatrix}\\
    W_{10,n,d=8} &= (q-1)(q^2-9q+36)\begin{pmatrix}
        n\\10
    \end{pmatrix},
\end{align}
since Reed-Solomon codes are MDS codes, which have fully determined weight distributions~\cite{macwilliams1977theory}. The number of restrictions of such codewords to five coordinates is then at most
\begin{equation}
    56\cdot W_{8,n,d=8} + 126\cdot W_{9,n,d=8} + 252\cdot W_{10,n,d=8},
\end{equation}
since $\begin{pmatrix}
    8\\5
\end{pmatrix} = 56$, and so on. The fraction of uncorrectable weight-$5$ errors is therefore at most
\begin{equation}
    \frac{56\cdot W_{8,n,d=8} + 126\cdot  W_{9,n,d=8} + 252\cdot W_{10,n,d=8}}{(q-1)^5\begin{pmatrix}
        n\\5
    \end{pmatrix}}.
\end{equation}
We always take a weight-$6$ error to be a space-like failure.

\paragraph{Distance 9 Code} On the distance $9$ code, we take any weight-$6$ error to be a space-like failure, and a weight-$5$ error sharing its syndrome with any distinct error of weight at most $5$ to be a space-like failure. For the latter, via similar reasoning to the above, the fraction of uncorrectable weight-$5$ errors is at most
\begin{equation}
    \frac{126\cdot W_{9,n,d=9} + 252\cdot W_{10,n,d=9}}{(q-1)^5\begin{pmatrix}
        n\\5
    \end{pmatrix}},
\end{equation}
where
\begin{align}
    W_{9,n,d=9} &= (q-1)\begin{pmatrix}
        n\\9
    \end{pmatrix}\\
    W_{10,n,d=9} &= (q-1)(q-9)\begin{pmatrix}
        n\\10
    \end{pmatrix}.
\end{align}
 
\subsubsection{The Probability of Weight-\texorpdfstring{$e$}{} Errors}\label{subsubsec:prob_weight_e_errors}    

In order to upper bound the probability of space-like failure on one outer code block, on one self-consistent round of $X$ checks, the other major component of our calculation is to upper bound the probability that the $Z$ error during the self-consistent round has weight $e$, for various integers $e$.\footnote{As always, the notion of weight means ``block weight'' or ``qudit weight'', meaning, on how many gross codes the $Z$ error is supported.} 

In more detail, we recall that when treating space-like failure, we assume time-like failure has not occurred (we have bounded its probability in Subsection~\ref{subsec:time_like_failures}, and so can assume it has not occurred up to that probability), and therefore from this self-consistent round of $X$ checks we have extracted the actual syndrome for a $Z$ error on the data just before or during the given self-consistent round of $X$ checks. As justified above (see Subsubsection~\ref{subsubsec:overview_slf}), we make the simplification that the syndrome corresponds to a unique error on the data just before or during the self-consistent round. We want to calculate the probability that this error has weight $e$.

To do this, we must consider all the contributions to the given $Z$ error over time. Now, assuming that time-like failure and space-like failure have not occurred on this outer code block in the past, the $Z$ error on this outer code block may be considered to have been removed from that code block at some instant during the previous self-consistent round of $X$ checks, and all of the $Z$ errors have accumulated since then. 

Notationally, it is convenient to give a name to the moment just before, or during, the current self-consistent round of $X$ checks where the $Z$ error on the data has had its syndrome extracted by the self-consistent round. That is, we say that the syndrome extracted by the present self-consistent round of $X$ checks is the syndrome for the $Z$ error on the data at time $t_i$. This is one of the times just before one of the $d-1+M$ $X$ checks in the self-consistent round. Such a time exists, because we assume time-like failure has not occurred.\footnote{There will be multiple such moments; one may pick $t_i$ to be one of these moments arbitrarily, say, the first one.} Similarly, we say that the syndrome extracted by the previous self-consistent round of $X$ checks is the syndrome for the $Z$ error on the data at time $t_{i-1}$.\footnote{Again, we can pick an arbitrary suitable time, say, the last one.} The point is that all the $Z$ errors that contribute to the syndrome extracted at time $t_i$ come from operations between time $t_{i-1}$ and time $t_i$.

The $Z$ errors contributing to the syndrome extracted at time $t_i$ come from the following places:
\begin{itemize}
    \item Operations in the consumptions of cat states during the previous self-consistent round, but after time $t_{i-1}$;
    \item Gross codes idling for a long time between syndrome extraction rounds;
    \item Errors in teleportations as the ancillary system moves through the codeblock;
    \item All measurements of $Z$ checks, including those that failed to produce a self-consistent round. This includes $Z$ errors in those cat states that could have spread to the data;\footnote{We emphasise that $Z$ errors created in cat states for $X$ checks cannot spread to the data. Similarly, $X$ errors created in cat states for $Z$ checks cannot spread to the data. These both only create measurement errors. On the converse, $Z$ errors on accepted cat states for the measurement of $Z$ checks spread to the data, and $X$ errors on accepted cat states for the measurement of $X$ checks spread to the data.
    }
    \item Operations in the consumption of cat states that failed to produce a self-consistent round of $X$ checks;
    \item Operations in the consumption of cat states in the self-consistent round of $X$ checks corresponding to time $t_i$, but before time $t_i$.
\end{itemize}
We neglect the first and last points, except for the very last cat state consumption in the first point (that is, the very last cat state consumption in the previous self-consistent round). The reason for this is that $Z$ errors created during a self-consistent round of $X$ checks almost always make the round not self-consistent at all. Interestingly, this argument can be formalised in almost exactly the same way we formalise our Claims~\ref{claim:tlf_case_1_prob} and~\ref{claim:tlf_case_2_prob}, but we omit such a formalisation for brevity. In words, it is very unlikely that a $Z$ error occurring on anything but the last cat state consumption in a round of $X$ checks can keep the round self-consistent. More qualitatively, the probabilities of these events are suppressed in factors of $\frac{1}{q-1}$, like the probabilities of Case $1$ and $2$ time-like failure, so their contribution to the weight of the $Z$ error at time $t_i$ is sub-leading.

Next, it is useful to identify all noise sources that are always the same between times $t_{i-1}$ and $t_i$, that is, they do not change adaptively. These are:
\begin{itemize}
    \item The final cat state consumption of the previous self-consistent round;
    \item The idling operations;
    \item Teleportation procedures.
\end{itemize}
These noise sources are always the same between time $t_{i-1}$ and $t_i$, and so are called the ``static'' noise sources. They all contribute noise transversally to the codeblock, and so they create a total probability of $Z$ error on each qudit of at most
\begin{equation}
    p_{\text{Static}} \coloneq \mathbb{P}\left[ZZ\right]\footnote{Recall that the qudit $XX$ and qudit $ZZ$ measurements may be treated in the same way. The probability of one of them having any fault is denoted $\mathbb{P}\left[ZZ\right]$.} + N_{\text{Outer Round}}\cdot \mathbb{P}\left[\text{idle}\right]\footnote{As mentioned in Subsubsection~\ref{subsubsec:expected_time_create_cat}, the time for one code cycle when gross codes are sitting at idle is always taken to be the same number of rounds, and so is counted as a non-adaptive/static noise source here. Strictly speaking, this noise source is adaptive because the amount of time at idle depends on how many other cat states and rounds get rejected. However, as discussed there, we take the time between syndrome extraction rounds on one block to be fixed at its mean for simplicity, which is a justified approximation given the large number of cat states and rounds between two outer syndrome extractions.} + 6\cdot\mathbb{P}\left[ZZ\right].
\end{equation}
Here, the three terms correspond to the final cat state consumption of the previous self-consistent round, idling, and teleportations, respectively. We have that $\mathbb{P}\left[\text{idle}\right]$ is the probability of error in a gross code idling, and we recall $N_{\text{Outer Round}}$ from Subsubsection~\ref{subsubsec:expected_time_create_cat}. 

Now, consider some set of qudits labelled by $A \subseteq [n]$ of size $|A| = a$. The probability that the static noise sources create $Z$ errors exactly on $A$ is at most
\begin{equation}
    \left(p_{\text{Static}}\right)^a.
\end{equation}
However, we also need to consider the non-static or ``adaptive'' noise sources, that is, those that can change based on how many other errors occur. Concretely, these are the measurements of $Z$ checks, as well as rounds of $X$ checks that failed to be self-consistent. 

We begin with the measurement of $Z$ checks. In analysing these, note we do not consider attempted rounds of $Z$ checks that were rejected for inconsistency while not creating a $Z$ error on the data block, since these do not create a larger $Z$ error, but have sub-leading probability.\footnote{Notice that rounds that get rejected while not creating more errors on the data do still affect the analysis in that they increase the mean time to achieve a self-consistent round of checks, thereby increasing $N_{\text{Outer Round}}$, see Equation~\eqref{eq:expected_round_attempts} and the surrounding analysis.} Further, we make the pessimistic assumption for simplicity that whenever an error is created on the data by an attempted round of $Z$ checks, that round is always declared inconsistent, and thus must be retried.\footnote{This is strictly pessimistic because a pure $Z$ error created on the data would not cause the round to be inconsistent.} Finally, we make the pessimistic assumption that rounds are always found to be inconsistent at the very end (in reality they could be exited early).

We can start by recalling Subsection~\ref{subsec:ler_single_cat}, in particular Subsubsection~\ref{subsubsec:ft_of_qudit_cat}. We recall that an accepted qudit cat state (for measuring $Z$ checks\footnote{This is analogous to the analysis done there for $X$ errors on cat states used for measuring $X$ checks. $X$ errors on cat states used for measuring $X$ checks spread to the data during cat state consumption. Similarly, $Z$ errors on cat states used for measuring $Z$ checks spread to the data during cat state consumption.}) has a $Z$ error on each of its qudits with probability at most $\mathbb{P}\left[ZZ, \mathcal{C}_2\right]$,\footnote{Recall that the story is different when thinking about very large errors on the accepted qudit cat state, that is, errors of weight $>R$ (which requires a fault in the initial non-fault-tolerant preparation of the cat state to go uncaught). However, these are considered failures of the outer code on their own, and are accounted for separately in Subsubsection~\ref{subsubsec:bounding_prob_slf}. Specifically, Equation~\eqref{eq:p_cat_state_failure} accounts for all events where a fault in the initial non-fault-tolerant preparation goes uncaught.} which are then transferred to the code block upon the consumption of the cat state. In addition, when the cat state is consumed, $Z$ errors are created on each qudit with probability at most $\mathbb{P}\left[ZZ\right]$, so that for a single $Z$ check measurement, a $Z$ error is created on each qudit with probability at most
\begin{equation}
    p_{Z, \text{Meas}}\coloneq \mathbb{P}\left[ZZ, \mathcal{C}_2\right] + \mathbb{P}\left[ZZ\right].
\end{equation}
Then, over an attempted round of $d-1+M$ checks, a $Z$ error is created on each qudit with probability at most
\begin{equation}
    p_{Z, \text{Round}} \coloneq (d-1+M)\cdot p_{Z, \text{Meas}}.
\end{equation}
Now fix some particular subset of the qudits labelled by some subset $B \subseteq [n]$ of size $|B| = b$.\footnote{There are $\begin{pmatrix}
    n\\b
\end{pmatrix}$ such subsets, but just pick one for now.} We aim to upper bound the probability that, in attempting to obtain a self-consistent round of $Z$ checks, errors are created exactly on $B$. To leading order, this requires $\leq b$ attempted rounds of $Z$ checks to be declared inconsistent, and exactly one $Z$ error to be created on each qudit of $B$ over the attempted rounds. We must account for all ways that errors can be created on the qudits of $B$ over the attempted rounds of $Z$ checks. Thus, the probability that $Z$ errors are created exactly on $B$ while we attempt to measure $Z$ checks is at most
\begin{equation}
    B_b(p_{Z,\text{Round}})^b,
\end{equation}
where $B_b$ is the number of ways that the set $\{1, \ldots, b\}$ can be partitioned into ordered subsets. $B_b$ are called the ordered Bell numbers (or Fubini numbers)~\cite{oeisA000670},\footnote{$B_b$ admits an alternative, but equivalent definition as the number of weak orderings on a set of $b$ elements~\cite{wiki_ordered_bell_number}, but the definition in terms of partitions is more appropriate for the present purpose.} and the first few are shown in Table~\ref{tab:ordered_bell}.
\begin{table}[ht]
\centering
\begin{tabular}{c|rrrrrrrr}
\hline
$b$ & $0$ & $1$ & $2$ & $3$ & $4$ & $5$ & $6$ & $7$ \\
\hline
$B_b$ &$1$ & $1$ & $3$ & $13$ & $75$ & $541$ & $4683$& $47293$ \\
\hline
\end{tabular}
\caption{The first few ordered Bell numbers.}
\label{tab:ordered_bell}
\end{table}

For some more explanation, consider the case of $b=2$, where we want to upper bound the probability that we create a $Z$ error on two particular qudits of the codeblock, call them qudits $i$ and $j$, while attempting to measure the $Z$ checks. We can either create $Z$ errors on $i$ in the first attempted round and $j$ in the second attempted round, $j$ in the first attempted round and $i$ in the second attempted round, or both $i$ and $j$ in the first attempted round, with other events being sub-leading, giving the upper bound $3(p_{Z,\text{Round}})^2$. Appropriately, $B_3 = 2$ because the number of ways to partition the set $\{1, 2\}$ into ordered subsets is $3$, namely $(\{1\}, \{2\}), (\{2\}, \{1\})$, and $(\{1,2\})$.

The same logic allows us to treat the $Z$ errors created by attempted rounds of $X$ checks that did not end up being self-consistent. Noting that $Z$ errors on accepted cat states used for the measurement of $X$ checks do not spread to the data, upon the measurement of a single $X$ check, a $Z$ error is created on each qudit with probability at most
\begin{equation}
    p_{X,\text{Meas}} \coloneq \mathbb{P}\left[ZZ\right],
\end{equation}
arising from the consumption of the cat state only. Over an attempted round of $d-1+M$ checks, a $Z$ error is created on each qudit with probability at most
\begin{equation}
    p_{X,\text{Round}} \coloneq (d-1+M)\cdot p_{X,\text{Meas}},
\end{equation}
and we go on to find that the probability that $Z$ errors are created exactly on some subset $C \subseteq [n]$ of size $|C| = c$, due to inconsistent rounds of $X$ checks, is at most
\begin{equation}
    B_c\left(p_{X,\text{Round}}\right)^c.
\end{equation}
It only remains to combine the contributions from the static noise sources, the rounds of $Z$ checks, and the inconsistent rounds of $X$ checks. Consider some subset of qudits $T \subseteq [n]$ of size $|T| = e$. What is the probability there are $Z$ errors exactly on $T$? We can consider non-negative integers $a,b,c$ such that $a+b+c = e$. $a$ will be the number of qudits with $Z$ errors created by static sources, $b$ the number of qudits with $Z$ errors created by rounds of $Z$ checks, and $c$ the number of qudits with $Z$ errors created by failed rounds of $X$ checks. There are $\frac{e!}{a!b!c!}$ ways to assign the qudits in $T$ to each of these sources, and given such an assignment, the qudits in $T$ get $Z$ errors on them with probability at most $\left(p_{\text{Static}}\right)^aB_b\left(p_{Z,\text{Round}}\right)^bB_c\left(p_{X,\text{Round}}\right)^c$. Thus, the probability that $Z$ errors are created exactly on the subset $T$ is at most
\begin{equation}
    \sum_{\substack{a+b+c=e\\a,b,c\geq 0}}\frac{e!}{a!b!c!}\left(p_{\text{Static}}\right)^aB_b\left(p_{Z,\text{Round}}\right)^bB_c\left(p_{X,\text{Round}}\right)^c.
\end{equation}
By a union bound, over all subsets $T$ of size $e$, the probability of a weight-$e$ $Z$ error at time $t_i$ is found to be at most
\begin{equation}
    \begin{pmatrix}
        n\\e
    \end{pmatrix}\sum_{\substack{a+b+c=e\\a,b,c\geq 0}}\frac{e!}{a!b!c!}\left(p_{\text{Static}}\right)^aB_b\left(p_{Z,\text{Round}}\right)^bB_c\left(p_{X,\text{Round}}\right)^c.
\end{equation}

\subsubsection{Bounding the Probability of Space-Like Failure}\label{subsubsec:bounding_prob_slf}

In the previous two subsections, we have argued about the effects of list decoding, which tells us about the fraction of each error weight that we are able to correct, and then we argued about the probability that we end up with an error of each weight on the data.

To upper bound the probability of space-like failure, we start by considering cases where $>R$ errors get spread to the data in one cat state consumption. Recalling Subsubsection~\ref{subsubsec:ft_of_qudit_cat}, for one cat state, we recall that we found the probability that $>R$ errors spread to the data in Equation~\eqref{eq:cat_state_failure_prob}, and let us call that expression $p_{\text{Cat State Failure}}$. This was the probability that any fault in the initial non-fault-tolerant checking goes uncaught. Between time $t_{i-1}$ and $t_i$, the probability that any fault in the initial non-fault-tolerant preparation of a cat state goes uncaught is at most
\begin{equation}\label{eq:p_cat_state_failure}
    2(d-1+M)\cdot p_{\text{Cat State Failure}},
\end{equation}
to leading order. This expression is then taken as a contribution to the overall space-like failure probability, but in the remainder of the space-like failure calculation, we may neglect cases where a fault in the non-fault-tolerant cat state preparation goes uncaught.

We want to make a mixing argument now to calculate the probability of space-like failure not resulting from high-weight errors spreading to the data in a cat state consumption. It turns out that, with our mixing arguments, we may treat the qudit errors as uniform from the point of view of space-like failure.
\begin{claim}\label{claim:slf_uniform_errors}
    In expectation over a uniformly random choice of qudit-to-qubit mappings, the probability of any particular $Z$ error at time $t_i$ is what one would obtain if qudit $Z$ errors occurred uniformly.
\end{claim}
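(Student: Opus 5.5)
The plan is to exploit the fact that the qudit-to-qubit mappings $B_1,\ldots,B_n$ are chosen independently and uniformly. For each $i$, choosing $B_i$ amounts to applying a uniformly random invertible linear map $\mathcal{D}_{B_i}$ to the relevant $s$ qubits, i.e.\ a uniformly random $\mathsf{CNOT}$ circuit. The underlying qubit-level processes (the bicycle instructions, idling, teleportation, cat state consumption) have a fixed logical error distribution on the $s$ non-pivot qubits of each gross code, independent of $\mathcal{B}$. What changes with $\mathcal{B}$ is only the dictionary between a qubit $Z$-type error $Z^v$, $v\in\mathbb{F}_2^s$, and the qudit error $Z^\lambda$, $\lambda\in\mathbb{F}_q$. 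The first step is to record the key lemma, already used in the proof of Claim~\ref{claim:tlf_case_1_prob}: for fixed nonzero $v$, as $B_i$ ranges uniformly, the qudit value $\lambda$ corresponding to $Z^v$ is uniform on $\mathbb{F}_q^*$. This holds because $\mathrm{GL}_s(\mathbb{F}_2)$ acts transitively on nonzero vectors, and the stabiliser cosets have equal size.

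The second step is to write the probability of a particular qudit error $e=(\lambda_1,\ldots,\lambda_n)\in\mathbb{F}_q^n$ at time $t_i$ as a sum over qubit-level error configurations $(v_1,\ldots,v_n)$ with $v_j\in\mathbb{F}_2^s$:
\[
\mathbb{P}[e\mid\mathcal{B}]=\sum_{(v_j)}\mathbb{P}_{\mathrm{qubit}}[(v_j)]\prod_{j}\mathbf{1}\bigl[\lambda_j \leftrightarrow_{B_j} v_j\bigr].
\]
Here $\lambda_j=0$ corresponds exactly to $v_j=0$ for every basis. Taking the expectation over independent $B_j$, linearity and independence factor the indicator expectations. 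Each factor is $1$ when $v_j=0=\lambda_j$, is $1/(q-1)$ when both $v_j$ and $\lambda_j$ are nonzero, and is $0$ otherwise. Hence
\[
\mathbb{E}_{\mathcal{B}}\mathbb{P}[e\mid\mathcal{B}]=\frac{\mathbb{P}_{\mathrm{qubit}}[\mathrm{supp}=\mathrm{supp}(e)]}{(q-1)^{|e|}},
\]
which is the statement that, given the support, the nonzero values are independent and uniform on $\mathbb{F}_q^*$.

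The main obstacle is justifying that $\mathbb{P}_{\mathrm{qubit}}$ does not itself depend on $\mathcal{B}$. The error accumulation between $t_{i-1}$ and $t_i$ is adaptive: rounds of checks are rejected and retried depending on syndromes, and the compilation of the $Z^\alpha Z^\beta$ measurements depends on $\mathcal{B}$. I would handle this in two ways. First, I would restrict to the leading-order model of Subsubsection~\ref{subsubsec:prob_weight_e_errors}, where acceptance and rejection is treated pessimistically and independently of the error values. Second, I would note that the qudit $ZZ$/$XX$ measurements and idling used on the data are basis-independent operations, since they are transversal qubit operations within each column. The support distribution is then a fixed quantity bounded as in that subsubsection. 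I would also have to treat the conditioning on no time-like failure, arguing as before that it is sub-leading or absorbed into the union bound. With these pieces, the claim follows.
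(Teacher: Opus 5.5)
Your proposal is correct and follows essentially the paper's argument: a uniformly random $B_j$ sends each fixed nonzero qubit error pattern to a uniformly random element of $\mathbb{F}_q^*$, so averaging the probability of $Z^{\vec{\eta}}$ over $\mathcal{B}$ gives a factor $1/(q-1)$ for each nonzero coordinate. Your version is more careful than the paper's, which does not spell out the factorisation over independent $B_j$ or why the qubit-level error process is independent of $\mathcal{B}$. The paper also writes $\mathbb{P}\left[|\vec{\eta}|\right]$, the probability of any error of that weight, where your support-conditioned probability is the exact quantity; the weight probability only upper-bounds it.
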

\begin{proof}
    Consider any of our codes, and the probability that the error $Z^{\vec{\eta}}$ is the error on the data at time $t_i$, for some $\vec{\eta}\in \mathbb{F}_q^n$, where $\vec{\eta} = (\eta_1, \ldots, \eta_n)$. Given a choice of qudit-to-qubit mappings $\mathcal{B}$, the probability of this may be written
    \begin{equation}
        \mathbb{P}\left[\vec{\eta}\;|\;\mathcal{B}\right].
    \end{equation}
    Note that this is meaningful because the distribution of $Z$ errors on the data at time $t_i$ is always the same.\footnote{The distribution is always the same assuming a time-like failure/space-like failure has not occurred in the past on the given outer codeblock, since in that case all $Z$ errors accumulated on the outer codeblock over one of its rounds.} Similarly to what was discussed for the time-like failures in Subsection~\ref{subsec:time_like_failures}, the qudit-to-qubit mappings $B_i$ governs which qubit error may take place in order for the corresponding qudit error to be $\eta_i$. Taking the expectation over $\mathcal{B}$ gives us
    \begin{equation}
        \mathbb{E}_{\mathcal{B}}\left\{\mathbb{P}\left[\vec{\eta}\;|\;\mathcal{B}\right]\right\} = \frac{1}{(q-1)^{|\vec{\eta}|}}\mathbb{P}\left[|\vec{\eta}|\right],
    \end{equation}
    since if $\eta_i \neq 0$, taking the expectation over $B_i$ mixes $\eta_i$ uniformly over its $q-1$ possibilities. Note that, here, $\mathbb{P}\left[|\vec{\eta}|\right]$ is the probability of having any weight $|\vec{\eta}|$ $Z$ error on the data at time $t_i$. This is the same expression that would be obtained for having the error $\vec{\eta}$ on the data at time $t_i$ if logical errors were uniform.
\end{proof}
In expectation over the choice of $\mathcal{B}$, the calculation of the space-like failure probability becomes straightforward with the ingredients already calculated, because logical errors on gross codes and their operations may be treated as uniform under $\mathbb{E}_{\mathcal{B}}$. For example, consider weight-$3$ errors on the distance $5$ code. If a weight-$3$ error has the same syndrome as a weight-$2$ error, space-like failure certainly occurs, because the weight-$3$ error is always mistaken for the weight-$2$ error. Let the set of weight-$3$ qudit errors sharing a syndrome with a weight-$2$ error be $E$. Then, the probability of having any error in $E$ on the code block at time $t_i$ is at most
\begin{equation}
    \sum_{\vec{\eta} \in E}\mathbb{P}\left[\vec{\eta}\;|\;\mathcal{B}\right].
\end{equation}
Taking the expectation over $\mathcal{B}$ then makes this equal to (using linearity of expectation)
\begin{equation}
    \frac{|E|}{(q-1)^3}\cdot\mathbb{P}\left[3\right] \leq \frac{(n-3)(n-4)}{2(q-1)^2}\cdot\mathbb{P}\left[3\right],
\end{equation}
recalling Claim~\ref{claim:d=5_list_decoding}. This is the contribution to the overall probability of space-like failure of weight-$3$ errors that share a syndrome with a weight-$2$ error, in expectation over the uniformly random choice of $\mathcal{B}$. More generally, we find that the probabilities of various events causing space-like failure, such as the occurrence of a weight-$3$ $Z$ error that has the same syndrome as a weight-$2$ error on the distance $5$ code, as characterised by the set of errors $E$, may be calculated assuming a uniform logical error distribution under $\mathbb{E}_{\mathcal{B}}$, by simply applying Claim~\ref{claim:slf_uniform_errors} and the linearity of expectation.

For another example, weight-$3$ errors can also be mistaken for distinct weight-$3$ errors. If the weight-$3$ error has the same syndrome as one or more distinct weight-$3$ errors, the decoder chooses one at random. If the weight-$3$ error shares its syndrome with $k$ distinct weight-$3$ errors, the probability of a failure in this case is then $\frac{k}{k+1}$. Using a similar argument, in expectation over $\mathcal{B}$, the total contribution to the space-like failure probability from weight-$3$ errors being mistaken for other weight-$3$ errors is at most
\begin{equation}
    \sum_{k=1}^\infty \frac{k}{k+1}\cdot F_{3\to k\times 3}(n,d=5,\balpha)\cdot\mathbb{P}\left[3\right].
\end{equation}
The same arguments apply for other outer code distances for calculating the probability of space-like failure in expectation over the mixing parameter $\mathcal{B}$. In all cases, calculations become straightforward since qudit $Z$ errors become uniform under $\mathbb{E}_{\mathcal{B}}$.

\subsection{Handling All Parameter Choices and Failure Modes Together}\label{subsec:all_together}

It remains to handle all of our failure modes and parameter choices together. Let us be exact about the order of parameter choices for the protocol. First, $n$ and $d$ are chosen. Then, evaluation points $\balpha$ are chosen; see Subsection~\ref{subsec:evaluation_points}. Then, a choice of mixing parameters can be made, namely the qudit-to-qubit mappings $\mathcal{B}$, and choices of $\nu$ for the $X$ checks and $Z$ checks, determining the linear combinations measured in the latter set of $M$ checks in a round of $d-1+M$ checks.\footnote{Strictly speaking, there is a mixing parameter $\nu$ for time-like failure on $X$ checks, and another $\nu$ for time-like failure on $Z$ checks, although the discussion for the two is identical.}

Our protocol can fail due to time-like failure on the $X$ or $Z$ checks, as well as space-like failure on $X$ or $Z$ checks. Consider any two events that cause failure of one block of the outer code during one of its rounds (that is, space-like failure or time-like failure on an adjacent set of $X$ and $Z$ checks). The following argument simply applies a union bound, followed by linearity of expectation, and may be extended to any number of events:
\begin{multline}
    \mathbb{E}_{\text{Mixing Parameters}}\left\{\mathbb{P}\left[\text{Failure Event 1 Or Failure Event 2}\right]\right\}\\ \leq \mathbb{E}_{\text{Mixing Parameters}}\left\{\mathbb{P}\left[\text{Failure Event 1}\right] + \mathbb{P}\left[\text{Failure Event 2}\right]\right\}\\
    = \mathbb{E}_{\text{Mixing Parameters}}\left\{\mathbb{P}\left[\text{Failure Event 1}\right]\right\} + \mathbb{E}_{\text{Mixing Parameters}}\left\{\mathbb{P}\left[\text{Failure Event 2}\right]\right\}.
\end{multline}
In previous subsections, we upper bounded these separate quantities on the right-hand side, that is, the probabilities of various events causing outer code failure, in expectation over mixing parameters. By the union bound, and the linearity of expectation, it follows that there is a single choice of mixing parameters such that the probability of all failure modes together (the left-hand side) is at most the sum of all these individually upper-bounded values on the right-hand side.

\clearpage

\section{Footprint Estimates}\label{sec:footprint_estimates}

We now wish to consider the overhead of our protocols, and in particular the minimum overhead that can be obtained for a target logical error rate (per logical qubit-round). The parameters of our protocol are
\begin{itemize}
    \item The number of columns in the gross code grid $n$ (this is the length of our codes);
    \item The number of rows in our gross code grid $m$ ($m-2$ of these rows carry a copy of our outer code, and $2$ are used for syndrome extraction);
    \item The distance of our code $d$ (we consider $d = 3, 4, 5, 6, 7, 8, 9$);
    \item The evaluation points of our codes $\balpha$ (these are fixed to values listed in Appendix~\ref{subsec:evaluation_points});
    \item $M$: the number of linear combinations of $X$ checks, or $Z$ checks, that are measured in a round of $X$ checks, or $Z$ checks, after the $d-1$ regular checks;
    \item $R$: the number of times the stabilisers of the cat states are fault-tolerantly checked;
    \item Post-selection strategies $\mathcal{C}_i$ for $i = 1, 2, 3$. Each of these is determined by three post-selection rates $r_{i,\text{half}}, r_{i,\text{whole}}$, and $r_{i,\text{inter}}$ for half-LPU in-module instructions, whole-LPU in-module instructions, and inter-module instructions, respectively.
\end{itemize}
Our aim is to find the minimum-space overhead system achieving a target logical error rate per logical qubit-round which uses at most $500,000$ physical qubits. In order to do this, we start by considering no post-selection, that is, all $\mathcal{C}_i$ are trivial, and consider all $n \in \{20, 21, \ldots, 80\}$, $m \in \{3, 4, 5, \ldots, 50\}$, $d \in \{3,4,5,6,7,8,9\}$, and $M,R \in \{1,2\}$. Then, for each one of these configurations, we attempt to optimise the post-selection strategies to achieve the lowest logical error rate. Note that making the post-selection strategies more stringent is a tradeoff. It decreases the noise spread to the data block through syndrome extraction and can decrease the probability of time-like failure. On the other hand, post-selecting cat states more heavily leads to gross codes sitting in memory for longer, and thus accumulating more error.

Since we use the ancilla system constructions and logical instructions of~\cite{yoder2025tourgrossmodularquantum}, the number of logical qubits in our memory system is  $11\cdot(m-2)\cdot\left[n-2(d-1)\right]$, and the number of physical qubits is $378nm + 44(n-1)(m-1) + 22(n-1) + 22(m-1)$. The sizes of un-concatenated gross and two-gross systems are taken again with the ancilla systems in~\cite{yoder2025tourgrossmodularquantum} assuming a large rectangular-grid connectivity. Below we report the minimum overhead of a system achieving a target logical error rate per logical qubit-round. Note that here the term ``round'' means $8$ physical timesteps, the time to extract a full syndrome (see Table 2 of~\cite{yoder2025tourgrossmodularquantum}). For some context, we also include the minimum overhead of concatenated surface code systems, using the extrapolations in~\cite{gidney2025factor}, and the size of the systems in~\cite{gidney2025yoked}.\footnote{Note that we are using a uniform physical noise model so we do not compare to the extrapolations in the main text of~\cite{gidney2025yoked} which use a different noise model. For the uniform noise model, we consider the most recent extrapolations in~\cite{gidney2025factor}.} The results are shown in Figure~\ref{fig:results_1e-3}.

\begin{figure}[t]
    \centering
    \includegraphics[width=\linewidth]{figures/combined_frontier_1e-3_physical_500000.png}
    \caption{The space overheads of various systems at $10^{-3}$ physical noise. All systems are chosen with the optimal configuration while keeping their size below $500,000$ physical qubits. The space overhead of the concatenated gross code memory crosses over that of the two-gross code at a logical error rate of $4.10\times 10^{-15}$ per logical qubit-round.}
    \label{fig:results_1e-3}
\end{figure}

We find that the concatenated gross system does indeed smooth out the cost discontinuity between the gross and two-gross codes, and provides a better space overhead than the two-gross code down to $4.10\times 10^{-15}$ per logical qubit-round, while offering the numerous engineering benefits discussed in the Overview. This is significant because it can reach the teraquop regime relevant for many large-scale quantum algorithms, which starts around a logical error rate of $10^{-14}$ to $10^{-13}$ per logical qubit-round. This was previously inaccessible to the gross code at this physical noise strength.

\begin{figure}[ht]
    \centering
    \includegraphics[width=\linewidth]{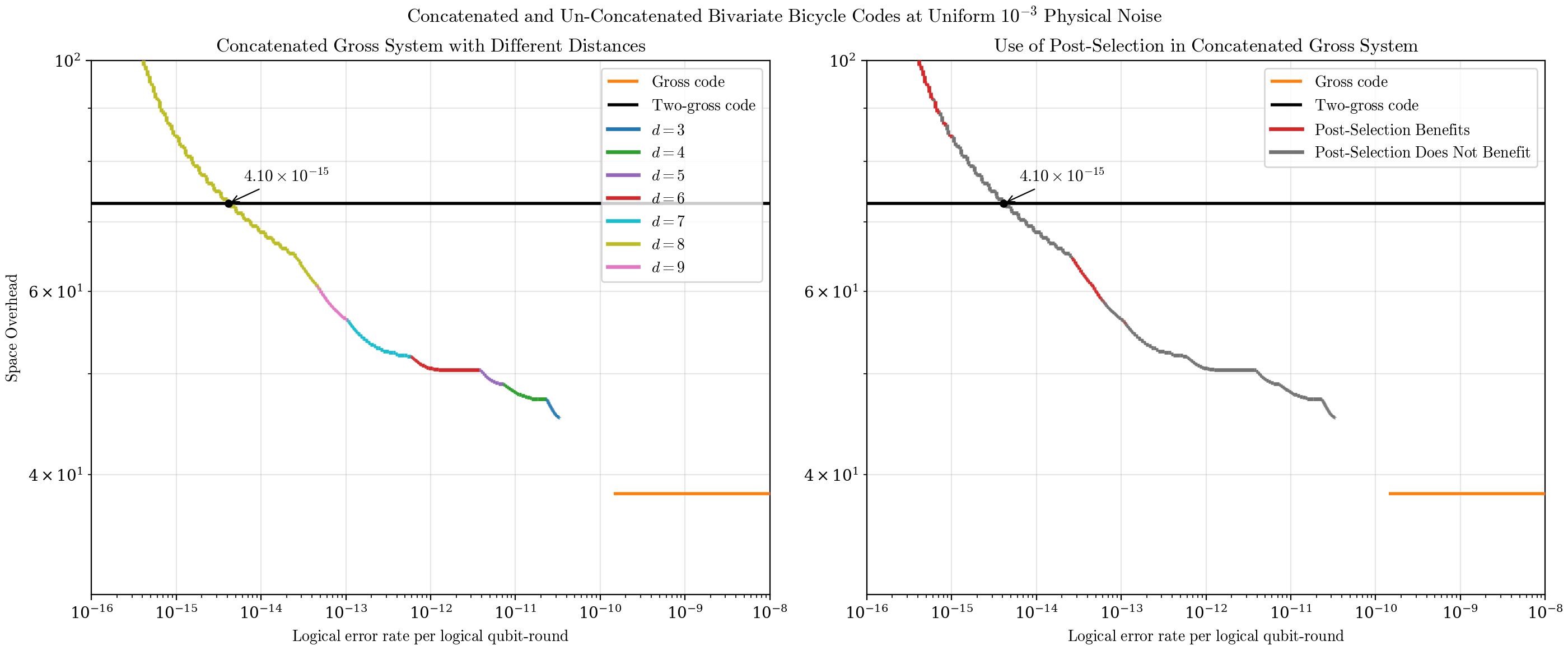}
    \caption{Further information on the performance of the concatenated gross code system at $10^{-3}$ physical noise. The left-hand panel shows the places where outer codes of various distances are used in the optimal system, and the right-hand panel shows the places where the optimal system does (red), and does not (grey), make use of some non-trivial post-selection strategy.}
    \label{fig:1e-3_BB_subplots}
\end{figure}
For additional information just on the BB codes, we also provide Figure~\ref{fig:1e-3_BB_subplots}. Specifically, we show where the optimal concatenated gross system uses an outer code of each distance $d = 3,4,5,6,7,8,9$. We also show where the optimal system does, and does not use some amount of post-selection. On the former, we find that each distance has a regime in which it is optimal; interestingly, the $d=9$ system comes ``before'' the $d=8$ system, which is likely a result of the tradeoff between higher distances giving higher space-like failure error suppression, but also requiring the measurement of more checks. Higher outer code distances than $d=9$ could become useful if one allowed larger system sizes than this. As for the post-selection, at $10^{-3}$ physical noise, we find that the strategies are only used infrequently, and we found that the improvements are modest. The reason for this is clear. At $10^{-3}$ physical noise, the dominant failure mode arises from the fact that the cat states are very slow to prepare, and gross codes must sit at idle for a long time between rounds of syndrome extraction. Post-selection strategies slow down the time to create cat states, and it is usually not worth the improvement in the cat state noise that they yield.

\begin{figure}[ht]
    \centering
    \includegraphics[width=\linewidth]{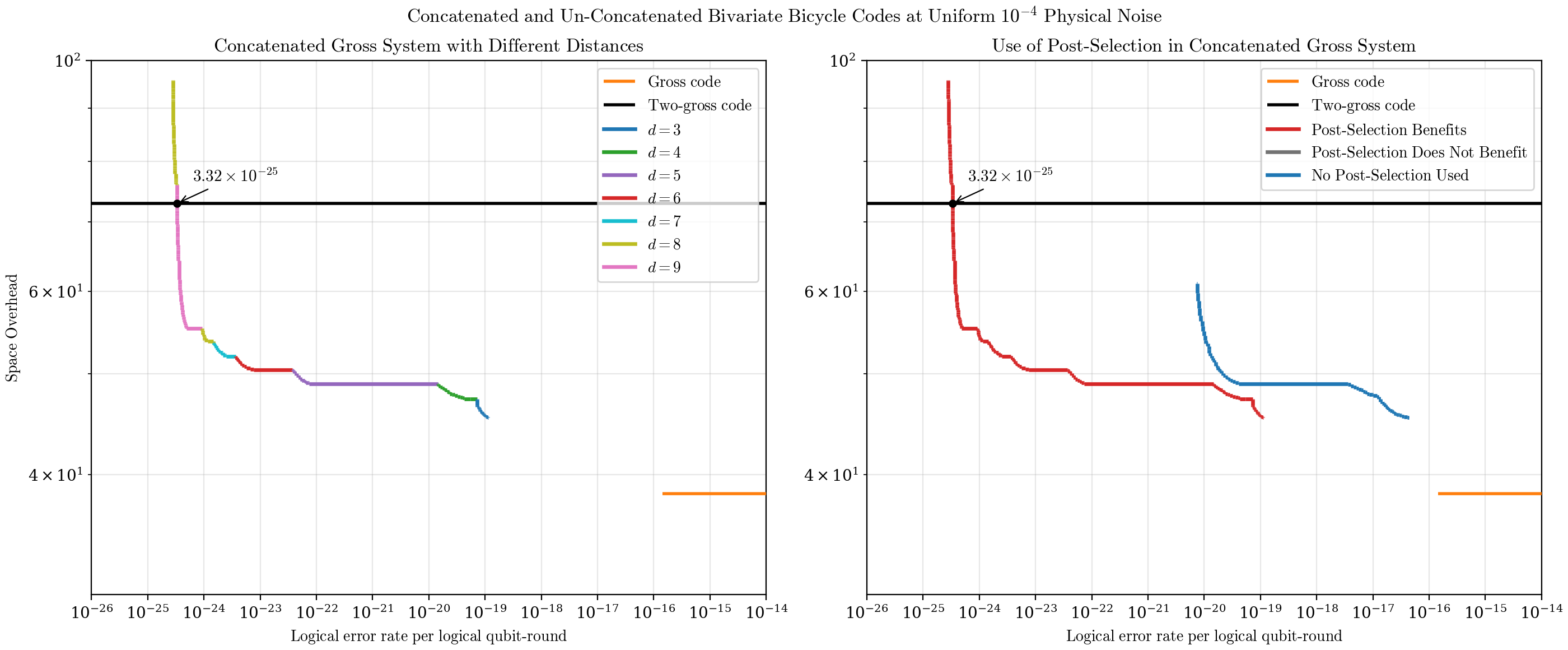}
    \caption{The performance of our scheme at a uniform physical noise strength $10^{-4}$. We show the space overhead of the optimal system at each target logical error rate per logical qubit-round. In the left-hand panel, we show where different outer code distances are used in the optimal system, and in the right-hand panel, we show that some post-selection is used in all optimal systems. We also plot the performance of the scheme if no post-selection were used at all (in blue).}
    \label{fig:1e-4_BB_subplots}
\end{figure}
On the other hand, we can also consider the performance of our scheme at a uniform physical noise of $10^{-4}$, as shown in Figure~\ref{fig:1e-4_BB_subplots}. Here, post-selection is seen to be of considerable benefit, where the right-hand panel shows that the optimal system is always using some post-selection strategy. In that panel, we also show the curve that would be obtained if no post-selection were used; we see that the improvements rendered by the post-selection strategies are large: from $2$ to over $4$ orders of magnitude of logical error rate per logical qubit-round at the same space overhead. As for the left-hand panel, similarly to $10^{-3}$ noise, each outer code distance has a regime in which it is optimal.

Note that in all figures, the space overhead blows up at low-enough logical error rate. Indeed, for a fixed physical error rate, it must do so at some point regardless of the outer code distance used, because at some point the post-selection of the cat states and/or of rounds of checks becomes prohibitive. To go lower, one could use a higher-distance inner code, or improved circuits, instructions or compilation for the inner code (we show such a potential improvement, deferred to future work, in Appendix~\ref{sec:improved_footprint_morphing}).

\section*{Acknowledgements}\label{acknowledgements}
\addcontentsline{toc}{section}{Acknowledgements}

The authors acknowledge enlightening conversations with Isaac Chuang, Aram Harrow, Zhiyang He, Anirudh Krishna, Christopher A. Pattison, Emily Pritchett, John A. Smolin, Theodore J. Yoder, Hengyun Zhou, and Guanyu Zhu. We thank Isaac Chuang for suggesting Figure~\ref{fig:overview_fig}. AW is particularly grateful for the support throughout of fellow IBM interns Kathleen Chang, Louis Golowich, Anasuya Lyons, and Mackenzie H. Shaw.

AW acknowledges funding from NSF grant PHY-2325080. AW acknowledges funding from the MIT-IBM Watson AI Lab. AW acknowledges that this work is supported by the National Science Foundation under Cooperative Agreement PHY-2019786. This pre-print is assigned number MIT-CTP/6017.

This work benefitted from computational resources. AW is grateful to Isaac Chuang for helping secure access to these resources. This research used resources of the National Energy Research Scientific Computing Center, a DOE Office of Science User Facility supported by the Office of Science of the U.S. Department of Energy under Contract No. DE-AC02-05CH11231 using NERSC award NERSC DDR-ERCAP0038585. This work used CPU resources at PSC Bridges-2~\cite{brown2021bridges} through allocation CIS260050 from the Advanced Cyberinfrastructure Coordination Ecosystem: Services \& Support (ACCESS) program~\cite{boerner2023access}, which is supported by U.S. National Science Foundation grants \#2138259, \#2138286, \#2138307, \#2137603, and \#2138296. We acknowledge the MIT Office of Research Computing and Data for providing high-performance computing resources that have contributed to the research results reported within this paper. We acknowledge the MIT SuperCloud~\cite{reuther2018interactive} and Lincoln Laboratory Supercomputing Center for providing HPC resources that have contributed to the research results reported within this paper/report. The computations in this paper were partially run on the FASRC Cannon cluster supported by the FAS Division of Science Research Computing Group at Harvard University. This work is supported by the National Science Foundation under Cooperative Agreement PHY-2019786 (The NSF AI Institute for Artificial Intelligence and Fundamental Interactions, http://iaifi.org/). This research was done in part using services provided by the OSG Consortium~\cite{osg_ospool_2006,ruth2007open,sfiligoi2009pilot,osg_osdf_2015}, which is supported by the National Science Foundation awards \#2030508 and \#2323298. AW is particularly grateful for the expertise and guidance of OSPool staff.

\printbibliography

\appendix

\section{Efficient Compilation of Qudit \texorpdfstring{$Z^\alpha Z^\beta$}{} Measurements on Two Adjacent Gross Codes}\label{sec:compile_Z_meas}

In our fault-tolerant preparation and verification of the qudit cat state, we require the (non-fault-tolerant) measurement of qudit $Z^\alpha Z^\beta$ operators on the two qudits made up by two adjacent gross codes. In turn, this corresponds to the measurement of $s$ qubit operators $Z^{x_i}Z^{y_i}$, where $(x_i)_{i=1}^s$ and $(y_i)_{i=1}^s$ are each collections of $s$ linearly independent bit strings of length $s$, on the two sets of $s$ non-pivot logical qubits on adjacent gross codes, as may be checked using the techniques in~\cite{wills2026review}. This qudit measurement operation is the most time-consuming part of our cat state preparation and verification, and so compiling this operation efficiently to the bicycle instructions is of great importance.

The first thing we do is to consider matrices $X$ and $Y$ whose rows are $x_i$ and $y_i$, respectively. These are $s \times s$ invertible binary matrices. We can notice that measuring the operators corresponding to the rows of $\begin{pmatrix}
    X & Y
\end{pmatrix}$ is equivalent to measuring the operators corresponding to the rows of $\begin{pmatrix}
    AX & AY
\end{pmatrix}$ for any invertible $s \times s$ binary matrix $A$. What we always do is to choose the matrix $A$ such that $AX = V$, where
\begin{equation}\label{eq:V_matrix}
\setcounter{MaxMatrixCols}{11}
    V = 
\left[
\begin{array}{ccccccccccc}
0&0&0&0&0&1&0&0&0&0&0\\
0&0&0&0&1&1&0&0&0&0&1\\
0&0&1&0&1&1&0&0&1&0&1\\
0&0&1&1&1&1&0&0&1&1&1\\
0&1&0&0&1&1&0&1&0&0&1\\
1&0&0&1&0&1&1&0&0&1&0\\
0&0&0&0&1&0&0&0&0&0&0\\
0&0&1&0&1&0&0&0&0&0&0\\
0&0&1&1&1&0&0&0&0&0&0\\
0&1&0&0&1&0&0&0&0&0&0\\
1&0&0&1&0&0&0&0&0&0&0
\end{array}
\right]
\begin{array}{l}
\rdelim\}{6}{*}[\text{ Whole-LPU}]\\[-0.2ex]
\\
\\
\\
\\
\\
\rdelim\}{5}{*}[\text{ Half-LPU}]\\[-0.2ex]
\\
\\
\\
\\
\end{array}
.
\end{equation}
Writing the rows of this matrix as $(v_i)_{i=1}^s$, this matrix has the property that these rows are linearly independent, and moreover that $ZZ^{v_i}$ are all native measurements on the gross code~\cite{yoder2025tourgrossmodularquantum}, as may be checked using~\cite{github_bicycle_architecture_compiler}. In fact, the first $6$ of these rows correspond to native measurements derived from whole-LPU instructions, whereas the latter $5$ of these rows correspond to native measurements derived from half-LPU instructions (see Appendix~\ref{sec:improved_ler_bicycle} for a discussion of half-LPU and whole-LPU in-module instructions), which we find is the best that one can do.

So far, what we have done is redefined the problem such that we want to measure the $s$ $Z$ operators corresponding to the rows of $\begin{pmatrix}
    V & W
\end{pmatrix}$ (where we set $W = AY$); we denote the rows of $W$ as $(w_i)_{i=1}^s$. In terms of a gate set higher than the bicycle instructions, one individual measurement $Z^{v_i}Z^{w_i}$ may be compiled as shown in Figure~\ref{fig:Z_mixed_meas}.
\begin{figure}[ht]
    \centering
        \includegraphics[width=0.6\linewidth]{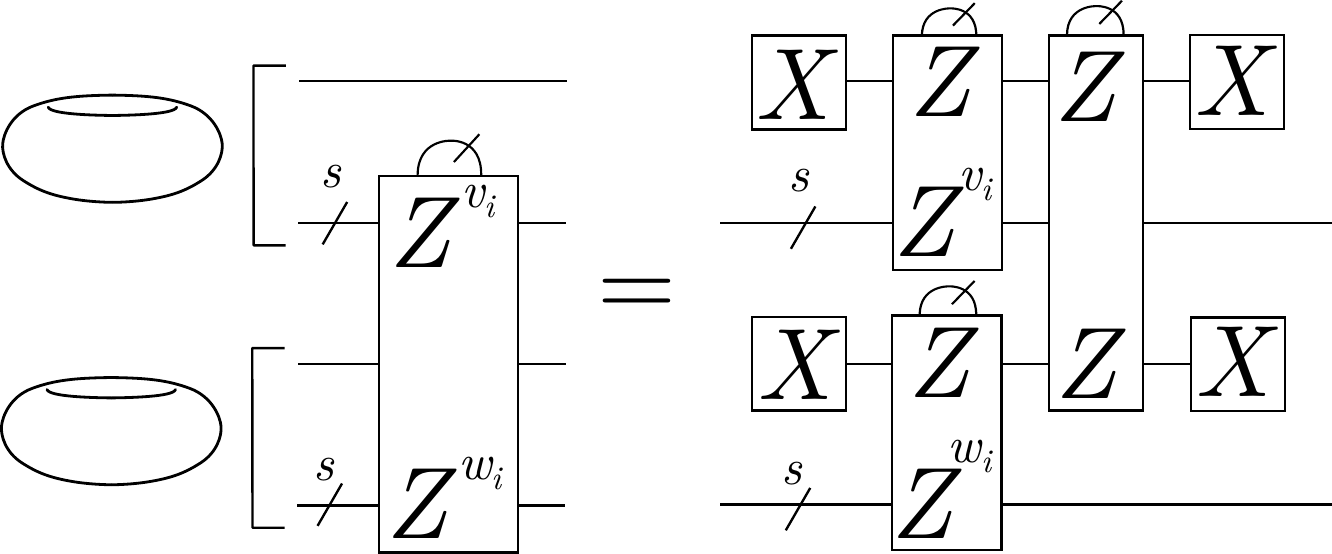}
        \caption{The compilation of a $Z^{v_i}Z^{w_i}$ measurement on two adjacent gross codes into a gate set higher than the bicycle instructions. All boxes denote qubit measurements; those without meter symbols denote ``measurement projections''~\cite{yoder2025tourgrossmodularquantum}, meaning measurements followed by the appropriate frame updates to ensure we recover the $+1$-eigenstate (in the absence of faults). The outcome of the whole measurement is taken as the $\mathsf{XOR}$ of the three $Z$-type measurements on the right-hand side.}
    \label{fig:Z_mixed_meas}
\end{figure}

Of course, the vectors $v_i$ were deliberately chosen such that $ZZ^{v_i}$ are all native in-module measurements on the gross code, thus executable via automorphisms and a single in-module measurement instructions. On the other hand, the operations $ZZ^{w_i}$ are generically not native measurements, and so further compilation is required to obtain the operation fully in terms of bicycle instructions. The way this may be handled is by using the gross code's so-called \textit{native rotations}~\cite{cross2024improved,yoder2025tourgrossmodularquantum}. The idea is as follows. Suppose that $R$ is an $11$-qubit Pauli on the gross code such that $BR$ is a native measurement, where $B$ is some element of $\{X,Y,Z\}$. Then, the $11$-qubit Clifford rotation $R(\pi/2) = e^{i\frac{\pi}{4}R}$ may be executed on the non-pivot qubits of the gross code via a measurement projection of the pivot qubit in the $A$-basis, a measurement projection of $BR$, and finally a measurement projection of the pivot qubit in the $C$-basis, where $A, B, C \in \{X,Y,Z\}$ are all distinct; see~\cite{yoder2025tourgrossmodularquantum}. Such operations $R(\pi/2)$ are called the native rotations of the gross code, and it is known that they generate the Clifford group on the $11$ non-pivot qubits. It is possible, therefore, to measure any desired operator on the gross code by conjugating native measurements by native rotations. The approach that is taken in~\cite{yoder2025tourgrossmodularquantum} is to measure any needed in-module operator via a sequence of native rotations $R_1\ldots R_k$, performing a native measurement, and then performing the reverse sequence of native rotations $(R_1\ldots R_k)^\dagger$. However, this does introduce a somewhat undesirable compilation overhead.

In this work, we make two improvements to this. The first may be done very easily, but gives only small savings, whereas the second requires a little more effort, but leads to considerable savings. For the first, one observes that if the pivot qubit is ever measured in the $A$ basis, for any $A \in \{X,Y,Z\}$, measuring the pivot qubit in the $A$ basis twice in a row is redundant. That is, if the compilation ever calls for the same measurement on the pivot qubit twice in a row, one may be ``absorbed'' into the other. Such optimisation can show up in, for example, sequences of native rotations; we can actually often force such absorptions to happen by, for example, swapping the bases $A$ and $C$ in the implementation of a native rotation (described above) to maximise the number of absorptions.

The second observation that we make is as follows. The gross code with the bicycle instructions has $540$ native measurements~\cite{yoder2025tourgrossmodularquantum}. For the compilation in~\cite{yoder2025tourgrossmodularquantum}, sequences of native rotations are being used to rotate one desired operator to measure to one native rotation, before rotating back. It is natural to wonder whether greater efficiency can be achieved by rotating multiple measurements into native measurements, rather than just one at a time. For example, we could perform a single sequence of native rotations to make the next two (or three or four) measurements native.

In this work we find that it is indeed more efficient, at least for the $Z$ operators that we study. Consider three linearly independent vectors $w_1, w_2, w_3 \in \mathbb{F}_2^s$. Using the tools in~\cite{github_bicycle_architecture_compiler}, it is possible by brute force to find the shortest sequence of native rotations $R_1, \ldots, R_k$ on the $11$ non-pivot qubits such that $(R_1\ldots R_k)ZZ^{w_1}(R_1\ldots R_k)^\dagger$, $(R_1\ldots R_k)ZZ^{w_2}(R_1\ldots R_k)^\dagger$ and $(R_1\ldots R_k)ZZ^{w_3}(R_1\ldots R_k)^\dagger$ are all native measurements. We also optimise our rotation sequence length using the absorptions previously described (repeated measurements of the pivot qubit in the same basis are redundant), and where we prefer the use of half-LPU measurement instructions to whole-LPU measurement instructions when we have a choice. One can also do the same for pairs of vectors $w_1, w_2 \in \mathbb{F}_2^s$.

With all the ingredients introduced, it is possible to state our whole compilation algorithm for these sequences of $11$ measurements. We perform the $11$ measurements in $3$ batches of $3$, and one batch of $2$. To execute one batch on the second gross code, rotations are performed on the non-pivot qubits to make the whole batch native, and then the appropriate number (either three or two) routines are performed as in Figure~\ref{fig:Z_mixed_meas}, where now every operation shown is native. Then, the native rotation sequence is undone. Notice that because the rotation sequence consumes the pivot qubit, these rotation sequences must be performed outside of the routines shown in Figure~\ref{fig:Z_mixed_meas}. There is one more comment to make in terms of optimisation, and that is that there are many ways to break up the $11$ desired measurements into $3$ batches of $3$, and one batch of $2$. For each set of $11$ measurements, one can try to break up the set in different ways at random, and pick one's preferred compilation.

\subsection{The Cost of the Qudit \texorpdfstring{$Z^\alpha Z^\beta$}{} Measurements}

How can we estimate the cost of one non-fault-tolerant qudit $Z^\alpha Z^\beta$ measurement? There are only a small number of such qudit measurements, and so enumerating each of their costs by brute force would be in reach if we only cared about qudits. However, the trouble is that there are very many qudit-to-qubit mappings, and the manifestation of each of these measurements in terms of qubits depends on the qudit-to-qubit mappings. Because there are over $10^{31}$ qudit-to-qubit mappings on each qudit, enumerating all possible measurements and evaluating their cost is out of bounds. However, in the main text, in particular in Section~\ref{sec:ler_calc}, we discuss how we take our qudit-to-qubit mappings $\mathcal{B}$ on each qudit randomly rather than making an explicit choice. As such, our aim is to benchmark the typical cost of one qudit $Z^\alpha Z^\beta$ measurement over this random choice. 

Now, the random qudit-to-qubit mappings $\mathcal{B}$ causes the matrix $W$, discussed earlier in this appendix, to behave randomly. To estimate the cost of a typical qudit $Z^\alpha Z^\beta$ measurement, we thus sample an invertible $s \times s$ matrix $W$ at random, and consider how efficiently we can measure the $Z$ operators corresponding to the rows of $\begin{pmatrix}
    V & W
\end{pmatrix}$ on the two sets of non-pivot logical qubits in the two adjacent gross codes, where $V$ is the fixed matrix shown in Equation~\eqref{eq:V_matrix}.

\begin{figure}[t]
    \centering
    \includegraphics[width=\linewidth]{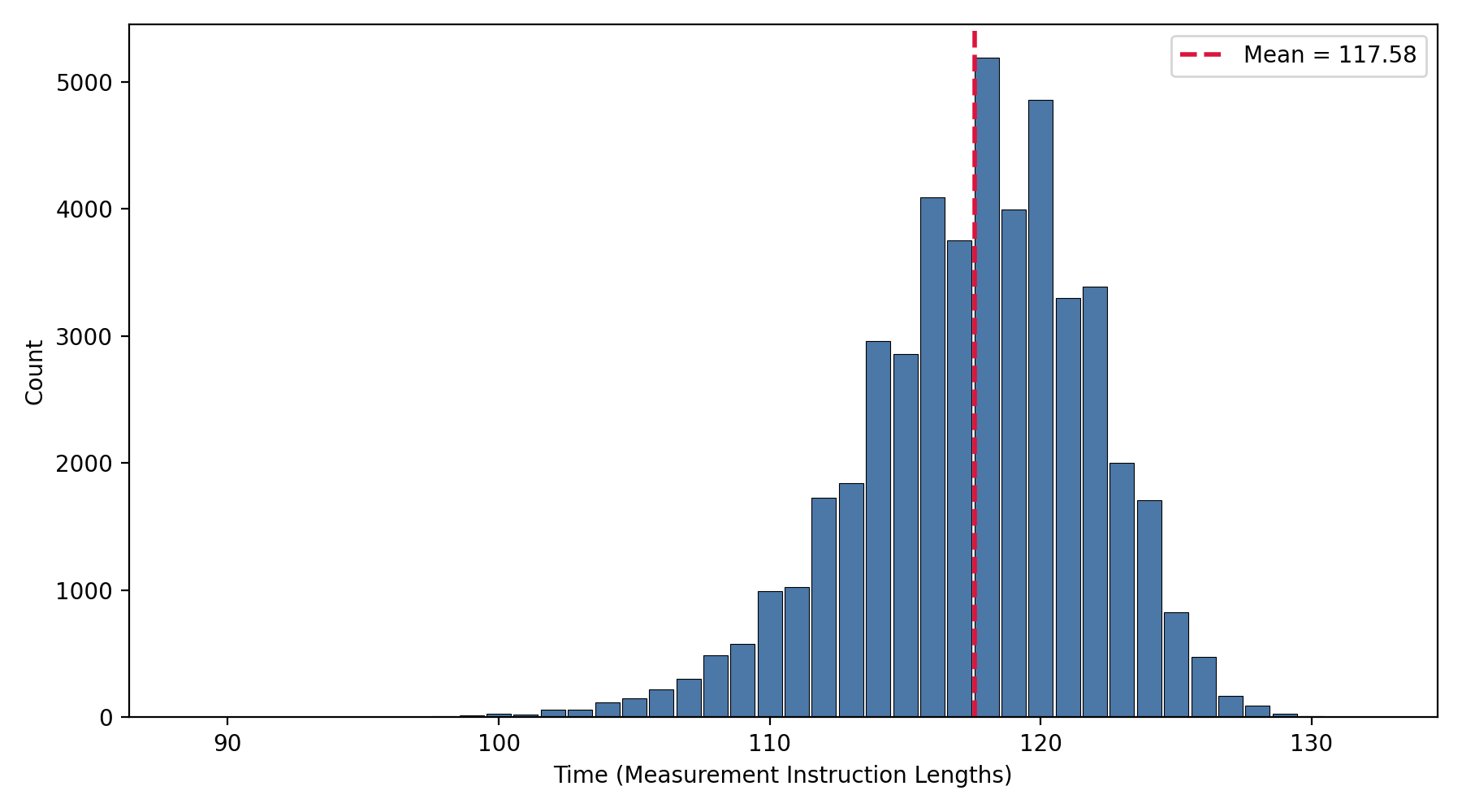}
    \caption{Histogram of sampled measurement sequences compiled to the bicycle instructions using our compilation algorithm with \texttt{num-decomposition-attempts} = 100. This histogram shows the time to perform the full qudit $Z^\alpha Z^\beta$ measurement, in units of measurement lengths, noting that all in and inter-module measurement instructions take the same time: $\tau_{\text{Meas}} = 120$ timesteps (see Table 2 of~\cite{yoder2025tourgrossmodularquantum}). On the first gross code, the numbers of half-LPU and whole-LPU in-module instructions used are always $17$ and $6$, whereas on the second gross code the mean numbers used are found to be $56.05$ and $50.53$, respectively. $11$ inter-module measurements are always used. The time is determined by the in-module operations on the second code and the inter-module operations. We see that the mean total numbers of instructions used (across both codes) are $73.05, 56.53$ and $11$ half-LPU in-module, whole-LPU in-module, and inter-module, respectively.}
    \label{fig:sampled_meas_cost}
\end{figure}

Our algorithm breaks up the $11$ resulting $Z$ measurements at random into $3$ batches of $3$ and one batch of $2$, and evaluates the resultant number of half-LPU and whole-LPU in-module instructions required on the second gross code. The in-module instructions required on the first gross code, as well as the inter-module instructions, are fixed. The algorithm performs the described absorption of repeated pivot measurements where possible, meaning that the same measurement is not repeated multiple times in a row on the pivot qubit. The algorithm decomposes the $11$ measurements into $3$ groups of $3$ and $1$ group of $2$ some tunable number \texttt{num-decomposition-attempts} times, picking the best one according to some objective function that we choose. Here, we simply aim to minimise the objective function
\begin{equation}
    \text{Number of half-LPU instructions} + \text{Number of whole-LPU instructions},
\end{equation}
that is, we are aiming to perform the measurement as quickly as possible.\footnote{Because the half-LPU in-module instruction are about an order of magnitude less noisy than the whole-LPU in-module instructions, if we were mainly concerned with error rate of the routine, not time, then we would choose an objective function $(0.1)*\text{Number of half-LPU instructions} + \text{Number of whole-LPU instructions}$.}
We note that the in-module instructions on the first gross code are always fixed to $6$ whole-LPU instructions, and $17$ half-LPU instructions. $11$ inter-module instructions are always required. 

We sample about $50,000$ matrices $W$ at random and plot the time taken to perform the corresponding sequence of measurements in Figure~\ref{fig:sampled_meas_cost}, in units of measurement lengths (noting that all in-module and inter-module measurement instructions take the same time: $120$ timesteps~\cite{yoder2025tourgrossmodularquantum}). The mean time for one of these measurements is found to be $117.58$. All of these use $11$ inter-module measurements, and a mean is found for the total count of half-LPU and whole-LPU in-module measurements of $73.05$ and $56.53$, respectively, across both gross codes, which is relevant for calculating error rates. Note that, in terms of the time, there are about $n/2$ of these measurements performed in parallel, and we must wait for the slowest one to finish. We wish to take a time corresponding to the slowest one, which then depends on $n$. We take the time for the (non-fault-tolerant) $Z^\alpha Z^\beta$ measurement to be the slowest one at $n = 70$ (a value larger than we care about), which we estimate from the histogram to be about about $125$ measurement time-lengths. Note that this is still likely pessimistic, since in reality, given a particularly slow case, one would simply re-randomise the appropriate qudit-to-qubit mappings.

\section{Improved Logical Error Rates for Bicycle Instructions}\label{sec:improved_ler_bicycle}
In~\cite{yoder2025tourgrossmodularquantum}, logical error rates were presented for operations on the gross code, including logical measurement operations using lattice surgery. One of the reasons we are able to improve the performance of our concatenated memory system on the bicycle architecture is as a result of improvements that have been made to these logical error rates in this work. The improvements that we make follow from optimising the parameters of the Relay-BP decoder~\cite{muller2025improved}, in particular the so-called ``gamma parameters'', that is, the range $[\gamma_{\text{min}}, \gamma_{\text{max}}]$ from which the memory strengths $\gamma_j$ are uniformly and independently drawn, for each leg, and for each error node $j$. For example,~\cite{muller2025improved} makes use of the range $[\gamma_{\text{min}}, \gamma_{\text{max}}] = [-0.24, 0.66]$ for the idling gross code. It is also found in that work that optimising the gamma parameters is a crucial step to obtaining optimal performance of the decoder for a given circuit. As part of this work, we perform optimisations of the gamma parameters for circuits whose logical error rate is critical to us.

We also attempted to make improvements to the actual circuits for the logical measurement instructions. In particular, the lattice surgery logical measurement operations follow via a merging of the code block(s) with the ancillary system, some number of rounds in the combined system, and then a splitting of the two. In principle, the optimal number of rounds in the merged system should be around $d_{\text{Circ}}$, the distance of the circuit in question, where the intuition is that this should balance the effect of time-like errors (benefitting from a larger number of merged rounds) and space-like errors (benefitting from a smaller number of merged rounds). It is possible, however, that an optimal number of merged measurement rounds would be slightly above or below $d_{\text{Circ}}$. We attempted such optimisations, but found the potential improvements to be mild, and therefore opt to stick with $d_{\text{Circ}}$ rounds in the merged system.

By optimising the gamma parameters, we demonstrate significant improvements at uniform physical noise $p = 10^{-3}$ to the $X_1$ in-module measurement (the measurement of $X$ on the pivot qubit), the $Y_1$ in-module measurement (the measurement of $Y$ on the pivot qubit) and the $X_1X_1$ inter-module measurement (the measurement of the $XX$ operator on two pivot qubits in adjacent gross codes) on the gross code. The results are shown in Table~\ref{tab:logical_error_rates_1e-3}.
\renewcommand{\arraystretch}{1.5}
\begin{table}[ht]
\centering
\begin{tabular}{|c|ccc|ccc|}
\hline
\multirow{2}{*}{\textbf{Circuit}} 
& \multirow{2}{*}{$\mathbf{S}$} 
& \multirow{2}{*}{\raisebox{0.45ex}{$\gamma_{\text{min}}$}} 
& \multirow{2}{*}{\raisebox{0.45ex}{$\gamma_{\text{max}}$}}
& \multicolumn{3}{c|}{\textbf{Logical Error Rate} ($p = 10^{-3}$)} \\
\cline{5-7}
 &  &  &  & \textbf{Low} & \textbf{Mean} & \textbf{High} \\
\hline\hline
\multirow{2}{*}{\makecell{Half-LPU\\In-Module Measurement}}
  & 20  & \multirow{2}{*}{-0.13} & \multirow{2}{*}{0.85} 
  & $2.3 \times 10^{-7}$ & $2.9\times 10^{-7}$ & $3.6\times 10^{-7}$ \\
  & 100 &                         &                        
  & $1.2\times 10^{-7}$ & $1.7\times 10^{-7}$ & $2.3\times 10^{-7}$ \\
\hline
\multirow{2}{*}{\makecell{Whole-LPU\\In-Module Measurement}}
  & 20  & \multirow{2}{*}{-0.14} & \multirow{2}{*}{0.76} 
  & $2.0\times 10^{-6}$ & $2.2\times 10^{-6}$ & $2.5\times 10^{-6}$ \\
  & 100 &                         &                        
  & $1.3\times 10^{-6}$ & $1.6\times 10^{-6}$ & $2.0\times 10^{-6}$ \\
\hline
\multirow{2}{*}{\makecell{Inter-Module\\Measurement}}
  & 20  & \multirow{2}{*}{-0.19} & \multirow{2}{*}{-0.90} 
  & $2.9\times 10^{-6}$ & $3.5\times 10^{-6}$ & $4.1\times 10^{-6}$ \\
  & 100 &                         &                        
  & $2.3\times 10^{-6}$ & $2.8\times 10^{-6}$ & $3.2\times 10^{-6}$ \\
\hline
\end{tabular}
\caption{Improved logical error rates for logical measurement bicycle instructions on the gross code at uniform physical noise $p = 10^{-3}$, as well as the Relay-BP memory parameters $\gamma_{\min}$ and $\gamma_{\max}$ used to obtain them. The exact measurements chosen to represent the half-LPU and whole-LPU in-module measurements, as well as the inter-module measurement, are $X_1$, $Y_1$, and $X_1X_1$, respectively, see~\cite{yoder2025tourgrossmodularquantum} for these definitions. We report the mean value, as well as (Low, High), forming the ends of a 95\% confidence interval on the estimated mean value. Our optimisations provide substantial improvements over the logical error rates in~\cite{yoder2025tourgrossmodularquantum} (see Table 2 therein): by nearly three orders of magnitude in the case of the inter-module measurement. These numbers are the results of direct Monte Carlo simulation at uniform $10^{-3}$ noise. These simulations were performed on the MIT Engaging cluster, the MIT SuperCloud, the Harvard Cannon, the Open Science Grid, PSC Bridges-2, and Perlmutter; see \hyperref[acknowledgements]{Acknowledgements}.}
\label{tab:logical_error_rates_1e-3}
\end{table}

Throughout the work, we split the in-module instructions used in our scheme into so-called ``half LPU'' in-module instructions and ``whole LPU'' in-module instructions. To explain this, note that of the $15$ in-module instructions on the gross code, some make use of half of the logical processing unit (LPU), whereas some make use of the whole LPU. The half-LPU instructions are the $6$ in-module measurements of $\langle X_1, Z_7 \rangle \cup \langle X_7, Z_1\rangle$ in the language of~\cite{yoder2025tourgrossmodularquantum}, whereas the whole-LPU in-module instructions are the remaining $9$. In general, one finds that the whole-LPU instructions error far higher (indeed, approximately an order of magnitude higher) than the half-LPU instructions in simulation, likely because the presence of more LPU qubits is causing more minimum-weight fault paths. All the half-LPU instruction circuits, and all the whole-LPU instruction circuits have very similar (albeit not identical) structure, and so we take the error rate of the $X_1$ and $Y_1$ measurement to be representative of the half-LPU and whole-LPU instructions throughout our work. Note that doing this is slightly optimistic, because some of the half-LPU and whole-LPU measurements do use additional edges to these two. On the other hand, these are very small differences compared to the size of the LPU, and the most common in-module instructions used across our scheme are the simplest cases of $X_1, Y_1$ and $Z_1$ measurements anyway. On the other hand, we take the error rate of the inter-module $X_1X_1$ to be representative of all inter-module instructions used in our scheme. Indeed, note that throughout this work, the only inter-module instructions used are the $X_1X_1$ and $Z_1Z_1$ measurements, and since these have an extremely similar structure, we take the logical error rate of the $X_1X_1$ inter-module instruction as representative of all our inter-module instructions.

In terms of obtaining these numbers, note that, in Table~\ref{tab:logical_error_rates_1e-3}, we show the result of direct Monte Carlo simulations of logical error rates of various circuits; these numbers are not extrapolated. The displayed numbers under (Low, High) represent a $95\%$ confidence interval on the sampled mean logical error rate. These simulations were performed using Stim~\cite{gidney2021stim}, as well as the open source implementation of the Relay-BP decoder~\cite{trmue_relay}. The details on the noise model, and methodology behind the circuit constructions, are available in~\cite{yoder2025tourgrossmodularquantum}. Note that in this table, we have considered the use of the Relay-BP decoder using $S = 20$ and $S=100$ for the two logical measurements; this variable is the number of solutions to the decoding problem that the decoder seeks, see~\cite{muller2025improved}, and is called $\mathtt{stop\_nconv}$ in the implementation~\cite{trmue_relay}. We find a modest improvement from $S = 20$ to $S = 100$. The logical error rates for $S = 100$ will be used in analysing our concatenated memory systems. Note that with $S = 100$, the decoding would be outside the scope of real-time decoding in FPGAs~\cite{maurer2025real,maurya2025fpga}, but would likely be feasible with ASICs. In addition, since we only consider a memory system in this paper, it is not essential that decoding strictly keeps up with the extraction of syndromes, since one is able to correct past mistakes in software.

We are also interested in the performance of these gadgets at uniform physical noise $p = 10^{-4}$. There, we simulate only the inter-module $XX$ instruction, and only for one set of gamma parameters (the gamma parameters that were optimised at $10^{-3}$ noise). We did not observe a failure, and so we take the logical error rate to be the upper bound of a $95\%$ confidence interval on the given statistic. Even so, this still represents an improvement over the corresponding logical error rate in~\cite{yoder2025tourgrossmodularquantum}. This is presented in Table~\ref{tab:logical_error_rates_1e-4}. We take the remaining error rates for the operations at $10^{-4}$ noise from~\cite{yoder2025tourgrossmodularquantum}, where both whole LPU and half LPU instructions are given the error rate of the same in-module instruction, which is a worst-case (whole-LPU) example. 

\renewcommand{\arraystretch}{1.5}
\begin{table}[ht]
\centering
\begin{tabularx}{0.85\linewidth}{|c|ccc|>{\centering\arraybackslash}X
                                     >{\centering\arraybackslash}X
                                     >{\centering\arraybackslash}X|}
\hline
\multirow{2}{*}{\textbf{Circuit}} 
& \multirow{2}{*}{$\mathbf{S}$} 
& \multirow{2}{*}{\raisebox{0.45ex}{$\gamma_{\text{min}}$}} 
& \multirow{2}{*}{\raisebox{0.45ex}{$\gamma_{\text{max}}$}}
& \multicolumn{3}{c|}{\textbf{Logical Error Rate} ($p = 10^{-4}$)} \\
\cline{5-7}
 &  &  &  & \textbf{Low} & \textbf{Mean} & \textbf{High} \\
\hline\hline
\multirow{2}{*}{\makecell{Inter-Module\\Measurement}}
  & \multirow{2}{*}{20}  & \multirow{2}{*}{-0.19} & \multirow{2}{*}{0.90} 
  & \multirow{2}{*}{$0$} & \multirow{2}{*}{$0$} & \multirow{2}{*}{$1.5\times 10^{-8}$} \\
  & & & & & & \\
\hline
\end{tabularx}
\caption{Monte Carlo simulation of the inter-module $X_1X_1$ measurement at uniform physical noise $p = 10^{-4}$. Here, no failures were observed at all, and so we take the logical error rate to be the ``High'' value listed, i.e., the top of the 95\% confidence interval.}
\label{tab:logical_error_rates_1e-4}
\end{table}

We list the remaining parameters used in our simulations with the Relay-BP decoder, where we use variable names as in the implementation~\cite{trmue_relay}:
\begin{center}
\verb|"gamma0": 0.1,|\\
\verb|"num_sets": 1201,|\\
\verb|"pre_iter": 80,|\\
\verb|"set_max_iter": 60.|
\end{center}

Finally, for the idling operation, we consider the regular gross code circuit as in~\cite{yoder2025tourgrossmodularquantum}, with a logical error rate of $10^{-8.8}$. However, in Appendix~\ref{sec:improved_footprint_morphing}, we consider a new idling circuit for the gross code that makes use of morphing~\cite{shaw2025lowering}, and is believed to have a circuit distance matching that of the code ($12$), unlike the regular circuit which has circuit distance $10$. Via direct simulation, we demonstrate a considerable improvement over the regular circuit. In Appendix~\ref{sec:improved_footprint_morphing}, we explain why we stick with the regular idling circuit for now, but that significant improvements to the footprint estimates are likely in future using these kinds of advances.

\section{Improved Footprint Estimates via Morphing Idling Circuit}\label{sec:improved_footprint_morphing}

One appealing feature of the concatenation construction is that it can be improved via changes to the inner code and its instructions. At physical noise $10^{-3}$, the bottleneck in the construction comes from the fact that the native gateset of the current bicycle architecture~\cite{yoder2025tourgrossmodularquantum} is relatively limited, introducing a significant compilation overhead. Even with the tricks of Appendix~\ref{sec:compile_Z_meas}, cat states take a long time to prepare, and the bottleneck to the logical error rate arises from gross codes sitting at idle for a long time between rounds of syndrome extraction. With this said, improvements to either the native gateset, or the gross code idling circuit, can quickly improve the footprint estimates substantially.

We demonstrate why we believe this to be likely in future by considering a new circuit for the idling gross code. Unlike the regular circuit which has circuit distance $10$, this circuit is believed to have circuit distance $12$, matching that of the code, and is constructed using Shaw-Terhal morphing~\cite{shaw2025lowering}. We simulate the performance of the circuit using direct Monte Carlo simulation over $12$ rounds. Normalising the logical error rate per round, we obtain the results in Table~\ref{tab:logical_error_rate_morphing}.
\renewcommand{\arraystretch}{1.5}
\begin{table}[ht]
\centering
\begin{tabular}{|c|ccc|ccc|}
\hline
\multirow{2}{*}{\textbf{Circuit}} 
& \multirow{2}{*}{$\mathbf{S}$} 
& \multirow{2}{*}{\raisebox{0.45ex}{$\gamma_{\text{min}}$}} 
& \multirow{2}{*}{\raisebox{0.45ex}{$\gamma_{\text{max}}$}}
& \multicolumn{3}{c|}{\textbf{Logical Error Rate} ($p = 10^{-3}$)} \\
\cline{5-7}
 &  &  &  & \textbf{Low} & \textbf{Mean} & \textbf{High} \\\hline
\multirow{2}{*}{\makecell{Morphing Idling Circuit\\(Per Round)}}
  & \multirow{2}{*}{20} & \multirow{2}{*}{$-0.09$} & \multirow{2}{*}{$0.76$}  & \multirow{2}{*}{$9.2\times 10^{-12}$} & \multirow{2}{*}{$4.4\times 10^{-11}$} & \multirow{2}{*}{$1.3\times 10^{-10}$} \\
  &                     &  &  &  &  &  \\
\hline
\end{tabular}
\caption{Simulated logical error rate per round for the new idling circuit for the gross code based on morphing, parameters used, as well as a $95\%$ confidence interval. See Appendix~\ref{sec:improved_ler_bicycle} for a discussion of the gamma parameters. This is the result of direct Monte Carlo simulation.}
\label{tab:logical_error_rate_morphing}
\end{table}

Owing to the low logical error rate,\footnote{As a point of interest, the logical error rate is too low here to attempt a direct sweep of the gamma parameters to optimise for logical error rate (see Appendix~\ref{sec:improved_ler_bicycle} for a discussion on gamma parameters). In lieu of this, we chose gamma parameters according to the fastest average decoding time per shot (since this is seen to correlate with logical error rate somewhat), without observing any fails, and then simulated the logical error rate at the single point $(\gamma_{\min}, \gamma_{\max}) = (-0.09, 0.76)$.} and the fact that we perform direct Monte Carlo simulation, without relying on extrapolation, the current confidence interval is very wide. Nevertheless, a significant improvement over the regular gross code idling circuit at $10^{-8.8}$ logical error rate per round~\cite{yoder2025tourgrossmodularquantum} is strongly suggested.

In Figures~\ref{fig:results_with_morphing_1e-3} and~\ref{fig:1e-3_with_morphing_BB_subplots}, we show what is obtained for the footprint estimates of the whole system if one uses $4.4\times 10^{-11}$ for the logical error rate of idling per round, keeping the error rates of other operations as in Appendix~\ref{sec:improved_ler_bicycle}. Significant improvements are rendered by doing this in the footprint estimates, as shown in Figure~\ref{fig:results_with_morphing_1e-3}. In addition, Figure~\ref{fig:1e-3_with_morphing_BB_subplots} shows that the the system with the morphing circuit logical error rate for idling always benefits from post-selection, suggesting that the bottleneck from idling noise has indeed been alleviated, and the error contributions from idling and from operations in cat state generation/consumption are now much more balanced.

\begin{figure}[ht]
    \centering
    \includegraphics[width=\linewidth]{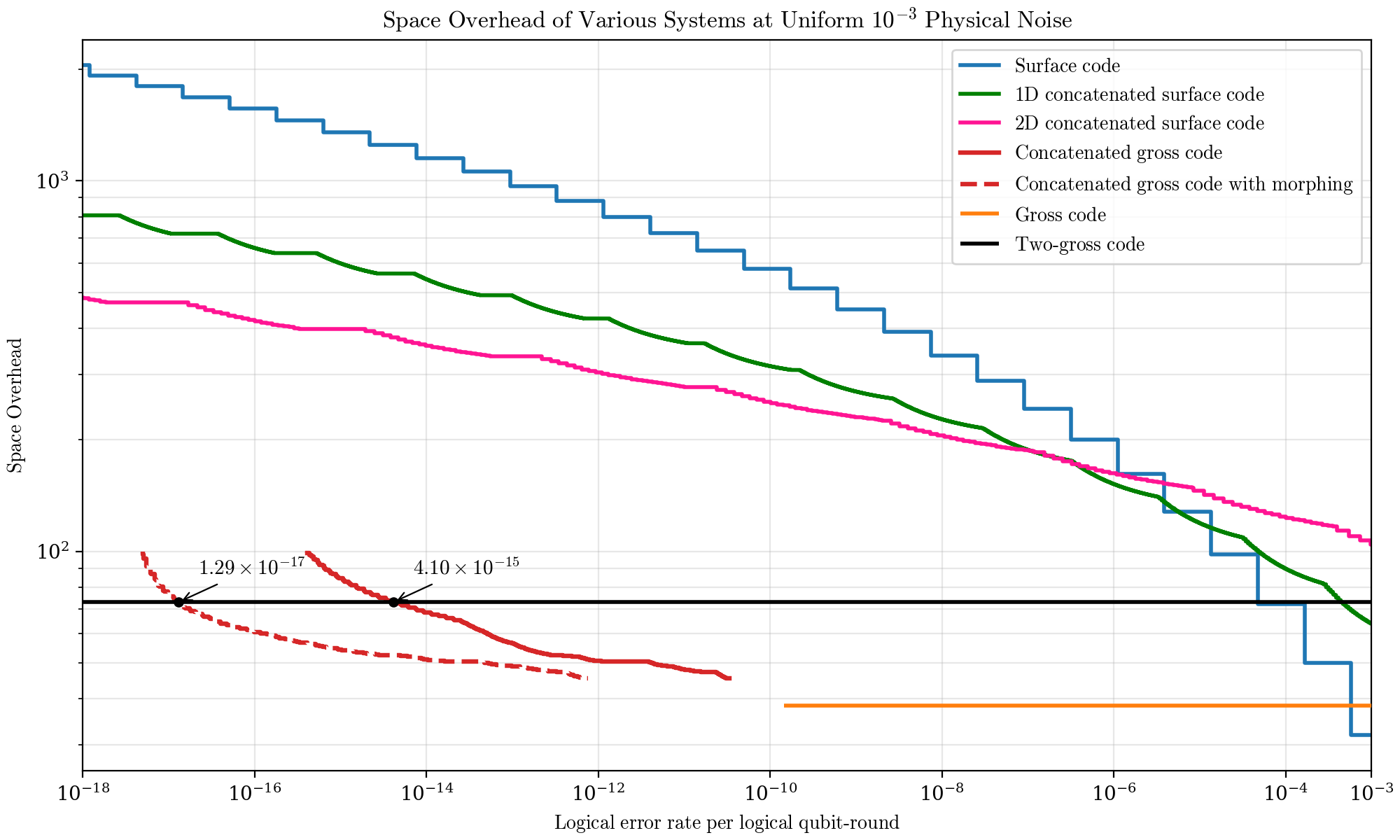}
    \caption{The same plot as Figure~\ref{fig:results_1e-3}, where an additional dashed line shows what is obtained with the logical error rate of the morphing circuit used for the idling operation.}
    \label{fig:results_with_morphing_1e-3}
\end{figure}

\begin{figure}[ht]
    \centering
    \includegraphics[width=\linewidth]{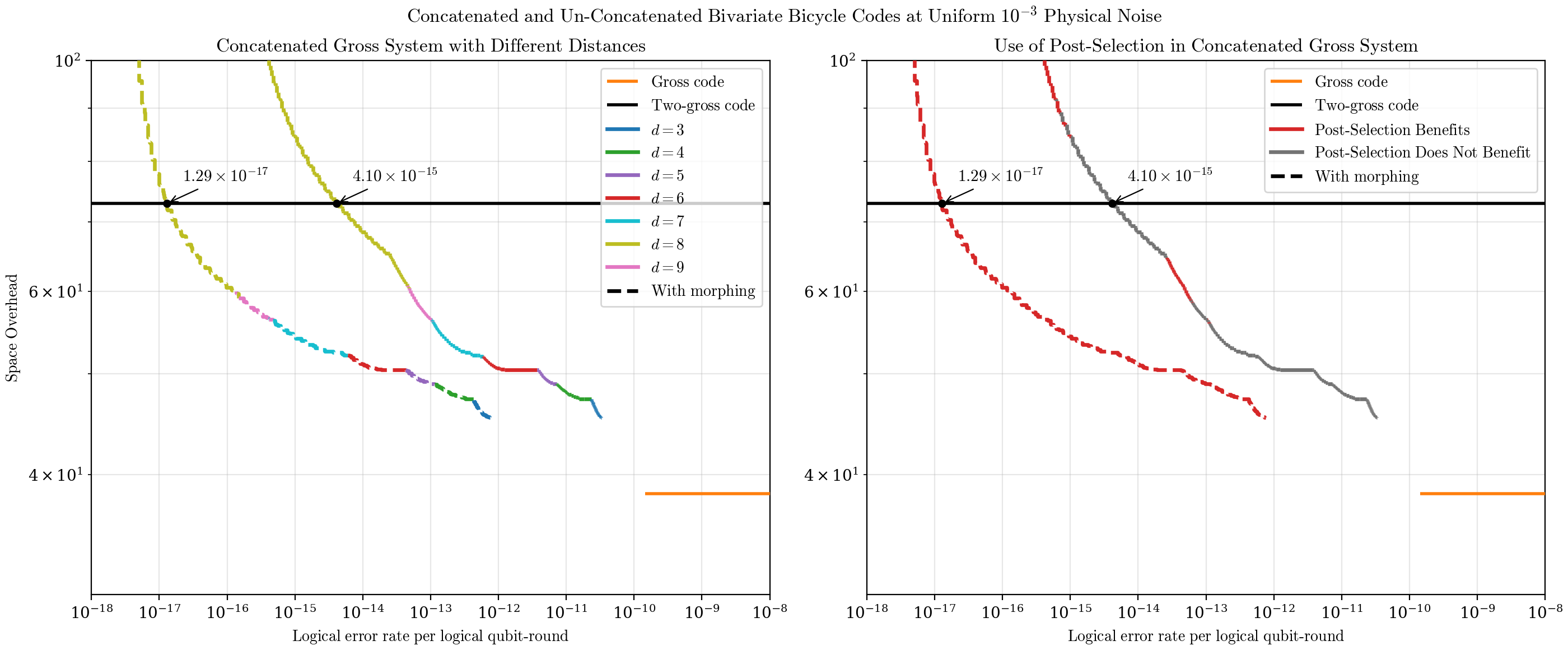}
    \caption{The same plots as Figure~\ref{fig:1e-3_BB_subplots}, but with additional dashed lines showing what is obtained with the logical error rate of the morphing circuit used for the idling operation.}
    \label{fig:1e-3_with_morphing_BB_subplots}
\end{figure}

These are not claimed as the footprints estimates for now, for two reasons. Firstly, our confidence interval for the morphing idling circuit logical error rate is still very wide, see Table~\ref{tab:logical_error_rate_morphing}, although we aim to narrow the confidence interval in future. Additionally, the ancilla systems constructed thus far~\cite{yoder2025tourgrossmodularquantum} would not accommodate the new idling circuit, and would require a different design. Nevertheless, this information is included to show that substantial improvements are anticipated in future. For example, circuits for the surgery instructions also achieving the code distance of $12$ would likely render yet more improvements.

With this said, improvements to the concatenated gross system are unlikely to render a $10\times$ improvement in space overhead over the $2\text{D}$ yoked surface codes. To see this, note that concatenation cannot improve on the space overhead of the underlying inner code; it can only make it higher (in exchange for a lower logical error rate). Referring to the overhead of the regular gross code and the $2\text{D}$ yoked surface codes in Figure~\ref{fig:results_with_morphing_1e-3}, we can see that it is not possible, even in principle, for a concatenated gross code to improve on the space overhead of the $2\text{D}$ yoked surface code, down to a logical error rate of about $10^{-15}$ per logical qubit-round, and this would require the concatenated system to have essentially the same overhead as the un-concatenated system, which seems perhaps unlikely. In principle, one could consider a smaller inner BB code, see~\cite{bravyi2024high} for possibilities.

With this being said, we would note that the layout of the $2\text{D}$ yoked surface code makes reading and writing from and to the memory difficult and slow, since the logical operators are spread in $2$ dimensions; the system is only usable for very cold storage. Our systems have the same layout as the $1\text{D}$ yoked surface codes, and so a comparison between our systems and the $1\text{D}$ yoked surface codes is worth considering.\footnote{Let us further modify this statement by also noting that the outer code cycle time of the concatenated gross system is slow owing to the difficulty in compiling cat states with the current bicycle instructions, and is perhaps more comparable to the outer code cycle time of the $2\text{D}$ yoked surface codes.} Future improvements are likely to yield a $10\times$-improved space overhead here in operationally-relevant regimes, as is already suggested by Figure~\ref{fig:results_with_morphing_1e-3} with the morphing idling circuit.

\section{Details on Reed-Solomon Codes}\label{sec:RS_details}

In this section, we will define Reed-Solomon codes, both classical and quantum. The following will be brief; see~\cite{wills2026review} for full details. For further reading, we recommend~\cite{macwilliams1977theory}.

\subsection{Definition of Classical and Quantum Reed-Solomon Codes}\label{subsec:qrs_def}

With $\mathbb{F}_q$ the field of $q$ elements, for some $q = 2^s$, let $\mathbb{F}_q[x]^{<k}$ be the set of polynomials in one variable $x$ over $\mathbb{F}_q$ of degree less than $k$. Consider $n$ distinct elements $\balpha = (\alpha_1, \alpha_2, \ldots, \alpha_n) \subseteq \mathbb{F}_q$ (note that we must have $n \leq q$). Further consider a vector $\bv = (v_1, v_2, \ldots, v_n) \in \mathbb{F}_q^n$ where $v_i \neq 0$ for all $i$. We can then define a classical code, called a generalised Reed-Solomon code,
\begin{equation}
    \GRS_k(\balpha,\bv) = \left\{(v_1f(\alpha_1), v_2f(\alpha_2), \ldots, v_nf(\alpha_n)):f \in \mathbb{F}_q[x]^{<k}\right\}.
\end{equation}
In words, the code is formed by evaluating all polynomials up to some degree at the points $\balpha$ and component-wise multiplying each codeword by $\bv$. This is an $\mathbb{F}_q$-linear code with parameters
\begin{equation}
    [n,k,n-k+1].
\end{equation}
A generator matrix for the classical code $\mathsf{GRS}_k(\balpha,\bv)$ is given by $H \in \mathbb{F}_q^{k \times n}$, where $H_{ij} = v_j\alpha_j^i$ for $j \in [n]$ and $i \in \{0, 1, \ldots, k-1\}$.
Moreover, we have
\begin{equation}
    \GRS_k(\balpha,\bv)^\perp = \GRS_{n-k}(\balpha,\bu),
\end{equation}
where $\bu = (u_1, \ldots, u_n)$, and
\begin{equation}
    u_i^{-1} = v_i\prod_{j \neq i}(\alpha_i-\alpha_j).\label{eq:orthog_mult_relation}
\end{equation}
By definition, we have
\begin{equation}
    \GRS_{k_1}(\balpha, \bv) \subseteq \GRS_{k_2}(\balpha, \bv)
\end{equation}
for any $k_1 \leq k_2$. Therefore, we may define a qudit CSS code, called a quantum Reed-Solomon code, by defining vector spaces of stabilisers $\mathcal{L}_X$ and $\mathcal{L}_Z$\footnote{A quantum CSS code may be defined via stabiliser spaces $\mathcal{L}_X$ and $\mathcal{L}_Z$ by considering the simultaneous $+1$-eigenspace of all operators $X^x$, for $x \in \mathcal{L}_X$, and $Z^z$, for $z \in \mathcal{L}_Z$. Here, given $x \in \mathbb{F}_q^n$, $X^x \coloneq \bigotimes_{i=1}^nX_i^{x_i}$, and similarly for $Z^z$. For more information, see~\cite{wills2026review}.} via
\begin{align}
    \mathcal{L}_X &= \GRS_{k_1}(\balpha, \bv)\\
    \mathcal{L}_Z^\perp &= \GRS_{k_2}(\balpha, \bv),
\end{align}
so that $\mathcal{L}_Z = \GRS_{n-k_2}(\balpha, \bu)$, where $\bu$ may be found via Equation~\eqref{eq:orthog_mult_relation}. This is a qudit code with length $n$, $k_2-k_1$ logical qudits, $X$-distance $n-k_2+1$ and $Z$-distance $k_1+1$. In this work, when we talk about a length $n$ and distance $d$ quantum Reed-Solomon code, we are considering the above with $k_2 = n-d+1$ and $k_1 = d-1$. The qudit code thus has parameters $[[n,n-2(d-1),d]]_q$. The choice of $\bv$ does not affect the performance of our scheme at all. The choice of $\balpha$ affects the performance of our scheme only very slightly; see the next subsection where we make explicit choices of $\balpha$ for concreteness.

It is possible to write down a qudit stabiliser tableau for a codestate of the qudit code (see~\cite{wills2026review} for the details on Galois qudit stabiliser tableaux):
\begin{equation}\renewcommand{\arraystretch}{1.2}
\left(
\begin{array}{cccc|cccc}
X^{v_1} & X^{v_2} & \cdots & X^{v_n}
  & & & & \\
X^{v_1\alpha_1} & X^{v_2\alpha_2} & \cdots & X^{v_n\alpha_n}
  & & & & \\
\vdots & \vdots & \ddots & \vdots
  & & & & \\
X^{v_1\alpha_1^{d-2}} & X^{v_2\alpha_2^{d-2}} & \cdots & X^{v_n\alpha_n^{d-2}}
  & & & & \\ \hline
& & & &
  Z^{u_1} & Z^{u_2} & \cdots & Z^{u_n} \\
& & & &
  Z^{u_1\alpha_1} & Z^{u_2\alpha_2} & \cdots & Z^{u_n\alpha_n} \\
& & & &
  \vdots & \vdots & \ddots & \vdots \\
& & & &
  Z^{u_1\alpha_1^{d-2}} & Z^{u_2\alpha_2^{d-2}} & \cdots & Z^{u_n\alpha_n^{d-2}} \\ \hline
& & & &
  \multicolumn{4}{c}{\hat{\mathcal{L}}_X} \\
& & & &
  \multicolumn{4}{c}{\hat{\mathcal{L}}_Z}
\end{array}\renewcommand{\arraystretch}{1}
\right),
\end{equation}
where the last two rows denote the logical operators specifying the actual code state in question.

As always, the qudit code may be operated as a qubit code, where the qudit-to-qubit mappings are determined by a tuple $\mathcal{B} = (B_i)_{i=1}^n$, where $B_i$ is a basis for $\mathbb{F}_q$ over $\mathbb{F}_2$; see~\cite{wills2026review} for more details. The resulting qubit code has parameters $[[ns,(n-2(d-1))s, \geq d]]$.

\subsection{Choices of Evaluation Points}\label{subsec:evaluation_points}

We list here the evaluation points $\balpha$ that we use for the definitions of our Reed-Solomon codes for concreteness. Note that our analysis depends only very slightly on the actual choice of $\balpha$. In fact, the choice of $\balpha$ only comes into the analysis of the protocol via the quantities in Subsubsection~\ref{subsubsec:list_decoding}: $F_{2\to k\times 2}(n,d=4,\balpha), F_{3\to k\times 3}(n,d=5,\balpha)$, $F_{3\to k\times 3}(n,d=6,\balpha)$, and $F_{4\to k\times 4}(n,d=6,\balpha)$, and even then we note that these quantities vary only very slightly over the random choice of $\balpha$. Nevertheless, the $\balpha$ we use are listed for concreteness.

For a particular $\balpha$, each quantity can be estimated by a computer by sampling random weight-$2$, $3$ or $4$ errors, as appropriate, and calculating by brute force for each one how many distinct weight-$2$, $3$, or $4$ errors it shares its syndrome with. For every $n$ we are interested in, we choose one $\balpha$ at random (we do not attempt to optimise this choice), and use that value of $\balpha$ for the codes of each distance at that $n$. We take $100,000$ samples of weight-$2$ or $3$ errors to estimate $F_{2\to k \times 2}(n,d=4,\balpha)$, $F_{3\to k \times 3}(n,d=5,\balpha)$ and $F_{3\to k\times 3}(n,d=5,\balpha)$, and $10,000$ samples of weight-$4$ errors to estimate $F_{4\to k\times 4}(n,d=6,\balpha)$.

Field elements in $\mathbb{F}_q$, where $q = 2^{11} = 2048$, are represented as follows. The field is written $\mathbb{F}_{2048} = \mathbb{F}_2(t)$, where $t^{11}+t^2+1 = 0$. Then, the $2048$ elements may be uniquely expressed as $\sum_{i=0}^{10}a_it^i$, where $a_i \in \mathbb{F}_2$. Field elements are then written as integers in $\{0, 1, \ldots, 2047\}$, where their binary expansions are the $(a_i)_{i=0}^{10}$. For example, the integer $1096 = 1024 + 64 + 8 = 2^{10}+2^6+2^3$ represents the field element $t^{10}+t^6+t^3$. $\balpha_n$ are the evaluation points for all codes of length $n$.

\tiny
\allowdisplaybreaks
\begin{align*}
\boldsymbol{\alpha}_{20} &= (108, 549, 179, 1575, 835, 546, 221, 1718, 1846, 1792, 79, 777, 1099, 1152, 681, 698, 1746, 107, 327, 277),\\[0.5em]
\boldsymbol{\alpha}_{21} &= (862, 231, 312, 1460, 818, 542, 868, 1945, 1539, 188, 2039, 822, 195, 1078, 235, 664, 1345, 1347, 563, 630, 1375),\\[0.5em]
\boldsymbol{\alpha}_{22} &= (1557, 176, 877, 1953, 645, 28, 760, 481, 1062, 1700, 1146, 739, 1508, 1510, 471, 1343, 640, 1579, 1026, 1134, 934, 1473),\\[0.5em]
\boldsymbol{\alpha}_{23} &= (771, 883, 89, 1611, 1673, 747, 76, 1626, 1276, 366, 547, 417, 955, 563, 145, 500, 1288, 631, 1793, 1071, 1814, 1664, 298),\\[0.5em]
\boldsymbol{\alpha}_{24} &= (1287, 315, 677, 288, 1151, 1383, 127, 1661, 730, 1160, 682, 259, 1220, 691, 2024, 84, 1793, 1390, 209, 1185, 1274, 1320, 1945, 890),\\[0.5em]
\boldsymbol{\alpha}_{25} &= (1622, 1290, 683, 1297, 1599, 1570, 69, 1623, 1370, 284, 1303, 110, 965, 1969, 441, 377, 893, 1067, 1176, 743, 1983, 1474, 410, 609, 1588),\\[0.5em]
\boldsymbol{\alpha}_{26} &= (758, 1472, 1797, 2040, 1886, 1835, 924, 1454, 1671, 752, 733, 256, 1187, 1547, 147, 1003, 621, 454, 1507, 1849, 1778, 126, 1915, 938, 689,\\
&\qquad 167),\\[0.5em]
\boldsymbol{\alpha}_{27} &= (705, 89, 1009, 238, 361, 1309, 1106, 1697, 897, 2036, 141, 222, 820, 1135, 21, 2024, 1095, 10, 171, 589, 108, 977, 1163, 1820, 1158,\\
&\qquad 671, 1165),\\[0.5em]
\boldsymbol{\alpha}_{28} &= (1205, 2, 71, 747, 667, 150, 1265, 1804, 1969, 794, 1195, 1657, 1638, 1799, 1617, 1388, 1985, 406, 165, 560, 1838, 768, 1602, 1359, 697,\\
&\qquad 1201, 1796, 1065),\\[0.5em]
\boldsymbol{\alpha}_{29} &= (490, 247, 1655, 811, 763, 157, 1633, 1229, 810, 1620, 2032, 8, 1882, 1651, 701, 1613, 1614, 191, 1224, 733, 569, 1685, 1175, 397, 1869,\\
&\qquad 263, 360, 1541, 1013),\\[0.5em]
\boldsymbol{\alpha}_{30} &= (812, 1332, 1341, 1172, 1060, 77, 1879, 1131, 486, 1899, 1238, 750, 1405, 1590, 1536, 1155, 1162, 282, 744, 36, 2014, 1914, 1069, 1821, 1325,\\
&\qquad 788, 50, 1224, 1799, 616),\\[0.5em]
\boldsymbol{\alpha}_{31} &= (60, 741, 215, 1918, 1102, 292, 609, 990, 808, 1032, 906, 556, 1558, 1023, 221, 1154, 598, 359, 565, 749, 1041, 642, 277, 1496, 332,\\
&\qquad 479, 1553, 166, 1843, 1443, 223),\\[0.5em]
\boldsymbol{\alpha}_{32} &= (1147, 1844, 422, 716, 688, 434, 948, 1616, 882, 1547, 201, 190, 654, 364, 1347, 686, 1646, 1725, 1981, 1677, 270, 569, 337, 1552, 531,\\
&\qquad 561, 271, 606, 1917, 331, 997, 1563),\\[0.5em]
\boldsymbol{\alpha}_{33} &= (621, 1386, 632, 675, 1303, 1477, 1579, 111, 1486, 1044, 26, 72, 1210, 1948, 250, 1222, 811, 685, 556, 1822, 312, 1302, 1932, 7, 253,\\
&\qquad 196, 1433, 346, 597, 1987, 1093, 385, 48),\\[0.5em]
\boldsymbol{\alpha}_{34} &= (458, 1357, 1670, 562, 296, 903, 757, 629, 1971, 796, 1466, 1642, 985, 1290, 1378, 2045, 324, 1387, 740, 771, 1992, 805, 81, 523, 143,\\
&\qquad 582, 504, 856, 878, 1442, 444, 1453, 1591, 1520),\\[0.5em]
\boldsymbol{\alpha}_{35} &= (251, 759, 1577, 922, 1942, 1379, 1484, 410, 1449, 1791, 1932, 196, 533, 1511, 1446, 1640, 1260, 1729, 272, 261, 113, 1616, 846, 127, 1756,\\
&\qquad 1677, 752, 1884, 115, 1786, 496, 886, 48, 1512, 274),\\[0.5em]
\boldsymbol{\alpha}_{36} &= (439, 437, 153, 521, 413, 1204, 664, 872, 1179, 463, 2021, 1780, 259, 782, 1431, 1708, 329, 1889, 1060, 525, 508, 422, 43, 40, 1821,\\
&\qquad 409, 176, 92, 154, 1652, 1061, 647, 1143, 1019, 1430, 208),\\[0.5em]
\boldsymbol{\alpha}_{37} &= (1637, 1121, 1966, 334, 2010, 1009, 1632, 1016, 689, 1035, 1291, 423, 1090, 1623, 912, 1875, 142, 303, 1661, 697, 881, 729, 1269, 112, 398,\\
&\qquad 747, 893, 474, 558, 843, 1252, 18, 1963, 1502, 734, 1974, 1283),\\[0.5em]
\boldsymbol{\alpha}_{38} &= (535, 492, 873, 1993, 971, 517, 2034, 656, 1931, 289, 2028, 446, 718, 1411, 222, 93, 227, 505, 1911, 1450, 1948, 407, 1207, 43, 520,\\
&\qquad 1390, 855, 1076, 560, 2036, 359, 572, 176, 14, 173, 1827, 1113, 1410),\\[0.5em]
\boldsymbol{\alpha}_{39} &= (1102, 1064, 769, 1109, 1767, 1980, 1079, 1633, 741, 1083, 1008, 1520, 49, 1439, 1551, 146, 467, 1752, 1871, 367, 895, 1140, 388, 152, 342,\\
&\qquad 1546, 672, 989, 1548, 1066, 1418, 1374, 1437, 36, 265, 1586, 806, 1335, 164),\\[0.5em]
\boldsymbol{\alpha}_{40} &= (1844, 1425, 23, 974, 687, 1866, 1924, 584, 1110, 1063, 1244, 1185, 1755, 1626, 1230, 1892, 1637, 1059, 1990, 786, 1886, 1023, 327, 1096, 1387,\\
&\qquad 797, 306, 864, 1917, 1757, 899, 1877, 1948, 1196, 857, 1655, 1420, 1880, 1035, 710),\\[0.5em]
\boldsymbol{\alpha}_{41} &= (500, 1716, 395, 180, 515, 1750, 294, 455, 990, 203, 759, 791, 1314, 914, 575, 138, 2008, 1207, 451, 574, 1440, 337, 542, 781, 844,\\
&\qquad 1513, 1222, 412, 1628, 597, 1253, 1122, 665, 1573, 1131, 1090, 521, 1010, 61, 1598, 838),\\[0.5em]
\boldsymbol{\alpha}_{42} &= (1846, 579, 1278, 627, 703, 1821, 1128, 1533, 1250, 1097, 1515, 481, 417, 757, 63, 1363, 232, 1512, 2047, 491, 1566, 1949, 1538, 438, 1641,\\
&\qquad 1064, 631, 644, 1695, 1773, 741, 53, 635, 1800, 1176, 1522, 150, 945, 84, 1417, 400, 1117),\\[0.5em]
\boldsymbol{\alpha}_{43} &= (1642, 9, 1577, 1775, 325, 824, 925, 1448, 417, 1514, 692, 34, 1014, 1241, 1013, 1765, 195, 460, 1253, 1717, 268, 1094, 1315, 1179, 446,\\
&\qquad 1433, 820, 1559, 1338, 80, 1074, 841, 148, 1547, 1465, 211, 213, 1145, 1323, 1904, 1, 1958, 293),\\[0.5em]
\boldsymbol{\alpha}_{44} &= (956, 1018, 1525, 58, 63, 1724, 348, 1069, 1029, 341, 1730, 1296, 382, 1132, 522, 442, 625, 958, 1222, 1300, 1214, 462, 1130, 344, 1332,\\
&\qquad 864, 203, 1933, 1096, 606, 1312, 392, 1909, 44, 559, 444, 1623, 1270, 1833, 1176, 692, 1077, 922, 2033),\\[0.5em]
\boldsymbol{\alpha}_{45} &= (109, 404, 1142, 852, 137, 60, 251, 44, 86, 363, 1428, 358, 1496, 560, 200, 219, 946, 699, 1753, 683, 829, 1302, 1097, 569, 2021,\\
&\qquad 1226, 1674, 972, 1977, 1090, 1047, 1020, 768, 1673, 1610, 1750, 1780, 1025, 888, 1130, 1537, 618, 1067, 2023, 845),\\[0.5em]
\boldsymbol{\alpha}_{46} &= (1008, 1359, 1757, 1775, 1755, 357, 1434, 173, 879, 979, 393, 1413, 1887, 65, 135, 1503, 1633, 82, 1779, 443, 67, 1141, 1646, 762, 1773,\\
&\qquad 626, 275, 163, 1423, 2012, 735, 151, 423, 1623, 1725, 533, 1680, 958, 1386, 1720, 1838, 1243, 111, 1739, 236, 1406),\\[0.5em]
\boldsymbol{\alpha}_{47} &= (197, 761, 1426, 1265, 651, 1988, 1118, 467, 1813, 136, 745, 1079, 913, 281, 1013, 1043, 655, 491, 11, 1025, 492, 2021, 785, 538, 1742,\\
&\qquad 611, 1713, 1460, 1569, 1788, 153, 311, 1704, 658, 1653, 1854, 1430, 1565, 6, 1142, 1826, 1686, 154, 1233, 887, 219, 1197),\\[0.5em]
\boldsymbol{\alpha}_{48} &= (1351, 1788, 1198, 437, 550, 292, 183, 1782, 1197, 1650, 1651, 1976, 1643, 537, 845, 194, 700, 549, 1962, 1358, 763, 1135, 1758, 1853, 962,\\
&\qquad 149, 4, 731, 1730, 1720, 1837, 355, 1588, 968, 412, 1242, 370, 1635, 1543, 935, 452, 1560, 691, 1279, 1285, 1840, 288, 466),\\[0.5em]
\boldsymbol{\alpha}_{49} &= (1329, 699, 554, 310, 720, 1568, 444, 325, 603, 988, 82, 107, 1708, 351, 903, 941, 1737, 1389, 914, 1172, 1460, 1543, 537, 1391, 171,\\
&\qquad 445, 1050, 750, 1694, 841, 1346, 625, 375, 124, 243, 61, 160, 1587, 1986, 1957, 586, 1803, 1762, 1292, 1997, 837, 889, 1254, 1023),\\[0.5em]
\boldsymbol{\alpha}_{50} &= (1103, 655, 1453, 912, 77, 81, 1980, 1734, 1814, 1412, 727, 207, 1083, 698, 87, 637, 644, 1211, 1876, 1137, 327, 778, 1209, 1381, 1113,\\
&\qquad 1988, 487, 952, 494, 113, 1662, 19, 1165, 738, 82, 509, 1187, 1654, 909, 653, 78, 1533, 1984, 1498, 805, 543, 1080, 661, 1322, 632),\\[0.5em]
\boldsymbol{\alpha}_{51} &= (1238, 82, 391, 1371, 1609, 294, 1253, 625, 208, 956, 254, 1302, 1720, 1827, 1853, 1392, 1677, 555, 2039, 377, 8, 535, 1760, 13, 1510,\\
&\qquad 1352, 1299, 1914, 1895, 1024, 1648, 524, 668, 865, 1478, 215, 965, 354, 1812, 1370, 1932, 357, 953, 1136, 692, 585, 1519, 1780, 1912, 1872,\\
&\qquad 1374),\\[0.5em]
\boldsymbol{\alpha}_{52} &= (1095, 1510, 859, 321, 976, 1132, 1981, 1787, 1815, 557, 1700, 1481, 679, 399, 1351, 1511, 1525, 1342, 1348, 1455, 1602, 692, 1216, 2022, 98,\\
&\qquad 471, 1914, 924, 881, 1175, 512, 1161, 656, 1850, 262, 1748, 1743, 916, 177, 1328, 325, 1269, 849, 400, 460, 1788, 694, 1296, 466, 1599,\\
&\qquad 576, 618),\\[0.5em]
\boldsymbol{\alpha}_{53} &= (381, 1157, 1443, 1156, 501, 1845, 1197, 30, 397, 1638, 542, 201, 848, 143, 385, 1161, 1404, 910, 1606, 1108, 367, 297, 666, 903, 1037,\\
&\qquad 495, 424, 1345, 836, 1297, 649, 42, 1877, 1554, 2033, 437, 1776, 347, 975, 664, 1849, 1846, 711, 770, 640, 1265, 340, 891, 1422, 572,\\
&\qquad 779, 1632, 504),\\[0.5em]
\boldsymbol{\alpha}_{54} &= (996, 351, 250, 849, 38, 503, 300, 84, 1394, 1224, 752, 1443, 1168, 1001, 1990, 705, 405, 1833, 902, 209, 1463, 934, 443, 773, 1803,\\
&\qquad 1511, 519, 1564, 2039, 54, 786, 1968, 52, 794, 905, 421, 565, 256, 855, 1416, 783, 1523, 61, 441, 1829, 1985, 207, 1972, 1021, 1355,\\
&\qquad 306, 976, 1869, 1093),\\[0.5em]
\boldsymbol{\alpha}_{55} &= (1551, 1699, 1486, 666, 1833, 242, 703, 675, 382, 1301, 1673, 1943, 1829, 1890, 1942, 1368, 825, 244, 1423, 886, 485, 964, 1539, 1979, 1546,\\
&\qquad 1657, 1383, 1457, 874, 736, 288, 1356, 1194, 252, 1877, 1678, 1264, 212, 1392, 176, 1541, 144, 592, 1468, 530, 1926, 1150, 602, 43, 1155,\\
&\qquad 1731, 559, 1663, 503, 584),\\[0.5em]
\boldsymbol{\alpha}_{56} &= (2011, 1476, 799, 1441, 1714, 1493, 904, 778, 1610, 669, 286, 2035, 24, 1252, 1659, 1970, 1079, 1555, 1732, 1165, 403, 1982, 499, 1108, 173,\\
&\qquad 553, 1164, 850, 1344, 1042, 234, 747, 798, 1030, 860, 72, 1755, 248, 1433, 215, 884, 1007, 1190, 541, 1918, 791, 94, 502, 365, 1353,\\
&\qquad 20, 428, 1025, 1393, 2014, 943),\\[0.5em]
\boldsymbol{\alpha}_{57} &= (352, 470, 462, 1987, 90, 386, 1339, 627, 200, 909, 1020, 1584, 1559, 65, 825, 1771, 533, 1175, 1894, 94, 922, 2, 1955, 1689, 1099,\\
&\qquad 1278, 1235, 691, 556, 1468, 1940, 1512, 1283, 1456, 199, 1562, 130, 1595, 1402, 1414, 1931, 384, 1576, 954, 1888, 1772, 445, 41, 914, 1487,\\
&\qquad 1413, 877, 331, 1329, 913, 1723, 1991),\\[0.5em]
\boldsymbol{\alpha}_{58} &= (1824, 354, 423, 1280, 1438, 45, 236, 1601, 899, 466, 1701, 1192, 1879, 706, 888, 412, 362, 1104, 361, 513, 764, 1688, 1614, 1977, 1268,\\
&\qquad 1672, 1727, 606, 1528, 208, 637, 1812, 2041, 46, 1470, 252, 1273, 433, 912, 767, 268, 258, 1863, 1443, 190, 333, 1381, 478, 861, 846,\\
&\qquad 1218, 791, 326, 139, 1825, 1027, 891, 721),\\[0.5em]
\boldsymbol{\alpha}_{59} &= (1736, 1185, 775, 1954, 676, 1564, 97, 1324, 1930, 1113, 55, 117, 1888, 1123, 1998, 1903, 1114, 1216, 981, 518, 934, 784, 1311, 2010, 148,\\
&\qquad 1183, 1599, 525, 1390, 1950, 83, 990, 907, 1868, 1008, 244, 468, 113, 1247, 1030, 1385, 1553, 41, 487, 1560, 139, 805, 1702, 381, 1063,\\
&\qquad 21, 1881, 772, 891, 1435, 1588, 29, 1073, 1414),\\[0.5em]
\boldsymbol{\alpha}_{60} &= (644, 96, 1185, 1494, 1830, 1573, 738, 1611, 1101, 2036, 493, 310, 440, 1920, 1128, 1695, 450, 696, 1192, 261, 759, 514, 655, 1657, 368,\\
&\qquad 461, 505, 897, 592, 176, 288, 714, 1028, 323, 1591, 60, 193, 1824, 1436, 880, 1724, 357, 335, 296, 254, 1072, 1144, 232, 409, 1828,\\
&\qquad 576, 1012, 500, 760, 1129, 334, 1463, 1205, 1782, 33),\\[0.5em]
\boldsymbol{\alpha}_{61} &= (1007, 1132, 1613, 214, 1064, 180, 728, 2042, 158, 45, 887, 1526, 407, 683, 162, 370, 2014, 10, 659, 1890, 2007, 939, 796, 577, 1119,\\
&\qquad 119, 59, 1339, 716, 2005, 1178, 1152, 209, 1635, 851, 751, 1477, 1104, 626, 115, 1880, 1434, 948, 1501, 539, 581, 1651, 1872, 625, 715,\\
&\qquad 593, 1926, 585, 1364, 935, 12, 406, 580, 575, 1687, 708),\\[0.5em]
\boldsymbol{\alpha}_{62} &= (1105, 1862, 1006, 1032, 858, 263, 276, 1460, 1484, 736, 1379, 2006, 197, 1838, 1111, 1808, 543, 1988, 829, 726, 552, 1443, 1704, 961, 1913,\\
&\qquad 681, 92, 161, 1114, 278, 729, 73, 1441, 64, 748, 1146, 653, 642, 915, 1383, 755, 130, 1716, 869, 1445, 378, 600, 1280, 444, 1841,\\
&\qquad 974, 1930, 1368, 205, 489, 1136, 1313, 1626, 851, 336, 1416, 772),\\[0.5em]
\boldsymbol{\alpha}_{63} &= (1721, 1947, 948, 1312, 409, 1843, 1623, 214, 120, 258, 791, 439, 1605, 814, 996, 194, 1159, 253, 640, 581, 562, 306, 877, 1398, 110,\\
&\qquad 1013, 445, 1540, 1230, 2046, 322, 188, 1487, 408, 924, 834, 1679, 958, 1278, 1450, 126, 1775, 1351, 523, 1906, 1380, 1742, 1459, 26, 917,\\
&\qquad 153, 36, 252, 654, 1273, 1505, 883, 486, 1596, 1508, 1444, 307, 891),\\[0.5em]
\boldsymbol{\alpha}_{64} &= (570, 217, 1774, 1926, 1536, 1695, 1525, 1794, 460, 458, 241, 944, 2034, 1222, 1730, 1712, 1280, 624, 398, 794, 1660, 1292, 1806, 1713, 323,\\
&\qquad 2037, 1981, 180, 990, 1286, 1134, 695, 1505, 480, 1542, 1110, 513, 401, 2001, 2009, 1699, 1517, 448, 1724, 796, 108, 1369, 215, 439, 1357,\\
&\qquad 227, 977, 627, 1789, 224, 1653, 740, 11, 1162, 1980, 1677, 1140, 877, 1860),\\[0.5em]
\boldsymbol{\alpha}_{65} &= (2010, 578, 67, 1002, 1419, 1003, 957, 1128, 875, 75, 1173, 1885, 1405, 1818, 1694, 1774, 923, 1197, 1350, 1360, 1469, 922, 66, 971, 782,\\
&\qquad 1641, 1921, 83, 435, 946, 400, 810, 1764, 328, 1227, 1029, 457, 677, 134, 1240, 836, 406, 1732, 379, 729, 1301, 251, 590, 1098, 427,\\
&\qquad 1826, 1066, 2039, 175, 1379, 1182, 586, 811, 373, 1247, 1899, 1992, 611, 643, 1364),\\[0.5em]
\boldsymbol{\alpha}_{66} &= (1714, 794, 192, 249, 699, 1005, 1467, 586, 132, 507, 596, 1729, 1174, 767, 1277, 737, 832, 1918, 1368, 146, 547, 1314, 1167, 252, 1011,\\
&\qquad 1421, 916, 746, 1917, 1740, 1696, 660, 1354, 52, 1282, 1172, 269, 1512, 1858, 37, 71, 1676, 518, 1094, 371, 1797, 1716, 813, 1169, 2011,\\
&\qquad 174, 1528, 1105, 1905, 1856, 913, 1385, 400, 907, 1527, 513, 1266, 1349, 413, 13, 663),\\[0.5em]
\boldsymbol{\alpha}_{67} &= (265, 295, 1106, 1292, 802, 1585, 1804, 704, 1255, 74, 62, 609, 599, 1131, 367, 864, 187, 1470, 1613, 664, 1563, 1697, 1865, 1556, 1407,\\
&\qquad 1247, 980, 806, 1167, 2046, 960, 1083, 70, 1870, 1850, 143, 1551, 840, 1773, 2009, 474, 1415, 1316, 321, 917, 237, 1160, 584, 1243, 592,\\
&\qquad 27, 113, 1884, 1917, 1960, 1143, 247, 424, 570, 1520, 1911, 88, 1657, 1351, 1644, 1228, 398),\\[0.5em]
\boldsymbol{\alpha}_{68} &= (368, 279, 1335, 876, 927, 237, 505, 852, 234, 654, 1991, 119, 2043, 190, 898, 659, 1855, 1008, 1528, 1120, 1831, 415, 1076, 2009, 206,\\
&\qquad 357, 463, 2022, 1657, 586, 1191, 521, 1214, 1748, 1037, 736, 1868, 1194, 544, 75, 798, 372, 23, 1734, 997, 1022, 973, 1372, 1890, 1218,\\
&\qquad 1948, 817, 1283, 1445, 1994, 277, 142, 832, 1200, 12, 1692, 131, 938, 1330, 1789, 303, 214, 1041),\\[0.5em]
\boldsymbol{\alpha}_{69} &= (612, 619, 475, 1980, 1731, 1169, 1950, 1887, 192, 222, 1367, 1150, 2023, 1932, 753, 894, 1194, 569, 1990, 551, 1219, 391, 47, 1554, 1868,\\
&\qquad 915, 1462, 1075, 407, 891, 1271, 1959, 209, 1227, 1425, 1538, 1816, 1629, 2025, 523, 694, 1259, 1370, 1034, 1561, 1937, 277, 1946, 335, 1289,\\
&\qquad 576, 992, 848, 895, 1608, 1260, 83, 920, 79, 886, 1905, 679, 446, 842, 615, 1569, 397, 1571, 181),\\[0.5em]
\boldsymbol{\alpha}_{70} &= (995, 94, 418, 155, 659, 143, 394, 190, 1129, 806, 1375, 1000, 650, 1918, 216, 2021, 1646, 1127, 1398, 1983, 73, 522, 350, 831, 1332,\\
&\qquad 849, 1796, 114, 1899, 440, 699, 1985, 1224, 255, 1797, 499, 370, 475, 848, 96, 1731, 8, 1546, 596, 206, 379, 1082, 1203, 1700, 195,\\
&\qquad 1383, 353, 1494, 572, 895, 1685, 1144, 421, 347, 330, 1288, 429, 128, 1261, 1099, 188, 840, 134, 758, 872),\\[0.5em]
\boldsymbol{\alpha}_{71} &= (1227, 1475, 1038, 1778, 1671, 1098, 369, 803, 1523, 1945, 2022, 1311, 767, 739, 918, 1451, 1230, 1180, 1733, 580, 1326, 1508, 1036, 335, 828,\\
&\qquad 159, 1749, 1377, 946, 1080, 1800, 1924, 378, 932, 1364, 611, 1681, 1369, 1654, 1952, 1308, 575, 1487, 321, 342, 1544, 878, 895, 1504, 603,\\
&\qquad 2, 746, 432, 862, 810, 8, 438, 696, 1798, 1637, 1682, 648, 913, 1265, 983, 784, 63, 1860, 1961, 752, 248),\\[0.5em]
\boldsymbol{\alpha}_{72} &= (1827, 1380, 1209, 1732, 1135, 817, 1396, 151, 1147, 102, 1814, 1556, 604, 1225, 993, 1462, 1153, 833, 1296, 1131, 1575, 1965, 1137, 426, 215,\\
&\qquad 1854, 18, 1987, 520, 1967, 1438, 7, 781, 203, 480, 1488, 411, 627, 1434, 1252, 1770, 500, 1138, 1366, 1936, 1109, 659, 1537, 626, 1562,\\
&\qquad 1664, 1089, 1048, 105, 1896, 6, 719, 638, 170, 1102, 1463, 1513, 1059, 51, 1289, 1775, 384, 1421, 400, 791, 900, 1401),\\[0.5em]
\boldsymbol{\alpha}_{73} &= (496, 720, 1324, 1793, 275, 1347, 659, 1603, 772, 653, 480, 666, 455, 1818, 1205, 276, 603, 1175, 43, 1661, 1695, 1897, 844, 1407, 1756,\\
&\qquad 300, 451, 235, 97, 201, 388, 267, 115, 1039, 2026, 122, 1300, 25, 1421, 439, 1054, 106, 991, 1325, 264, 1040, 1446, 259, 1071, 733,\\
&\qquad 180, 1230, 1301, 1174, 1581, 1753, 358, 696, 1607, 1964, 1289, 2022, 1648, 1673, 1148, 1003, 1848, 1721, 1292, 1561, 1639, 1960, 553),\\[0.5em]
\boldsymbol{\alpha}_{74} &= (1523, 1695, 1774, 1986, 1805, 1798, 1502, 1137, 570, 1183, 950, 574, 1438, 706, 1829, 486, 77, 1652, 1421, 1497, 1692, 18, 1899, 534, 176,\\
&\qquad 616, 1801, 1913, 1836, 1536, 1912, 284, 617, 499, 1381, 1565, 478, 3, 1409, 1367, 74, 1573, 1420, 210, 1733, 1200, 89, 396, 965, 1062,\\
&\qquad 966, 339, 368, 371, 152, 2020, 91, 1265, 209, 476, 801, 795, 1309, 1074, 1983, 350, 256, 141, 1403, 395, 311, 1070, 1570, 2005),\\[0.5em]
\boldsymbol{\alpha}_{75} &= (173, 1880, 1581, 816, 1006, 215, 166, 1369, 630, 954, 21, 104, 232, 1505, 372, 1149, 262, 949, 836, 1, 1561, 223, 680, 390, 929,\\
&\qquad 880, 133, 1435, 1879, 838, 1064, 1090, 874, 868, 990, 1791, 1306, 1229, 1443, 1769, 381, 1215, 817, 1201, 1518, 1427, 986, 1050, 1930, 1759,\\
&\qquad 1713, 66, 520, 1243, 1331, 833, 1461, 1402, 29, 500, 1725, 328, 1693, 1365, 335, 776, 198, 770, 1885, 599, 2009, 1189, 708, 4, 1758),\\[0.5em]
\boldsymbol{\alpha}_{76} &= (669, 1719, 1583, 1499, 487, 1689, 7, 1434, 1475, 958, 341, 105, 507, 937, 72, 32, 1536, 1705, 197, 1621, 1270, 556, 1115, 1567, 509,\\
&\qquad 1137, 1884, 445, 1637, 1623, 1799, 1653, 373, 1596, 140, 787, 395, 278, 1038, 1356, 717, 137, 1140, 1995, 180, 326, 831, 1593, 524, 1763,\\
&\qquad 1338, 1133, 1066, 192, 1773, 972, 1944, 786, 1415, 58, 2045, 2008, 864, 465, 298, 906, 2001, 1252, 789, 1353, 1410, 1755, 974, 405, 815,\\
&\qquad 1321),\\[0.5em]
\boldsymbol{\alpha}_{77} &= (1902, 662, 1184, 1049, 2036, 707, 360, 1921, 222, 651, 170, 1527, 2023, 51, 1516, 788, 52, 786, 7, 1383, 753, 404, 703, 350, 412,\\
&\qquad 1005, 96, 880, 735, 1473, 432, 1811, 1139, 734, 947, 496, 268, 1492, 1750, 1347, 1901, 1852, 827, 1080, 1501, 1586, 206, 1185, 862, 298,\\
&\qquad 1723, 1338, 730, 118, 1354, 1966, 1179, 1693, 1666, 680, 1506, 797, 1930, 1434, 1643, 590, 1518, 789, 1076, 1859, 1314, 1426, 215, 1745, 1504,\\
&\qquad 1098, 1897),\\[0.5em]
\boldsymbol{\alpha}_{78} &= (318, 1837, 456, 1714, 272, 2003, 438, 1010, 434, 1326, 946, 1256, 359, 233, 376, 658, 364, 1170, 1790, 476, 573, 844, 1851, 1690, 328,\\
&\qquad 1656, 1725, 1904, 1495, 1741, 551, 1158, 688, 585, 1786, 393, 429, 1878, 800, 1253, 1988, 739, 323, 595, 751, 362, 2008, 332, 264, 667,\\
&\qquad 1973, 1953, 1427, 1202, 1949, 72, 165, 269, 426, 872, 162, 591, 421, 779, 951, 780, 809, 2040, 73, 1576, 1279, 1724, 147, 69, 128,\\
&\qquad 383, 931, 48),\\[0.5em]
\boldsymbol{\alpha}_{79} &= (1858, 952, 2010, 1731, 1281, 1142, 699, 373, 758, 1986, 991, 1838, 203, 1385, 218, 792, 372, 1031, 1304, 272, 655, 1784, 1713, 98, 1498,\\
&\qquad 1023, 1335, 280, 1172, 88, 428, 207, 1714, 1869, 159, 1701, 332, 374, 497, 802, 72, 1273, 1587, 919, 1440, 1615, 981, 2019, 1700, 1793,\\
&\qquad 162, 1871, 1311, 1364, 265, 1294, 1198, 1645, 519, 267, 1984, 846, 1654, 1688, 863, 761, 1551, 90, 992, 1363, 1212, 950, 727, 955, 389,\\
&\qquad 1673, 1396, 1882, 476),\\[0.5em]
\boldsymbol{\alpha}_{80} &= (767, 550, 1194, 1374, 1532, 688, 431, 771, 1885, 766, 70, 510, 1979, 947, 1294, 293, 1941, 1939, 675, 171, 1801, 1039, 151, 1266, 1595,\\
&\qquad 638, 693, 1143, 471, 761, 691, 359, 68, 1853, 984, 354, 1124, 1305, 133, 85, 1340, 524, 473, 20, 1183, 1208, 406, 208, 1812, 1878,\\
&\qquad 775, 234, 226, 212, 1954, 1195, 1028, 1238, 1145, 1333, 244, 408, 826, 1953, 1828, 405, 442, 902, 1957, 225, 316, 544, 1693, 1935, 356,\\
&\qquad 1391, 1761, 1554, 1212, 963).
\end{align*}
\normalsize

\section{Deferred Proofs for Time-Like Failures}\label{sec:tlf_proofs}
We collect here deferred proofs of lemmas used in the discussion of time-like failures in Subsection~\ref{subsec:time_like_failures}.
\begin{proof}[Proof of Lemma~\ref{lem:tlf_case_1}]
    We have
    \begin{align}
        N_1(n,d,\balpha,\hat{\beta}^{(X)}*\nu) &= \left|\left\{(k,i) \in [d-2]\times[n]:\Pi_{\leq k}h_i \in \ker(\hat{\beta}^{(X)}*\nu)\right\}\right|\\
        &= \left|\left\{(k,i) \in [d-2]\times[n]:\nu^{-1}*\Pi_{\leq k}h_i \in \ker(\hat{\beta}^{(X)})\right\}\right|.
    \end{align}
    Here, given two vectors $v,w$, we have written $v*w$ for the componentwise multiplication of those vectors, and $\nu^{-1}$ denotes the componentwise inversion of $\nu$, noting it has all non-zero entries. Next, we note that, for any $i$, $\Pi_{\leq k}h_i$ is non-zero exactly on the coordinates $\{1, \ldots, k\}$. Moreover, via a uniformly random choice of $\nu$, $\nu^{-1}*\Pi_{\leq k}h_i$ becomes a uniformly random vector which is non-zero exactly on the coordinates $\{1, \ldots, k\}$. Since there are $(q-1)^k$ of these, we have
    \begin{equation}
        \mathbb{E}_{\nu}\left\{N_1(n,d,\balpha,\hat{\beta}^{(X)}*\nu)\right\} = n\cdot \sum_{k=1}^{d-2}\frac{D_k}{(q-1)^k},
    \end{equation}
    where $D_k$ is the number of vectors in $\ker(\hat{\beta}^{(X)})$ that are non-zero exactly on the first $k$ coordinates.

    It remains to compute $D_k$. We notice that, because $\begin{pmatrix}
        \hat{\beta}^{(X)}&I
    \end{pmatrix}$ is a parity-check matrix for a classical code of distance $M+1$, no set of $M$ columns of $\hat{\beta}^{(X)}$ can be linearly dependent. This means that $D_k = 0$ for $k \leq M$, and we have
    \begin{equation}
        \mathbb{E}_{\nu}\left\{N_1(n,d,\balpha,\hat{\beta}^{(X)}*\nu)\right\} = n\cdot \sum_{k=M+1}^{d-2}\frac{D_k}{(q-1)^k}.
    \end{equation}
    Moreover, because $\begin{pmatrix}
        \hat{\beta}^{(X)}&I
    \end{pmatrix}$ is an $M \times (d-1+M)$ parity-check matrix for a code of distance $M+1$, for any $k$ with $M+1 \leq k \leq d-2$, the first $k$ columns of $\hat{\beta}^{(X)}$ form a parity-check matrix for a code with parameters $[k,k-M,M+1]$. This is therefore an MDS (maximum distance separable), whose weight distributions may be characterised exactly. In particular, the number of weight-$k$ codewords of this code, which is then also $D_k$, is (see~\cite{macwilliams1977theory})
    \begin{equation}
        D_k = \sum_{t=0}^{k-M-1}(-1)^t\begin{pmatrix}
            k\\t
        \end{pmatrix}(q^{k-M-t}-1),
    \end{equation}
    and the result follows.
\end{proof}

\begin{proof}[Proof of Lemma~\ref{lem:tlf_case_2}]
    For $\mu \in \mathbb{F}_q^*$, it is useful to write
    \begin{equation}
        x_\mu \coloneq \Pi_{\leq k}h_i + \mu e_k,
    \end{equation}
    so that
    \begin{equation}
        \mathsf{TLF}_2^{(X)}[k,i](\hat{\beta}^{(X)}*\nu) = \left\{\mu \in \mathbb{F}_q^*:x_\mu \in \ker(\hat{\beta}^{(X)}*\nu)\setminus\{0\}\right\}.
    \end{equation}
    We then have
    \begin{multline}
        \sum_{\mu \in \mathsf{TLF}_2^{(X)}[k,i](\hat{\beta}^{(X)}*\nu)}\mathbb{P}\left[\text{Mixed}^{(k)},i,Z^\lambda,\mu\lambda\;|\;\mathcal{B}\right] =\\ \sum_{\mu \in \mathbb{F}_q^*}\mathbb{I}\left[x_\mu \in \ker(\hat{\beta}^{(X)}*\nu)\setminus\{0\}\right]\mathbb{P}\left[\text{Mixed}^{(k)},i,Z^\lambda,\mu\lambda\;|\;\mathcal{B}\right],
    \end{multline}
    where $\mathbb{I}\left[A\right]$ is the indicator function for the event $A$. We then obtain
    \begin{multline}
        \mathbb{E}_\nu\left\{\sum_{\mu \in \mathsf{TLF}_2^{(X)}[k,i](\hat{\beta}^{(X)}*\nu)}\mathbb{P}\left[\text{Mixed}^{(k)},i,Z^\lambda,\mu\lambda\;|\;\mathcal{B}\right]\right\}=\\
        \sum_{\mu \in \mathbb{F}_q^*}\mathbb{P}_\nu \left[x_\mu \in \ker(\hat{\beta}^{(X)}*\nu)\setminus\{0\}\right]\cdot\mathbb{P}\left[\text{Mixed}^{(k)},i,Z^\lambda,\mu\lambda\;|\;\mathcal{B}\right].
    \end{multline}
    In this equation, it is worth emphasising the the first probability on the right-hand side is a probability over the randomness of $\nu$, whereas the second is the probability of some quantum error. The next thing to do is focus on $\mathbb{P}_\nu \left[x_\mu \in \ker(\hat{\beta}^{(X)}*\nu)\setminus\{0\}\right]$.

    First, notice that if $\mu \neq (h_i)_k$, then $x_\mu$ is non-zero exactly on the first $k$ coordinates. We have that
    \begin{equation}
        x_\mu \in \ker(\hat{\beta}^{(X)}*\nu)\setminus\{0\} \iff \nu^{-1} * x_\mu \in \ker(\hat{\beta}^{(X)}) \setminus\{0\},
    \end{equation}
    where given vectors $v$ and $w$, $v*w$ denotes the componentwise multiplication of $v$ and $w$, and $\nu^{-1}$ denotes the componentwise inversion of $\nu$. Over a uniformly random choice of $\nu$, we have that $\nu^{-1} * x_\mu$ is uniformly random over the $(q-1)^k$ vectors that are non-zero on the first $k$ coordinates. By the same reasoning as in the proof of Lemma~\ref{lem:tlf_case_1}, the number of non-zero vectors that are in $\ker(\hat{\beta}^{(X)})$ and non-zero on exactly the first $k$ coordinates, for $k \in [d-1]$, is
    \begin{equation}
        D_k \coloneq \begin{cases}
            0 &\text{ if }k \leq M\\
            \sum_{t=0}^{k-M-1}(-1)^t\begin{pmatrix}
                k\\t
            \end{pmatrix}(q^{k-M-t}-1)&\text{ if } k \geq M+1.
        \end{cases}
    \end{equation}
    For $\mu \neq (h_i)_k$, we therefore have
    \begin{equation}
        \mathbb{P}_\nu\left[x_\mu \in \ker(\hat{\beta}^{(X)}*\nu)\setminus\{0\}\right] = \frac{D_k}{(q-1)^k}.
    \end{equation}
    On the other hand, suppose $\mu = (h_i)_k$. Then $x_\mu = 0$ for $k=1$, and is non-zero on exactly the first $k-1$ coordinates for $k\geq 2$. In this case, we thus have
    \begin{equation}
        \mathbb{P}_\nu\left[x_\mu \in \ker(\hat{\beta}^{(X)}*\nu)\setminus\{0\}\right] = \frac{D_{k-1}}{(q-1)^{k-1}},
    \end{equation}
    where it is convenient to write $D_0 = 0$.
    Returning to what we ultimately want to calculate, we have that
    \begin{multline}
        \mathbb{E}_\nu\left\{\sum_{\mu \in \mathsf{TLF}_2^{(X)}[k,i](\hat{\beta}^{(X)}*\nu)}\mathbb{P}\left[\text{Mixed}^{(k)},i,Z^\lambda,\mu\lambda\;|\;\mathcal{B}\right]\right\}\\=
        \frac{D_k}{(q-1)^k}\sum_{\mu \in \mathbb{F}_q^*\setminus\{(h_i)_k\}}\left\{\mathbb{P}\left[\text{Mixed}^{(k)},i,Z^\lambda,\mu\lambda\;|\;\mathcal{B}\right]\right\} \\+ \frac{D_{k-1}}{(q-1)^{k-1}}\cdot\mathbb{P}\left[\text{Mixed}^{(k)},i,Z^\lambda,(h_i)_k\lambda\;|\;\mathcal{B}\right],
    \end{multline}
    for every $k \in [d-1]$, using the definition $D_0 = 0$.
    We now use the result of the following lemma, whose proof is deferred until after this one.
    \begin{lemma}\label{lem:recurrence}For every $k \in [d-1]$,
        \begin{equation}
        \frac{D_{k-1}}{(q-1)^{k-1}} = \frac{D_k}{(q-1)^k} + (-1)^{k-M}\frac{\begin{pmatrix}
            k-2\\M-1
        \end{pmatrix}}{(q-1)^{k-1}}.\footnote{We use the convention throughout that $\begin{pmatrix}
            A\\B
        \end{pmatrix} = 0$ if $B>A$.}
    \end{equation}
    \end{lemma}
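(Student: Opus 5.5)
The identity is purely combinatorial, so I would ignore the quantum setting and work only with the closed form of $D_k$. Multiplying through by $(q-1)^k$, the claim is equivalent to
\begin{equation*}
D_k = (q-1)\,D_{k-1} - (-1)^{k-M}(q-1)\binom{k-2}{M-1}, \qquad k \in [d-1],
\end{equation*}
with $D_0 = 0$ and $D_k = 0$ for $k \leq M$. I would prove this linear recurrence directly.

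\textbf{Boundary cases.} For $k \leq M$ both $D_k$ and $D_{k-1}$ vanish. Also $k-2 < M-1$, so $\binom{k-2}{M-1} = 0$ by the stated convention; this includes the degenerate $k=1$ term. For $k = M+1$ we have $D_M = 0$, and the formula gives $D_{M+1} = q-1$, the number of full-support codewords of an $[M+1,1,M+1]$ MDS code. The right-hand side is $0 - (-1)^{1}(q-1)\binom{M-1}{M-1} = q-1$, so the identity holds.

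\textbf{Main case $k \geq M+2$.} Rather than applying Pascal's rule to the stated form $\sum_t (-1)^t\binom{k}{t}(q^{k-M-t}-1)$, where the ``$-1$'' terms make the bookkeeping awkward, I would first rewrite $D_k$ in the other standard form of the MDS weight enumerator (MacWilliams--Sloane):
\begin{equation*}
D_k = (q-1)E_k, \qquad E_k \coloneq \sum_{j=0}^{k-M-1}(-1)^j\binom{k-1}{j}q^{k-M-1-j}.
\end{equation*}
This is the full-weight coefficient $w=n=k$, $d=M+1$ of the MDS weight distribution. The equivalence of the two forms is classical. If a self-contained check is wanted, one can expand $q^{m}-1 = (q-1)\sum_{r<m} q^r$ and then use $\sum_{t\le j}(-1)^t\binom{k}{t} = (-1)^j\binom{k-1}{j}$.

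Now apply Pascal's rule $\binom{k-1}{j} = \binom{k-2}{j} + \binom{k-2}{j-1}$ to $E_k$ and split into two sums.
\begin{itemize}
\item The first sum, $\sum_{j=0}^{k-M-1}(-1)^j\binom{k-2}{j}q^{k-M-1-j}$, equals $qE_{k-1}$ plus its top term $j = k-M-1$. That term is $(-1)^{k-M-1}\binom{k-2}{k-M-1} = (-1)^{k-M-1}\binom{k-2}{M-1}$.
\item The second sum, after reindexing $j \to j+1$, is exactly $-E_{k-1}$.
\end{itemize}
Hence
\begin{equation*}
E_k = (q-1)E_{k-1} + (-1)^{k-M-1}\binom{k-2}{M-1}.
\end{equation*}
Multiplying by $q-1$ gives the required recurrence for $D_k$.

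\textbf{Expected obstacle.} The only real difficulty is the switch between the two closed forms of the MDS weight distribution, together with checking the summation ranges and the zero conventions at the edges ($k=M+1$, and $k-1 \leq M$). The Pascal step itself is routine.
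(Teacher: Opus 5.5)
Your proof is correct. It uses the same two identities as the paper's proof, Pascal's rule and the partial alternating sum $\sum_{t=0}^{m}(-1)^t\binom{n}{t}=(-1)^m\binom{n-1}{m}$, but in the opposite order.

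The paper applies Pascal's rule directly to $\binom{k}{t}$ in the original form of $D_k$ and telescopes $(q^{a}-1)-(q^{a-1}-1)=(q-1)q^{a-1}$. It then has to re-insert the $-1$'s to recognise $(q-1)D_{k-1}$. That costs the alternating partial sum plus a second Pascal step, $\binom{k-1}{k-M-1}-\binom{k-2}{k-M-2}=\binom{k-2}{k-M-1}=\binom{k-2}{M-1}$. The expression the paper reaches after its first Pascal step is exactly your $(q-1)E_k$, so both arguments pass through the same intermediate quantity.

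You instead spend the alternating-sum identity up front. Your geometric-series expansion $q^m-1=(q-1)\sum_{r<m}q^r$ followed by swapping the sums checks out, and it passes you to the second MDS weight-enumerator form $D_k=(q-1)E_k$. A single Pascal step on $\binom{k-1}{j}$ then gives the recurrence for $E_k$, with the boundary term $(-1)^{k-M-1}\binom{k-2}{M-1}$ appearing directly.

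Your version is a little tidier and treats $k=M+1$ explicitly. The paper covers that case only implicitly, through empty sums and $\binom{M-1}{-1}=0$. The paper's version, in exchange, never needs the second form of the weight distribution.
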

    \noindent Using this lemma, we get
    \begin{multline}
        \mathbb{E}_\nu\left\{\sum_{\mu \in \mathsf{TLF}_2^{(X)}[k,i](\hat{\beta}^{(X)}*\nu)}\mathbb{P}\left[\text{Mixed}^{(k)},i,Z^\lambda,\mu\lambda\;|\;\mathcal{B}\right]\right\}\\=\frac{D_k}{(q-1)^k}\sum_{\mu \in \mathbb{F}_q^*}\mathbb{P}\left[\text{Mixed}^{(k)},i,Z^\lambda,\mu\lambda\;|\;\mathcal{B}\right] + (-1)^{k-M}\frac{\begin{pmatrix}
            k-2\\M-1
        \end{pmatrix}}{(q-1)^{k-1}}\mathbb{P}\left[\text{Mixed}^{(k)},i,Z^\lambda,(h_i)_k\lambda\;|\;\mathcal{B}\right]\\
        \leq \frac{D_k}{(q-1)^k}\sum_{\mu \in \mathbb{F}_q^*}\mathbb{P}\left[\text{Mixed}^{(k)},i,Z^\lambda,\mu\lambda\;|\;\mathcal{B}\right] + \frac{\begin{pmatrix}
            k-2\\M-1
        \end{pmatrix}}{(q-1)^{k-1}}\mathbb{P}\left[\text{Mixed}^{(k)},i,Z^\lambda,(h_i)_k\lambda\;|\;\mathcal{B}\right].
    \end{multline}
    Let us now denote by
    \begin{equation}
        \mathbb{P}\left[\text{Mixed}^{(k)},i,Z^\lambda,*\;|\;\mathcal{B}\right]
    \end{equation}
    the probability that the $k$'th check creates a $Z$ error $\lambda e_i \in \mathbb{F}_q^n$, and has any non-trivial measurement error. We have
    \begin{align}
        \sum_{\mu \in \mathbb{F}_q^*}\mathbb{P}\left[\text{Mixed}^{(k)},i,Z^\lambda,\mu\lambda\;|\;\mathcal{B}\right] &= \mathbb{P}\left[\text{Mixed}^{(k)},i,Z^\lambda,*\;|\;\mathcal{B}\right]\\
        \mathbb{P}\left[\text{Mixed}^{(k)},i,Z^\lambda, (h_i)_k\lambda\;|\;\mathcal{B}\right] &\leq \mathbb{P}\left[\text{Mixed}^{(k)},i,Z^\lambda,*\;|\;\mathcal{B}\right],
    \end{align}
    and so
    \begin{multline}
        \mathbb{E}_\nu\left\{\sum_{\mu \in \mathsf{TLF}_2^{(X)}[k,i](\hat{\beta}^{(X)}*\nu)}\mathbb{P}\left[\text{Mixed}^{(k)},i,Z^\lambda,\mu\lambda\;|\;\mathcal{B}\right]\right\}\\\leq\frac{D_k}{(q-1)^k}\mathbb{P}\left[\text{Mixed}^{(k)},i,Z^\lambda,*\;|\;\mathcal{B}\right] + \frac{\begin{pmatrix}
            k-2\\M-1
        \end{pmatrix}}{(q-1)^{k-1}}\mathbb{P}\left[\text{Mixed}^{(k)},i,Z^\lambda,*\;|\;\mathcal{B}\right]
    \end{multline}
    It remains to take the expectation over $\mathcal{B}$, which we do as follows. The event corresponding to the probability $\mathbb{P}\left[\text{Mixed}^{(k)},i,Z^\lambda,*\;|\;\mathcal{B}\right]$ requires the qudit $XX$ measurement on the $i$'th qudit consuming the $k$'th cat state to fail in such a way that an error $Z^\lambda$ is deposited on the data qudit, and any non-trivial measurement error results on that check. We want to consider
    \begin{equation}
        \mathbb{E}_{\mathcal{B}}\left\{\mathbb{P}\left[\text{Mixed}^{(k)},i,Z^\lambda,*\;|\;\mathcal{B}\right]\right\}.
    \end{equation}
    When we change $\mathcal{B}$, specifically $B_i$ (the qudit-to-qubit mapping on the $i$'th qudit), the qubit error that must be created on the data to correspond to $Z^\lambda$ must also change, and every qubit error is visited equally. Whenever $\mathcal{B}$ is changed, however, any non-trivial measurement error gives the required event. We thus have
    \begin{equation}
        \mathbb{E}_{\mathcal{B}}\left\{\mathbb{P}\left[\text{Mixed}^{(k)},i,Z^\lambda,*\;|\;\mathcal{B}\right]\right\} = \frac{1}{q-1}\sum_{v \in \mathbb{F}_2^s\setminus\{0\}}\mathbb{P}\left[XX, Z^v, *\right],
    \end{equation}
    where $\mathbb{P}\left[XX, Z^v,*\right]$ denotes the probability that the qudit $XX$ measurement fails in such a way to leave the $Z$ error $Z^v$ on the data qudit, and creates any non-trivial measurement error on the corresponding check. These are all distinct ways the qudit $XX$ measurement can fail, and so we have
    \begin{equation}
        \sum_{v \in \mathbb{F}_2^s\setminus\{0\}}\mathbb{P}\left[XX, Z^v,*\right] \leq \mathbb{P}\left[XX\right],
    \end{equation}
    where the right-hand side is the probability of any failure in the qudit $XX$ measurement. In summary,
    \begin{multline}
        \mathbb{E}_{\mathcal{B},\nu}\left\{\sum_{\mu \in \mathsf{TLF}_2^{(X)}[k,i](\hat{\beta}^{(X)}*\nu)}\mathbb{P}\left[\text{Mixed}^{(k)},i,Z^\lambda,\mu\lambda\;|\;\mathcal{B}\right]\right\}\\\leq\left(\frac{D_k}{(q-1)^{k+1}}+\frac{1}{(q-1)^k}\begin{pmatrix}
            k-2\\M-1
        \end{pmatrix}\right)\mathbb{P}\left[XX\right],
    \end{multline}
    as required.
\end{proof}
\begin{proof}[Proof of Lemma~\ref{lem:recurrence}]
    First, note that the statement is trivial for $k\leq M$, since we defined $D_0 = 0$. Next, let us restrict to $k \geq M+1$. We have
    \begin{align}
        D_k&=\sum_{t=0}^{k-M-1}(-1)^t\begin{pmatrix}
            k\\t
        \end{pmatrix}(q^{k-M-t}-1)\\
        &=\sum_{t=0}^{k-M-1}(-1)^t\begin{pmatrix}
            k-1\\t
        \end{pmatrix}(q^{k-M-t}-1)+\sum_{t=0}^{k-M-1}(-1)^t\begin{pmatrix}
            k-1\\t-1
        \end{pmatrix}(q^{k-M-t}-1),
    \end{align}
    where we have used Pascal's identity, with the convention $\begin{pmatrix}
        A\\B
    \end{pmatrix} = 0$ for $B>A$. Then,
    \begin{align}
        D_k &= \sum_{t=0}^{k-M-1}(-1)^t\begin{pmatrix}
            k-1\\t
        \end{pmatrix}(q^{k-M-t}-1)+\sum_{t=-1}^{k-M-2}(-1)^{t+1}\begin{pmatrix}
            k-1\\t
        \end{pmatrix}(q^{k-M-t-1}-1)\\
        &= \sum_{t=0}^{k-M-1}(-1)^t\begin{pmatrix}
            k-1\\t
        \end{pmatrix}(q^{k-M-t}-1)-\sum_{t=0}^{k-M-2}(-1)^{t}\begin{pmatrix}
            k-1\\t
        \end{pmatrix}(q^{k-M-t-1}-1)\\
        &= (-1)^{k-M-1}\begin{pmatrix}
            k-1\\k-M-1
        \end{pmatrix}(q-1) + \sum_{t=0}^{k-M-2}(-1)^t\begin{pmatrix}
            k-1\\t
        \end{pmatrix}(q^{k-M-t}-q^{k-M-t-1})\\
        &= (-1)^{k-M-1}\begin{pmatrix}
            k-1\\k-M-1
        \end{pmatrix}(q-1) + (q-1)\sum_{t=0}^{k-M-2}(-1)^t\begin{pmatrix}
            k-1\\t
        \end{pmatrix}q^{k-M-t-1},
    \end{align}
    where the third line follows by combining the sums.
    Then, the second term is equal to
    \begin{multline}
        (q-1)\sum_{t=0}^{k-M-2}(-1)^t\begin{pmatrix}
            k-1\\t
        \end{pmatrix}\left(q^{k-M-t-1}-1\right) + (q-1)\sum_{t=0}^{k-M-2}(-1)^t\begin{pmatrix}
            k-1\\t
        \end{pmatrix} \\= (q-1)D_{k-1} + (q-1)(-1)^{k-M}\begin{pmatrix}
            k-2\\k-M-2
        \end{pmatrix},
    \end{multline}
    where we have used that $\sum_{t=0}^m(-1)^t\begin{pmatrix}
        n\\t
    \end{pmatrix}=(-1)^m\begin{pmatrix}
        n-1\\m
    \end{pmatrix}$ for positive integers $n$. We thus have
    \begin{align}
        D_k &= (q-1)D_{k-1} + (-1)^{k-M-1}(q-1)\left[\begin{pmatrix}
            k-1\\k-M-1
        \end{pmatrix}-\begin{pmatrix}
            k-2\\k-M-2
        \end{pmatrix}\right]\\
        &=(q-1)D_{k-1} + (-1)^{k-M-1}(q-1)\begin{pmatrix}
            k-2\\k-M-1
        \end{pmatrix}\\
        &=(q-1)D_{k-1} + (-1)^{k-M-1}(q-1)\begin{pmatrix}
            k-2\\M-1
        \end{pmatrix},
    \end{align}
    and the result follows.
\end{proof}

\section{Non-CSS Outer Codes}\label{sec:non_css_outer_codes}

We include here a brief comment on a technique which we believe would significantly improve the scheme, although we do not include in this work for simplicity and brevity, and leave it to the future. Namely, our codes are formed by taking quantum Reed-Solomon codes for an alphabet of size $q = 2^{11}$, and then using qudit-to-qubit mappings to make them into qubit codes~\cite{wills2026review}. As discussed in Section~\ref{sec:ler_calc}, we imagine taking random qudit-to-qubit mappings $\mathcal{B}$, and show that our scheme performs well in expectation over this choice, regardless of the underlying logical error distribution. This amounts to performing a random $\mathsf{CNOT}$ circuit on the $s$ qubits forming each Galois qudit.

One could, instead, imagine performing a random \textit{Clifford} operation on each set of $s$ qubits forming every qudit. This would, of course, have the effect of forming a non-CSS outer code, and one's (Shor) error correction scheme would have to handle both $X$ and $Z$ checks in one larger round, not in two smaller rounds. However, we believe doing this would have a significant benefit. It should be the case that the resultant code behaves like a quantum Reed-Solomon code with alphabet size $q^2 = 4194304$. In general, the list decoding techniques improve as $q$ gets much larger than $n$, the length of the code; as one example, recall from Subsubsection~\ref{subsubsec:list_decoding} that, for the distance $6$ code, the fraction of weight-$3$ errors that are uncorrectable (in the sense that they share a syndrome with some other weight-$3$ error) is
\begin{equation}
    \frac{(n-3)(n-4)(n-5)}{6(q-1)^2}.
\end{equation}
Since we consider lengths like $n \approx 40$, increasing $q$ from $2^{11}$ to $4^{11}$ would have a significant impact. However, we are not claiming this formally; one would have to work out the details, which we leave to future work.

\section{5-Qudit Code Example}\label{sec:5_qudit_code}

Below we work through an example of a length $5$ and distance $3$ quantum Reed-Solomon code over $\mathbb{F}_q$, where $q = 2^{11}$ as usual, which may be binarised to produce a $[[55,11,\geq 3]]$ qubit code. The field $\mathbb{F}_q$ is represented using the irreducible polynomial $t^{11} + t^2 + 1 = 0$; the same irreducible polynomial used in Appendix~\ref{subsec:evaluation_points}. 

Referring to the definitions in~\cite{wills2026review}, we choose evaluation points $\balpha = (t, t^2, t^3, t^4, t^5)$ and $\bv = (1,1,1,1,1)$. This gives the parity-check matrix for the $X$-stabilisers $(H_X)_{ab} = \alpha_b^a$ for $a \in \{0,1\}$ and $b \in \{1,2,3,4,5\}$ i.e.,
\begin{equation}
    H_X = \begin{pmatrix}
        1 & 1 & 1 & 1 & 1\\
        t & t^2 & t^3 & t^4 & t^5
    \end{pmatrix}.
\end{equation}
For the $Z$-stabilisers, one then calculates the $\bu$, which satisfy $u_i^{-1} = \prod_{j:j \neq i}(\alpha_i-\alpha_j)$, so that we have the $Z$ parity-check matrix $(H_Z)_{ab} = u_b\alpha_b^a$. One may calculate
\begin{equation}
H_Z =
\begin{pmatrix}
\begin{aligned}[c]
t^3 + t^6 + t^7 + \\
t^{10}
\end{aligned}
&
\begin{aligned}[c]
1 + t + t^2 + \\
t^8 + t^9 + t^{10}
\end{aligned}
&
\begin{aligned}[c]
1 + t + t^2 + t^3 + \\
t^4 + t^5 + t^8 + t^{10}
\end{aligned}
&
\begin{aligned}[c]
1 + t^5 + t^6 + \\
t^7 + t^8 + t^9
\end{aligned}
&
\begin{aligned}[c]
1 + t^4 + \\
t^8 + t^{10}
\end{aligned}
\\[1em]
\begin{aligned}[c]
1 + t^2 + t^4 + \\
t^7 + t^8
\end{aligned}
&
\begin{aligned}[c]
1 + t + t^4 + \\
t^{10}
\end{aligned}
&
\begin{aligned}[c]
1 + t^3 + t^5 + \\
t^6 + t^7 + t^8
\end{aligned}
&
\begin{aligned}[c]
1 + t + t^3 + \\
t^9 + t^{10}
\end{aligned}
&
\begin{aligned}[c]
t^2 + t^5 + \\
t^6 + t^9
\end{aligned}
\end{pmatrix}
\end{equation}
and check that $H_X\cdot H_Z^T = 0$, so that these matrices define a CSS code for qudits. Moreover, one checks that the parity-check matrices $H_X$ and $H_Z$ define classical codes of distance $3$, so the resulting quantum code has distance $3$. Since the code has $4$ linearly independent parity checks, the qudit code has parameters $[[5,1,3]]_q$.

To binarise the code, it is convenient to consider a single basis. This is a self-dual basis~\cite{wills2026review}, and we may use the same basis to expand $X$-stabiliser elements and $Z$-stabiliser elements. We use the same basis on all $5$ qudits. This basis is
\begin{equation}
    B = \begin{pmatrix}
        1 + t^5 + t^6\\
        1 + t + t^3 + t^{10}\\
        t^6 + t^9\\
        t + t^3 + t^7 + t^9\\
        t^2 + t^3 + t^5 + t^6 + t^7 + t^9\\
        t + t^4 + t^5 + t^6 + t^7 + t^9 + t^{10}\\
        1 + t + t^5 + t^7 + t^8 + t^10\\
        t^3 + t^7 + t^9 + t^{10}\\
        1 + t^2 + t^3 + t^5 + t^6 + t^7\\
        1 + t + t^4 + t^5 + t^6 + t^{10}\\
        t + t^3 + t^8 + t^9 + t^{10}
    \end{pmatrix}.
\end{equation}
As a point of interest, this is in fact a self-dual normal basis~\cite{mullen2013handbook}, although the normality is not important here.

The same basis is chosen for each qudit-to-qubit mapping. Since this is a self-dual normal basis, the expansion for the $X$ and $Z$-stabilisers may be performed with the same map. For the $r$'th binary $X$-stabiliser, where $r = 11a+i$ for $a \in \{0,1\}$ and $i \in \{0, \ldots, 10\}$, its $c$'th entry, where $c = 11(b-1)+k$ for $b \in \{1,2,3,4,5\}$ and $k \in \{0, \ldots, 10\}$, is $\Tr(B_i(H_X)_{ab}B_k)$, where $B_i$ is the $i$'th entry of the basis $B$. 

Binarising the $X$-stabilisers yields
\begin{verbatim}
    1: [10000000000 10000000000 10000000000 10000000000 10000000000]
    2: [01000000000 01000000000 01000000000 01000000000 01000000000]
    3: [00100000000 00100000000 00100000000 00100000000 00100000000]
    4: [00010000000 00010000000 00010000000 00010000000 00010000000]
    5: [00001000000 00001000000 00001000000 00001000000 00001000000]
    6: [00000100000 00000100000 00000100000 00000100000 00000100000]
    7: [00000010000 00000010000 00000010000 00000010000 00000010000]
    8: [00000001000 00000001000 00000001000 00000001000 00000001000]
    9: [00000000100 00000000100 00000000100 00000000100 00000000100]
    10: [00000000010 00000000010 00000000010 00000000010 00000000010]
    11: [00000000001 00000000001 00000000001 00000000001 00000000001]
    12: [10011110111 01100110110 00010001001 00110010011 00100001100]
    13: [00101011111 11001111011 01011011001 00110011011 01110001011]
    14: [01011000100 10010101111 00011100110 11100111101 11011011001]
    15: [10101111000 00101100010 11111001110 11001010111 01110010000]
    16: [11111000111 01010111100 01111111110 00010110001 00100000101]
    17: [10010000001 11111100011 00101110010 00101011110 00000010111]
    18: [11010010000 11001000000 01001101111 11111110001 00110101110]
    19: [01010000111 01101001000 11011011011 01100100000 11100011010]
    20: [11101001111 10101000011 00111010100 00110100100 10001110010]
    21: [11001001100 11110100111 00111111010 11010100001 01000111101]
    22: [11001101100 01100100110 11000011001 11111010011 01101100011].
\end{verbatim}
One similarly binarises the $Z$-stabilisers to obtain
\begin{verbatim}
    1: [00010010000 10000100000 00110001111 01011110100 11111001011]
    2: [01001100100 01111001000 00110100010 11111011010 11111010100]
    3: [00011110010 01110001111 11101100001 01100101001 11100110101]
    4: [10100100111 01101101000 11001100001 11000001011 11000100101]
    5: [01100011011 01010010010 00110010010 11000001001 11000010010]
    6: [01110001101 10010111111 01110100011 10100110101 00110100100]
    7: [10101011111 00001110101 00001010100 11000111011 01101000101]
    8: [00001110110 01110101101 10000001001 01111010101 10000000111]
    9: [01010111010 00100111001 10000010111 10000101001 01110111101]
    10: [00111011100 00101100001 11001100100 01010010001 10001001000]
    11: [00011110000 00100111110 10110101100 00111111110 10110011100]
    12: [01111101000 10011010101 11111000100 00001110110 00010001111]
    13: [10101010110 01001000010 11110100000 01110100011 01100010111]
    14: [11011100010 00100111100 11000100011 01100001011 01011110110]
    15: [10111011101 10011000011 11011000101 01001011010 10110000001]
    16: [11110101100 11011111110 10010111101 10011111011 00100010100]
    17: [10101000111 00101111101 01101100000 11001110110 00100101100]
    18: [01010001100 10101111010 00001000101 10011110101 01101000110]
    19: [10011010101 00101111000 00001001011 00111000110 10000100000]
    20: [01011111101 10101100010 10011010001 10000111110 11101110000]
    21: [01100100011 01011010110 00100001000 11111101110 11100010011]
    22: [00010101110 10010100000 00111011100 01101010001 11010000011].
\end{verbatim}
One checks that the binarised $X$ stabilisers are orthogonal to the binarised $Z$ stabilisers, as expected.

\end{document}